\numberwithin{equation}{section}
\def\@tocline#1#2#3#4#5#6#7{\relax
	\ifnum #1>\c@tocdepth 
	\else
	\par \addpenalty\@secpenalty\addvspace{#2}%
	\begingroup \hyphenpenalty\@M
	\@ifempty{#4}{%
		\@tempdima\csname r@tocindent\number#1\endcsname\relax
	}{%
		\@tempdima#4\relax
	}%
	\parindent\z@ \leftskip#3\relax \advance\leftskip\@tempdima\relax
	\rightskip\@pnumwidth plus4em \parfillskip-\@pnumwidth
	#5\leavevmode\hskip-\@tempdima
	\ifcase #1
	\or\or \hskip 1em \or \hskip 2em \else \hskip 3em \fi%
	#6\nobreak\relax
	\dotfill
	\hbox to\@pnumwidth{\@tocpagenum{#7}}
	\par
	\nobreak
	\endgroup
	\fi}
\newtheorem{theorem}{Theorem}[section]
\newtheorem{lemma}[theorem]{Lemma}
\newtheorem{defi}[theorem]{Definition}
\newtheorem{proposition}[theorem]{Proposition}
\newtheorem{assumption}[theorem]{Assumption}
\theoremstyle{remark}
\newcommand\R{{\ensuremath {\mathbb R} }}
\newcommand\N{{\ensuremath {\mathbb N} }}
\renewcommand\phi{\varphi}
\newcommand{\gH}{\mathfrak{H}}
\newcommand{\gF}{\mathfrak{F}}
\newcommand{\cE}{\mathcal{E}}
\newcommand{\cN}{\mathcal{N}}
\newcommand{\cU}{\mathcal{U}}
\renewcommand{\epsilon}{\varepsilon}
\newcommand{\norm}[1]{ \left| \! \left| #1 \right| \! \right| }
\DeclareMathOperator{\tr}{{\rm Tr}}
\renewcommand{\ge}{\geqslant}
\renewcommand{\le}{\leqslant}
\renewcommand{\geq}{\geqslant}
\renewcommand{\leq}{\leqslant}
\newcommand{\eps}{\varepsilon}
\newcommand{\bra}[1]{\langle #1|}
\newcommand{\ket}[1]{|#1\rangle}
\newcommand{\VDW}{V_{\rm DW}}
\newcommand{\HBH}{H_{\rm BH}}
\newcommand{\EBH}{E_{\rm BH}}
\newcommand{\cEH}{\cE ^{\rm H}}
\newcommand{\Eb}{E ^{\rm Bog}}
\newcommand{\bUl}{\mathbb{U}_{\mathrm{left}}}
\newcommand{\bUr}{\mathbb{U}_{\mathrm{right}}}
\newcommand{\Etwo}{E_{2-\mathrm{mode}}}
\newcommandx{\dom}[2][1=]{\todo[inline, author={Dom.}, linecolor=purple,backgroundcolor=purple!15,bordercolor=purple,#1]{#2}}
\newcommandx{\domnote}[2][1=]{\todo[author={Dom.}, linecolor=purple,backgroundcolor=purple!15,bordercolor=purple,#1]{#2}}
\newcommandx{\addom}[2][1=]{\todo[inline, linecolor=purple,backgroundcolor=purple!15,bordercolor=purple,#1]{#2}}
\newcommandx{\nico}[2][1=]{\todo[inline, author={Nicolas}, linecolor=yellow,backgroundcolor=yellow!15,bordercolor=yellow,#1]{#2}}
\newcommandx{\niconote}[2][1=]{\todo[author={Nicolas}, linecolor=yellow,backgroundcolor=yellow!15,bordercolor=yellow,#1]{#2}}
\newcommandx{\addnico}[2][1=]{\todo[inline, linecolor=yellow,backgroundcolor=yellow!15,bordercolor=yellow,#1]{#2}}
\newcommandx{\ale}[2][1=]{\todo[inline, author={Aless.}, linecolor=cyan,backgroundcolor=cyan!15,bordercolor=cyan, #1]{#2}}
\newcommandx{\alnote}[2][1=]{\todo[author={Aless.e}, linecolor=cyan,backgroundcolor=cyan!15,bordercolor=cyan,#1]{#2}}
\newcommandx{\addale}[2][1=]{\todo[inline, linecolor=cyan,backgroundcolor=cyan!15,bordercolor=cyan,#1]{#2}}
\numberwithin{equation}{section}
\begin{document}

\title{Bosons in a double well: two-mode approximation and fluctuations}

\author[A. Olgiati]{Alessandro Olgiati}
\address{Institute of Mathematics, University of Zurich, Winterthurerstrasse 190, 8057 Zurich}
\email{alessandro.olgiati@math.uzh.ch}

\author[N. Rougerie]{Nicolas Rougerie}
\address{Ecole Normale Sup\'erieure de Lyon \& CNRS, UMPA (UMR 5669)}
\email{nicolas.rougerie@ens-lyon.fr}

\author[D. Spehner]{Dominique SPEHNER}
\address{1. Departamento de Ingenier\'{\i}a Matem\'atica, Universidad de Concepci\'on, Avda Esteban Iturra s/n, Barrio Universitario, Concepci\'on, Chile\newline
\indent 2.  Universit\'e Grenoble Alpes \& CNRS, Institut Fourier \&  LPMMC,  F-38000 Grenoble, France}
\email{dspehner@ing-mat.udec.cl}

\begin{abstract}
We study the ground state for many interacting bosons in a double-well potential,  in a joint limit where the particle number and the distance between the potential wells both go to infinity. Two single-particle orbitals (one for each well) are macroscopically occupied, and we are concerned with deriving the corresponding effective Bose-Hubbard Hamiltonian. We prove (i) an energy expansion, including the two-modes Bose-Hubbard energy and two independent Bogoliubov corrections (one for each potential well), (ii) a variance bound for the number of particles falling inside each potential well. The latter is a signature of a correlated ground state in that it violates the central limit theorem. 
\end{abstract}

\maketitle

\tableofcontents

\date{January, 2022}

\maketitle

\tableofcontents

\section{Introduction} \label{sect:intro}

The mathematical study of macroscopic limits of many-body quantum mechanics has made sizeable progress in recent years~\cite{Ammari-hdr,BenPorSch-15,Golse-13,LieSeiSolYng-05,Rougerie-EMS,Rougerie-LMU,Rougerie-cdf,Schlein-08,Spohn-12}. The situation that is most understood is the mean-field limit of many weak inter-particle interactions. Following Boltzmann's original picture of molecular chaos~\cite{Golse-13,Spohn-12,GalRayTex-14,Mischler-11,PulSim-16,Jabin-14,Spohn-80}, an independent particles picture emerges, wherein statistical properties of the system are computed from a nonlinear PDE. This is based on inter-particle correlations being negligible at leading order, which, for bosonic systems,  comes about through the macroscopic occupancy of a single one-body state (orbital, mode). 

In this paper we consider a particular example where, by contrast, correlations play a leading role, through the occupation of two one-body states. Namely, we consider the mean-field limit of a large bosonic system in a symmetric double-well potential. In the joint limit $N\to \infty, L\to \infty$ (large particle number, large inter-well separation) there is one macroscopically occupied one-body state (orbital) for each well.
In a previous work~\cite{RouSpe-16}, two of us have shown that,   when the tunneling energy across the potential barrier is $o(N^{-1})$, the ground state of the $N$-body Hamiltonian $H_N$
exhibits strong inter-particle correlations, in the sense that the
variance of the particle number in each well is  much smaller than $\sqrt{N}$ (the central limit theorem does not hold).

Here we extend this result to cases where the tunneling energy goes like $N^{-\delta}$ with any $\delta > 0$. This in particular includes the much more intricated case where $\delta < 1$ and the tunneling energy thus cannot be neglected as in~\cite{RouSpe-16}. We also prove that the ground state energy of  $H_N$ is close to the ground state energy of a simpler effective Bose-Hubbard Hamiltonian. Our energy estimates include the contributions of order $O(1)$ described by a generalized Bogoliubov Hamiltonian, which we show to be given by the sum of the Bogoliubov energies associated to each well, up to errors $o(1)$.

The main feature of the symmetric double well situation is the fact that the $N$-body state of particles that macroscopically occupy the two main orbitals is in general non trivial. This is to be compared with the  case of complete Bose-Enstein condensation in a single orbital, in which the energy of the condensate is a purely one-body quantity, obtained from the ground state of a suitable non-linear Schr\"odinger (NLS) equation. We note that our system, although two modes are occupied to the leading order, is physically very different from a two-component Bose-Einstein condensate \cite{MicOlg-16,AnaHotHun-17,MicNamOlg-19}, in which two distinct bosonic species macroscopically occupy one mode each. Rather, it is closer to the case of a single-species fragmented condensate \cite{DimFalOlg-21}.

The effective theory for our double-well system is obtained by projecting the full Hamiltonian on the subspace spanned by the two appropriate modes (one for each well, identified via NLS theory). Such a projection is known in the physics literature as the two-mode approximation.
After some further simplifications this leads to the two-mode Bose-Hubbard Hamiltonian 
\begin{equation}\label{eq:intro BH}
\HBH = \frac{T}{2} \left( a_1 ^\dagger a_2 + a_2^\dagger a_1\right) + g  \left(a_1^\dagger a_1 ^\dagger a_1 a_1 + a_2^\dagger a_2 ^\dagger a_2 a_2 \right)
\end{equation}
with $a_j^\dagger,a_j$ the standard bosonic creation/annihilation operators associated with the two modes. The first term describes hopping of particles through the double-well's energy barrier, with $T<0$ the tunneling energy. The second term (with $g>0$ an effective coupling constant) is the pair interaction energy of particles in each well.

We aim at deriving the above from the full many-body Schr\"odinger Hamiltonian for $N$ bosons in mean-field scaling ($N\to \infty$, $\lambda$ fixed)
\begin{equation}\label{eq:hamil depart}
H_N := \sum_{j=1} ^N \left( -\Delta_j + V_\mathrm{DW} (x_j) \right) + \frac{\lambda}{N-1} \sum_{1\leq i < j \leq N} w(x_i-x_j) 
\end{equation}
acting on the Hilbert space ($d=1,2,3$ is the  spatial dimension)
\begin{equation}
\gH ^N := \bigotimes_{\rm sym} ^N L ^2 (\R^d)\simeq L^2_{\rm sym} (\R ^{dN}). \label{eq:hilbert_space}
\end{equation}
Here $V_\mathrm{DW}$ and $w$ are, respectively, the double-well external potential and the repulsive pair-interaction potential (precise assumptions will be stated below). We study the ground-state problem: lowest eigenvalue and associated eigenfunction of $H_N$.
%

The main new feature that we tackle is that $\VDW$ is chosen to depend on a large parameter $L$ in the manner
\begin{equation}\label{eq:intro double well}
\VDW (x) := \min \left( |x-x_L|^s , |x+x_L|^s  \right), \quad s \geq 2, \quad |x_L| = \frac{L}{2}. 
\end{equation}
This is a simple model for a symmetric trap with two global minima at $x=\pm x_L$. In the limit $L\to \infty$ both the distance between the minima and the height of the in-between energy barrier diverge. As a consequence, the mean-field Hartree energy functional obtained in the standard way by testing with an iid ansatz (pure Bose-Einstein condensate) 
\begin{equation}\label{eq:intro Hartree}
\cEH [u] := \frac{1}{N} \left\langle u^{\otimes N} |  H_N | u^{\otimes N} \right \rangle  
\end{equation}
has two orthogonal low-lying energy states, denoted $u_+,u_-$ ($u_+$ being the ground state). Their energies are separated by a tunneling term 
$$T = T (L) \underset{L\to \infty}{\to} 0.$$
All other energy modes are separated from $u_+,u_-$ by an energy gap independent of $L$. This picture is mathematically vindicated by semi-classical methods~\cite{DimSjo-99,Helffer-88}. For the model at hand we refer to~\cite{OlgRou-20}, whose estimates we use as an input in the sequel. One can show that 
\begin{equation}\label{eq:intro modes}
u_1 := \frac{u_+ + u_-}{\sqrt{2}}, \quad u_2 := \frac{u_+ - u_-}{\sqrt{2}}  
\end{equation}
are well localized in one potential well each. These are the modes entering the Bose-Hubbard Hamiltonian~\eqref{eq:intro BH}. If we denote $P$ the orthogonal projection onto the subspace spanned by $u_+,u_-$ (or equivalently $u_1,u_2$), the Bose-Hubbard description basically amounts to restricting all available one-body states to $P L^2 (\R^d)$
\begin{equation}\label{eq:intro BH heuristics}
\HBH \simeq \left( P\right)^{\otimes N} H_N \left( P\right)^{\otimes N} - E_0
\end{equation}
acting on $\bigotimes_{\rm sym} ^N \left( P L^2 (\R^d)\right)$. Here $E_0$ is a mean-field energy reference, and the appropriate choice of $g$ in~\eqref{eq:intro BH} is 
$$ g = \frac{\lambda}{2(N-1)} \iint_{\R^d \times \R^d} |u_1 (x)|^2 w (x-y) |u_1 (y)|^2 dx dy. $$
The tunneling energy $T$ is essentially the gap between the Hartree energies of $u_+$ and $u_-$,  which goes to $0$ super-exponentially fast when $L\to \infty$ (see below). 

A salient feature of the Bose-Hubbard ground state is that it satisfies\footnote{$\langle \, . \, \rangle_{\rm BH}$ denotes expectation in the Bose-Hubbard ground state.}
\begin{equation}\label{eq:correl BH}
\left\langle \left(a_j^\dagger a_j - \frac{N}{2}\right) ^2 \right\rangle_{\rm BH} \ll N, \quad j=1,2
\end{equation}
in the limit $N\to \infty,L\to \infty$, where $a_j^\dagger a_j$ is the operator counting the number of particles occupying the mode $j=1,2$.
This is \emph{number squeezing}, a signature of strong correlations.  Actually, the problem being invariant under the exchange of the modes\footnote{Equivalent to a reflection around the double-well's peak.} we certainly have 
$$ 
\left\langle a_j^\dagger a_j \right\rangle_{\rm BH} = \frac{N}{2}, \quad j=1,2.
$$
Thus what~\eqref{eq:correl BH} says is that the standard deviation from this mean does not satisfy the central limit theorem. Hence the events ``particle $n$ lives in the $j$-th well'', $n=1\ldots N$ are measurably \emph{not} independent.   Such an estimate is governed by energy estimates precise to order $o(1)$ in the limit $N\to \infty, L\to \infty$. In the usual mean-field limit with a single well ($L$ fixed), an energy correction of order $O(1)$ arises, due to quantum fluctuations~\cite{Seiringer-11,GreSei-13,DerNap-13,LewNamSerSol-13,NamSei-14,BocBreCenSch-17,BocBreCenSch-18}. This also occurs in our setting, due to the (small) occupancy of modes othogonal to $u_1,u_2$. This is conveniently described by a Bogoliubov Hamiltonian, which is quadratic in creation/annihilation operators. The latter has a ground-state energy $\Eb$, which is of order $O(1)$ in the joint limit (we will give more precise definitions below).  Denoting 
\begin{equation}\label{eq:intro GSE}
E (N) := \inf \sigma (H_N), \quad \EBH:= \inf \sigma (\HBH) 
\end{equation}
the lowest eigenvalues of the full Hamiltonian and its two-mode approximation respectively, our main energy estimate takes the form 
\begin{equation}\label{eq:intro ener}
\boxed{\left| E(N) - E_0 - \EBH - \Eb \right| \to 0} 
\end{equation}
in the limit $N\to \infty$, $T\to 0$, provided $0<\lambda$ is small enough (independently of $N$ and $T$). This implies number squeezing 
\begin{equation}\label{eq:intro fluct}
\boxed{ \left\langle \left( a^\dagger_j a_j - \frac{N}{2}\right)^2\right\rangle_{\Psi_\mathrm{gs}} \ll N, \quad j=1,2 } 
\end{equation}
in the true ground state $\Psi_\mathrm{gs}$ of~\eqref{eq:hamil depart} ($\left\langle \, . \, \right\rangle_{\Psi_\mathrm{gs}}$ denotes expectation in this state). To avoid some technicalities we assume that $\lambda$ is fixed and $T= N^{-\delta}$ with some arbitrary $\delta >0$. In essence the above results however only require $N\to \infty$, $T\ll \lambda$. They are thus optimal in the sense that the opposite regime $N\to \infty,T\gtrsim \lambda$ (for fixed $\lambda$ this implies $L\lesssim 1$, see~\eqref{eq:def_T}) corresponds to the usual mean-field situation for a fixed potential, where a central limit theorem holds~\cite{RadSch-19}. This is called ``Rabi regime'' in the physics literature (see~\cite[Section~1.3]{RouSpe-16} for more details). The ground state of the system is expected to be approximated by a Bose-Einstein condensate 
\begin{equation}\label{eq:Rabi}
\Psi_\mathrm{gs}\approx  u_+ ^{\otimes N} \approx \left(\frac{u_1 + u _2}{\sqrt{2}}\right)^{\otimes N},
\end{equation}
with a variance of order ${N}$  for the number of particles in the modes $u_1$ and $u_2$. The aforementioned techniques dealing with the single-well problem allow to prove the (appropriately rigorous version of the) first approximation in~\eqref{eq:Rabi}, with $u_+$ the Hartree ground state. When $T,L$ are fixed however, there does not seem to be a sharp mathematical way to define the privileged modes $u_1,u_2$ and actually prove the second approximation in~\eqref{eq:Rabi} in a well-defined scaling regime.

In~\cite{RouSpe-16}, Estimates~\eqref{eq:intro ener}-\eqref{eq:intro fluct} have been  proved (essentially) in the restricted regime $T\ll N^{-1}$. When $T\gtrsim N^{-1}$ the tunneling contribution to the energy becomes relevant for the order of precision we aim at, and we cannot just separate the contributions of each well as in~\cite{RouSpe-16}. 
%
Instead we prove that the two wells are coupled only via the dynamics in the two-modes subspace, that we isolate from quantum fluctuations.
We need to monitor both the number of excited particles  and the variance of the occupation numbers of the low-lying modes. Roughly speaking the former is controled by the Bogoliubov Hamiltonian and the latter by the Bose-Hubbard one. The main difficulty is however that these quantities are a priori coupled at the relevant order of the energy expansion because of the non-trivial dynamics in the two-mode subspace.  
More specifically we have to control processes where an exchange of particles between the modes $u_+$ and $u_-$ mediates the excitation of  particles out of the two-modes subspace.

In the next section we state our main results precisely and provide a more extended sketch of the proof, before proceeding to the details in the rest of the paper. As a final comment before that, we hope that future investigations will allow to prove something about the low-lying excitation spectrum of the system at hand. We expect two types of excited eigenvalues, yielding essentially independent contributions: those coming from the excited states of the Bose-Hubbard Hamiltonian~\eqref{eq:intro BH} and those coming from the generalized Bogoliubov Hamiltonian defined in Section~\ref{sec:def Bog}. The latter actually commutes with a shift operator, so that one might expect $H_N$ to have some 'almost continuous' spectrum in the sense of very close eigenvalues in the limit $N\to \infty$ (with spacing $o_N(1)$).

\bigskip

\noindent \textbf{Acknowledgments:} Funding from the European Research Council (ERC) under the European Union's Horizon 2020 Research and Innovation Programme (Grant agreement CORFRONMAT No 758620) is gratefully acknowledged.
D.S. acknowledges support from  the Fondecyt project N$^\circ$ 1190134.


\section{Main statements} \label{sect:result}

\subsection{The double well Hamiltonian}

We consider the action of the Hamiltonian
\begin{equation*}
H_N = \sum_{j=1} ^N \left( -\Delta_j + V_\mathrm{DW} (x_j) \right) + \frac{\lambda}{N-1} \sum_{1\leq i < j \leq N} w(x_i-x_j) ,
\end{equation*}
already introduced in \eqref{eq:hamil depart}, on the space $\mathfrak{H}^{N} = L^2 _{\rm sym} (\R^{dN}), d=1,2,3$. 
The coupling constant proportional to $(N-1)^{-1}$ in \eqref{eq:hamil depart} formally makes the contributions from the two sums in $H_N$ of the same order in $N$. We introduced a further fixed coupling constant $\lambda>0$. For simplicity we make liberal assumptions on the data of the problem, that we do not claim to be optimal for the results we will prove.  

\begin{assumption}[\textbf{The interaction potential}]\label{assum:w}\mbox{}\\
$w$ is a radial bounded function with compact support. We also suppose that it is positive and of positive type, that is, with $\widehat{w}$ the Fourier transform,
	\begin{equation}\label{eq:asum w}
	w(x) \ge  0, \quad\text{a.e.}\qquad\text{and}\qquad \widehat{w}(k) \geq 0\quad\text{a.e.}
	\end{equation}
\end{assumption}

\begin{assumption}[\textbf{The double-well potential}]\label{assum:V}\mbox{}\\
	Let $L>0$ and 
	\begin{equation*}
	x_L:=\left(\frac{L}{2},0,\dots,0\right)\in\mathbb{R}^d,\qquad -x_L=\left(-\frac{L}{2},0,\dots,0\right)\in\mathbb{R}^d
	\end{equation*}
	represent the centers of the wells. We define
	\begin{equation}\label{eq:double well}
	V_\mathrm{DW} (x) = \min \left\{ V\left( x - x_L\right), V\left(x + x_L\right) \right\}\;,
	\end{equation}
	with
	\begin{equation}\label{eq:hom pot}
	V (x) = |x| ^s,\quad s \geq 2 \; .  
	\end{equation}
\end{assumption}

Note that, since $w$ is radial, the choice of two wells with centers on the $x_1$-axis is without loss of generality. To model two deep and well-separated wells, we shall let the inter-well distance diverge
$$L = 2 |x_L| \underset{N\to \infty}{\to} \infty.$$

\bigskip

\noindent \textbf{Low-lying energy modes} (see \cite{OlgRou-20} for more details). 
Given a one-body function $u\in L^2(\mathbb{R}^d)$, its Hartree energy \eqref{eq:intro Hartree} reads
\begin{equation}
\begin{split}
\cEH[u]:=\;&\int_{\mathbb{R}^d} |\nabla u(x)|^2dx+\int_{\mathbb{R}^d} V_{\mathrm{DW}}(x)|u(x)|^2dx\\
&+\frac{\lambda}{2}\iint_{\mathbb{R}^d\times\mathbb{R}^d} w(x-y)|u(x)|^2|u(y)|^2dxdy.
\end{split}
\end{equation}
We define $u_+$ to be the minimizer of $\cEH$ at unit mass, i.e.,
\begin{equation}
\cEH[u_+]=\inf\left\{\cEH[u]\;|\;u\in H^1(\mathbb{R}^d)\cap L^2\big(\mathbb{R}^d,V_{\mathrm{DW}}(x)dx\big),\;\int_{\mathbb{R}^d} |u|^2=1\right\}.
\end{equation}
Its existence follows from standard arguments. By a convexity argument such a minimizer must be unique up to a constant phase, that can be fixed so as to ensure $u_+ >0$, which we henceforth do (see, e.g., \cite[Theorem 11.8]{LieLos-01}).

The mean-field Hamiltonian
\begin{equation} \label{def-H_MF}
h_\mathrm{MF}:=-\Delta+ V_\mathrm{DW}+\lambda w*|u_+|^2,
\end{equation}
is the functional derivative of $\cEH$ at $u_+$, seen as a self-adjoint operator on $L^2(\mathbb{R}^d)$. Since $V_\mathrm{DW}$ grows at infinity, $h_\mathrm{MF}$ has compact resolvent, and therefore a complete basis of eigenvectors. The Euler-Lagrange equation for the energy minimization problem reads
\begin{equation} \label{eq:full_variational}
h_\mathrm{MF}u_+=\mu_+u_+,
\end{equation}
with the chemical potential/Lagrange multiplier
\begin{equation} \label{eq:mu_+}
\mu_+=\cEH[u_+]+\frac{\lambda}{2}\iint_{\mathbb{R}^d\times\mathbb{R}^d} w(x-y)|u_+(x)|^2|u_+(y)|^2dxdy.
\end{equation}
By standard arguments, $\mu_+$ is the lowest eigenvalue of $h_\mathrm{MF}$, corresponding to the non-degenerate eigenfunction $u_+$.

We next define $u_-$ to be the first excited normalized eigenvector of $h_\mathrm{MF}$, i.e.,
\begin{equation}
h_\mathrm{MF}u_-=\mu_- u_-
\end{equation}
where $\mu_-> \mu_+$ satisfies
\begin{equation}
	\mu_-=\inf \bigg\{ \langle u, h_\mathrm{DW}u\rangle \;|\; u\in\mathcal{D}(h_\mathrm{MF}),\; \int_{\R^d} \overline{u}u_+ = 0, \;\int_{\mathbb{R}^d}|u|^2=1 \bigg\}.
\end{equation}
It follows from the arguments of \cite{OlgRou-20} that $u_-$ is non-degenerate.

Since $h_\mathrm{DW}$ is a double-well Hamiltonian, all its eigenvectors are mainly localized \cite{Helffer-88,DimSjo-99} around the two centers $\pm x_L$. 
As a consequence, the two linear combinations
\begin{equation} \label{eq:u_1_u_2}
\begin{split}
u_1=\frac{u_++u_-}{\sqrt2}\qquad u_2=\frac{u_+-u_-}{\sqrt2}
\end{split}
\end{equation}
are mainly localized, respectively, in the left and right wells. These are the low-energy modes whose role was anticipated above.

\bigskip

\noindent \textbf{Tunneling parameter.} 
The gap $\mu_--\mu_+$ of $h_\mathrm{MF}$ is closely related to the magnitude of the tunneling effect between wells. Indeed,
\begin{equation*}
\mu_--\mu_+=\left\langle \big( u_--u_+\big), h_\mathrm{MF}\big( u_- + u_+ \big)\right\rangle=2 \left\langle u_2, h_\mathrm{MF} u_1\right\rangle,
\end{equation*}
and, as said, $u_1$ and $u_2$ are mainly localized, respectively, in the right and left wells. 
To quantify this we define the semi-classical Agmon distance \cite{Agmon-82,DimSjo-99,Helffer-88} associated to the one-well potential $V$
\begin{equation} \label{eq:Agmon}
A(x)=\int_0^{|x|}\sqrt{V(r')}dr'=\frac{|x|^{1+s/2}}{1+s/2}.
\end{equation}
We then set
\begin{equation} \label{eq:def_T}
T:= e^{-2A\big(\frac{L}{2}\big)}.
\end{equation}
As we will recall in Theorem \ref{thm:onebody} below, we essentially have
\begin{equation} \label{eq:T_gap}
\mu_--\mu_+ \simeq T.
\end{equation}
We will work in the regime
\begin{equation}
N\to \infty, \qquad \lambda \mbox{ fixed, } T\ll1 \mbox{ or, equivalently, } L\gg1.
\end{equation}

\subsection{Second quantization and effective Hamiltonians}

The many-body Hilbert space $\mathfrak{H}^N$ is the $N$-th sector of the bosonic Fock space
\begin{equation} \label{eq:Fock_intro}
 \mathfrak{F}:=\bigoplus_{n=0}^\infty L^2(\mathbb{R}^d)^{\otimes_{\mathrm{sym}} n}
\end{equation}
on which we define the usual algebra of bosonic creation and annihilation operators (see Section \ref{sect:preliminary} for the precise definition) whose commutation relations are
\begin{equation}
  [a_u,a^\dagger _v]=\langle u,v\rangle_{L^2},\qquad [a_u,a_v]=[a^\dagger _u,a^\dagger _v]=0,\qquad u,v\in L^2(\mathbb{R}^d).
\end{equation}
Given a generic one-body orbital $u\in L^2(\mathbb{R}^d)$ 
we introduce the particle number operator 
\begin{equation*}
\mathcal{N}_u:=a^\dagger _ua_u
\end{equation*}
whose action on $\mathfrak{H}^N$ is 
\begin{equation}
\mathcal{N}_u= \sum_{j=1}^N \ket{u}\bra{u}_j.
\end{equation}
Here $\ket{u}\bra{u}_j$ acts as the orthogonal projection $\ket{u}\bra{u}$ on the $j$-th variable and as the identity on all other variables.

One can extend the Hamiltonian $H_N$ to $\mathfrak{F}$ as
\begin{equation}\label{eq:HN second intro}
\begin{split}
H_N =\;&\sum_{m,n\ge1} h_{mn} \, a^\dagger _m a_n+ \frac{\lambda}{2(N-1)} \sum_{m,n,p,q\ge1}w_{mnpq} \, a^\dagger _m a^\dagger _n a_p a_q,
\end{split}
\end{equation}
whose restriction  on the $N$-th sector coincides with \eqref{eq:hamil depart}. The notation above is
\begin{equation}\label{eq:intro not}
\begin{split}
h_{mn}:=\;& \big\langle u_m, \big( -\Delta+ V_\mathrm{DW} \big)u_n\big\rangle \\
w_{mnpq}:=\;& \big\langle u_m \otimes u_n, w\, u_p\otimes u_q\big\rangle,
\end{split}
\end{equation}
for an orthonormal basis $(u_n)_{n\in \N}$ of $L^2 (\R^d)$, with $a^\dagger_n,a_n$ the associated creation and annihilation operators.

\bigskip

\noindent \textbf{Two-modes energy in the low-energy subspace.} Let $P$ be the orthogonal projector onto the linear span of $(u_+,u_-)$ (or, equivalently, $(u_1,u_2)$). We define the two-modes Hamiltonian 
\begin{equation}\label{eq:intro H2}
H_{2\mathrm{-mode}}:= P^{\otimes N} H_N P^{\otimes N}
\end{equation}
and the associated ground state energy
\begin{equation}\label{eq:intro E2}
E_{2\mathrm{-mode}}:= \inf\left\{ \left\langle \Psi_N |H_{2\mathrm{-mode}} | \Psi_N \right\rangle, \: \Psi_N \in \bigotimes^N_\mathrm{sym}\big(PL^2(\mathbb{R}^d)\big), \: \int_{\R^{dN}} |\Psi_N|^2 = 1 \right\}.
\end{equation}
Later we will discuss the relationship between the above and
\begin{equation} \label{eq:bh_energy}
E_\mathrm{BH}:=\inf \sigma(H_\mathrm{BH}),
\end{equation}
the bottom of the spectrum of the Bose-Hubbard Hamiltonian
\begin{equation} \label{eq:bh}
H_\mathrm{BH}:=\frac{\mu_+-\mu_-}{2}\big( a^\dagger_1a_2+a^\dagger_2a_1 \big)+\frac{\lambda}{2(N-1)}w_{1111}\big( a^\dagger_1a^\dagger_1 a_1a_1+a^\dagger_2a^\dagger_2a_2a_2 \big)
\end{equation}
on the space $\bigotimes^N_\mathrm{sym}\big(PL^2(\mathbb{R}^d)\big)$. As discussed in Section \ref{sect:proof_2mode}, $H_{\rm BH}$ is obtained from $H_N$ by retaining only terms corresponding to the subspace spanned by $u_+,u_-$ (equivalently $u_1,u_2$) in \eqref{eq:HN second intro} and making a few further simplifications.

\bigskip

\noindent \textbf{Bogoliubov energy of excitations.} We will adopt the following notation for a spectral decomposition of $h_\mathrm{MF}$:
\begin{equation} \label{eq:basis_h_MF}
h_\mathrm{MF}= \mu_+ \ket{u_+}\bra{u_+}+\mu_-\ket{u_-}\bra{u_-} + \sum_{m\ge3} \mu_m\ket{u_m}\bra{u_m}.
\end{equation}
As stated in Theorem \ref{thm:onebody} $(vi)$ (proved in \cite{OlgRou-20}) an appropriate choice of the $u_m$'s with $m\ge3$, ensures that the modes (compare with \eqref{eq:u_1_u_2})
\begin{equation} \label{eq:basis_left_right}
u_{r,\alpha}:=\frac{u_{2\alpha+1}+u_{2\alpha+2}}{\sqrt{2}}\qquad\text{and}\qquad u_{\ell,\alpha}:=\frac{u_{2\alpha+1}-u_{2\alpha+2}}{\sqrt{2}}
\end{equation}
with $\alpha\ge1$ are (mostly) localized, respectively, in the right and left half-space. They pairwise generate the spectral subspaces of $h_{\rm MF}$ corresponding to $\mu_{2\alpha+1}$ and $\mu_{2\alpha+2}$. 
We will always use either the basis of $L^2(\mathbb{R}^d)$ from \eqref{eq:basis_h_MF} or that from \eqref{eq:basis_left_right} (with the addition of $u_+,u_-$ or $u_r,u_\ell$). Since all these functions solve, or are linear combinations of functions that solve, an elliptic equation with real coefficients, we can (and will) always assume that they are real-valued functions. We also define
\begin{equation}\label{eq:P_Lambda}
P_{r}:=\sum_{\alpha\ge1} \ket{u_{r,\alpha}}\bra{u_{r,\alpha}}\qquad P_{\ell}:=\sum_{\alpha\ge1} \ket{u_{\ell,\alpha}}\bra{u_{\ell,\alpha}},
\end{equation} 
and
\begin{equation} \label{eq:trace}
\mathrm{Tr}_\perp(A):=\sum_{m\ge3} \langle u_m,Au_m\rangle\;, \;\; \mathrm{Tr}_{\perp,r}(A):= \sum_{\alpha\ge1} \left\langle u_{r,\alpha},A u_{r,\alpha}\right\rangle\;, \;\; \mathrm{Tr}_{\perp,\ell}(A):= \sum_{\alpha\ge1} \left\langle u_{\ell,\alpha},A u_{\ell,\alpha}\right\rangle.
\end{equation}
Then the Bogoliubov energy is given as 
\begin{equation} \label{eq:E_bog}
\begin{split}
E^\mathrm{Bog}:=\;&-\frac{1}{2}\mathrm{Tr}_{\perp,r} \bigg[ D_r+\lambda P_rK_{11}P_r-\sqrt{D_r^2+2\lambda D_r^{1/2}P_r K_{11}P_rD_r^{1/2}}\;\bigg]\\
&-\frac{1}{2}\mathrm{Tr}_{\perp,\ell} \bigg[ D_\ell+\lambda P_\ell K_{22}P_\ell-\sqrt{D_\ell^2+2\lambda D_\ell^{1/2}P_\ell K_{22}P_\ell D_\ell^{1/2}}\;\bigg].
\end{split}
\end{equation}
where
\begin{equation}
D_r:=P_r\left(h_\mathrm{MF}-\mu_+ \right)P_r,\qquad D_\ell:=P_\ell\left(h_\mathrm{MF}-\mu_+ \right)P_\ell
\end{equation}
and $K_{11}$ and $K_{22}$ are the two operators on $L^2(\mathbb{R}^d)$ defined by
\begin{equation*}
\begin{split}
\langle v,K_{11}u\rangle=\;&\frac{1}{2}\langle v\otimes u_1, w \,u_1\otimes u\rangle\\
\langle v,K_{22}u\rangle=\;&\frac{1}{2}\langle v\otimes u_2,w\,u_2\otimes v\rangle.
\end{split}
\end{equation*}
The quantity $E^\mathrm{Bog}$ is essentially the sum of the lowest eigenvalues of two independent bosonic quadratic Hamiltonians acting on the left and right modes respectively (compare with the explicit formulae in \cite{GreSei-13} and see~\cite{BacBru-16,BruDer-07,Derezinski-17} and references therein for further literature). It will turn out to (asymptotically) coincide with the bottom of the spectrum of the full Bogoliubov Hamiltonian \eqref{eq:Bog_Hamiltonian}, i.e., the part of $H_N$ that contains exactly two creators/annihilators for excited modes $u_m$ with $m\ge3$. That the traces in \eqref{eq:E_bog} are finite is not a priori obvious, and will be part of the proof. The two summands in the right hand side of \eqref{eq:E_bog} coincide thanks to the symmetry of the system under reflections around the $x_1=0$ axis.
Each summand also coincides, as $T\to 0$, with the bottom of the spectrum of the Bogoliubov Hamiltonian for particles occupying one-well excited modes above a one-well Hartree minimizer, centered either in $x_L$ or $-x_L$, used in \cite{RouSpe-16}. 
%

\subsection{Main theorems}

We can now state the

\begin{theorem}[\textbf{Variance and energy of the ground state}]\mbox{}\label{thm:main}\\
	Assume that, as $N\to \infty$, $T\sim N^{-\delta}$ for some fixed $\delta >0$. Let $\Psi_\mathrm{gs}$ be the unique (up to a phase) ground state of $H_N$. There exists $\lambda_0>0$ such that, for all $0< \lambda\le \lambda_0$,
	\begin{equation} \label{eq:main_result_variance}
	\lim_{N\to\infty}\frac{1}{N}\left\langle \left(\mathcal{N}_1-\mathcal{N}_2\right)^2\right\rangle_{\Psi_{\mathrm{gs}}}=0
	\end{equation}
	and
	\begin{equation} \label{eq:main_result_energy}
	\begin{split}
	\lim_{N\to\infty}\big| E(N)- E_{2\mathrm{-mode}}- E^\mathrm{Bog}\big| =0.
	\end{split}
	\end{equation}
\end{theorem}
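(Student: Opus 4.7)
The strategy is to obtain matching upper and lower bounds $E(N) = E_{2\text{-mode}} + E^{\rm Bog} + o(1)$, from which the variance statement \eqref{eq:main_result_variance} follows via the coercivity of the on-site interaction in $H_{\rm BH}$.

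\textbf{Upper bound.} I would construct a trial state tensoring a minimizer $\Phi_0 \in \bigotimes^N_{\rm sym} PL^2(\mathbb{R}^d)$ of $H_{2\text{-mode}}$ with Bogoliubov quasi-free states for the left- and right-localized excitations. Letting $\mathcal{U}_r, \mathcal{U}_\ell$ be the Bogoliubov unitaries on the Fock spaces over $P_r L^2$ and $P_\ell L^2$ that diagonalize the two quadratic forms entering \eqref{eq:E_bog}, the trial state is the projection onto the $N$-particle sector of $\Phi_0 \otimes \mathcal{U}_r \mathcal{U}_\ell \Omega$, obtained after identifying $\mathfrak{H}^N$ with the appropriate subspace of $\bigoplus_k \bigotimes^{N-k}_{\rm sym} PL^2 \otimes \bigotimes^k_{\rm sym} P^\perp L^2$ via the standard excitation map. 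Expanding the energy in the basis $(u_\pm, u_{r,\alpha}, u_{\ell,\alpha})$, the terms acting only on $PL^2$ reproduce $H_{2\text{-mode}}$ and contribute $E_{2\text{-mode}}$; quadratic-in-excitation terms, after replacing two-mode operators by their action on $\Phi_0$, give the Bogoliubov energy $E^{\rm Bog}$; higher-order excitation terms, together with the $N$-sector truncation error, are $o(1)$ because the expected excitation number in the Bogoliubov vacuum is $O(1)$ (ensured by the positivity of $w$ and the smallness of $\lambda$).

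\textbf{Lower bound.} This is the main work. A Hartree-type a priori estimate yields $\mathrm{Tr}_\perp \gamma^{(1)}_{\Psi_{\rm gs}} = O(1)$, i.e., only $O(1)$ particles lie outside $PL^2$. The excitation map conjugates $H_N$ into the schematic decomposition
\[ H_N \cong H_{2\text{-mode}}^\sharp + H_{\rm Bog}^\sharp + H_{\rm mix} + H_{\rm higher}, \]
where $H_{2\text{-mode}}^\sharp$ acts only on the two-mode factor, $H_{\rm Bog}^\sharp$ is the quadratic excitation Hamiltonian with infimum $E^{\rm Bog}$, $H_{\rm mix}$ contains the cross terms, and $H_{\rm higher}$ is cubic and quartic in excitation operators. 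The cubic and quartic pieces are absorbed at cost $o(1)$ by the a priori bound and Cauchy--Schwarz. The principal obstacle is $H_{\rm mix}$: terms of the form $a_+^\dagger a_- \, a_m^\dagger a_n$ mediate a transfer between $u_+$ and $u_-$ \emph{simultaneously} with an excitation, a mechanism absent in the restricted regime $T\ll N^{-1}$ treated in \cite{RouSpe-16}. These are controlled at the $O(1)$ precision required by \eqref{eq:main_result_energy} by showing that the matrix elements $w_{+-mn}$ are small, which exploits the symmetry of $u_+,u_-$ under $x_1 \mapsto -x_1$ together with the left/right localization of $u_{r,\alpha}, u_{\ell,\alpha}$ supplied by the one-body analysis of \cite{OlgRou-20}. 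Once $H_{\rm mix}$ is negligible, $H_{\rm Bog}^\sharp$ splits into two independent operators on $P_r L^2$ and $P_\ell L^2$ whose ground state energies sum to $E^{\rm Bog}$, while $H_{2\text{-mode}}^\sharp \ge E_{2\text{-mode}}$ by definition.

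\textbf{Variance bound.} Combining the interaction term of $H_{\rm BH}$ with the identity $a_1^\dagger a_2 + a_2^\dagger a_1 = N - 2 a_-^\dagger a_-$ yields, with $T_{\rm gap} := \mu_- - \mu_+ > 0$,
\[ H_{\rm BH} = -\frac{T_{\rm gap}\, N}{2} + T_{\rm gap}\, a_-^\dagger a_- + \frac{\lambda\, w_{1111}}{4(N-1)} (\mathcal{N}_1 - \mathcal{N}_2)^2 + \frac{\lambda\, w_{1111}\, N(N-2)}{4(N-1)}. \]
A Holstein--Primakoff / Bogoliubov analysis of this two-mode quadratic form (in the spirit of \cite{RadSch-19}) shows that $E_{\rm BH}$ exceeds the naive minimum $-T_{\rm gap} N/2 + \lambda w_{1111} N(N-2)/(4(N-1))$ by only $O(\sqrt{T_{\rm gap}})$. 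Together with the sharp energy estimate \eqref{eq:main_result_energy}, the comparison $E_{2\text{-mode}} - E_{\rm BH} = o(1)$ established along the way, and the a priori bound on the number of excitations, the normalized restriction $P^{\otimes N}\Psi_{\rm gs}/\lVert P^{\otimes N}\Psi_{\rm gs}\rVert$ has expectation of $H_{\rm BH}$ within $o(1)$ of $E_{\rm BH}$, so that the coercive term in the identity above gives
\[ \frac{\lambda\, w_{1111}}{4(N-1)} \bigl\langle (\mathcal{N}_1 - \mathcal{N}_2)^2 \bigr\rangle_{\Psi_{\rm gs}} \le O(\sqrt{T_{\rm gap}}) + o(1), \]
whence $N^{-1}\langle (\mathcal{N}_1 - \mathcal{N}_2)^2\rangle_{\Psi_{\rm gs}} = O(\sqrt{T_{\rm gap}}) + o(1) \to 0$ as $N\to\infty$. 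The hardest step of the whole argument is the lower-bound control of $H_{\rm mix}$ at the $O(1)$ precision demanded by the sharpness of \eqref{eq:main_result_energy}.
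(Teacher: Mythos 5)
Your proposal reproduces the broad architecture of the paper's proof (a trial state built from a two-mode minimizer tensored with left/right Bogoliubov vacua for the upper bound; excitation-map decomposition, a priori bounds, and left/right splitting for the lower bound; coercivity of the on-site interaction for the variance). However, it misses the step that is the actual crux of the argument and the reason the theorem carries a smallness condition on $\lambda$: the \emph{linear} terms in the excited creators/annihilators, i.e. the part of $H_N$ containing exactly one $a_m^\sharp$ with $m\ge 3$. After the cancellation coming from $h_\mathrm{MF}u_\pm=\mu_\pm u_\pm\perp u_m$, there survive terms of the schematic form $\frac{\lambda}{N-1}\,a_\pm^\dagger(\mathcal{N}_1-\mathcal{N}_2)a_m$, whose coefficients $w_{+1-m},w_{+2-m}$ are \emph{not} small in $T$ (unlike the coefficients of the quadratic mixed terms $a_+^\dagger a_-a_m^\dagger a_n$ you single out as the principal obstacle, which the paper disposes of comparatively easily via $\|K_{12}\|_\mathrm{op}\le C_\varepsilon T^{1/2-\varepsilon}$ and $\|w*(u_1u_2)\|_{L^\infty}\le C_\varepsilon T^{1-\varepsilon}$). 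With the only available a priori bound $\langle(\mathcal{N}_1-\mathcal{N}_2)^2\rangle\le CN$ these linear terms contribute $O(1)$ to the energy, i.e. exactly at the precision demanded by \eqref{eq:main_result_energy}; making them negligible would require the conclusion \eqref{eq:main_result_variance} itself, so a circularity must be broken. The paper does this by completing the square: it introduces shifted operators $\widetilde b_{r,\alpha}=b_{r,\alpha}+x_\alpha\mathfrak{D}$ and $\widetilde c^\dagger_{\ell,\alpha}=c^\dagger_{\ell,\alpha}+y_\alpha\mathfrak{D}$ with $x_\alpha,y_\alpha=O(\lambda N^{-1/2})$, in terms of which the quadratic-plus-linear Hamiltonian becomes quadratic up to a constant and a remainder $-C\lambda^2N^{-1}(\mathcal{N}_1-\mathcal{N}_2)^2$; the latter is absorbed into the variance term $+\lambda U(N-1)^{-1}(\mathcal{N}_1-\mathcal{N}_2)^2$ of $H_{2\mathrm{-mode}}$ precisely when $\lambda\le\lambda_0$ (Lemma \ref{lemma:var_coeff}). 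A further price is that the shifted operators satisfy the CCR only approximately, so the diagonalization of the shifted quadratic form requires additional commutator control (Proposition \ref{prop:lower_bound_shifted}). Nothing in your outline addresses this mechanism, and without it the lower bound does not close.

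Two smaller points. First, your variance argument evaluates the coercive term on the normalized restriction $P^{\otimes N}\Psi_\mathrm{gs}/\|P^{\otimes N}\Psi_\mathrm{gs}\|$; this does not directly control $\langle(\mathcal{N}_1-\mathcal{N}_2)^2\rangle_{\Psi_\mathrm{gs}}$, since the orthogonal complement of $\mathrm{Ran}\,P^{\otimes N}$, though of small mass, could a priori carry a variance of order $N^2$. The paper avoids this by keeping the operator $\frac{\lambda U}{N-1}(\mathcal{N}_1-\mathcal{N}_2)^2$ inside the exact expression \eqref{eq:2mode_expression} of $H_{2\mathrm{-mode}}$ on all of $\mathfrak{H}^N$ and feeding it directly into the lower bound \eqref{eq:lower_bound}. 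Second, the a priori input you invoke ($\mathrm{Tr}_\perp\gamma^{(1)}=O(1)$) is only the first moment of $\mathcal{N}_\perp$; the error estimates for the quadratic, cubic and quartic terms require the second moment $\langle\mathcal{N}_\perp^2\rangle\le C$ and the mixed moment $\langle\mathcal{N}_\perp\,\mathrm{d}\Gamma_\perp(h_\mathrm{MF}-\mu_+)\rangle\le C$, which the paper obtains through a bootstrap intertwined with the preliminary bound $\langle(\mathcal{N}_1-\mathcal{N}_2)^2\rangle\le CN$ of Proposition \ref{lemma:apriori_variance}.
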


A few comments:

\medskip

\noindent \textbf{1}. We believe the result holds without the smallness condition on $\lambda$. The precise condition we need is that the left-hand side of \eqref{eq:var_coeff_lambda} be bounded below by a constant, which we could so far prove only for small $\lambda$.

\medskip

\noindent \textbf{2}. As part of the proof we find
\begin{equation*}
\big\langle \mathcal{N}_{1}+\mathcal{N}_{2}\big\rangle_{\Psi_{\mathrm{gs}}} = \big\langle \mathcal{N}_{u_+}+\mathcal{N}_{u_-}\big\rangle_{ \Psi_{\mathrm{gs}}} = N + O(1).
\end{equation*}
Since $u_1$ and $u_2$ are obtained one from the other by reflecting across $\{x_1=0\}$ and the full problem is invariant under such a reflection, this implies
\begin{equation} \label{eq:expectation_N_12}
\langle \mathcal{N}_{1}\rangle_{ \Psi_{\mathrm{gs}}}=\langle \mathcal{N}_{2}\rangle_{ \Psi_{\mathrm{gs}}}\simeq\frac{N}{2} + O(1), 
\end{equation}
so that we can reformulate \eqref{eq:main_result_variance} as 
$$\left\langle \left(\mathcal{N}_1-\left\langle \mathcal{N}_1 \right\rangle\right)^2\right\rangle_{\Psi_{\mathrm{gs}}} \ll N.$$

\noindent \textbf{3}. Central limit theorems are known to hold for mean-field bosonic systems in one-well-like situations~\cite{BucSafSch-13,RadSch-19}. For $T\gtrsim 1$ we recover such a situation: a single Bose-Einstein condensate in the state $u_+$ with Bogoliubov corrections on top, captured by a quasi-free (gaussian) state. This would essentially lead to
\begin{equation*}
\big\langle \big( \mathcal{N}_{1}-N/2 \big)^2\big\rangle_{u_+^{\otimes N}}\simeq\langle \mathcal{N}_{u_1}^2\rangle_{u_+^{\otimes N}} - \big( \langle \mathcal{N}_{u_1}\rangle_{u_+^{\otimes N}} \big)^2\simeq \frac{N}{4}.
\end{equation*}
The estimate \eqref{eq:main_result_variance} is a significant departure from this situation: correlations within the two-modes subspace are strong enough to reduce the variance significantly. 
%
%
%

\medskip

We also have estimates clarifying the nature of the main terms captured by our energy asymptotics in Theorem~\ref{thm:main}:

\begin{proposition}[\textbf{Main terms in the two-modes energy}]\label{pro:ener comp}\mbox{}\\
  Assume that, as $N\to \infty$, $T\sim N^{-\delta}$ for some fixed $\delta >0$.
Then we have that, for any fixed $\epsilon >0$ 
\begin{equation}\label{eq:2m BH}
\left|E_{2\mathrm{-mode}} -  N h_{11} + \frac{\lambda N^2}{4 (N-1)} ( 2 w_{1122} - w_{1212}) - E_\mathrm{BH} \right| \leq C_\epsilon \max\left(T^{1/2-\eps}, N^{-1+\eps \delta}\right)   
\end{equation}
where $E_{2\mathrm{-mode}}$ and $E_\mathrm{BH}$ are defined respectively in~\eqref{eq:intro E2} and~\eqref{eq:bh_energy}. Moreover 
\begin{equation}\label{eq:BH energy comp}
 \left| E_\mathrm{BH} - \left( \frac{\lambda N^2}{4(N-1)} w_{1111} - \frac{\lambda N}{2(N-1)} w_{1111} + \left( \mu_+ - \mu_-\right) \frac{N}{2} \right)\right| \leq C_\epsilon \max\left(T^{1/2-\epsilon}, N^{-1+\eps \delta}\right).
 \end{equation}
\end{proposition}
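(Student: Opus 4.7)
The plan is to prove the two estimates separately, each by combining an explicit algebraic decomposition with a Bogoliubov-type diagonalization.

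For \eqref{eq:BH energy comp}, I first recast $H_\mathrm{BH}$ in Schwinger-boson form. Using $J_x := (a_1^\dagger a_2 + a_2^\dagger a_1)/2$, $J_z := (\mathcal{N}_1 - \mathcal{N}_2)/2$, and $\mathcal{N}_1 \mathcal{N}_2 = N^2/4 - J_z^2$ on the two-mode $N$-particle sector, one finds $H_\mathrm{BH} = -T J_x + g J_z^2 + c_N$ with $T := \mu_- - \mu_+$, $g := \lambda w_{1111}/(N-1)$, and $c_N := \frac{\lambda N^2 w_{1111}}{4(N-1)} - \frac{\lambda N w_{1111}}{2(N-1)}$ exactly the $c$-number appearing in the claim. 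The lower bound $E_\mathrm{BH} \geq c_N - TN/2$ follows from $J_x \leq N/2$ and $J_z^2 \geq 0$. For the matching upper bound I would apply a Holstein-Primakoff substitution around $|J_x = N/2\rangle$, $J_x = N/2 - b^\dagger b$ and $J_y, J_z \simeq \sqrt{N}\times(\text{linear in } b, b^\dagger)$, reducing the problem to diagonalizing a quadratic bosonic Hamiltonian $A b^\dagger b + (B/2)(b^2 + b^{\dagger 2})$ with $A = T + gN/2$, $B = -gN/2$. A Bogoliubov transformation yields frequency $\omega = \sqrt{T(T+gN)}$ and effective ground state energy $E_\mathrm{BH} - c_N + TN/2 = (\omega - T)/2 = O(\sqrt{T})$ in our regime $gN = O(1)$, producing the $T^{1/2-\eps}$ error; the Holstein-Primakoff expansion is justified as long as $\sinh^2\theta \sim \sqrt{gN/T} \ll N$, with a different expansion covering the $N^{-1+\eps\delta}$ regime.

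For \eqref{eq:2m BH}, I would expand $H_{2\mathrm{-mode}}$ in the basis $(u_1, u_2)$ using \eqref{eq:HN second intro}. Sorting the 16 two-body integrals $w_{mnpq}$ with indices in $\{1,2\}$ by type and using canonical commutation relations, together with the Hartree identity (derived from $h_\mathrm{MF}u_\pm = \mu_\pm u_\pm$) $h_{12} = (\mu_+-\mu_-)/2 - \lambda(w_{1112}+w_{1122})$, the decomposition takes the form $H_{2\mathrm{-mode}} = N h_{11} + H_\mathrm{BH} + R$ with
\[
R = -2\lambda w_{1122}\, J_x + \frac{\lambda(w_{1122}+w_{1212})}{N-1}\mathcal{N}_1 \mathcal{N}_2 + \frac{\lambda w_{1122}}{2(N-1)}\bigl((a_1^\dagger)^2 a_2^2 + \mathrm{h.c.}\bigr);
\]
the four mixed $w_{1112}$-type terms (three of one mode, one of the other) collapse to $2\lambda w_{1112} J_x$ by the canonical commutation relations and cancel exactly the corresponding piece hidden in $h_{12}$. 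Upper and lower bounds on $E_{2\mathrm{-mode}}$ then follow by taking, respectively, the BH ground state and the $H_{2\mathrm{-mode}}$ ground state as trial states in the two variational principles; both reduce to estimating $\langle R\rangle$.

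The key expectations, evaluated in the BH ground state via the Bogoliubov picture of the previous paragraph, are $\langle J_x\rangle = N/2 + O(N^{\delta/2})$, $\langle \mathcal{N}_1\mathcal{N}_2\rangle = N^2/4 + O(N^{1-\delta/2})$, and (using $(a_1^\dagger)^2 a_2^2 = (a_1^\dagger a_2)^2 = J_x^2 - J_y^2 + i\{J_x, J_y\}$, together with the reflection symmetry $\mathcal{N}_1 \leftrightarrow \mathcal{N}_2$ which kills the imaginary part) $\mathrm{Re}\langle (a_1^\dagger)^2 a_2^2\rangle = \langle J_x^2\rangle - \langle J_y^2\rangle = N^2/4 + O(N^{1+\delta/2})$. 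Summing,
\[
\langle R\rangle \;\approx\; \lambda N\Bigl[-w_{1122} + \tfrac{w_{1122}+w_{1212}}{4} + \tfrac{w_{1122}}{4}\Bigr] = \frac{\lambda N(w_{1212}-2w_{1122})}{4},
\]
which matches $-\frac{\lambda N^2}{4(N-1)}(2w_{1122}-w_{1212})$ up to $1/N$ corrections. The residuals from the Bogoliubov subleading terms, multiplied by the Agmon-exponentially-small coefficients $w_{1122}, w_{1212}, w_{1112}$ inferred from Theorem~\ref{thm:onebody}, all fit within $O(\max(T^{1/2-\eps}, N^{-1+\eps\delta}))$. The main obstacle is the lower bound: one must verify that the ground state of $H_{2\mathrm{-mode}}$ (not $H_\mathrm{BH}$!) inherits the same Bogoliubov-type expectations up to the $T^{1/2-\eps}$ accuracy, which likely requires a rigorous Bogoliubov comparison in the spirit of \cite{Seiringer-11,GreSei-13,LewNamSerSol-13} applied within the two-mode sector, or a perturbative argument exploiting the Bogoliubov gap $\omega$ of $H_\mathrm{BH}$ to absorb the fluctuations induced by $R$.
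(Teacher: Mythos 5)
Your algebraic setup is sound and consistent with what one finds by expanding \eqref{eq:2mode_definition}: the identity $h_{12}+\lambda w_{1112}+\lambda w_{1122}=\tfrac{\mu_+-\mu_-}{2}$, the cancellation of the $w_{1112}$ terms, and the remainder $R$ are all correct, and your lower bound for $E_\mathrm{BH}$ ($J_x\le N/2$, $J_z^2\ge 0$) is exactly the paper's. Two gaps remain, however. The smaller one: the Holstein--Primakoff step gives the right answer but is not yet an upper bound; to make it rigorous you must exhibit a trial state. The paper uses a normalized discrete gaussian in the imbalance variable $d=\mathcal{N}_1-\mathcal{N}_2$ with width $\sigma_N^2=\sqrt{\mu_--\mu_+}\,N$ (essentially the HP oscillator ground state in disguise), for which $\langle a_1^\dagger a_2+a_2^\dagger a_1\rangle=N+O(N/\sigma_N^2)$ and $\langle(\mathcal{N}_1-\mathcal{N}_2)^2\rangle\le C\sigma_N^2$; optimizing over $\sigma_N$ reproduces your $\sqrt{T}$ scaling directly. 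This is easily repaired.

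The serious gap is the one you flag yourself: the lower bound $E_{2\mathrm{-mode}}\ge Nh_{11}+E_\mathrm{BH}+\langle R\rangle_{\psi_{2m}}$ requires $\langle \mathcal{N}_1\mathcal{N}_2\rangle_{\psi_{2m}}$ and $\langle(a_1^\dagger a_2)^2+\mathrm{h.c.}\rangle_{\psi_{2m}}$ to be within $O(NT^{-1/2})$ of $N^2/4$ in the \emph{unknown} ground state of $H_{2\mathrm{-mode}}$, i.e.\ a quantitative variance bound $\langle(\mathcal{N}_1-\mathcal{N}_2)^2\rangle_{\psi_{2m}}\lesssim NT^{-1/2}$ that you have not established. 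Crude operator bounds ($0\le\mathcal{N}_1\mathcal{N}_2\le N^2/4$, etc.) lose a term that can be as large as $\lambda N w_{1212}\sim NT^{1-\varepsilon}$, which is \emph{not} $o(1)$ when $\delta\le 1$, so you cannot simply discard the issue. The paper sidesteps it by a different regrouping (Lemma~\ref{lemma:bh_identities}): the combination $(a_1^\dagger a_2)^2+(a_2^\dagger a_1)^2+2\mathcal{N}_1\mathcal{N}_2$ is rewritten in terms of $\mathcal{N}_1+\mathcal{N}_2$ (a constant on the two-mode $N$-particle sector), $a_1^\dagger a_2+a_2^\dagger a_1$, the nonnegative $\mathcal{N}_-^2$, and $(\mathcal{N}_1-\mathcal{N}_2)^2$; the resulting $-\tfrac14 w_{1212}(\mathcal{N}_1-\mathcal{N}_2)^2$ is absorbed into the positive $\tfrac14 w_{1111}(\mathcal{N}_1-\mathcal{N}_2)^2$ (this is the constant $U$ of \eqref{eq:2mode_constants}, still bounded below by \eqref{eq:w_1111}), and the only indefinite term left is $\mathcal{T}(a_1^\dagger a_2+a_2^\dagger a_1)$, handled by $|a_1^\dagger a_2+a_2^\dagger a_1|\le N$ and the sign of $\mu_+-\mu_-$. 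One then proves matching two-sided bounds on $E_{2\mathrm{-mode}}$ and on $E_\mathrm{BH}$ around the same explicit constants and subtracts, so that no information about either ground state is ever needed. I recommend adopting that regrouping (or first bootstrapping a variance bound on $\psi_{2m}$ from the very energy bounds you are proving) rather than attempting a Bogoliubov analysis of $\psi_{2m}$.
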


\bigskip

A few comments:

\medskip

\noindent \textbf{1}. We expect the remainders in the right-hand sides of~\eqref{eq:2m BH} and~\eqref{eq:BH energy comp} to be essentially sharp and to be part of the expansion of the full many-body energy $E(N)$. They lead to a variance bounded as (essentially) 
$$ \frac{1}{N}\left\langle \left(\mathcal{N}_1-\mathcal{N}_2\right)^2\right\rangle_{\rm BH} \leq C \max(T^{1/2},N^{-1})$$
in the Bose-Hubbard ground state. Deriving such estimates at the level of the full many-body ground state would require to improve our method of proof. 

\medskip

\noindent \textbf{2}. The reference energy
$Nh_{11}$, $N$ times the minimal one-well energy with no interactions,
is usually subtracted from the Bose-Hubbard Hamiltonian as a basic energy reference  and we follow this convention. The other terms appearing in the left hand side of \eqref{eq:2m BH}, which produce an energy
 shift between $E_{2\mathrm{-mode}}$ and  $E_\mathrm{BH}$,
are interaction energies due to particles tunneling through the double well's peak (not included in the Bose-Hubbard model). Depending on the parameter regime and possible improvements of some of our bounds, they may or may not be smaller than the other relevant terms. Since we can isolate them exactly in our energy expansions, we keep track of them as exact expressions, but they are not very relevant to the main thrust of the argument.

\medskip

\noindent \textbf{3}. The three main terms we isolate in the Bose-Hubbard energy are more interesting. The first one, $\frac{\lambda N^2}{4(N-1)} w_{1111}$ is a one-well mean-field interaction energy. This is the leading order for any reasonable two-modes state, independently of its details. The second term $- \frac{\lambda N}{2(N-1)} w_{1111}$ however is a reduction of the interaction energy due to the suppressed variance of the true ground state. We had captured it before~\cite{RouSpe-16} in a reduced parameter regime. It is in any case larger than our biggest error term, which we show is $o(1)$. The last term  $\left( \mu_+ - \mu_-\right) \frac{N}{2}$ is the tunneling contribution, not captured in~\cite{RouSpe-16}. When $\delta <1$, i.e.,
$T\gg N^{-1}$, it is larger than our main error term. 

%
%

\subsection{Sketch of proof}

The general strategy is to group the various contributions to $H_N$ in the second quantized formulation \eqref{eq:HN second intro}, much as in the derivation of Bogoliubov's theory in \cite{Seiringer-11,GreSei-13,LewNamSerSol-13,DerNap-13}. We use a basis of $L^2 (\R^d)$ as discussed around \eqref{eq:basis_h_MF} and distinguish between

\medskip 

\noindent$\bullet$ Terms that contain only creators/annihilators corresponding to the two-mode subspace $\mathrm{span} (u_+,u_-)$. After some simplifications they yield the
two-mode energy $E_{2\mathrm{-mode}}$, which we prove controls the variance \eqref{eq:main_result_variance}, see Section \ref{sect:proof_2mode}.

\medskip 

\noindent$\bullet$ Linear terms that contain exactly one creator/annihilator corresponding to the excited subspace $\mathrm{span} (u_+,u_-)^\perp$. These should be negligible in the final estimate.

\medskip 

\noindent$\bullet$ Quadratic terms that contain exactly two creators/annihilators corresponding to the excited subspace. In those we replace the creators/annihilators of the two-mode subspace by numbers, which leads to a Bogoliubov-like Hamiltonian acting on $\ell^2 (\gF^\perp)$ where $\gF^\perp$ is the bosonic Fock space generated by the excited modes. 

\medskip 

\noindent$\bullet$ Cubic and quartic terms that contain at least three  creators/annihilators corresponding to the excited subspace. These can be neglected due to the low occupancy of said subspace. 

\medskip

To bring these heuristics to fruition we need a priori bounds (see Section \ref{sect:apriori}) on 

\medskip 

\noindent$\bullet$ The number of excited particles and their kinetic energy.

\medskip 

\noindent$\bullet$ A joint moment of the number and kinetic energy of the excited particles.

\medskip 

\noindent$\bullet$ The variance of particle numbers in the low-lying subspace.

\medskip

The first bounds follow from Onsager's lemma (see \cite[Section 2.1]{Rougerie-EMS} and references therein) supplemented by our estimates on the Hartree problem in~\cite{OlgRou-20}.  We also obtain
\begin{equation}\label{eq:intro_N_-}
 \left\langle \cN_{u_-}\right\rangle \leq C \min (N,T^{-1})
\end{equation}
at this stage, which we use later in the proof. For the second estimate, we start with the strategy of \cite{Seiringer-11,GreSei-13} but in our case the variance in the low-lying subspace enters the bound. Combining with a first rough energy estimate proves that the left side of \eqref{eq:main_result_variance} is bounded independently of $N$ and $T$, which can then be used to close the second estimate. 

With these estimates at hand we can deal efficiently with the quadratic, cubic and quartic terms mentioned above. The Bogoliubov Hamiltonian acting only on the excited space is introduced via a partial isometry $\cU_N:\gH^N \mapsto \ell^2 (\gF^\perp)$ that we conjugate the difference
$H_N - H_{2\mathrm{-mode}}$ with, see Section \ref{sect:preliminary}. This generalizes the excitation map introduced in \cite{LewNamSerSol-13}. That the Bogoliubov Hamiltonian acts on $\ell^2 (\gF^\perp)$ and not just $\gF^\perp$ keeps memory of the population imbalance in the two-modes subspace. Relying on estimates from  \cite{OlgRou-20} we can then split all the excited modes into a left and right part as in \eqref{eq:basis_left_right} and neglect couplings between left and right modes.  After some further manipulations this reduces the full Bogoliubov Hamiltonian to two indendependent ones acting on $\gF\left(P_\ell L^2 (\R^d)\right)$ and $\gF\left(P_r L^2 (\R^d)\right)$, the bosonic Fock spaces generated by the left and right modes respectively (see \eqref{eq:P_Lambda}). Their ground energies yield the $E_{\mathrm{Bog}}$ energy entering the statement.

The part of the proof we find the most difficult is the treatment of linear terms. In the one-well case they are negligible \cite{Seiringer-11,GreSei-13,LewNamSerSol-13,DerNap-13} as a consequence of the optimality of the low-energy subspace\footnote{They are the second quantization of the functional derivative of the Hartree energy at the minimizer.}. Cancellations of this form also occur in our setting, (see \eqref{eq:splitting_L1_L2} below)
using that $h_\mathrm{MF}u_\pm=\mu_\pm u_\pm \perp u_m$ if $m\ge3$ and that $\left\vert \vert u_+ \vert - \vert u_- \vert \right\vert \lessapprox T^{1/2}$ as shown in \cite{OlgRou-20}. More complicated linear terms appear however, an example being proportional to (with $a_m$ an annihilator on the excited subspace, $m\geq 3$)
\begin{equation*}
\frac{\lambda}{2{(N-1)}} a_+ ^\dagger (a_+ ^\dagger a_-  + a_- ^\dagger a_+ ) a_m  
\end{equation*}
Using our a priori bounds (think of $a_m$ as being $O(1)$), the above would be $o(1)$ if the result \eqref{eq:main_result_variance} was known a priori, for 
$$a_+ ^\dagger a_-  + a_- ^\dagger a_+ = \cN_1 - \cN_2.$$
That terms of this type finally turn out to be negligible is a signature not of the optimal choice  of the low-lying two-modes subspace, that we used already, but of the particular Bose-Hubbard ground state within it, witnessed by its small expectation of $N^{-1} (\mathcal{N}_1-\mathcal{N}_2)^2$.  

To eliminate these extra linear terms, we will "complete a square" by defining (see Section \ref{sect:minimization}) shifted creation and annihilation operators for the excited modes. In terms of those the combination of quadratic and linear terms is a new quadratic Hamiltonian corrected by a remainder term $\propto \lambda ^2N^{-1} (\mathcal{N}_1-\mathcal{N}_2)^2$, depending on the variance operator. The latter we can absorb  in $H_\mathrm{2-mode}$ for small enough coupling constant $\lambda$. Another remainder comes from the fact that the shifted operators satisfy the canonical commutation relations only approximately, so that the diagonalization of the new quadratic Hamiltonian is more involved. After we have decoupled the contributions of the two wells by estimating cross-terms in the resulting expressions, we can rely on ideas from \cite{GreSei-13} to handle that aspect, for we have a precise control on the commutators of the shifted operators.

\section{Mapping to the space of excitations} \label{sect:preliminary}

We will use the second quantization formalism, calling $\mathfrak{F}$ the Fock space associated to $L^2(\mathbb{R}^d)$, and $a^\dagger(f)$, $a(f)$ the creation and annihilation operators associated to $f\in L^2(\mathbb{R}^d)$. We refer the reader to, e.g., \cite[Section 18]{GusSig-06} for precise definitions. We will adopt the notation 
\begin{equation*}
\begin{split}
a^\sharp_+:=\;&a^\sharp(u_+),\qquad a^\sharp_-:=a^\sharp(u_-),\qquad a^\sharp_m:=a^\sharp(u_m)\\
a^\sharp_{r,\alpha}:=\;& a^\sharp (u_{r,\alpha})=\frac{a^\sharp_{2\alpha+1}+a^\sharp_{2\alpha+2}}{\sqrt{2}},\qquad a^\sharp_{\ell,\alpha}:=a^\sharp(u_{\ell,\alpha})=\frac{a^\sharp_{2\alpha+1}-a^\sharp_{2\alpha+2}}{\sqrt{2}}
\end{split}
\end{equation*}
for $\sharp\in \{\cdot,\dagger\}$, where $u_+$, $u_-$, $u_m$, $u_{r,\alpha}$, and $u_{\ell,\alpha}$ with $m,\alpha \in\mathbb{N}\setminus \{0\}$ are the modes introduced in Section \ref{sect:result}. We will denote by $\mathrm{d}\Gamma(A)$ the second quantization of a $k$-body operator, and by $\mathcal{N}_m=a^\dagger_m a_m$ the number operator for the $m$-th mode. We furthermore define the number operator for modes beyond $u_+$ and $u_-$ (or $u_1$ and $u_2$)
\begin{equation}
\mathcal{N}_\perp:= \sum_{m\ge3}\mathcal{N}_m.
\end{equation}
As anticipated in Section \ref{sect:result}, the Hamiltonian \eqref{eq:hamil depart} reads, in the notation we introduced\footnote{We are considering $w$ as the two-body observable corresponding to the multiplication by the function $w(x-y)$},
\begin{equation}\label{eq:HN second}
\begin{split}
H_N =\;& \mathrm{d}\Gamma\big( -\Delta+V_\mathrm{DW}\big) +\frac{\lambda}{(N-1)} \mathrm{d}\Gamma(w) \\
=\;&\sum_{m,n\ge1} h_{mn} \, a^\dagger _m a_n + \frac{\lambda}{2(N-1)} \sum_{m,n,p,q\ge1} w_{mnpq} \, a^\dagger _m a^\dagger _n a_p a_q.
\end{split}
\end{equation}
\bigskip

\noindent\textbf{Two-mode Hamiltonian.} The part of $H_N$ in which summations are restricted to the first two indices will play a major role.

\begin{defi}[\textbf{Two-mode Hamiltonian}]
	We define
	\begin{equation} \label{eq:2mode_definition}
	H_{2\mathrm{-mode}}:=\sum_{m,n\in\{1,2\}} h_{mn} \, a^\dagger _m a_n + \frac{\lambda}{2(N-1)} \sum_{m,n,p,q\in\{1,2\}} w_{mnpq} \, a^\dagger _m a^\dagger _n a_p a_q
	\end{equation}
	as an operator on the $N$-body space $\mathfrak{H}^N$.
\end{defi}

There are a few differences between $H_{2\mathrm{-mode}}$ and the Bose-Hubbard Hamiltonian $H_\mathrm{BH}$ from \eqref{eq:bh}:
\begin{itemize}
	\item $H_\mathrm{BH}$ is defined on the $N$-body space generated by the modes $u_1$ and $u_2$ only, that is, $\bigotimes_{\rm sym} ^N \left( P L^2 (\R^d)\right)$. This is equivalent to identify $\mathcal{N}_1+\mathcal{N}_2=N$ when working with $H_{2\mathrm{-mode}}$.
	\item All quartic terms of \eqref{eq:2mode_definition} that contain both $a^\sharp_1$ and $a^\sharp_2$ are neglected in $H_\mathrm{BH}$.
	\item $H_{2\mathrm{-mode}}$ contains the one-well non-interacting terms proportional to $h_{11}$ and $h_{22}$. They will give the energy
          $N h_{11}$ appearing in \eqref{eq:2m BH}.
	\item The coefficient of $a^\dagger_1a_2+a^\dagger_2a_1$ in \eqref{eq:2mode_definition} will turn out to be a perturbation of the $(\mu_+-\mu_-)/2$ of $H_\mathrm{BH}$. The same for the coefficient of the quartic terms.
\end{itemize}

The difference between $H_{{2\mathrm{-mode}}}$ and $H_\mathrm{BH}$ is not a priori small. We will often work with $H_{2\mathrm{-mode}}$, and discuss in Section \ref{sect:proof_2mode} its relation with $H_\mathrm{BH}$.

\subsection{Excitation space}

The energy of the fraction of particles that occupy $\{u_m\}_{m\ge3}$ needs to be separately monitored. To this end, it will be useful to consider the second quantization of operators restricted to the orthogonal complement of $u_1$ and $u_2$. We define the projections

\begin{equation} \label{eq:P_+-}
\begin{split}
P:=\;&\ket{u_+}\bra{u_+}+\ket{u_-}\bra{u_-}=\ket{u_1}\bra{u_1}+\ket{u_2}\bra{u_2}\\
P^\perp:=\;& \mathbbm{1}-P=\sum_{m\ge3}\ket{u_m}\bra{u_m}.
\end{split}
\end{equation}
For self-adjoint operators $A$ on $\mathfrak{H}$ and $B$ on $\mathfrak{H}\otimes\mathfrak{H}$ we define
\begin{equation} \label{eq:second_quant_perp}
\mathrm{d}\Gamma_\perp (A):= \mathrm{d}\Gamma ( P^\perp AP^\perp)= \sum_{m,n\ge3}\langle u_m, A u_n\rangle a^\dagger _m a_n
\end{equation}
and
\begin{equation} \label{eq:second_quant_perp_2}
\mathrm{d}\Gamma_\perp (B):= \mathrm{d}\Gamma \big( P^\perp\otimes P^\perp BP^\perp\otimes P^\perp\big)= \sum_{m,n,p,q\ge3}\langle u_m\otimes u_n, B u_p\otimes u_q\rangle a^\dagger _m a_n^\dagger  a_pa_q.
\end{equation}
In this notation,
\begin{equation*}
\mathcal{N}_\perp=\mathrm{d}\Gamma_\perp(\mathbbm{1}).
\end{equation*}
Let us introduce the Hilbert space decomposition induced by $P$ and $P^\perp$
\begin{equation}
\begin{split}
\mathfrak{H}^N=\;&\Big(\mathrm{span}\{u_+\}\oplus\mathrm{span}\{u_-\}\oplus\bigoplus_{m\ge3}^\infty \mathrm{span}\{ u_m\}\Big)^{\otimes_{\mathrm{sym}} N}\\
=\;&\Big(\mathrm{span}\{u_1\}\oplus\mathrm{span}\{u_2\}\oplus\bigoplus_{m\ge3}^\infty \mathrm{span}\{ u_m\}\Big)^{\otimes_{\mathrm{sym}} N},
\end{split}
\end{equation}
Accordingly, any $\psi_N \in\mathfrak{H}^N$ can be uniquely expanded in the form
\begin{equation} \label{eq:wavefunction_expansion}
\begin{split}
\psi_N=\;&\sum_{s=0}^N\sum_{d=-N+s,\,-N+s+2,\,\dots}^{\dots,\,N-s-2,\,N-s} u_1^{\otimes (N-s+d)/2}\otimes_\mathrm{sym} u_2^{\otimes (N-s-d)/2}\otimes_\mathrm{sym} \Phi_{s,d}.
\end{split}
\end{equation}
for suitable 
\begin{equation*}
\Phi_{s,d}\in \big(\{u_1,u_2\}^{\perp}\big)^{\otimes_\mathrm{sym} s}.
\end{equation*}
The index $s$ represents the number of excited particles, i.e., those living in the orthogonal of $\mathrm{span} (u_1,u_2)$. The index $d$ is the difference\footnote{It will be clear from the context when $d$ stands for this difference or the physical space dimension.} between the number of particles in $u_1$ and the number of particles in $u_2$. Notice that \eqref{eq:wavefunction_expansion} defines $\Phi_{s,d}$ only for those pairs of integers $(s,d)$ such that $(N-s+d)/2$ is an integer.

For each fixed $d$, the collection of functions $\{\Phi_{s,d}  \}_{0\le s\le N}$ identifies a vector in the truncated Fock space
\begin{equation}
\mathfrak{F}_\perp^{\le N}:=\bigoplus_{s=0}^N \big(  \{
u_1, u_2 \}^\perp\big)^{\otimes_\mathrm{sym} s} \subset\mathfrak{F}_\perp \subset \mathfrak{F},
\end{equation}
Replicating the construction for all $d$ we naturally arrive at the following definition.

\begin{defi}[\textbf{Excitation space}]\mbox{}\\
	We define the full space of excitations as
	\begin{equation} \label{eq:space_excit}
	\ell^2(\mathbb{\mathfrak{F}}_\perp):=\bigoplus_{s\in\mathbb{N},d\in\mathbb{Z}}\big( \{u_1,u_2\}^\perp\big)^{\otimes_\mathrm{sym}s}=\bigoplus_{d\in\mathbb{Z}}\mathfrak{F}_\perp.
	\end{equation}
	A generic $\Phi\in\ell^2(\mathfrak{F}_\perp)$ is of the form
	\begin{equation*}
	\Phi=\bigoplus_{s\in\mathbb{N},d\in\mathbb{Z}} \Phi_{s,d}\quad\text{such that}\quad\begin{cases}\Phi_{s,d}\in \big(\{u_1,u_2\}^{\perp}\big)^{\otimes_\mathrm{sym} s}\\\\
	\sum_{s,d} \left\|\Phi_{s,d}\right\|_{L^2}^2<+\infty.
	\end{cases}
	\end{equation*}
\end{defi}

We will adopt capital letters (as in $\Phi$) to indicate excitation vectors in $\ell^2(\mathfrak{F}_\perp)$, while reserving small letters (as in $\psi_N$) for $N$-body wave-functions in $\mathfrak{H}^N$.

There is a natural operator mapping a $N$-body wave-function to its excitation content as in \eqref{eq:wavefunction_expansion}. We define it by generalizing ideas from \cite{LewNamSerSol-13} (see \cite[Definition 5.10]{Rougerie-EMS} and subsequent discussion for review):

\begin{defi}[\textbf{Excitation map}] \mbox{}\\
	Given any $\psi_N \in\mathfrak{H}^N$, consider its expansion \eqref{eq:wavefunction_expansion}. We call excitation map the operator
	\begin{equation}
	\mathcal{U}_N:\mathfrak{H}^N\to \ell^2(\mathfrak{F}_\perp),\qquad\text{acting as}\qquad \mathcal{U}_N\psi_N=\bigoplus_{\substack{0\le s \le N,\,|d|\le N-s,\\ (N-s+d)/2\in\mathbb{N}}}\Phi_{s,d}.
	\end{equation}
\end{defi}
It is easy to check that $\mathcal{U}_N$ is a partial isometry from $\mathfrak{H}^N$ into $\ell^2(\mathfrak{F}_\perp)$, i.e. it acts unitarily if $\mathcal{U}_N^*$ is restricted to $\mathrm{Ran}\,\mathcal{U}_N$. In order to isolate the contributions to the energy that come from excited particles, we will conjugate the Hamiltonian $H_N$ (or rather $H_N-H_{2\mathrm{-mode}}$) with the unitary $\mathcal{U}_N$. This boils down to having formulae describing the action of $\mathcal{U}_N$ on creation and annihilation operators. We keep the same notation for the operators $a_m^\sharp$ with $m\ge3$ after conjugation with $\mathcal{U}_N$, that is,
\begin{equation*}
\mathcal{U}_N a^\dagger _ma_n\mathcal{U}_N^* =a_m^\dagger a_n,\qquad m,n\ge3.
\end{equation*}
The same we do for the operator representing the number of excitations which, on $\ell^2(\mathfrak{F}_\perp)$, acts according to
\begin{equation}
\mathcal{N}_\perp\Phi=\bigoplus_{s\in\mathbb{N},d\in\mathbb{Z}}s\Phi_{s,d}.
\end{equation}
The difference $\mathcal{N}_1-\mathcal{N}_2$ on the other hand corresponds to the operator that has the indices $d$ as eigenvalues:

\begin{defi}[\textbf{Difference operator}] \mbox{}\\
	The difference operator on $\ell^2(\mathcal{F}_\perp)$ is defined as
	\begin{equation} \label{eq:def_D}
	\mathfrak{D}:=\mathcal{U}_N\left(\mathcal{N}_1-\mathcal{N}_2\right)\mathcal{U}_N^\dagger ,\qquad\text{with action}\qquad \mathfrak{D}\Phi=\bigoplus_{s\in\mathbb{N},d\in\mathbb{Z}}d\Phi_{s,d}.
	\end{equation}
	We will refer to $\mathfrak{D}^2$ (or $(\mathcal{N}_1-\mathcal{N}_2)^2$ on $\mathfrak{H}^N$) as the \emph{variance} operator.
\end{defi}

We also need the unitary operator that shifts the index $d$ by one unit.
\begin{defi}[\textbf{Shift operator}] \mbox{}\\
We define the unitary operator
\begin{equation}
  \Theta:\ell^2(\mathfrak{F}_\perp)\to \ell^2(\mathfrak{F}_\perp)\qquad \text{with action}\qquad \big( \Theta\Phi \big)_{s,d}
  = \Phi_{s,d-1}.
\end{equation}
\end{defi}
As an immediate consequence of the above definitions we have, for any $m\ge3$,
\begin{equation} \label{eq:commutation_relations}
	\begin{split}
  \big[\mathfrak{D},\Theta\big]=\;&\Theta
  \\
	\big[a_m,\Theta\big]=\;&\big[a^\dagger _m,\Theta\big]=0\\
	\big[\mathfrak{D},a_m\big]=\;&\big[\mathfrak{D},a_m^\dagger \big]=0.
	\end{split}
	\end{equation}
which will be useful in the sequel. It follows from the first commutation relation and 
the unitarity of $\Theta$ that 
\begin{equation} \label{eq:conjugation_with_theta}
  \Theta^\ast f ( \mathfrak{D}) \Theta=  f ( \Theta^\ast \mathfrak{D} \Theta)=f( \mathfrak{D} +1)
\end{equation}
for any smooth real function $f$ (by functional calculus). We record the action of $\mathcal{U}_N$ on operators of the type $a^\dagger a$, needed to conjugate the full Hamiltonian, in the following

\begin{lemma}[\textbf{Operators on the excited Fock space}]\mbox{} \label{lemma:conjugation}\\
For any $m,n\ge3$ we have
	\begin{equation} \label{eq:conjugation}
	\begin{split}
	\mathcal{U}_Na^\dagger _1a_1\mathcal{U}_N^*=\;&\frac{N-\mathcal{N}_\perp +\mathfrak{D}}{2}\\
	\mathcal{U}_Na^\dagger _1a_2\mathcal{U}_N^*=\;&\Theta\sqrt{\frac{N-\mathcal{N}_\perp +\mathfrak{D}+1}{2}}\sqrt{\frac{N-\mathcal{N}_\perp -\mathfrak{D}+1}{2}}\,\Theta\\
	\mathcal{U}_Na^\dagger _2a_2\mathcal{U}_N^*=\;&\frac{N-\mathcal{N}_\perp -\mathfrak{D}}{2}	\\
	\mathcal{U}_Na^\dagger _1a_m\mathcal{U}_N^*=\;&\Theta\sqrt{\frac{N-\mathcal{N}_\perp +\mathfrak{D}+1}{2}}\,a_m\\
	\mathcal{U}_Na^\dagger _2a_m\mathcal{U}_N^*=\;&\Theta^{-1}\sqrt{\frac{N-\mathcal{N}_\perp -\mathfrak{D}+1}{2}}\,a_m\\
	\mathcal{U}_Na^{\sharp_1}_ma_n^{\sharp_2} \,\mathcal{U}_N =\;&a^{\sharp_1}_ma_n^{\sharp_2}\\
	\end{split}
	\end{equation}
	as identities on $\mathrm{Ran}\,\mathcal{U}_N$, with $\sharp_1, \sharp_2 \in \{ \cdot , \dagger\}$.
\end{lemma}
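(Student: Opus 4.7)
The plan is a direct verification on a spanning set of "pure" basis vectors
\begin{equation*}
\psi_{k_1,k_2,\Phi_s} := u_1^{\otimes k_1}\otimes_{\mathrm{sym}} u_2^{\otimes k_2}\otimes_{\mathrm{sym}} \Phi_s,\qquad k_1+k_2+s=N,\quad \Phi_s\in\big(\{u_1,u_2\}^\perp\big)^{\otimes_{\mathrm{sym}} s},
\end{equation*}
which, by the very definition of the expansion \eqref{eq:wavefunction_expansion}, are mapped by $\mathcal{U}_N$ to $\Phi_s$ sitting in the block $(s,d)$ with $d := k_1-k_2$. For each operator on the left-hand side of \eqref{eq:conjugation} I will compute its action on $\psi_{k_1,k_2,\Phi_s}$ by the standard bosonic ladder rules, apply $\mathcal{U}_N$, and match the resulting block content with the claimed right-hand side evaluated at $(s,d)$.

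The last three formulas are the easy ones. The operators $a_m^{\sharp_1}a_n^{\sharp_2}$ with $m,n\ge 3$ act only on the excited-subspace factor $\Phi_s$ and commute with tensoring by $u_1,u_2$, so they pass through $\mathcal{U}_N$ unchanged. The number operators $a_1^\dagger a_1$ and $a_2^\dagger a_2$ give back $k_1=(N-s+d)/2$ and $k_2=(N-s-d)/2$ times $\psi_{k_1,k_2,\Phi_s}$, which in terms of $\mathcal{N}_\perp$ and $\mathfrak{D}$ are exactly $(N-\mathcal{N}_\perp\pm\mathfrak{D})/2$.

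The three remaining identities are where the $\Theta$'s enter. Standard ladder rules give
\begin{equation*}
a_1^\dagger a_2\,\psi_{k_1,k_2,\Phi_s}=\sqrt{k_2(k_1+1)}\,\psi_{k_1+1,k_2-1,\Phi_s},\qquad a_1^\dagger a_m\,\psi_{k_1,k_2,\Phi_s}=\sqrt{k_1+1}\,\psi_{k_1+1,k_2,a_m\Phi_s},
\end{equation*}
and analogously for $a_2^\dagger a_m$. Under $\mathcal{U}_N$ these land in the blocks $(s,d+2)$, $(s-1,d+1)$ and $(s-1,d-1)$, with the scalar prefactors just displayed. On the right-hand sides of \eqref{eq:conjugation}, reading from right to left, each $\Theta^{\pm 1}$ shifts the block index $d$ by $\pm 1$, while each $a_m$ lowers $s$ by $1$. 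Between these shifts, the square roots $\sqrt{(N-\mathcal{N}_\perp\pm\mathfrak{D}+1)/2}$ are evaluated at the block reached by the factors standing to their right; using $\Theta^\ast\mathfrak{D}\Theta=\mathfrak{D}+1$ from \eqref{eq:conjugation_with_theta} and the drop of $\mathcal{N}_\perp$ by $1$ after $a_m$, these evaluations reproduce exactly $\sqrt{k_1+1}$, $\sqrt{k_2+1}$ or $\sqrt{k_2(k_1+1)}$ respectively.

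The only genuine bookkeeping subtlety is the placement and evaluation of the square roots: in $a_1^\dagger a_2$ the two flanking $\Theta$'s are not merely $\Theta^2$, because the inner $\Theta$ shifts $\mathfrak{D}$ inside the square roots; likewise, in the $a_i^\dagger a_m$ formulas the shift in $\mathcal{N}_\perp$ caused by $a_m$ is exactly what produces the $+1$ inside the square root. Once this is accounted for, each identity holds on every $\psi_{k_1,k_2,\Phi_s}$, hence on their dense span in $\mathfrak{H}^N$, and thus on $\mathrm{Ran}\,\mathcal{U}_N$ as claimed.
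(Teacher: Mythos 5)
Your proof is correct and takes essentially the same route as the paper: both compute the ladder-operator action on the components of the decomposition \eqref{eq:wavefunction_expansion}, apply $\mathcal{U}_N$, and match blocks, with the key bookkeeping point being that the $\Theta$ factors shift $\mathfrak{D}$ inside the adjacent square roots via \eqref{eq:conjugation_with_theta} (the paper phrases this as commuting $\Theta$ to the left after first obtaining the form $\sqrt{\cdot}\,\sqrt{\cdot}\,\Theta^{2}$). Your verification on the spanning set of pure vectors $u_1^{\otimes k_1}\otimes_{\mathrm{sym}}u_2^{\otimes k_2}\otimes_{\mathrm{sym}}\Phi_s$ is just a componentwise restatement of the paper's computation on the full expansion.
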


\begin{proof}
The derivation of the first three identities is similar. We focus on the second one. We have, for $\Phi\in\mathrm{Ran}\,\mathcal{U}_N$,
	\begin{equation*}
	\begin{split}
	a^\dagger_1 a_2 \mathcal{U}_N^*\Phi=\;&\sum_{s=0}^N \sum_{d=-N+s,\,-N+s+2,\,\dots}^{\dots,N-s-2,N-s} \sqrt{\frac{N-s+d+2}{2}}\sqrt{\frac{N-s-d}{2}}\\
	&\qquad\qquad\qquad\qquad\times u_1^{\otimes (N-s+d+2)/2}\otimes_{\mathrm{sym}} u_2^{\otimes (N-s-d-2)/2}\otimes_{\mathrm{sym}} \Phi_{s,d}\\
	=\;&\sum_{s=0}^N \sum_{d'=-N+s+2,\,-N+s+4,\,\dots}^{\dots,N-s,N-s+2} \sqrt{\frac{N-s+d'}{2}}\sqrt{\frac{N-s-d'+2}{2}}\\
	&\qquad\qquad\qquad\qquad\times u_1^{\otimes (N-s+d')/2}\otimes_{\mathrm{sym}} u_2^{\otimes (N-s-d')/2}\otimes_{\mathrm{sym}} \Phi_{s,d'-2}.
	\end{split}
	\end{equation*}
	Thus, acting with $\mathcal{U}_N$ we find
	\begin{equation*}\begin{split}
	\left(\mathcal{U}_N a^\dagger_1 a_2 \mathcal{U}_N^*\Phi\right)_{s,d'}=\;&\sqrt{\frac{N-s+d'}{2}}\sqrt{\frac{N-s-d'+2}{2}}\Phi_{s,d'-2}\\
	=\;&\left( \sqrt{\frac{N-\mathcal{N}_\perp+\mathfrak{D}}{2}}\sqrt{\frac{N-\mathcal{N}_\perp-\mathfrak{D}+2}{2}}\, \Theta^2\Phi \right)_{s,d'}.
	\end{split}
	\end{equation*}
        Using the unitarity of $\Theta$,  the commutation of $\Theta$ with $\mathcal{N}_\perp$ and the identity \eqref{eq:conjugation_with_theta},
        one finds
\begin{align*}
  & \sqrt{\frac{N-\mathcal{N}_\perp+\mathfrak{D}}{2}}\sqrt{\frac{N-\mathcal{N}_\perp-\mathfrak{D}+2}{2}}\, \Theta\\
  & \qquad 
  = \Theta \sqrt{\frac{N-\mathcal{N}_\perp+\mathfrak{D}+1}{2}}\sqrt{\frac{N-\mathcal{N}_\perp-\mathfrak{D}+1}{2}}
\end{align*}
and the second identity in \eqref{eq:conjugation} follows.

The proofs of the last three identities are basically identical. We focus on the first one. We have
	\begin{equation*}
	\begin{split}
	a^\dagger_1 a_m \mathcal{U}_N^* \Phi=\;& \sum_{s=1}^N \sum_{d=-N+s,\,-N+s+2,\,\dots}^{\dots,N-s-2,N-s} \sqrt{ \frac{N-s+d+2}{2}}\\
	&\qquad\qquad\qquad\qquad\times u_1^{\otimes (N-s+d+2)/2}\otimes_{\mathrm{sym}} u_2^{\otimes (N-s-d)/2}\otimes_{\mathrm{sym}} \left(a_m\Phi \right)_{s-1,d}\\
	=\;&\sum_{s'=0}^{N-1} \sum_{d=-N+s+1,\,-N+s+3,\,\dots}^{\dots,N-s-1,N-s+1} \sqrt{ \frac{N-s'+d'}{2}}\\
	&\qquad\qquad\qquad\qquad\times u_1^{\otimes (N-s'+d')/2}\otimes_{\mathrm{sym}} u_2^{\otimes (N-s'-d')/2}\otimes_{\mathrm{sym}} \left(a_m\Phi \right)_{s',d'-1}.
	\end{split}
	\end{equation*}
	Acting with $\mathcal{U}_N$ we find
	\begin{equation*}
	\begin{split}
	\left(\mathcal{U}_N a^\dagger_1 a_m \mathcal{U}_N^*\Phi\right)_{s',d'}=\;& \sqrt{\frac{N-s'+d'}{2}} \left( a_m\Phi\right)_{s',d'-1}=\left( \sqrt{\frac{N-\mathcal{N}_\perp+\mathfrak{D}}{2}}\Theta a_m\Phi \right)_{s',d'}
	\end{split}
	\end{equation*}
        and the result is again obtained by commuting $\Theta$ all the way to the left using \eqref{eq:commutation_relations}. 
\end{proof}

With the above we will be able to conjugate with $\mathcal{U}_N$ each summand in the Hamiltonian \eqref{eq:HN second}. For example
\begin{equation*}
\mathcal{U}_Na^\dagger _1a^\dagger _1a_1a_m\mathcal{U}_N^*=\mathcal{U}_Na^\dagger _1a_m\mathcal{U}_N^*\,\mathcal{U}_Na^\dagger _1a_1\mathcal{U}_N^*=\Theta\sqrt{\frac{N-\mathcal{N}_\perp+\mathfrak{D}}{2}}\frac{N-\mathcal{N}_\perp+\mathfrak{D}-1}{2}a_m
\end{equation*}
for any $m\ge3$.

\subsection{Bogoliubov Hamiltonian}\label{sec:def Bog}

The Bogoliubov Hamiltonian is a quadratic operator on $\ell^2(\mathcal{F}_\perp)$ that represents the main contribution to the energy inside $\mathcal{U}_N(H_N-H_{2\mathrm{-mode}})\mathcal{U}_N^\ast$, i.e., after the contribution from the modes $u_1$ and $u_2$ has been subtracted. We first define operators $K_{11},K_{22},K_{12}:L^2(\mathbb{R}^d)\to L^2(\mathbb{R}^d)$ through their matrix elements
\begin{equation*}
\begin{split}
\langle v,K_{11}u\rangle=\;&\frac{1}{2}\langle v\otimes u_1\,,\, w \,u_1\otimes u\rangle\\
\langle v,K_{22}u\rangle=\;&\frac{1}{2}\langle v\otimes u_2\,,\,w\,u_2\otimes u\rangle\\[2mm]
\langle v,K_{12}u\rangle=\;&\langle v\otimes u_1\,,\,w\,u_2\otimes u\rangle.
\end{split}
\end{equation*}
Since $u_1$ and $u_2$ are real, we have $K_{11}=K_{11}^* $ and $K_{22}=K_{22}^*$. Since $w$ is bounded and $u_1,u_2\in L^2(\mathbb{R}^d)$, Young's inequality immediately shows that these are bounded operators.


\begin{defi}[\textbf{Bogoliubov Hamiltonian}]\mbox{}\\
	We call Bogoliubov Hamiltonian the operator on $\ell^2(\mathfrak{F}_\perp)$
	\begin{equation} \label{eq:Bog_Hamiltonian}
	\begin{split}
	\mathbb{H}=\;&\sum_{m,n\ge3}\Big( -\Delta+V_\mathrm{DW}+\frac{\lambda}{2}w*|u_1|^2+\frac{\lambda}{2}w*|u_2|^2+{\lambda}K_{11}+{\lambda}K_{22}-\mu_+ \Big)_{mn}a^\dagger _ma_n\\
	&+\frac{\lambda}{2}\sum_{m,n\ge3}\big(K_{11} \big)_{mn}\Big( \Theta^{-2} a^\dagger _ma^\dagger _n+\Theta^2 a_ma_n\Big)\\
	&+\frac{\lambda}{2}\sum_{m,n\ge3}\big( K_{22}\big)_{mn}\Big(  \Theta^2 a^\dagger _ma^\dagger _n+ \Theta^{-2}a_ma_n\Big)\\
	&+\frac{\lambda}{2}\sum_{m,n\ge3} \big(K_{12}\big)_{mn} a^\dagger _ma^\dagger _n+\frac{\lambda}{2} \sum_{m,n\ge3}\big(K_{12}^{*}\big)_{mn} a_ma_n\\
	&+\frac{\lambda}{2}\sum_{m,n\ge3}\big(K_{12} +w*(u_1u_2)\big)_{mn}\Theta^2 a^\dagger _ma_n+\frac{\lambda}{2}\sum_{m,n\ge3}\big(K_{12}^\ast +w*(u_1u_2)\big)_{mn}\Theta^{-2} a^\dagger _ma_n
	\end{split}
	\end{equation}
\end{defi}

The above is formally obtained  from $H_N$ by:
\begin{itemize}
	\item[1.] considering the parts of $H_N$ in \eqref{eq:HN second} that contain \emph{exactly two} $a^\sharp_m$ with $m\ge3$;
	\item[2.] acting with \eqref{eq:conjugation} to pass to the space $\ell^2(\mathfrak{F}_\perp)$;
	\item[3.] replacing all fractions coming from the right hand sides of \eqref{eq:conjugation} by $(N-1)/2$.
\end{itemize} 
This procedure  will be made rigorous in Proposition \ref{prop:bogoliubov} below.

A crucial feature of $\mathbb{H}$ is that, if we could ignore the terms coupling modes (mostly) supported in different wells (for example the last two lines of \eqref{eq:Bog_Hamiltonian}), then $\mathbb{H}$ would coincide with the sum of two commuting quadratic Hamiltonians, each depending on one-well modes, as we now explain. We start with the following definition (recall the definition of left and right modes in \eqref{eq:basis_left_right}):

\begin{defi}[\textbf{$\Theta$-translated right and left creators/annihilators}]\mbox{} \label{def:b_c} \\
For any $m,\alpha\ge1$  we define
\begin{equation} \label{eq:b'sc's}
\begin{array}{lll}
b_{m}:=\;\Theta\, a_{m} \qquad & b_{r,\alpha}:=\; \Theta\, a_{r,\alpha}\qquad & b_{\ell,\alpha}:=\;\Theta\, a_{\ell,\alpha}\\
c_{m}:=\;\Theta^{-1}\, a_{m} \qquad & c_{r,\alpha}:=\; \Theta^{-1}\, a_{r,\alpha}\qquad & c_{\ell,\alpha}:=\;\Theta^{-1}\, a_{\ell,\alpha}\\
\end{array}
\end{equation}
together with their adjoints $b_{m}^\dagger,b_{r,\alpha}^\dagger,b_{\ell,\alpha}^\dagger,c_{m}^\dagger,c_{r,\alpha}^\dagger,c_{\ell,\alpha}^\dagger$ (recall that $\Theta^\ast = \Theta^{-1}$).
\end{defi}
It is straightforward to check the commutation relations
\begin{equation} \label{eq-CCR_b_c}
\begin{split}
\left[b_m,b^\dagger _n\right]=\left[c_m,c^\dagger _n\right]= \delta_{mn}\;, &\;\;\left[b_{r,\alpha},b^\dagger _{r,\beta}\right]=\left[b_{\ell,\alpha},b^\dagger _{\ell,\beta}\right]=\left[c_{r,\alpha},c^\dagger _{r,\beta}\right]=\left[c_{\ell,\alpha},c^\dagger _{\ell,\beta}\right]=\delta_{\alpha \beta}\\
\left[b_m,b_n\right]=\left[c_m,c _n\right] =0 \;, &\;\;\left[b_{r,\alpha},b _{r,\beta}\right]=\left[b_{\ell,\alpha},b _{\ell,\beta}\right]=\left[c_{r,\alpha},c _{r,\beta}\right]=\left[c_{\ell,\alpha},c _{\ell,\beta}\right]=0.
\end{split}
\end{equation}
The $b^\sharp_{r,\alpha}$ operators will be used to construct the excitation energy of the right well, while the $c^\sharp_{\ell,\alpha}$ will be associated with the left well. No other combination contributes to the energy at the order of precision we aim at. This leads to the 

\begin{defi}[\textbf{Right and left Bogoliubov Hamiltonians}]\mbox{}\\
	The quadratic Hamiltonians for right and left modes are
	\begin{align} \label{eq:H_right}
	\mathbb{H}_\mathrm{right}:=\;&\sum_{\alpha,\beta\ge1}\left\langle u_{r,\alpha},\Big(h_\mathrm{MF}-\mu_++{\lambda}K_{11}\Big) u_{r,\beta}\right\rangle  b^\dagger _{r,\alpha} b_{r,\beta}\nonumber\\
	&+\frac{\lambda}{2}\sum_{\alpha,\beta\ge1}\left\langle u_{r,\alpha}, K_{11} u_{r,\beta}\right\rangle \left( b^\dagger _{r,\alpha} b^\dagger _{r,\beta}+ b_{r,\alpha} b_{r,\beta}\right)
	\\\label{eq:H_left}
	\mathbb{H}_\mathrm{left}:=\;&\sum_{\alpha,\beta\ge1}\left\langle u_{\ell,\alpha},\Big(h_\mathrm{MF}-\mu_++{\lambda}K_{22}\Big) u_{\ell,\beta}\right\rangle  c^\dagger _{\ell,\alpha} c_{\ell,\beta}\nonumber\\
	&+\frac{\lambda}{2}\sum_{\alpha,\beta\ge1}\left\langle u_{\ell,\alpha}, K_{22} u_{\ell,\beta}\right\rangle \left( c^\dagger _{\ell,\alpha} c^\dagger _{\ell,\beta}+ c_{\ell,\alpha} c_{\ell,\beta}\right).
	\end{align}
\end{defi}

Since $\langle u_{r,\alpha},u_{\ell,\beta}\rangle=0$ for all $\alpha,\beta$, every creator or annihilator of a right mode $b^\sharp_{r,\alpha}$ commutes with every creator or annihilator of a left mode $c^\sharp_{\ell,\alpha}$. The two  Hamiltonians above hence correspond (after conjugation with Bogoliubov transformations) to independent harmonic oscillators. One should view $\mathbb{H}_\mathrm{right}$ (resp. $\mathbb{H}_\mathrm{left}$) as obtained from $\mathbb{H}$ by retaining only those summands in which the $L^2(\mathbb{R}^d)$ scalar products are between $u_{r,\alpha}$ modes (resp. $u_{\ell,\alpha}$ modes). A further difference is the appearance of $h_\mathrm{MF}$ in \eqref{eq:H_right} and $\eqref{eq:H_left}$ instead of the operator $-\Delta+V_\mathrm{DW}+\lambda w*|u_1|^2/2+\lambda w*|u_2|^2/2$ that appears in \eqref{eq:Bog_Hamiltonian}. This is due to the fact that their difference, proportional to $\mathrm{d}\Gamma_\perp(w*(u_1u_2))$, will turn out to be negligible. The $b^\dagger b$-part of $\mathbb{H}_{\mathrm{right}}$ is the second quantization of the self-adjoint operator $P_r h_{\mathrm{MF}}P_r$ (and a similar property for the $c^\dagger c$ of $\mathbb{H}_{\mathrm{left}}$).

It follows from the above definitions and the discussion in \cite[Sections 4 and 5]{GreSei-13}, that our previous definition \eqref{eq:E_bog} coincides with
\begin{equation}\label{eq:Ebog_car}
E_{\rm Bog} = \inf \sigma_{\ell^2 (\gF^\perp)} \left(\mathbb{H}_\mathrm{right} \right) + \inf \sigma_{\ell^2 (\gF^\perp)} \left(\mathbb{H}_\mathrm{left} \right)
\end{equation}
that we can obtain by acting on the vacuum with two commuting Bogoliubov transformations and taking the expectation value of $\mathbb{H}_\mathrm{right} + \mathbb{H}_\mathrm{left}$ in the quasi-free state thus obtained. More details will be provided in Section~\ref{subsect:upper} below.

\section{Bounds on the 2-mode Hamiltonian} \label{sect:proof_2mode}

The aim of this Section is to prove lower and upper bounds for the Hamiltonian $H_{2\mathrm{-mode}}$ defined in \eqref{eq:2mode_definition}. We will also show a bound on the Bose-Hubbard energy and prove Proposition~\ref{pro:ener comp}.
We define the operator
\begin{equation} \label{eq:bh_T}
\begin{split}
\mathcal{T}:=\;&\frac{\mu_+-\mu_-}{2}-\frac{\lambda}{N-1}w_{1112}\mathcal{N}_\perp-\frac{\lambda}{N-1}w_{1122}(\mathcal{N}_\perp-1)
\end{split}
\end{equation}
and the energy constants
\begin{equation}\label{eq:E_0}
\begin{split}
  E_0 =\;& Nh_{11} + \frac{\lambda N^2}{4(N-1)} (2 w_{1122} - w_{1212})
\end{split}
\end{equation}
and
\begin{equation}\label{eq:2mode_constants}
\begin{split}
E_N^w:=\;&N\Big (\frac{\lambda N}{4(N-1)}(w_{1111}-4 w_{1122}+ 2 w_{1212})-\frac{\lambda}{2(N-1)}(w_{1111}+w_{1122})\Big)\\
\mu:=\;&h_{11}+\frac{\lambda}{2}w_{1111}+\frac{\lambda N}{2(N-1)}(w_{1212}-2w_{1122})-\frac{\lambda}{2(N-1)}w_{1122}\\
U:=\;&\frac{1}{4}(w_{1111}-w_{1212}).
\end{split}
\end{equation}
The next Lemma gives precise estimates on the magnitude of these quantities.

\begin{lemma}[\textbf{$w$-coefficients and chemical potential}]\mbox{} \label{lemma:w_coefficients}\\
There exist strictly positive constants $c$ and $C$ independent on $N$ and, for any $\varepsilon >0$, a $N$-independent constant $C_\varepsilon>0$ such that
	\begin{align}
	c\le w_{1111}\le\;& C \label{eq:w_1111}\\
	|w_{1112}|\le\;& C_\varepsilon T^{1-\varepsilon} \label{eq:w_1112}\\
	0\le w_{1122}\le\;& C_\varepsilon T^{2-\varepsilon} \label{eq:w_1122}\\
	0\le w_{1212}\le\;&C_\varepsilon T^{1-\varepsilon}, \label{eq:w_1212}
	\end{align}
        where $T$ is given by \eqref{eq:def_T}.
	As a consequence, we have
	\begin{equation} \label{eq:close_chemical_potentials}
	|\mu-\mu_+|\le C_\varepsilon T^{1-\varepsilon},
	\end{equation}
	where $\mu$ was defined in \eqref{eq:2mode_constants} and $\mu_+$ is the ground state energy of $h_\mathrm{MF}$.
\end{lemma}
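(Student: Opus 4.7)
The plan is to expand all four-function coefficients in the $(u_+,u_-)$ basis via the identities $u_1 u_2 = \tfrac12(u_+^2-u_-^2)$ and $u_1^2\pm u_2^2 = u_+^2 + u_-^2 \pm 2u_+u_-$, and then invoke the one-body estimates of~\cite{OlgRou-20} summarized in Theorem~\ref{thm:onebody}: the gap estimate $\mu_- - \mu_+ \simeq T$, the pointwise Agmon decay $|u_\pm(x)|\lesssim e^{-(1-\epsilon) A(|x\mp x_L|)}$ away from the ``right'' well, and the integrated proximity of $|u_+|$ and $|u_-|$.

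The bound $w_{1111}\le \|w\|_\infty$ follows at once from Young's inequality since $\|u_1\|_2=1$. The strict lower bound uses $\widehat w \ge 0$ (so $w_{1111}\ge 0$) together with the convergence, on a fixed compact neighborhood of $-x_L$, of $|u_1|^2$ to a non-trivial single-well density bounded below by a positive constant uniformly in $L$: on that neighborhood $w*|u_1|^2$ is then bounded below, which produces the claimed $c>0$. For $w_{1112}$, I write $w_{1112} = \tfrac12 \int W(y)(u_+^2-u_-^2)(y)\,dy$ with $W := w*u_1^2 \in L^\infty$ and bound
\[
|w_{1112}|\le \|W\|_\infty \,\big\||u_+|^2-|u_-|^2\big\|_{L^1}\le 2\|W\|_\infty \,\big\||u_+|-|u_-|\big\|_{L^2}\lesssim_\epsilon T^{1-\epsilon},
\]
using an $L^2$ version of the proximity estimate of~\cite{OlgRou-20}. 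The coefficient $w_{1212} = \int w(x-y)(u_1u_2)(x)(u_1u_2)(y)\,dxdy$ is non-negative by positive type applied to $u_1u_2$, and the same $L^1$ control on $u_1u_2$ together with Young's inequality yields $w_{1212}\le \|w\|_\infty \|u_1u_2\|_{L^1}^2 \lesssim_\epsilon T^{2(1-\epsilon)}$, which is stronger than claimed.

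For $w_{1122}$ positivity is immediate. The sharp $T^{2-\epsilon}$ upper bound is the subtlest point: a pointwise estimate on $|u_2|$ restricted to the left half-space only gives $T^{1-\epsilon}$, since $\int_{\{x_1<0\}}|u_2|^2$ carries a single Agmon factor. The improvement exploits the short range of $w$: for $|x-y|\le R:=\mathrm{diam}(\supp w)$, the triangle inequality gives $|x+x_L|+|y-x_L|\ge L-R$, and the convexity of $A$ then yields
\[
|u_1(x)|^2|u_2(y)|^2\lesssim e^{-2A(|x+x_L|)-2A(|y-x_L|)}\lesssim e^{-4(1-\epsilon) A((L-R)/2)}\lesssim_\epsilon T^{2-\epsilon}
\]
uniformly on $\{|x-y|\le R\}$; integrating against $w(x-y)$ (which provides the volume cutoff via the integrability of $|u_1|^2$) finishes the bound. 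Finally, for $|\mu-\mu_+|$, substituting the three off-diagonal estimates into~\eqref{eq:2mode_constants} reduces the problem to $|h_{11}+\tfrac\lambda 2 w_{1111}-\mu_+|\lesssim T^{1-\epsilon}$. Writing $h_{11}=\tfrac12\langle u_++u_-,(-\Delta+V_\mathrm{DW})(u_++u_-)\rangle$ and using $h_\mathrm{MF}u_\pm=\mu_\pm u_\pm$ to substitute $(-\Delta+V_\mathrm{DW})u_\pm=(\mu_\pm-\lambda w*|u_+|^2)u_\pm$, I get $h_{11}=\tfrac12(\mu_++\mu_-)-\lambda \langle u_1,(w*|u_+|^2)u_1\rangle$. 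Inserting $|u_+|^2=\tfrac12(|u_1|^2+|u_2|^2)+u_1u_2$ into that scalar product and controlling the resulting cross term via the same Cauchy-Schwarz argument used for $w_{1112}$, together with $|\mu_--\mu_+|\lesssim T$, completes the estimate.

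The main technical obstacle is the $T^{2-\epsilon}$ bound on $w_{1122}$: one must combine the Agmon decay of \emph{both} $u_1$ and $u_2$ through the triangle inequality furnished by the short range of $w$ to obtain the \emph{second} small factor, which a purely pointwise or half-space $L^2$ argument on $u_2$ alone does not deliver.
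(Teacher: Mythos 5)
Your proposal contains a genuine error, and its source is an index swap. With the paper's convention $w_{mnpq}=\langle u_m\otimes u_n,\, w\, u_p\otimes u_q\rangle$ from \eqref{eq:intro not}, one has $w_{1122}=\iint (u_1u_2)(x)\,w(x-y)\,(u_1u_2)(y)\,dxdy$ and $w_{1212}=\iint |u_1(x)|^2\,w(x-y)\,|u_2(y)|^2\,dxdy$ --- the opposite of the identifications you use. Your positive-type argument together with $\|w\|_{L^\infty}\|u_1u_2\|_{L^1}^2\lesssim_\varepsilon T^{2-\varepsilon}$ (via $2u_1u_2=u_+^2-u_-^2$ and \eqref{eq:L1_convergence}) is precisely the paper's proof of \eqref{eq:w_1122}; it is not ``stronger than claimed'', it \emph{is} the claim. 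Conversely, the coefficient that remains is $w_{1212}=\iint u_1^2\,w\,u_2^2$, for which the lemma asks only $C_\varepsilon T^{1-\varepsilon}$, and the ``naive'' half-space bound you dismiss --- splitting the integral into the regions $x_1\le 0$ and $x_1\ge 0$ using the compact support of $w$, and paying a single factor $\int_{\{x_1\le0\}}|u_1|^2\lesssim_\varepsilon T^{1-\varepsilon}$ from \eqref{eq:L2_convergence} in each region --- is exactly the paper's argument and exactly suffices.

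Your ``improvement'' to $T^{2-\varepsilon}$ for that last coefficient is not merely unnecessary, it is unsound: the pointwise bound $|u_1(x)|\lesssim e^{-(1-\varepsilon)A(|x+x_L|)}$ does not hold globally. Agmon estimates for the double-well operator control decay away from the \emph{union} of the wells, i.e.\ in terms of $\min\big(A(|x-x_L|),A(|x+x_L|)\big)$; the combinations $u_1,u_2$ do not keep decaying at the single-well rate across the barrier but saturate at the tunnelling amplitude, since on the half-space away from its own well $\sqrt{2}\,|u_1|=\big||u_+|-|u_-|\big|$ is of size $T^{1/2-\varepsilon}$ by \eqref{eq:Linfty_convergence}, vastly larger than $e^{-(1-\varepsilon)A(L)}$. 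Accordingly, the neighbourhood of each well contributes about $\|u_1\|^2_{L^2(\mathrm{well})}\cdot\sup|u_2|^2\sim 1\cdot T$ to $w_{1212}$, so this coefficient is expected to be of order $T$ and the bound $T^{2-\varepsilon}$ you aim for is false; in any case your derivation of it rests on a false premise. Two further remarks: for $w_{1112}$ your chain passes through $\||u_+|-|u_-|\|_{L^2}\lesssim_\varepsilon T^{1/2-\varepsilon}$ and therefore only delivers $T^{1/2-\varepsilon}$; you must use the $L^1$ estimate \eqref{eq:L1_convergence} on $|u_+|^2-|u_-|^2$ directly, as the paper does. The treatments of $w_{1111}$ (the paper instead integrates the explicit pointwise Agmon \emph{lower} bound on $u_+$ over a quadrant, which is what substantiates your ``uniform lower bound near the well'' premise) and of $|\mu-\mu_+|$ (via $h_{11}=\tfrac12(\mu_++\mu_-)-\lambda\langle u_1,(w*|u_+|^2)u_1\rangle$, a correct if slightly different bookkeeping from the paper's) are fine.
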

We postpone the proof of this lemma to Appendix \ref{appendix:2mode}. As a consequence of Lemma \ref{lemma:w_coefficients},
the reader should keep in mind the rule-of-thumb estimates 
\begin{equation*}
\begin{split}
\mathcal{T}\simeq\;& \frac{\mu_+-\mu_-}{2}\qquad\text{on the states that will be of interest}\\
\mu\simeq\;& \mu_+\\
U\simeq\;& \frac{w_{1111}}{4}\ge C>0.
\end{split}
\end{equation*}

\subsection{Lower bound for $H_{2\mathrm{-mode}}$.}
We shall prove the following:

  \begin{proposition}[\textbf{Expression and lower bound for $H_{2\mathrm{-mode}}$ }]\mbox{} \label{lemma:2mode_lower}\\
	We have the exact expression
	\begin{equation} \label{eq:2mode_expression}
	\begin{split}
	H_{2\mathrm{-mode}}=\;&E_0+E^w_N+\mathcal{T}\big(a^\dagger_1a_2+a^\dagger_2a_1\big)-\mu\mathcal{N}_\perp+\frac{\lambda U}{N-1}\left(\mathcal{N}_1-\mathcal{N}_2\right)^2\\
	&+\frac{2\lambda}{N-1}w_{1122}\mathcal{N}^2_-+\frac{\lambda}{4(N-1)}(w_{1111}-2w_{1122}+w_{1212})\mathcal{N}_\perp^2
	\end{split}
	\end{equation}
	and the lower bound
	\begin{equation} \label{eq:lower_bound_2mode}
	\begin{split}
	H_{2\mathrm{-mode}}\ge\;& E_0+E^w_N-\mu_+\mathcal{N}_\perp+N\frac{\mu_+-\mu_-}{2}\\
	&+\frac{\lambda U}{N-1}\left(\mathcal{N}_1-\mathcal{N}_2\right)^2-C_\varepsilon T^{1-\varepsilon}\mathcal{N}_\perp.
	\end{split}
	\end{equation}
\end{proposition}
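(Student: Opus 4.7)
The proof consists of two parts: an exact algebraic identity \eqref{eq:2mode_expression} obtained by normal-ordering the second-quantized form of $H_{2\mathrm{-mode}}$, and the operator lower bound \eqref{eq:lower_bound_2mode} that follows by dropping a handful of non-negative contributions and estimating the remaining small corrections via Lemma~\ref{lemma:w_coefficients}.

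For the exact identity I first compute the matrix elements $h_{mn}$ with $m,n\in\{1,2\}$ by writing $-\Delta+V_\mathrm{DW}=h_\mathrm{MF}-\lambda w*|u_+|^2$, invoking $h_\mathrm{MF}u_\pm=\mu_\pm u_\pm$, and expanding $|u_+|^2=(u_1+u_2)^2/2$ to express $h_{11}$ and $h_{12}$ in terms of $\mu_\pm$ and the $w$-coefficients. For the quartic sum $\sum_{m,n,p,q\in\{1,2\}}w_{mnpq}a^\dagger_m a^\dagger_n a_p a_q$ I use the symmetries of $w_{mnpq}$ coming from the reality of $u_1,u_2$ and the evenness of $w$ (namely $w_{mnpq}=w_{pnmq}=w_{mqpn}=w_{nmqp}$) to collapse the sixteen terms into four classes (four $1$'s, pair exchange with coefficient $w_{1122}$, mixed $\mathcal{N}_1\mathcal{N}_2$-type, and three-of-one terms with coefficient $w_{1112}$). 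The key normal-ordering identities are
\begin{equation*}
(a^\dagger_1 a_2)^2+(a^\dagger_2 a_1)^2=(a^\dagger_1 a_2+a^\dagger_2 a_1)^2-2\mathcal{N}_1\mathcal{N}_2-(\mathcal{N}_1+\mathcal{N}_2)
\end{equation*}
and the collapse of the three-of-one terms to $2w_{1112}(\mathcal{N}_1+\mathcal{N}_2-1)(a^\dagger_1 a_2+a^\dagger_2 a_1)$, both obtained from $[a_i, a^\dagger_j]=\delta_{ij}$ and the commutation of $\mathcal{N}_1+\mathcal{N}_2$ with the hopping operator. Substituting the constraint $\mathcal{N}_1+\mathcal{N}_2=N-\mathcal{N}_\perp$ (valid on $\mathfrak{H}^N$), reexpressing $\mathcal{N}_1^2+\mathcal{N}_2^2$ and $\mathcal{N}_1\mathcal{N}_2$ via $(\mathcal{N}_1\pm\mathcal{N}_2)^2$, and finally using $a^\dagger_1 a_2+a^\dagger_2 a_1=(N-\mathcal{N}_\perp)-2\mathcal{N}_-$ to convert the residual $w_{1122}(a^\dagger_1 a_2+a^\dagger_2 a_1)^2$ into $4w_{1122}\mathcal{N}_-^2$ plus $(N-\mathcal{N}_\perp)^2$ and $(N-\mathcal{N}_\perp)(a^\dagger_1 a_2+a^\dagger_2 a_1)$ corrections that merge with the existing hopping pieces to produce exactly $\mathcal{T}(a^\dagger_1 a_2+a^\dagger_2 a_1)$, one collects all terms and matches the coefficients of $1$, $\mathcal{N}_\perp$, $\mathcal{N}_\perp^2$, $(\mathcal{N}_1-\mathcal{N}_2)^2$, $\mathcal{N}_-^2$, and the hopping operator against $E_0+E^w_N$, $-\mu$, $\tfrac{\lambda}{4(N-1)}(w_{1111}-2w_{1122}+w_{1212})$, $\tfrac{\lambda U}{N-1}$, $\tfrac{2\lambda w_{1122}}{N-1}$, and $\mathcal{T}$ respectively. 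This is lengthy but purely mechanical bookkeeping and is the main work of the proof.

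For the lower bound I start from \eqref{eq:2mode_expression} and drop the non-negative operators $\frac{2\lambda w_{1122}}{N-1}\mathcal{N}_-^2\geq 0$ (by \eqref{eq:w_1122}) and $\frac{\lambda(w_{1111}-2w_{1122}+w_{1212})}{4(N-1)}\mathcal{N}_\perp^2\geq 0$ (the coefficient is $\geq c/2>0$ for $T$ small, using \eqref{eq:w_1111} together with the smallness of $w_{1122},w_{1212}$ from \eqref{eq:w_1122}-\eqref{eq:w_1212}). For the tunneling term $\mathcal{T}(a^\dagger_1 a_2+a^\dagger_2 a_1)$ I split $\mathcal{T}=(\mu_+-\mu_-)/2+\Delta\mathcal{T}$; since $\mathcal{T}$ is a function of $\mathcal{N}_\perp$ it commutes with the hopping operator, and the spectral bound $|a^\dagger_1 a_2+a^\dagger_2 a_1|=|\mathcal{N}_+-\mathcal{N}_-|\leq\mathcal{N}_++\mathcal{N}_-=N-\mathcal{N}_\perp\leq N$ combined with $\mu_+-\mu_-\leq 0$ yields $\tfrac{\mu_+-\mu_-}{2}(a^\dagger_1 a_2+a^\dagger_2 a_1)\geq\tfrac{\mu_+-\mu_-}{2}N$, while \eqref{eq:w_1112}-\eqref{eq:w_1122} give $|\Delta\mathcal{T}|\leq C_\varepsilon\lambda T^{1-\varepsilon}\mathcal{N}_\perp/(N-1)$, so that $\Delta\mathcal{T}(a^\dagger_1 a_2+a^\dagger_2 a_1)\geq -C_\varepsilon T^{1-\varepsilon}\mathcal{N}_\perp$. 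The replacement $-\mu\mathcal{N}_\perp\geq-\mu_+\mathcal{N}_\perp-C_\varepsilon T^{1-\varepsilon}\mathcal{N}_\perp$ follows from \eqref{eq:close_chemical_potentials}. The main obstacle is the intricate algebraic derivation of \eqref{eq:2mode_expression}; the operator lower bound is then a direct consequence of the $w$-coefficient estimates of Lemma~\ref{lemma:w_coefficients}.
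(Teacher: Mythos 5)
Your proposal is correct and follows essentially the same route as the paper: the same symmetry reductions of the $w_{mnpq}$, the same CCR identities (your normal-ordering identity for $(a_1^\dagger a_2)^2+(a_2^\dagger a_1)^2$ is equivalent to the paper's Lemma on $4\mathcal{N}_-^2$), the identification of the hopping coefficient with $\langle u_1,h_{\rm MF}u_2\rangle=(\mu_+-\mu_-)/2$, and the same dropping of non-negative terms plus Lemma~\ref{lemma:w_coefficients} for the lower bound. The only nitpick is that $\Delta\mathcal{T}$ also contains the constant $\lambda w_{1122}/(N-1)$, not just terms proportional to $\mathcal{N}_\perp$; this piece contributes at most $C_\varepsilon T^{2-\varepsilon}$ after multiplying by the hopping operator and is absorbed (as in the paper) into the main tunneling term using $\mu_--\mu_+\ge c_\varepsilon T^{1+\varepsilon}$.
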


To prove Proposition~\ref{lemma:2mode_lower} we will use the trivial identities
\begin{align} \label{eq:2mode_identity_1}
& a^\dagger _1\left(\mathcal{N}_1+\mathcal{N}_2\right)a_2+a^\dagger _2\left(\mathcal{N}_1+\mathcal{N}_2\right)a_1=\;(\mathcal{N}_1+\mathcal{N}_2-1)\big(a^\dagger_1a_2+a^\dagger_2a_1\big)\\\label{eq:2mode_identity_2}
& \mathcal{N}_1^2+\mathcal{N}_2^2=\;\frac{\left(\mathcal{N}_1+\mathcal{N}_2\right)^2}{2}+\frac{\left(\mathcal{N}_1-\mathcal{N}_2\right)^2}{2}
\;\; , \;\; \mathcal{N}_1\mathcal{N}_2=\;\frac{\left(\mathcal{N}_1+\mathcal{N}_2\right)^2}{4}-\frac{\left(\mathcal{N}_1-\mathcal{N}_2\right)^2}{4},
\end{align}
as well as the following Lemma.

\begin{lemma}[\textbf{An identity in the two-modes subspace}] \label{lemma:bh_identities}
	\begin{equation} \label{eq:square_to_nosquare}
	\begin{split}
	  (a^\dagger _1a_2)^2+(a^\dagger _2a_1)^2 + 2 \mathcal{N}_1 \mathcal{N}_2 =\;&2\left(\mathcal{N}_1+\mathcal{N}_2\right)\big(a^\dagger_1a_2+a^\dagger_2a_1\big)
          - \left(\mathcal{N}_1+\mathcal{N}_2\right)^2\\
	&+4\mathcal{N}_-^2 - ( \mathcal{N}_1+\mathcal{N}_2)\,.
	\end{split}
	\end{equation}        
\end{lemma}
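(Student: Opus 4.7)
The plan is to reduce the identity \eqref{eq:square_to_nosquare} to an elementary algebraic manipulation between two operators that live in the two-dimensional span of $(u_+, u_-)$ and therefore commute. All the input needed is the CCR and the change-of-basis \eqref{eq:u_1_u_2} between $(u_1,u_2)$ and $(u_+,u_-)$.

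First I would introduce the shorthand $X := a_1^\dagger a_2 + a_2^\dagger a_1$ and $\hat N := \mathcal{N}_1+\mathcal{N}_2$ and compute the cross terms of $X^2$. Using $[a_j, a_j^\dagger]=1$, $[a_1^\sharp, a_2^\sharp]=0$ and normal ordering, one gets
\begin{equation*}
(a_1^\dagger a_2)(a_2^\dagger a_1) + (a_2^\dagger a_1)(a_1^\dagger a_2) = \mathcal{N}_1(\mathcal{N}_2+1)+ \mathcal{N}_2(\mathcal{N}_1+1) = 2\mathcal{N}_1\mathcal{N}_2 + \hat N,
\end{equation*}
so that $X^2 = (a_1^\dagger a_2)^2 + (a_2^\dagger a_1)^2 + 2\mathcal{N}_1\mathcal{N}_2 + \hat N$. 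Consequently the left-hand side of \eqref{eq:square_to_nosquare} is nothing but $X^2 - \hat N$, which is already a function of $X$ and $\hat N$ alone.

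Next I would rewrite $X$ and $\hat N$ in the $(u_+, u_-)$ basis. From \eqref{eq:u_1_u_2} we have $a_1^\sharp = (a_+^\sharp + a_-^\sharp)/\sqrt 2$ and $a_2^\sharp = (a_+^\sharp - a_-^\sharp)/\sqrt 2$, and a direct expansion gives
\begin{equation*}
X = \mathcal{N}_+ - \mathcal{N}_-, \qquad \hat N = \mathcal{N}_+ + \mathcal{N}_- ,
\end{equation*}
the latter being simply unitary invariance of the total number operator. In particular both $X$ and $\hat N$ are polynomials in the commuting number operators $\mathcal{N}_\pm$, hence $[X,\hat N]=0$. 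This is the structural point of the proof: the $(u_\pm)$ basis simultaneously diagonalises $X$ and $\hat N$, so we are genuinely working in a commutative setting for the remainder of the argument.

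The final step is a purely commutative identity. From the above, $2\mathcal{N}_- = \hat N - X$, and since $[X,\hat N]=0$ we may square this to obtain $4\mathcal{N}_-^2 = \hat N^2 - 2\hat N X + X^2$, i.e.\ $X^2 = 2\hat N X - \hat N^2 + 4\mathcal{N}_-^2$. Plugging this back into $X^2 - \hat N$ yields exactly the right-hand side of \eqref{eq:square_to_nosquare}. There is no serious obstacle here; the whole content of the lemma is the Schwinger-representation observation that $(u_+,u_-)$ diagonalises the hopping operator $a_1^\dagger a_2 + a_2^\dagger a_1$, and once this is noted the identity follows by squaring $2\mathcal{N}_- = \hat N - X$.
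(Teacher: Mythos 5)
Your proof is correct and follows essentially the same route as the paper's: both arguments rest on the identity $2\mathcal{N}_-=(\mathcal{N}_1+\mathcal{N}_2)-(a_1^\dagger a_2+a_2^\dagger a_1)$, the commutativity $[\mathcal{N}_1+\mathcal{N}_2,\,a_1^\dagger a_2+a_2^\dagger a_1]=0$, squaring, and a CCR computation of the cross terms $(a_1^\dagger a_2)(a_2^\dagger a_1)+(a_2^\dagger a_1)(a_1^\dagger a_2)=2\mathcal{N}_1\mathcal{N}_2+\mathcal{N}_1+\mathcal{N}_2$. The only cosmetic difference is that you make the commutation manifest by passing to the $(u_+,u_-)$ basis, where the hopping operator equals $\mathcal{N}_+-\mathcal{N}_-$, whereas the paper states the needed commutator directly.
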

The proof, a simple computation based on the CCR, is in Appendix~\ref{appendix:2mode}.

\begin{proof}[Proof of Proposition \ref{lemma:2mode_lower}]
	
  We start by proving \eqref{eq:2mode_expression}, which is actually just another way of writing \eqref{eq:2mode_definition}. First, notice that, due to the fact that \begin{equation*}u_1(-x_1,x_2,\dots,x_d)=u_2(x_1,x_2,\dots,x_d),\end{equation*}
  and since $h=-\Delta+V_\mathrm{DW}$ involves a symmetric potential $V_\mathrm{DW}$ with respect to reflexion about the $x_1$-axis and
  since $w(x,y)= w ( | x-y|)$,
  we have the relations
	\begin{equation*}
	h_{11}=h_{22},\qquad w_{1111}=w_{2222},\qquad w_{1112}=w_{2221}.
	\end{equation*}
	Moreover, since we work with a basis of real-valued functions and $w(x-y)=w(y-x)$, we have
	\begin{equation*}
	h_{12}=h_{21},\quad w_{mnpq}=w_{mqpn}=w_{pnmq} = w_{nm qp}.
	\end{equation*}
	Using these relations in \eqref{eq:2mode_definition} and collecting all terms, we first rewrite \eqref{eq:2mode_definition} as
	\begin{equation*}
	\begin{split}
	H_{2\mathrm{-mode}}=\;&h_{11}\left(\mathcal{N}_1+\mathcal{N}_2\right)+h_{12}\big(a^\dagger_1a_2+a^\dagger_2a_1\big)\\
	&+\frac{\lambda}{2(N-1)}w_{1111}\big(\mathcal{N}_1^2+\mathcal{N}_2^2-\mathcal{N}_1-\mathcal{N}_2\big)\\
	&+\frac{\lambda}{N-1}w_{1112}\big(a^\dagger _1\mathcal{N}_1a_2+a^\dagger _2\mathcal{N}_1a_1+a^\dagger _2\mathcal{N}_2a_1+a^\dagger _1\mathcal{N}_2a_2\big)\\
	&+\frac{\lambda}{2(N-1)}w_{1122}\big[ (a^\dagger _1a_2)^2 +(a^\dagger _2a_1)^2  + 2 \mathcal{N}_1\mathcal{N}_2 \big]\\
	&+\frac{\lambda}{N-1} w_{1212} \mathcal{N}_1\mathcal{N}_2.
	\end{split}
	\end{equation*} 
	Moreover, using the identities \eqref{eq:2mode_identity_1}, \eqref{eq:2mode_identity_2},
        Lemma \ref{lemma:bh_identities}, and  the definition of $U$ from \eqref{eq:2mode_constants}, we find
	\begin{equation} \label{rewriting_H_2modes}
	\begin{split}
	H_{2\mathrm{-mode}}=\;&\Big(h_{11}-\frac{\lambda}{2(N-1)}(w_{1111}+w_{1122})\Big)\left(\mathcal{N}_1+\mathcal{N}_2\right)\\
	&+\frac{\lambda}{4(N-1)}(w_{1111}-2w_{1122}+w_{1212})\left(\mathcal{N}_1+\mathcal{N}_2\right)^2\\
	&+\bigg(h_{12}+\frac{\lambda}{N-1}w_{1112}(\mathcal{N}_1+\mathcal{N}_2-1)+\frac{\lambda}{N-1}w_{1122}\left(\mathcal{N}_1+\mathcal{N}_2\right)\bigg)\big(a^\dagger_1a_2+a^\dagger_2a_1\big)\\
	&+\frac{\lambda U}{(N-1)}\left(\mathcal{N}_1-\mathcal{N}_2\right)^2 +\frac{2\lambda}{N-1}w_{1122}\mathcal{N}_-^2.
	\end{split}
	\end{equation}
	The identity $\mathcal{N}_1+\mathcal{N}_2=N-\mathcal{N}_\perp$ now yields
	\begin{equation*}
	\begin{split}
	H_{2\mathrm{-mode}}=\;&E_0+E^w_N-\mu\mathcal{N}_\perp + \frac{\lambda }{4(N-1)}(w_{1111}-2w_{1122}+w_{1212})\mathcal{N}_\perp^2 \\
	&+\bigg(h_{12}+\lambda w_{1112}+\lambda w_{1122}-\frac{\lambda}{N-1}w_{1112}\mathcal{N}_\perp-\frac{\lambda}{N-1}w_{1122}(\mathcal{N}_\perp-1)\bigg)\\
	&\qquad\times\big(a^\dagger_1a_2+a^\dagger_2a_1\big)\\
	&+\frac{\lambda U}{(N-1)}\left(\mathcal{N}_1-\mathcal{N}_2\right)^2 +\frac{2\lambda}{N-1}w_{1122}\mathcal{N}^2_- \,,
	\end{split}
	\end{equation*}
        where $E_0$ and $E^w_N$ are defined by \eqref{eq:E_0} and \eqref{eq:2mode_constants}, respectively.
        The constant term $E_0+E_N^w$ comes from the substitution $\mathcal{N}_1+\mathcal{N}_2 \rightsquigarrow N$ in
        the first two lines of \eqref{rewriting_H_2modes}. The third term $- \mu \mathcal{N}_\perp$ is the contribution coming from substituting
        $\mathcal{N}_1+\mathcal{N}_2 \rightsquigarrow-\mathcal{N}_\perp$ and $(\mathcal{N}_1+\mathcal{N}_2)^2 \rightsquigarrow -2 N \mathcal{N}_\perp$
        in the same lines.        
	The proof of \eqref{eq:2mode_expression} is completed by recognizing that the main part of the coefficient of $a^\dagger_1a_2+a^\dagger_2a_1$ is
	\begin{equation*}
	\begin{split}
	h_{12}+\lambda w_{1112}+\lambda w_{1122}=\;&\Big\langle u_1,\Big(-\Delta + V_\mathrm{DW}+\frac{1}{2} \lambda w*\left(u_1^2+u_2^2\right)+\lambda w*(u_1u_2) \Big)u_2\Big\rangle\\
	=\;&\langle u_1,h_\mathrm{MF}u_2\rangle=\frac{\mu_+-\mu_-}{2},
	\end{split}
	\end{equation*}
	having used \eqref{eq:u_1_u_2} to reconstruct $w*|u_+|^2$. This shows that the operator multiplying $\big(a^\dagger_1a_2+a^\dagger_2a_1\big)$ is the operator $\mathcal{T}$ defined in \eqref{eq:bh_T}, thus proving \eqref{eq:2mode_expression}.
	
	Let us now prove the lower bound \eqref{eq:lower_bound_2mode}. We will do so by considering all terms in \eqref{eq:2mode_expression} and estimating them from below. The main observation is that since  $\mu_+-\mu_-<0$, we can use the operator inequalities
	\begin{equation*}
	-N \leq a^\dagger_1a_2+a^\dagger_2a_1\le \mathcal{N}_1+\mathcal{N}_2=N-\mathcal{N}_\perp\le N.
	\end{equation*}
         Thus the term $\mathcal{T}\big(a^\dagger_1a_2+a^\dagger_2a_1\big)$  satisfies
	\begin{equation} 
	\begin{split}
	  \mathcal{T}\big(a^\dagger_1a_2+a^\dagger_2a_1\big)=\;& \Big( \frac{\mu_+-\mu_-}{2}-\frac{\lambda w_{1112}}{N-1}\mathcal{N}_\perp-\frac{\lambda w_{1122}}{N-1}(\mathcal{N}_\perp-1) \Big) \big(a^\dagger_1a_2+a^\dagger_2a_1\big)
\\
          \ge\;& - N \bigg| \frac{\mu_+-\mu_-}{2} + \frac{\lambda w_{1122}}{N-1} \bigg|
          -\frac{\lambda N}{N-1} \big|w_{1112}+w_{1122} \big| \mathcal{N}_\perp
       	\end{split}
	\end{equation}
        where we used that if two operators $A$ and $B$ commute, $z \in \mathbb{C}$, and $-N \leq A \leq N$ then
        $z A B \ge  - | z | N B$. The first absolute value in the right hand side is smaller than
        $(\mu_{-}-\mu_{+})/2$ because $\mu_{-} - \mu_{+} \ge c_\varepsilon T^{1 + \varepsilon} > 0$ by Theorem~\ref{thm:onebody},
        $0 < w_{1122} \leq C_\varepsilon T^{2-\varepsilon}$ by \eqref{eq:w_1112}, and $T \ll 1$.
        Furthermore,  due to \eqref{eq:w_1122} the second absolute value is bounded by $C_\varepsilon T^{1-\varepsilon}$. Thus 
	\begin{equation} \label{eq:lower_bound_tunneling_2mode}
          \mathcal{T}\big(a^\dagger_1a_2+a^\dagger_2a_1\big)
	\ge\; N \frac{\mu_+-\mu_-}{2} -C_\varepsilon T^{1-\varepsilon} \mathcal{N}_\perp\,.
	\end{equation}
        In order to bound the other terms in \eqref{eq:2mode_expression} from below, we first notice that, since $w_{1122}\ge0$,
	\begin{equation} \label{eq:lower_bound_-^2_2mode}
	\frac{2\lambda}{N-1}w_{1122}\mathcal{N}^2_-\ge0.
	\end{equation}
	For the term $-\mu\mathcal{N}_\perp$ we use \eqref{eq:close_chemical_potentials} to write
	\begin{equation} \label{eq:lower_bound_chemical_2mode}
	-\mu \mathcal{N}_\perp \ge- \mu_+\mathcal{N}_\perp-C_\varepsilon T^{1-\varepsilon}\mathcal{N}_\perp.
	\end{equation}  
	The only term left is that proportional to $\mathcal{N}_\perp^2$. Thanks to the positivity of $w_{1111}$ and  $w_{1212}$,
        using \eqref{eq:w_1122} and $\mathcal{N}_\perp \leq N$, we have
	\begin{equation} \label{eq:bh_lowerbound_perp^2}
	\frac{\lambda}{4(N-1)}(w_{1111}-2w_{1122}+w_{1212})\mathcal{N}_\perp^2 \ge -\frac{\lambda}{2(N-1)}w_{1122}\mathcal{N}_\perp^2 \ge -C_\varepsilon T^{2-\varepsilon}\mathcal{N}_\perp.
	\end{equation}
	Plugging \eqref{eq:lower_bound_tunneling_2mode}, \eqref{eq:lower_bound_-^2_2mode}, \eqref{eq:lower_bound_chemical_2mode}, and \eqref{eq:bh_lowerbound_perp^2} inside \eqref{eq:2mode_expression} gives \eqref{eq:lower_bound_2mode}.
\end{proof}

\subsection{Upper bound for $H_{2\mathrm{-mode}}$.} 

Let us define the trial function
\begin{equation} \label{eq:gaussian_trial}
\psi_{\mathrm{gauss}}:= \sum_{\substack{-\sigma_N^2 \leq d \leq \sigma_N^2 \\N+d \text{ is even}}}{}  c_d\, u_1^{\otimes (N+d)/2}\otimes_{\mathrm{sym}} u_2^{\otimes (N-d)/2},
\end{equation}
where the symmetrized tensor products are normalized in the above and $c_d$ are gaussian coefficients,
\begin{equation} \label{eq:c_d}
c_d:=\frac{1}{Z_N} e^{-d^2/4\sigma_N^2}\; ,\quad |d| \leq \sigma_N^2\;, 
\end{equation}
with $\sigma_N$ a variance parameter to be fixed later, such that $1 \le \sigma_N \ll N^{1/2}$, and $Z_N$ a normalization factor ensuring $\|\psi_{\mathrm{gauss}}\|=1$. We will prove

\begin{proposition}[\textbf{Upper bound for $H_{2\mathrm{-mode}}$}]\mbox{}\label{lemma:2mode_upper}\\
  Assume that $T \sim N^{-\delta}$ for some $\delta >0$. Then, with the choice  
  \begin{equation}\label{eq:choice sigma}
   \sigma_N ^2 = \begin{cases}
               \sqrt{\mu_- - \mu_+} N & \text{ if } \delta < 2 \\
               C & \text{ otherwise} 
              \end{cases}
  \end{equation}
with $C\ge 1$ a fixed constant, the trial state $\psi_{\mathrm{gauss}}$ defined in~\eqref{eq:gaussian_trial} satisfies
	\begin{equation} \label{eq:upper_bound_2mode}
	\langle H_{2\mathrm{-mode}}\rangle_{\psi_{\mathrm{gauss}}}\le E_0+E^w_N+N\frac{\mu_+-\mu_-}{2}+C_\varepsilon \max\left( T^{1/2-\varepsilon}, N^{-1 + \epsilon \delta}\right).
	\end{equation}
\end{proposition}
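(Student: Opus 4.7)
The plan is to compute $\langle H_{2\mathrm{-mode}}\rangle_{\psi_\mathrm{gauss}}$ directly from the exact expression \eqref{eq:2mode_expression}. Since $\psi_\mathrm{gauss}$ lives in $\bigotimes_\mathrm{sym}^N \mathrm{span}(u_1, u_2)$, the operator $\mathcal{N}_\perp$ annihilates it, killing the $-\mu\mathcal{N}_\perp$ and $\mathcal{N}_\perp^2$ terms and reducing $\mathcal{T}$ to the scalar $\tfrac{\mu_+-\mu_-}{2}+\tfrac{\lambda w_{1122}}{N-1}$. The expectation thereby reduces to the constant $E_0+E_N^w$ plus three remaining contributions: the hopping term $\mathcal{T}\langle a^\dagger_1 a_2 + a^\dagger_2 a_1\rangle$, the variance term $\tfrac{\lambda U}{N-1}\langle(\mathcal{N}_1-\mathcal{N}_2)^2\rangle$, and the $\mathcal{N}_-^2$ term which will be small.

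Next I evaluate each expectation. Since $\mathcal{N}_1-\mathcal{N}_2$ is diagonal in the basis appearing in \eqref{eq:gaussian_trial} with eigenvalue $d$, the variance equals $\sum_d c_d^2 d^2 = \sigma_N^2(1+o(1))$ (replacing the discrete sum by the Gaussian integral, with super-exponentially small error from the cutoff $|d|\leq\sigma_N^2$). A standard ladder-operator computation on \eqref{eq:gaussian_trial} yields
\begin{equation*}
\bigl\langle a^\dagger_1 a_2 + a^\dagger_2 a_1\bigr\rangle_{\psi_\mathrm{gauss}} = \sum_d c_d c_{d+2}\sqrt{(N+d+2)(N-d)}.
\end{equation*}
Using the identity $c_d c_{d+2}=Z_N^{-2}e^{-1/(2\sigma_N^2)}e^{-(d+1)^2/(2\sigma_N^2)}$, which yields $\sum_d c_d c_{d+2}=e^{-1/(2\sigma_N^2)}(1+o(1))$ with the associated weighted means $\langle d\rangle_{cc}\sim -1$ and $\langle d^2\rangle_{cc}\sim \sigma_N^2$, combined with the Taylor expansion $\sqrt{(N+d+2)(N-d)}=N+1-\tfrac{d^2}{2N}-\tfrac{d}{N}+O(1/N)$ valid on the Gaussian bulk $|d|\lesssim\sigma_N\ll N^{1/2}$, I obtain
\begin{equation*}
\bigl\langle a^\dagger_1 a_2 + a^\dagger_2 a_1\bigr\rangle_{\psi_\mathrm{gauss}} = N - \frac{N}{2\sigma_N^2} - \frac{\sigma_N^2}{2N} + O(1).
\end{equation*}
Finally, using $\mathcal{N}_-=\tfrac12(N-(a^\dagger_1 a_2 + a^\dagger_2 a_1))$ on the two-mode sector together with $\mathcal{N}_-^2\leq N \mathcal{N}_-$ as operators, one gets $\langle\mathcal{N}_-\rangle=O(N/\sigma_N^2)$ and hence $\langle \mathcal{N}_-^2\rangle=O(N^2/\sigma_N^2)$; combined with $w_{1122}=O(T^{2-\varepsilon})$ from Lemma~\ref{lemma:w_coefficients}, this contribution is $O(T^{2-\varepsilon}N/\sigma_N^2)$ and will be negligible.

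Inserting these three estimates into \eqref{eq:2mode_expression}, the deviation of $\langle H_{2\mathrm{-mode}}\rangle_{\psi_\mathrm{gauss}}$ from the target $E_0 + E_N^w + N(\mu_+-\mu_-)/2$ becomes
\begin{equation*}
\frac{(\mu_--\mu_+)N}{4\sigma_N^2} + \frac{\lambda U \sigma_N^2}{N-1} + O(T^{1-\varepsilon}) + O\Bigl(\frac{T^{2-\varepsilon}N}{\sigma_N^2}\Bigr).
\end{equation*}
Using $\mu_--\mu_+\le C_\varepsilon T^{1-\varepsilon}$ (Theorem~\ref{thm:onebody}), the two principal terms balance at $\sigma_N^2\sim N\sqrt{\mu_--\mu_+}$, each then bounded by $O(\sqrt{(\mu_--\mu_+)\lambda U})=O(T^{1/2-\varepsilon})$ (after absorbing factors of $\varepsilon$). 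This choice is consistent with $\sigma_N^2\gg 1$ precisely when $\delta<2$. In the opposite regime $\delta\geq 2$, the optimizing $\sigma_N^2$ would fall below one, and I instead fix $\sigma_N^2=C$; the two principal terms then become $O(TN)=O(N^{1-\delta})$ and $O(1/N)$ respectively, both bounded by $O(N^{-1+\varepsilon\delta})$. In both regimes, the residual errors $O(T^{1-\varepsilon})$ and $O(T^{2-\varepsilon}N/\sigma_N^2)$ are dominated, proving \eqref{eq:upper_bound_2mode}.

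The main obstacle is the precision of the hopping expectation: identifying the non-trivial $-N/(2\sigma_N^2)$ shift requires carefully tracking both the exponential factor $e^{-1/(2\sigma_N^2)}$ coming from the overlap $\sum c_dc_{d+2}$ and the $O(1/N)$ Taylor correction from the square root, and verifying that the cross-terms involving $\sum c_d c_{d+2}d$ and $\sum c_d c_{d+2}d^2$ produce only harmless $O(1)$ and $O(\sigma_N^2/N)$ corrections. Once this expansion is under control, the variational optimization and the passage from the $T^{1/2}$ regime to the $N^{-1}$ regime across the threshold $\delta=2$ are straightforward.
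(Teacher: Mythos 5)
Your proposal is correct and follows essentially the same route as the paper: evaluate the exact expression \eqref{eq:2mode_expression} on $\psi_{\mathrm{gauss}}$ (where $\mathcal{N}_\perp$ vanishes), bound the hopping expectation by $N$ up to a deficit of order $N/\sigma_N^2$, bound the variance by $C\sigma_N^2$ and the $\mathcal{N}_-^2$ term via $\mathcal{N}_-^2\le N\mathcal{N}_-$, then balance $(\mu_--\mu_+)N/\sigma_N^2$ against $\lambda U\sigma_N^2/N$ to arrive at \eqref{eq:choice sigma}. Your extra precision in the hopping expansion (the exact $-N/(2\sigma_N^2)$ coefficient) is harmless but unnecessary, since only a lower bound on $\langle a_1^\dagger a_2+a_2^\dagger a_1\rangle$ of the form $N-CN/\sigma_N^2-C$ is needed for the upper bound.
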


We start by computing expectation values with respect to the distribution $|c_d|^2$.

\begin{lemma}[\textbf{Expectation values for the gaussian trial state}]\mbox{}\label{lemma:gaussian}\\
  Let $c_d$ be defined by \eqref{eq:c_d} if $N+d$ is even and $c_d :=0$ if $N+d$ is odd,
  where $1\leq \sigma_N \leq C N^{1/2}$ and $Z_N$ is fixed so that $\sum_{|d| \le \sigma_N^2}  |c_d|^2=1$. Then
	\begin{itemize}
		\item \textbf{Moments.} For any $n\in\mathbb{N}$ we have
		\begin{equation}\label{eq:expectation_moments_even}
		\sum_{-\sigma_N^2 \le d \le \sigma_N^2} d^{2n}|c_d|^2\le  C \sigma_N^{2n}\; , \quad 	\sum_{-\sigma_N^2 \le d \le \sigma_N^2} d^{2n+1}|c_d|^2=0\;.
		\end{equation}

		\item \textbf{Tunneling term.} For any $\kappa\in\mathbb{Z}$,
		\begin{equation} \label{eq:expectation_tunneling}
		\bigg|\sum_{-\sigma_N^2 \le d \le \sigma_N^2- \kappa} c_d c_{d+\kappa}-1\bigg|\le \frac{C}{\sigma_N^2}.
		\end{equation}
	\end{itemize}
\end{lemma}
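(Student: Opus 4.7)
The plan is to treat the $c_d$ as a truncated, parity-restricted discretization of a Gaussian density and reduce every claim to a comparison between a Riemann sum (with step $2$, because only integers $d$ with $N+d$ even carry weight) and the corresponding Gaussian integral on $\R$. I may assume $\sigma_N \to \infty$, since for $\sigma_N$ bounded the sums involve finitely many positive terms, $Z_N^2$ is a strictly positive constant, and both estimates become trivial.

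First I would prove the two-sided bound $Z_N^2 \asymp \sigma_N$. Comparing $\sum_{|d|\le \sigma_N^2,\,N+d\text{ even}} e^{-d^2/(2\sigma_N^2)}$ to $\tfrac{1}{2}\int_{-\sigma_N^2}^{\sigma_N^2} e^{-x^2/(2\sigma_N^2)}\,dx = \sqrt{\pi/2}\,\sigma_N + O(\sigma_N e^{-\sigma_N^2/2})$ via a standard Riemann-sum estimate (the error is bounded by the spacing times the total variation of the integrand, which is itself $O(1)$), and observing that the truncation tail beyond $|d|=\sigma_N^2$ contributes at most $C\sigma_N e^{-\sigma_N^2/2}$, gives $Z_N^2 = \sqrt{\pi/2}\,\sigma_N \,(1 + O(\sigma_N^{-2}))$.

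For the moment identities in \eqref{eq:expectation_moments_even}, the odd-moment vanishing is immediate from the symmetry $d\mapsto -d$, which preserves both the truncation range and the parity constraint, combined with $c_{-d}=c_d$. For the even moments, the same Riemann-sum comparison applied to $x^{2n}e^{-x^2/(2\sigma_N^2)}$ yields $\sum_d d^{2n}e^{-d^2/(2\sigma_N^2)} \le C\sigma_N^{2n+1}$, and dividing by $Z_N^2 \asymp \sigma_N$ produces the announced $O(\sigma_N^{2n})$ bound.

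For \eqref{eq:expectation_tunneling} I would complete the square:
\begin{equation*}
c_d\,c_{d+\kappa} = Z_N^{-2}\, e^{-(d+\kappa/2)^2/(2\sigma_N^2)}\, e^{-\kappa^2/(8\sigma_N^2)}\,.
\end{equation*}
The prefactor expands as $1+O(\kappa^2 \sigma_N^{-2})$. After the shift $d'=d+\kappa/2$, the remaining sum is a Riemann sum for the same Gaussian as in Step~1, but on the shifted interval $[-\sigma_N^2+\kappa/2,\sigma_N^2+\kappa/2]$; the difference with the corresponding sum defining $Z_N^2$ is a boundary contribution controlled by $C e^{-\sigma_N^2/4}$ (exponentially small in $\sigma_N$), again using that both $c_d$ and $c_{d+\kappa}$ are supported on the same parity class — this is where one implicitly uses that $\kappa$ respects the parity of $N+d$, which in our application \emph{is} the case since the tunneling operator $a_1^\dagger a_2 + a_2^\dagger a_1$ shifts $d$ by $\kappa=\pm 2$. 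Combining these two sources of error yields $\sum_d c_d c_{d+\kappa} = 1 + O(\sigma_N^{-2})$.

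The only mildly delicate point — what I would call the main obstacle, though it is a soft one — is ensuring that the Riemann-sum error is sharp enough, namely relative order $\sigma_N^{-2}$ and not merely $\sigma_N^{-1}$. This requires applying the trapezoidal-rule estimate (quadratic in the step size) rather than a crude rectangle-rule estimate, using that the second derivative of $x^{2n}e^{-x^2/(2\sigma_N^2)}$ is of order $\sigma_N^{2n-2}$ on the bulk of the interval. Everything else reduces to bookkeeping of tail terms, which are exponentially small in $\sigma_N^2$ and hence dominated by the polynomial bound $\sigma_N^{-2}$.
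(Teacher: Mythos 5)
Your argument is correct, but it goes by a genuinely different route than the paper's. The paper never needs sharp asymptotics for $Z_N^2$: it only uses the crude lower bound $Z_N^2\ge \sigma_N e^{-1/2}$ (obtained by restricting the defining sum to $|d|\le\sigma_N$), proves the even-moment bound by comparing the sum to an integral plus the values near the maximum of $d^{2n}e^{-d^2/2\sigma_N^2}$, and then handles the tunneling term \emph{pointwise}: writing $c_dc_{d+\kappa}=c_d^2\,e^{-(2\kappa d+\kappa^2)/4\sigma_N^2}$ and using $0\le e^{-x}-1+x\le Cx^2$, the linear-in-$d$ correction vanishes by the odd symmetry already established, and the quadratic remainder is controlled by the second moment, giving $O(\sigma_N^{-2})$ with no precise knowledge of $Z_N$. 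Your route instead demands two-sided asymptotics $Z_N^2=\sqrt{\pi/2}\,\sigma_N\,(1+O(\sigma_N^{-2}))$ and the same for the shifted sum, which is why you are forced into the trapezoidal/Euler--Maclaurin estimate; you correctly identify that a rectangle-rule error of relative size $O(\sigma_N^{-1})$ would not suffice. One small imprecision: after completing the square with $\kappa=2$ the shifted sum runs over the \emph{opposite} parity class (the grid is offset by $1$), so the discrepancy with $Z_N^2$ is not purely a boundary/truncation effect as you state; it is rescued precisely because the Euler--Maclaurin comparison of each grid sum to the common integral is offset-independent up to relative error $O(\sigma_N^{-2})$, so the conclusion stands. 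Your approach is more systematic and would yield genuine asymptotics rather than mere bounds; the paper's is shorter because the symmetry cancellation replaces all quadrature estimates.
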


\begin{proof}
  The equality in \eqref{eq:expectation_moments_even}  is trivial because of the odd symmetry $d\mapsto-d$.
  To prove the inequality in \eqref{eq:expectation_moments_even}, we note that if $f(x)$ is a differentiable function in $L^1([0,\infty[)$
  having a single relative extremum at $x_{\mathrm{m}}$, which is a maximum, then
  \begin{equation*}
    \sum_{0 \leq d \le \sigma_N^2} f (d) \leq \int_{0}^\infty f (x)\, d x +  f( \left\lfloor x_{\mathrm{m}} \right\rfloor ) + f ( \left\lfloor x_{\mathrm{m}} \right\rfloor +1)
  \end{equation*}
  where $\left\lfloor x \right\rfloor$ denotes the integer part of $x$.
  Taking $f(d) = d^{2n} e^{-d^2/ 2 \sigma_N^2}$, which is maximum at $x_{\mathrm{m}} = \sqrt{2 n} \,\sigma_N$, we deduce that
  \begin{equation} 
    \sum_{0 \le d  \le \sigma_N^2} f (d) \leq \sigma_N^{2n+1} \int_{0}^\infty u^{2n} e^{-u^2/2} \, d u + C \sigma_N^{2n}\;.
  \end{equation}
  The desired result then follows from the even symmetry $d \mapsto -d$ 
 and from  the following lower bound on $Z_N$
  \begin{equation} \label{eq:lower_bound_Z_N}
    Z_N^2 = \sum_{\substack{|d| \le \sigma_N^2  \\ N+d \text{ is even}}} e^{-\frac{d^2}{2 \sigma_N^2}} \ge
    \sum_{\substack{|d| \le \sigma_N  \\ N+d \text{ is even}}} e^{-\frac{d^2}{2 \sigma_N^2}}
            \ge \sigma_N e^{-\frac{1}{2}}\;.
  \end{equation}            
 Let us prove \eqref{eq:expectation_tunneling}. We have
	\begin{equation*}
	\begin{split}
          c_d c_{d+\kappa}=
          c_d^2 e^{-\frac{2\kappa d+\kappa^2}{4\sigma_N^2}}\;.
	\end{split}
	\end{equation*}
	Using the inequality $ 0 \le e^{-x} - 1 + x \le C x^2$ valid for any $x\in [- \log (2 C), \log(2 C)]$ and extending 
        for convenience the definition  \eqref{eq:c_d} of $c_{d}$ for $d = \sigma_N^2+1,\ldots , \sigma_N^2 + \kappa$,   we get
	\begin{equation*}
	  0 \le  \sum_{|d| \le \sigma_N^2} \bigg( c_d c_{d+\kappa} - c_d^2 + \frac{2 \kappa d + \kappa^2}{4 \sigma_N^2} c_d^2  \bigg) \;\le\;
          C \sum_{|d| \le \sigma_N^2}  \frac{\big(2\kappa d+\kappa^2\big)^2}{16 \sigma_N^4}  c_d^2 \;\le \; \frac{C}{\sigma_N^2}\;,
	\end{equation*}
	where the last step follows from the estimates  in \eqref{eq:expectation_moments_even} proven above. 
        Recalling that $\sum_{|d| \le \sigma_N^2} c_d^2 = 1$, this gives
        \begin{equation*}
          \Bigg| \sum_{|d| \le \sigma_N^2} c_d c_{d+\kappa} - 1 \Bigg| \le  \frac{C}{\sigma_N^2}
        \end{equation*}
        from which we obtain
         \begin{equation*}
          \Bigg| \sum_{-\sigma_N^2 \le d \le \sigma_N^2-\kappa} c_d c_{d+\kappa} - 1 \Bigg|
          \le  \Bigg| \sum_{|d| \le \sigma_N^2} c_d c_{d+\kappa} - 1 \Bigg| + \frac{C}{Z_N^2} e^{-\frac{\sigma_N^2}{2}}
          \le  \frac{C}{\sigma_N^2}\;.
        \end{equation*}  
        This proves \eqref{eq:expectation_tunneling}. 
\end{proof}

We are now ready to provide the

\begin{proof}[Proof of Proposition \ref{lemma:2mode_upper}]
	We take the trial state $\psi_{\mathrm{gauss}}$ from \eqref{eq:gaussian_trial} with $1 \le \sigma_N\ll N^{1/2}$ to be suitably optimized at the end. We will compute the expectation value of all terms in \eqref{eq:2mode_expression} on $\psi_{\mathrm{gauss}}$. First of all, notice that
	\begin{equation*}
	\mathcal{N}_\perp \psi_{\mathrm{gauss}}=0,
	\end{equation*}
	which allows to neglect all $\mathcal{N}_\perp$ and $\mathcal{N_\perp}^2$-terms in \eqref{eq:2mode_expression}. Hence,
	\begin{equation} \label{eq:upper_bound_partial_2mode}
	\begin{split}
	\langle H_{2\mathrm{-mode}} \rangle_{\psi_\mathrm{gauss}}=\;& E_0+E^w_N+\left(\frac{\mu_+-\mu_-}{2}+\frac{\lambda}{N-1}w_{1122}\right)\big\langle a^\dagger _1a_2+a^\dagger _2a_1\rangle_{\psi_{\mathrm{gauss}}}\\
	&+ \frac{\lambda U}{N-1}\big\langle \big(\mathcal{N}_1-\mathcal{N}_2\big)^2\big\rangle_{\psi_{\mathrm{gauss}}}+\frac{2\lambda}{N-1}w_{1122}\big\langle\mathcal{N}_-^2\big\rangle_{\psi_{\mathrm{gauss}}}.
	\end{split}
	\end{equation}
	Let us evaluate the three expectation values on the right hand side. We have
	\begin{equation*}
	  \big\langle a^\dagger _1a_2+a^\dagger _2a_1\rangle_{\psi_{\mathrm{gauss}}}= 2 \sum_{-\sigma_N^2 \le d \le \sigma_N^2-2}  c_dc_{d+2}
          \sqrt{\frac{N+d+2}{2}\frac{N-d}{2}}\;.
	\end{equation*}
	Since $|d|\le \sigma_N^2 \ll N$, we can expand the square root around $d=0$. We get
	\begin{equation} \label{eq:upper_bound_coherences}
	\begin{split}
	  \bigg| \big\langle a^\dagger _1a_2+a^\dagger _2a_1\rangle_{\psi_{\mathrm{gauss}}} -&N \sum_{-\sigma_N^2 \le d \le \sigma_N^2-2} c_dc_{d+2}\bigg|\\
          \le\;& N \sum_{-\sigma_N^2 \le d \le \sigma_N^2-2} c_dc_{d+2}\bigg| \sqrt{1+\frac{2}{N}-\frac{d^2}{N^2}- \frac{2d}{N^2}}-1 \bigg| \\
          \le\;& 	N \sum_{-\sigma_N^2 \le d \le \sigma_N^2-2} c_dc_{d+2}\bigg|\frac{2}{N}-\frac{d^2}{N^2}- \frac{2d}{N^2}\bigg|\;.
        \end{split}
	\end{equation}
        We distinguish between two cases:
        \begin{itemize}
          \item if $1 \le \sigma_N^2 \le 2 \sqrt{N}$ the second line of \eqref{eq:upper_bound_coherences} is bounded by a constant. Indeed
        \begin{equation*}
          \bigg| \frac{2}{N} -\frac{d^2}{N^2}- \frac{2d}{N^2}\bigg| \leq \frac{3}{N} \quad \text{ for } |d |\leq 2 \sqrt{N}
        \end{equation*}
        and
        \begin{equation*}
          c_d c_{d+2} \le e \,c_d^2  \quad \text{ for } |d |\leq \sigma_N^2\;,
        \end{equation*}
       and we recall that $\sum_{|d| \le \sigma_N^2} c_d^2=1$.
      \item if $\sigma_N^2 > 2 \sqrt{N}$, we split the sum in the second line of \eqref{eq:upper_bound_coherences} 
        into a sum runing from $-2 \sqrt{N}$ to $2 \sqrt{N}$ and a remaining sum. Taking advantage of the last two bounds, 
        the expression in this second line is  less than
        \begin{equation*}
           C \sum_{|d| \leq 2 \sqrt{N}} c_d^2 + N C  \sum_{2 \sqrt{N} < |d| \le \sigma_N^2}  c_d^2\;. 
\end{equation*}
The first sum in the right hand side is bounded by one. The second sum can be bounded as follows. Setting $d_N = \left\lfloor 2 \sqrt{N}\right\rfloor$, we have
        \begin{equation*}
	\begin{split}
          \sum_{2 \sqrt{N} < |d| \le \sigma_N^2 }  c_d^2
           =\;& \frac{2}{Z_N^2}
           \sum_{2 \sqrt{N} < d \le \sigma_N^2 } \exp \bigg\{ -\frac{( d - d_N)^2}{2 \sigma_N^2}
            -\frac{ d d_N}{\sigma_N^2}
            +\frac{ d_N^2}{2 \sigma_N^2} \bigg\}
                \\
                \le \;&   \frac{2}{Z_N^2}
                \exp \bigg\{- \frac{d_N^2}{2 \sigma_N^2} \bigg\} \sum_{0\le  d' \le \sigma_N^2}
                  \exp \bigg\{ -\frac{(d')^2}{2 \sigma_N^2} \bigg\}
                  \;\le \; 2 e^{-\frac{N}{\sigma_N^2}}\;.
          	\end{split}
        \end{equation*}
        \end{itemize}
        Hence,  in all cases one has
        	\begin{equation} 
	  \bigg| \big\langle a^\dagger _1a_2+a^\dagger _2a_1\rangle_{\psi_{\mathrm{gauss}}} -N \sum_{-\sigma_N^2 \le d \le \sigma_N^2-2} c_dc_{d+2}\bigg|
           \le C + C N  e^{-\frac{N}{\sigma_N^2}}\;.
	        \end{equation}
        Combining this result with \eqref{eq:expectation_tunneling}, we get
\begin{equation} \label{eq:upper_bound_tunneling}
  \Big| \big\langle a^\dagger _1a_2+a^\dagger _2a_1\rangle_{\psi_{\mathrm{gauss}}} -N\Big| \le C + \frac{CN}{\sigma_N^2}
+  C N  e^{-\frac{N}{\sigma_N^2}} \;.
\end{equation}
	For the variance term in \eqref{eq:upper_bound_partial_2mode} we immediately have, using \eqref{eq:expectation_moments_even},
	\begin{equation} \label{eq:upper_bound_variance}
	  \big\langle \big( \mathcal{N}_1-\mathcal{N}_2 \big)^2\big\rangle_{\psi_{\mathrm{gauss}}}= \sum_{|d| \le \sigma_N^2}
           d^2 |c_d|^2 \; \le \; C \sigma_N^2.
	\end{equation}
	Finally, since  $\mathcal{N}_-^2 \le N \mathcal{N}_-$ on $\mathfrak{H}^N$ and
        $\mathcal{N}_-=(\mathcal{N}_1+\mathcal{N}_2-a^\dagger _1a_2-a^\dagger _2a_1)/2$, we have by~\eqref{eq:upper_bound_tunneling}
	\begin{equation} \label{eq:upper_bound_N_-}
	  \langle \mathcal{N}_-^2\rangle_{\psi_{\mathrm{gauss}}} \le  \frac{N}{2} \big( N - \langle a_1^\dagger a_2 + a_2^\dagger a_1 \rangle_{\psi_{\mathrm{gauss}}} \big)
          \le  C N \big( 1+\frac{N}{\sigma_N^2} +  N e^{-\frac{N}{\sigma_N^2}} \big)\;.
        \end{equation}
		%
	%
	Plugging \eqref{eq:upper_bound_tunneling}, \eqref{eq:upper_bound_variance}, and \eqref{eq:upper_bound_N_-}
        inside \eqref{eq:upper_bound_partial_2mode}, and recalling the estimates \eqref{eq:w_1111}, \eqref{eq:w_1122}, and \eqref{eq:w_1212} for the $w_{mnpq}$ coefficients and our assumption $1 \le \sigma_N \ll N^{1/2}$, we find 
	\begin{equation} \label{eq-proof_Prop4.4}
	  \langle H_{2\mathrm{-mode}}\rangle_{\psi_{\mathrm{gauss}}}\le E_0+E^w_N+N\frac{\mu_+-\mu_-}{2}
 + C\big(\mu_--\mu_++C_\varepsilon T^{2-\varepsilon}\big) \bigg( \frac{N}{\sigma_N^2} +  N e^{-\frac{N}{\sigma_N^2}} \bigg)
          +C\frac{\sigma_N^2}{N}.
	\end{equation}
	We now optimize the remainder terms by choosing $\sigma_N^2$ as in~\eqref{eq:choice sigma}.
 Since we assume $T\sim N^{-\delta}$ for some $\delta > 0$ we have from~\eqref{eq:first_gap} 
        \begin{equation*}
        N e^{-\frac{N}{\sigma_N^2}}  \leq C N^{-\eta}
        \end{equation*}
        for any $\eta > 0$, showing that the exponential term in \eqref{eq-proof_Prop4.4}
        is much smaller than $N/\sigma_N^2$.
       Using again~\eqref{eq:choice sigma} and~\eqref{eq:first_gap}, the two last terms in \eqref{eq-proof_Prop4.4} are  bounded by
       $C_\varepsilon T^{1/2-\varepsilon}$ if $0 < \delta < 2$ and by
       $C_\varepsilon T^{1-\varepsilon} N + C N^{-1} \sim C_\varepsilon N^{-(\delta-1)+ \varepsilon \delta} + C N^{-1}$ if $\delta \geq 2$.
       The claimed bounds then follow from
       \begin{equation*}
         \max \left( T^{1/2-\varepsilon}, N^{-1+ \varepsilon \delta} \right) =
         \begin{cases} T^{1/2-\varepsilon} & \text{ if $0 < \delta < 2$}\\
           N^{-1+\varepsilon \delta} & \text{ if $\delta \ge 2$.}
         \end{cases}
       \end{equation*}
\end{proof}

\subsection{Bose-Hubbard energy and proof of Proposition~\ref{pro:ener comp}}\label{sec:BH ener} The next result of this Section will allow us to recover the Bose-Hubbard energy, which is the lowest energy of the Bose-Hubbard Hamiltonian \eqref{eq:bh}, in terms of quantities appearing in the bounds for $H_{2\mathrm{-mode}}$.

\begin{proposition}[\textbf{Bose-Hubbard energy}]\mbox{}\label{lemma:bh_energy}\\
  Let $E_\mathrm{BH}$ be the bottom of the spectrum of the Bose Hubbard Hamiltonian $H_\mathrm{BH}$ defined in \eqref{eq:bh}
  on the $N$-body two-mode space $\bigotimes_{\rm sym} ^N \left( P L^2 (\R^d)\right)$. Then
	\begin{equation} \label{eq:bh_energy_estimate}
	\left| E_\mathrm{BH} - \left( \frac{\lambda N^2}{4(N-1)} w_{1111} - \frac{\lambda N}{2(N-1)} w_{1111} + \left( \mu_+ - \mu_-\right) \frac{N}{2} \right)\right| \leq C_\epsilon \max\left(T^{1/2-\epsilon}, N^{-1+\eps \delta}\right).
	\end{equation}
\end{proposition}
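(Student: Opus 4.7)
The plan is to algebraically rewrite $H_\mathrm{BH}$ on the two-mode $N$-body space $\bigotimes_{\rm sym}^N(PL^2(\R^d))$, where the constraint $\mathcal{N}_1+\mathcal{N}_2=N$ is in force, so that the three constants of \eqref{eq:bh_energy_estimate} appear explicitly and the remainder is a manifestly non-negative operator. The CCR relations give $a_j^\dagger a_j^\dagger a_j a_j = \mathcal{N}_j(\mathcal{N}_j-1)$, and combined with $\mathcal{N}_1^2+\mathcal{N}_2^2 = \frac{1}{2}(N^2+(\mathcal{N}_1-\mathcal{N}_2)^2)$ this yields
\begin{equation*}
a_1^\dagger a_1^\dagger a_1 a_1 + a_2^\dagger a_2^\dagger a_2 a_2 = \frac{N^2}{2} - N + \frac{1}{2}(\mathcal{N}_1-\mathcal{N}_2)^2.
\end{equation*}
Writing $a_\pm = (a_1\pm a_2)/\sqrt{2}$, the identity $\mathcal{N}_++\mathcal{N}_-=N$ gives $a_1^\dagger a_2 + a_2^\dagger a_1 = \mathcal{N}_+ - \mathcal{N}_- = N - 2\mathcal{N}_-$. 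Substituting both relations into \eqref{eq:bh} produces
\begin{equation*}
H_\mathrm{BH} = \frac{\lambda N^2 w_{1111}}{4(N-1)} - \frac{\lambda N w_{1111}}{2(N-1)} + \frac{(\mu_+-\mu_-)N}{2} + (\mu_--\mu_+)\mathcal{N}_- + \frac{\lambda w_{1111}}{4(N-1)}(\mathcal{N}_1-\mathcal{N}_2)^2.
\end{equation*}

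With this expression in hand, the lower bound in \eqref{eq:bh_energy_estimate} is immediate and in fact exact: by Theorem~\ref{thm:onebody}, $\mu_-> \mu_+$, and by Lemma~\ref{lemma:w_coefficients}, $w_{1111}>0$, so the last two operators are non-negative and $E_\mathrm{BH}$ is bounded from below by the sum of the three explicit constants with no error term whatsoever.

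For the matching upper bound I would test with the gaussian trial state $\psi_\mathrm{gauss}$ of \eqref{eq:gaussian_trial} and $\sigma_N$ chosen as in \eqref{eq:choice sigma}. Since $\mathcal{N}_\perp\psi_\mathrm{gauss} = 0$, it lies in the two-mode sector and is a legitimate competitor for $H_\mathrm{BH}$. The variance estimate \eqref{eq:upper_bound_variance} yields $\langle(\mathcal{N}_1-\mathcal{N}_2)^2\rangle_{\psi_\mathrm{gauss}} \leq C\sigma_N^2$, while the tunneling expansion \eqref{eq:upper_bound_tunneling} combined with $\mathcal{N}_- = \frac{1}{2}(N - a_1^\dagger a_2 - a_2^\dagger a_1)$ gives $\langle\mathcal{N}_-\rangle_{\psi_\mathrm{gauss}} \leq C(1 + N/\sigma_N^2 + N e^{-N/\sigma_N^2})$. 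Using the gap estimate $\mu_- - \mu_+ \leq C_\epsilon T^{1-\epsilon}$ from Theorem~\ref{thm:onebody} one arrives at
\begin{equation*}
\langle H_\mathrm{BH}\rangle_{\psi_\mathrm{gauss}} - \bigg( \frac{\lambda N^2 w_{1111}}{4(N-1)} - \frac{\lambda N w_{1111}}{2(N-1)} + \frac{(\mu_+-\mu_-)N}{2}\bigg) \leq C_\epsilon T^{1-\epsilon}\bigg(1 + \frac{N}{\sigma_N^2}\bigg) + C\frac{\sigma_N^2}{N},
\end{equation*}
and the optimization in $\sigma_N$ is precisely the one carried out at the end of the proof of Proposition~\ref{lemma:2mode_upper}, producing the bound $C_\epsilon\max(T^{1/2-\epsilon}, N^{-1+\epsilon\delta})$ (the exponential term $Ne^{-N/\sigma_N^2}$ is negligible by the same argument used there).

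I do not expect a genuine obstacle: the algebraic rewriting is a few lines of CCR manipulation, the lower bound is free, and the upper bound recycles almost verbatim the trial-state analysis already produced for $H_{2\mathrm{-mode}}$. The only conceptual point worth emphasizing is that on the two-mode sector the tunneling and the interaction parts of $H_\mathrm{BH}$ are diagonal in two distinct and mutually incompatible bases ($\{u_\pm\}$ versus $\{u_1,u_2\}$), so that the competition between making $\mathcal{N}_-$ and $(\mathcal{N}_1-\mathcal{N}_2)^2$ simultaneously small is precisely what the gaussian ansatz is designed to balance via the choice \eqref{eq:choice sigma} of $\sigma_N$.
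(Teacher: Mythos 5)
Your proposal is correct and follows essentially the same route as the paper: plug in $\mathcal{N}_1+\mathcal{N}_2=N$ and the CCR to rewrite $H_\mathrm{BH}$ as the three explicit constants plus non-negative remainder operators, get the lower bound for free, and obtain the matching upper bound by testing with $\psi_\mathrm{gauss}$ and reusing the estimates and optimization of $\sigma_N$ from Proposition~\ref{lemma:2mode_upper}. Your rewriting of the tunneling term as $\frac{(\mu_+-\mu_-)N}{2}+(\mu_--\mu_+)\mathcal{N}_-$ is only a cosmetic variant of the paper's operator inequality $a_1^\dagger a_2+a_2^\dagger a_1\le N$.
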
  

%
%
	
	\begin{proof}
	Since $H_\mathrm{BH}$ is defined on $\bigotimes_{\rm sym} ^N \left( P L^2 (\R^d)\right)$ only, we can plug $\mathcal{N}_1+\mathcal{N}_2=N$ (i.e., $\mathcal{N}_\perp=0$) into \eqref{eq:bh}. This gives
	\begin{equation*}
        H_\mathrm{BH}=\frac{\lambda}{2(N-1)} \Big( \frac{N^2}{2}-N \Big)  w_{1111} +\frac{\mu_+-\mu_-}{2}\big( a^\dagger_1a_2+a^\dagger_2a_1 \big)+\frac{\lambda w_{1111}}{4(N-1)} \big( \mathcal{N}_1-\mathcal{N}_2 \big)^2.  
	\end{equation*}
        We then repeat the proof of \eqref{eq:lower_bound_2mode} and \eqref{eq:upper_bound_2mode} on this simplified Hamiltonian. This gives
        \begin{equation*}
        \begin{split}          
	  E_\mathrm{BH} &\;\le \left\langle H_{\mathrm{BH}}\right\rangle_{\psi_\mathrm{gauss}} \\
          &\; \le \frac{\lambda N^2}{4(N-1)} w_{1111} - \frac{\lambda N}{2(N-1)} w_{1111} + \left( \mu_+ - \mu_-\right) \frac{N}{2}+ C_\varepsilon \max \left(T^{1/2-\varepsilon},N^{-1 + \epsilon \delta} \right)
        \end{split}
	\end{equation*}
	and
	\begin{equation*}
	H_\mathrm{BH}\ge \frac{\lambda N^2}{4(N-1)} w_{1111} - \frac{\lambda N}{2(N-1)} w_{1111} + \left( \mu_+ - \mu_-\right) \frac{N}{2},
	\end{equation*}
	which completes the proof.
\end{proof}

We may now conclude the

\begin{proof}[Proof of Proposition~\ref{pro:ener comp}]
Recall Definition~\eqref{eq:intro E2}. We deduce from Proposition~\ref{lemma:2mode_upper} that
\begin{equation} \label{eq:upper_boind_E_2modes} 
E_{2\mathrm{-mode}} \leq E_0+E^w_N+N\frac{\mu_+-\mu_-}{2}+C_\varepsilon \max\left( T^{1/2-\varepsilon}, N^{-1 + \epsilon \delta}\right).
\end{equation}
Since the ground state of $H_{2\mathrm{-mode}}$ entirely lives in the two-modes subspace,
for a matching lower bound we may set $\cN_\perp = 0$ in~\eqref{eq:lower_bound_2mode}. Thus, recalling that $U \ge 0$,
we deduce from Proposition~\ref{lemma:2mode_lower} that
$$
E_{2\mathrm{-mode}} \geq E_0 + E^w_N + N\frac{\mu_+-\mu_-}{2}.
$$
Let us set
$$
\widetilde{E}_0 = E_0 - \frac{\lambda N^2}{4(N-1)} ( 4 w_{1122} - 2 w_{1212} ) = N h_{11}  - \frac{\lambda N^2}{4(N-1)} ( 2 w_{1122} - w_{1212} )\,.
$$
It follows from the two preceding bounds, Proposition~\ref{lemma:bh_energy} and the definition \eqref{eq:2mode_constants} of $E_N^w$ that
\begin{equation*}
  \begin{split}
    \big| E_{2\mathrm{-mode}}  - \widetilde{E}_0 - E_\mathrm{BH} \big|
    \le & \;\Big|  E_{2\mathrm{-mode}} - {E}_0 - E_N^w - N \frac{\mu_{+} - \mu_{-}}{2}  \Big|
    \\
    & \qquad + \Big|  - E_\mathrm{BH} + N \frac{\mu_{+} - \mu_{-}}{2} + E_N^w + E_0 -\widetilde{E}_0  \Big|
    \\
    \le & \; C_\epsilon \max\left(T^{1/2-\epsilon}, N^{-1+\eps \delta}\right) + \bigg| - \frac{\lambda N^2}{4(N-1)} w_{1111} + \frac{\lambda N}{2(N-1)} w_{1111}
    \\
    & \qquad + E_N^w +  \frac{\lambda N^2}{4(N-1)} ( 4 w_{1122} - 2 w_{1212} ) \bigg|
    \\
    \le &\;  C_\epsilon \max\left(T^{1/2-\epsilon}, N^{-1+\eps \delta}\right) + \frac{\lambda N}{2 (N-1)} w_{1122}\,.
  \end{split}
\end{equation*}  
Proposition~\ref{pro:ener comp} follows by
using Lemma~\ref{lemma:w_coefficients} again.
\end{proof}

\section{Derivation of the Bogoliubov Hamiltonian and reduction to right and left modes} \label{sect:proof_bogoliubov}

The aim of this Section is two-fold: we will prove that the Bogoliubov Hamiltonian $\mathbb{H}$ from \eqref{eq:Bog_Hamiltonian} is the leading contribution to $H_N-H_{2\mathrm{-mode}}$, and we will show that $\mathbb{H}$ can be decomposed into the two quadratic Hamiltonians $\mathbb{H}_\mathrm{right}$ and $\mathbb{H}_\mathrm{left}$ from \eqref{eq:H_right} and \eqref{eq:H_left}. The most delicate part of this program is the fact that there are terms in $H_N$ that contain \emph{exactly one} $a^\sharp_m$ with $m\ge3$, but that are not a priori negligible. We keep track of them in Proposition \ref{prop:bogoliubov}, and we will show that they are negligible at a later stage.

Let us state the two main results.

\begin{proposition}[\textbf{Derivation of the Bogoliubov Hamiltonian}] \label{prop:bogoliubov}\mbox{}\\
	For any excitation vector $\Phi\in\ell^2(\mathfrak{F}_\perp)$ of the form $\Phi = \mathcal{U}_N \psi$ for some $\psi \in \gH^N$, we have
	\begin{equation} \label{eq:derivation_bogoliubov}
	\begin{split}
	\Big|\langle \mathcal{U}_N(&H_N-H_{2\mathrm{-mode}})\mathcal{U}_N^*\rangle_\Phi- \langle\mathbb{H}\rangle_\Phi-\mu_+\langle\mathcal{N}_\perp\rangle_\Phi\\
	&\qquad-\frac{\lambda}{\sqrt{2(N-1)}}\big\langle \sum_{m\ge3} w_{+1-m}\,\Theta a_m\mathfrak{D}+\mathrm{h.c.}\big\rangle_\Phi\\
	&\qquad-\frac{\lambda}{\sqrt{2(N-1)}}\big\langle \sum_{m\ge3} w_{+2-m} \,\Theta^{-1}a_m\mathfrak{D}+\mathrm{h.c.}\big\rangle_\Phi\Big|\\
	&\qquad\qquad\le\;\frac{C}{N^{1/4}}\left( \left\langle \mathcal{N}_\perp^2+1\right\rangle_{\Phi}+\left\langle\frac{\mathfrak{D}^2}{N}\right\rangle_\Phi \right)+ C_\varepsilon\frac{T^{1-\varepsilon}}{N^{1/4}}\left\langle\mathcal{N}_-\right\rangle^{3/4}_{\mathcal{U}_N^*\Phi}\left\langle\mathcal{N}_\perp^2\right\rangle^{1/4}_\Phi
	\end{split}
	\end{equation}
\end{proposition}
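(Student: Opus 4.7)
The strategy is to expand $\mathcal{U}_N(H_N - H_{2\mathrm{-mode}})\mathcal{U}_N^*$ directly from the second-quantized form \eqref{eq:HN second}, grouping all summands according to the number $k\in\{1,2,3,4\}$ of creators/annihilators corresponding to excited modes $u_m$ with $m\ge 3$. The case $k=0$ is precisely $H_{2\mathrm{-mode}}$ and is subtracted out. After conjugation by $\mathcal{U}_N$, each low-mode operator $a^\sharp_\pm$ is replaced by the explicit combination of $\Theta^{\pm 1}$ and $\sqrt{(N - \mathcal{N}_\perp \pm \mathfrak{D} + O(1))/2}$ dictated by Lemma \ref{lemma:conjugation}. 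It is convenient to work in the $(u_+, u_-)$ basis for the low modes, because the eigenvalue equation $h_\mathrm{MF} u_\pm = \mu_\pm u_\pm$ is then directly available for controlling linear terms.

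For the cubic and quartic contributions ($k=3,4$), the coefficients $w_{mnpq}$ are uniformly bounded by the boundedness of $w$ together with $\|u_m\|_{L^2}=1$, and each low-mode conjugation contributes at most a factor $\sqrt{N}$. After incorporating the $\lambda/(N-1)$ prefactor, these yield operators of magnitude $O(N^{-1/2})(\mathcal{N}_\perp+1)^{3/2}$ and $O(N^{-1})(\mathcal{N}_\perp+1)^2$ respectively, both absorbable into the bound $C N^{-1/4}\langle \mathcal{N}_\perp^2 + 1\rangle$ via Young's inequality. For the quadratic terms ($k=2$), one replaces each factor $\sqrt{(N-\mathcal{N}_\perp \pm \mathfrak{D} + C)/2}$ by $\sqrt{(N-1)/2}$, with the replacement error controlled via $\sqrt{a}-\sqrt{b} = (a-b)/(\sqrt{a}+\sqrt{b})$; this produces remainders of size $(\mathcal{N}_\perp + \mathfrak{D}^2/N)/\sqrt{N}$, again absorbed into the first remainder. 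The one-body piece $\sum_{m,n\ge 3} h_{mn}a_m^\dagger a_n$ and the quadratic-in-excitations interaction pieces are reassembled using $h_\mathrm{MF} = -\Delta + V_\mathrm{DW} + \lambda w\ast|u_+|^2$ and the expansion $|u_+|^2 = \tfrac12(|u_1|^2 + |u_2|^2) + u_1 u_2$ (from \eqref{eq:u_1_u_2}); this reconstructs exactly the Bogoliubov Hamiltonian \eqref{eq:Bog_Hamiltonian} plus the correction $\mu_+\mathcal{N}_\perp$ coming from the $-\mu_+$ diagonal in $\mathbb{H}$, while the $\Theta^{\pm 2}$ prefactors arise from the product of two low-mode conjugations.

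The main obstacle is the linear part ($k=1$). Collecting all one-body terms $h_{\pm m}a^\dagger_\pm a_m + \mathrm{h.c.}$ with the cubic-in-low-mode interaction terms of the form $a^\dagger_\pm a^\dagger_\pm a_\pm a_m$ (and their adjoints), and using the eigenvalue identity $\langle u_m, h_\mathrm{MF} u_\pm\rangle = 0$ for $m\ge 3$, the \emph{diagonal} linear contributions (those where the three low-mode indices are all $+$ or all $-$) cancel exactly up to the replacement-of-square-root error already accounted for. What remains are genuinely mixed linear contributions involving both $u_+$ and $u_-$. Rewriting the relevant coefficients in the $(u_1,u_2)$ basis via \eqref{eq:u_1_u_2}, the non-negligible pieces take exactly the form $\frac{\lambda}{\sqrt{2(N-1)}} w_{+1-m}\Theta a_m \mathfrak{D}$ and $\frac{\lambda}{\sqrt{2(N-1)}} w_{+2-m}\Theta^{-1} a_m \mathfrak{D}$ appearing in \eqref{eq:derivation_bogoliubov}, where the $\mathfrak{D}$ factor arises from the $a_+^\dagger a_- - a_-^\dagger a_+$-type imbalance operators produced by Lemma \ref{lemma:conjugation}. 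All other mixed linear remainders carry at least one $w$-coefficient with a ``$+/-$'' index mismatch that forces Lemma \ref{lemma:w_coefficients}-style smallness $\lesssim T^{1-\varepsilon}$, combined with the fact that such terms either contain a factor $a^\sharp_-$ (giving $\mathcal{N}_-$) or an excitation operator (giving $\mathcal{N}_\perp$); after Cauchy--Schwarz this produces the final error $C_\varepsilon T^{1-\varepsilon} N^{-1/4} \langle \mathcal{N}_-\rangle^{3/4}\langle \mathcal{N}_\perp^2\rangle^{1/4}$.

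The principal technical difficulty is thus the bookkeeping of the linear terms: one must identify with care which cubic-in-low-modes contributions survive the Hartree cancellation and verify that the surviving ones assemble into precisely the two operators in \eqref{eq:derivation_bogoliubov}, while carefully estimating every other residual via the smallness of the mixed $w$-coefficients together with the a priori control \eqref{eq:intro_N_-} on $\langle\mathcal{N}_-\rangle$.
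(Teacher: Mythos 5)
Your overall architecture matches the paper's: split $H_N-H_{2\mathrm{-mode}}$ according to the number of excited creators/annihilators, conjugate with $\mathcal{U}_N$, replace the square roots by $\sqrt{(N-1)/2}$ with errors controlled by $\mathcal{N}_\perp$ and $\mathfrak{D}^2/N$, and isolate the surviving linear terms proportional to $\mathfrak{D}$. The quadratic, cubic and quartic parts are treated essentially as in the paper, although you should note that controlling the cubic and quartic sums requires the summability estimates of Lemma \ref{lemma:properties_operators} (or an operator inequality exploiting $w\ge 0$), not merely the uniform boundedness of the individual coefficients $w_{mnpq}$, since the sums range over infinitely many excited modes.

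The genuine gap is in your bookkeeping of the linear terms. You assert that the Hartree cancellation involves only the contributions ``where the three low-mode indices are all $+$ or all $-$'', and that every remaining mixed term carries a $w$-coefficient which is $O(T^{1-\varepsilon})$ because of a $+/-$ index mismatch. Both assertions fail. First, $h_{+m}+\frac{\lambda N}{N-1}w_{+++m}$ does not reconstruct $(h_\mathrm{MF})_{+m}=0$: the all-$+$ quartic term only supplies $w_{+++m}\,a_+^\dagger\mathcal{N}_+a_m$, whereas the eigenvalue identity needs the full two-mode density, i.e.\ $w_{+++m}\,a_+^\dagger(\mathcal{N}_++\mathcal{N}_-)a_m=w_{+++m}\,a_+^\dagger(N-\mathcal{N}_\perp)a_m$. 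Second, the mixed term \eqref{eq:linear_+-m-}, which equals $\frac{\lambda}{N-1}\sum_m w_{+-m-}\,a_+^\dagger\mathcal{N}_-a_m+\mathrm{h.c.}$, has coefficient $w_{+-m-}=\langle u_+,(w*|u_-|^2)u_m\rangle$, and this is \emph{not} $O(T^{1-\varepsilon})$: Lemma \ref{lemma:w_coefficients} concerns the localized modes $u_1,u_2$, not $u_\pm$, and $w*|u_-|^2\approx w*|u_+|^2$ is an order-one function, so these matrix elements are only $\ell^2$-summable with an $O(1)$ total. Since $\langle\mathcal{N}_-\rangle$ can be as large as $\min(N,T^{-1-\varepsilon})$, a Cauchy--Schwarz bound on this term without the factor $T^{1-\varepsilon}$ gives roughly $N^{-1/4}\langle\mathcal{N}_-\rangle^{3/4}\langle\mathcal{N}_\perp^2\rangle^{1/4}$, which diverges in part of the parameter range; discarding it as small would destroy the estimate. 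The correct bookkeeping (the paper's splitting \eqref{eq:splitting_L1_L2}) pairs \eqref{eq:linear_+-m-} with the all-$+$ term so that the cancellation uses $\mathcal{N}_++\mathcal{N}_-=N-\mathcal{N}_\perp$, leaving only the \emph{difference} $(w_{+-m-}-w_{+++m})\,a_+^\dagger\mathcal{N}_-a_m$ whose coefficient $\langle u_+,w*(|u_-|^2-|u_+|^2)u_m\rangle$ is controlled by the $L^1$-bound \eqref{eq:L1_convergence}; that is where the factor $T^{1-\varepsilon}$ in the remainder actually comes from. Your proposal names the right error term, but the mechanism you describe to reach it does not work.
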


%

While proving the decomposition of $\mathbb{H}$ into right and left modes, we will need to project the problem on the eigenmodes of $h_\mathrm{MF}$ with index smaller than some $M\in\mathbb{N}$. To this end, we define the spectral projections
\begin{equation} \label{eq-def_P_<M}
P_{\le M}:= \sum_{1\le \alpha \le M}\left( \ket{u_{2\alpha+1}}\bra{u_{2\alpha+1}}+\ket{u_{2\alpha+2}}\bra{u_{2\alpha+2}} \right)=\sum_{1\le \alpha \le M}\left( \ket{u_{r,\alpha}}\bra{u_{r,\alpha}}+\ket{u_{\ell,\alpha}}\bra{u_{\ell,\alpha}} \right).
\end{equation}
and
\begin{equation*}
P_{> M}:=\sum_{\alpha > M}\left( \ket{u_{2\alpha+1}}\bra{u_{2\alpha+1}}+\ket{u_{2\alpha+2}}\bra{u_{2\alpha+2}} \right)=\mathbbm{1}-P_{\le M}-\ket{u_+}\bra{u_+}-\ket{u_-}\bra{u_-}.
\end{equation*}
%
  
Let us introduce the versions of the Bogoliubov Hamiltonians  $\mathbb{H}_\mathrm{right}$ and $\mathbb{H}_\mathrm{left}$ in the right and left wells
with an energy cutoff,  obtained by restricting all sums in \eqref{eq:H_right} and \eqref{eq:H_left}
to indices $\alpha,\beta$ smaller than $M$,
	\begin{align} \label{eq:H_right_cutoff}
	\begin{split} 
	\mathbb{H}_\mathrm{right}^{(M)}:=\;&\mathrm{d}\Gamma(P_{\le M}) \mathbb{H}_\mathrm{right}\mathrm{d}\Gamma(P_{\le M})\\
	=\;&\sum_{1\le \alpha,\beta\le M}\left\langle u_{r,\alpha},\Big(h_\mathrm{MF}-\mu_++{\lambda}K_{11}\Big) u_{r,\beta}\right\rangle  b^\dagger _{r,\alpha} b_{r,\beta}\\
	&+\frac{\lambda}{2}\sum_{1\le \alpha,\beta\le M}\left\langle u_{r,\alpha}, K_{11} u_{r,\beta}\right\rangle \left( b^\dagger _{r,\alpha} b^\dagger _{r,\beta}+ b_{r,\alpha} b_{r,\beta}\right)
	\end{split}\\\label{eq:H_left_cutoff}
	\begin{split} 
	\mathbb{H}_\mathrm{left}^{(M)}:=\;&\mathrm{d}\Gamma(P_{\le M}) \mathbb{H}_\mathrm{left}\mathrm{d}\Gamma(P_{\le M})\\
	=\;&\sum_{1\le \alpha,\beta\le M}\left\langle u_{\ell,\alpha},\Big(h_\mathrm{MF}-\mu_++{\lambda}K_{22}\Big) u_{\ell,\beta}\right\rangle  c^\dagger _{\ell,\alpha} c_{\ell,\beta}\\
	&+\frac{\lambda}{2}\sum_{1\le \alpha,\beta\le M}\left\langle u_{\ell,\alpha}, K_{22} u_{\ell,\beta}\right\rangle \left( c^\dagger _{\ell,\alpha} c^\dagger _{\ell,\beta}+ c_{\ell,\alpha} c_{\ell,\beta}\right)\,,
	\end{split}
	\end{align}
where we recall that the operators $K_{11}$, $K_{22}$ and $K_{12}$ are defined as
\begin{align*}
  \langle v,K_{ii}u\rangle=\;&\frac{1}{2}\langle v\otimes u_i\,,\, w \,u_i\otimes u\rangle\;\;,\;\;i=1,2,
  \quad \langle v,K_{12}u\rangle= \langle v\otimes u_1\,,\,w\,u_2\otimes v\rangle.
\end{align*}

\begin{proposition}[\textbf{Reduction to right- and left-mode Hamiltonians}] \label{prop:reduction}\mbox{}\\
	Consider  $\Phi\in\ell^2(\mathfrak{F}_\perp)$ such that
	\begin{equation} \label{eq:assum_reduction_quadratic}
	  \left\langle \mathrm{d}\Gamma(h_\mathrm{MF}-\mu_+) +\mathcal{N}_\perp^2 +\mathrm{d}\Gamma(h_\mathrm{MF}-\mu_+)\mathcal{N}_\perp
          \right\rangle_\Phi \le C
	\end{equation}
	for a constant $C$ that does not depend on $N$. For every energy cutoff $\Lambda$, let $M_\Lambda$ be the largest integer such that
        $\mu_{2M_\Lambda+2}\le \Lambda$, where $\{\mu_{m}\}_{m}$ are the eigenvalues of $h_\mathrm{MF}$
        in increasing order. Then,
	\begin{multline} \label{eq:reduction_l_r}
	\left| \left\langle \mathbb{H}-\mathbb{H}_\mathrm{right}^{(M_\Lambda)}-\mathbb{H}_\mathrm{left}^{(M_\Lambda)}-\mathrm{d}\Gamma_\perp\left(P_{\ge M_\Lambda} \left(h_{\mathrm{MF}}-\mu_+\right)P_{\ge M_\Lambda}\right) \right\rangle_\Phi\right|\\ 
	\le C_{\Lambda} o_N(1)+\frac{C}{\left(\mu_{2M_\Lambda+2}-\mu_+ \right)^{1/2}}
	\end{multline}
	where the constant $C_\Lambda$ does not depend on $N$.
\end{proposition}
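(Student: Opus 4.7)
My plan is to rewrite the Bogoliubov Hamiltonian $\mathbb{H}$ in the basis $\{u_{r,\alpha}, u_{\ell,\alpha}\}_{\alpha\ge 1}$ of left/right-localized excited modes and split every sum over $m,n \ge 3$ into (a) a \emph{low-energy} block with both indices satisfying $\alpha,\beta \le M_\Lambda$ and (b) a \emph{high-energy} block in which at least one index exceeds $M_\Lambda$. Within the low block one further decomposes into right-right, left-left, and cross right-left sub-blocks via $u_{2\alpha+1}=(u_{r,\alpha}+u_{\ell,\alpha})/\sqrt{2}$ and $u_{2\alpha+2}=(u_{r,\alpha}-u_{\ell,\alpha})/\sqrt{2}$. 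A key algebraic observation is that the $\Theta^{\pm 2}$ prefactors sitting on the $K_{11}$ (resp.\ $K_{22}$) pairing lines of \eqref{eq:Bog_Hamiltonian} are exactly what converts $a^\sharp$-operators on right (resp.\ left) modes into the $b^\sharp_{r,\alpha}$ (resp.\ $c^\sharp_{\ell,\alpha}$) operators of Definition \ref{def:b_c}, thanks to $b^\dagger_{r,\alpha} b^\dagger_{r,\beta} = \Theta^{-2} a^\dagger_{r,\alpha} a^\dagger_{r,\beta}$, $b_{r,\alpha} b_{r,\beta} = \Theta^2 a_{r,\alpha} a_{r,\beta}$, and the analogues for $c^\sharp_{\ell,\alpha}$. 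Since $\Theta^{-1}\Theta=\mathrm{Id}$, the number-conserving blocks obey $b^\dagger_{r,\alpha} b_{r,\beta}=a^\dagger_{r,\alpha} a_{r,\beta}$ (and similarly for $c^\dagger_{\ell,\alpha} c_{\ell,\beta}$), so matching to $\mathbb{H}_\mathrm{right}^{(M_\Lambda)}+\mathbb{H}_\mathrm{left}^{(M_\Lambda)}$ reduces to identifying one-body coefficients.

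To identify the coefficients I will use the identity
\[
-\Delta+V_\mathrm{DW}+\tfrac{\lambda}{2}w*|u_1|^2+\tfrac{\lambda}{2}w*|u_2|^2+\lambda K_{11}+\lambda K_{22}-\mu_+ = (h_\mathrm{MF}-\mu_+)+\lambda K_{11}+\lambda K_{22}-\lambda\, w*(u_1 u_2),
\]
which follows from $|u_+|^2 = \tfrac12(|u_1|^2+|u_2|^2)+u_1 u_2$. The right-right low matrix element then yields the target coefficient $\langle u_{r,\alpha}, (h_\mathrm{MF}-\mu_++\lambda K_{11}) u_{r,\beta}\rangle$ of $\mathbb{H}_\mathrm{right}^{(M_\Lambda)}$ plus an unwanted piece $\lambda\langle u_{r,\alpha}, (K_{22}-w*(u_1 u_2)) u_{r,\beta}\rangle$. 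This unwanted piece, together with the right-right block of the $K_{22}$ pairing, the left-left block of the $K_{11}$ pairing, the two $K_{12}$ lines, the $w*(u_1 u_2)\Theta^{\pm 2}$ number-conserving terms, and every cross right-left block --- in short, all the ``wrong-well'' contributions --- is controlled by the one-body localization estimates of \cite{OlgRou-20}: since $u_1,u_\ell$ live in one well while $u_2,u_r$ live in the other, each such matrix element is of order $T^{1/2-\varepsilon}$ for any $\varepsilon>0$. Summing the $O(M_\Lambda^2)$ such terms against a state $\Phi$ whose $\mathcal{N}_\perp$ and $\mathrm{d}\Gamma(h_\mathrm{MF}-\mu_+)$ moments are bounded by \eqref{eq:assum_reduction_quadratic} produces an error $C_\Lambda\, o_N(1)$ under $T\sim N^{-\delta}$.

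For the high-energy block I retain only the diagonal $(h_\mathrm{MF}-\mu_+)_{mn}$ term, which reassembles $\mathrm{d}\Gamma_\perp(P_{\ge M_\Lambda}(h_\mathrm{MF}-\mu_+)P_{\ge M_\Lambda})$, and bound every other contribution (off-diagonal number-conserving terms and all pairing/mixed $K_{11}, K_{22}, K_{12}, w*(u_1 u_2)$ terms touching the high-index block) by Cauchy-Schwarz after extracting a factor $(h_\mathrm{MF}-\mu_+)^{1/2}$ on the high-index side. The spectral bound
\[
P_{>M_\Lambda}\le \frac{1}{\mu_{2M_\Lambda+2}-\mu_+}\,P_{>M_\Lambda}(h_\mathrm{MF}-\mu_+)P_{>M_\Lambda},
\]
combined with the uniform bounds of \eqref{eq:assum_reduction_quadratic}, produces the stated $C(\mu_{2M_\Lambda+2}-\mu_+)^{-1/2}$ error. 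The main technical difficulty lies in the pairing terms with $\Theta^{\pm 2}$ prefactors, which couple components $\Phi_{s,d}$ to $\Phi_{s\mp 2,d\pm 2}$ of the excitation vector: because $\Theta$ is unitary and commutes with every $a^\sharp_m$ for $m\ge 3$ via \eqref{eq:commutation_relations}, expectations such as $|\langle \Phi, \Theta^{\pm 2}\sum_{m,n} K_{mn} a^\dagger_m a^\dagger_n \Phi\rangle|$ can be bounded, through a single Cauchy-Schwarz step in $\ell^2(\mathfrak{F}_\perp)$, by $\|K\|\langle (\mathcal{N}_\perp+1)^2\rangle_\Phi$ or, when $K$ carries a $P_{>M_\Lambda}$ factor, by its kinetic-weighted analogue involving $\langle \mathrm{d}\Gamma(h_\mathrm{MF}-\mu_+)\mathcal{N}_\perp\rangle_\Phi$. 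Both are finite by assumption, and putting the low- and high-energy estimates together yields \eqref{eq:reduction_l_r}.
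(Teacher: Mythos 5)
Your proposal is correct and follows essentially the same route as the paper: the same splitting into a below-cutoff block (further decomposed into right--right, left--left and cross pieces via the identity $h_{\mathrm{MF}}-\mu_+ = -\Delta+V_{\mathrm{DW}}+\tfrac{\lambda}{2}w*|u_1|^2+\tfrac{\lambda}{2}w*|u_2|^2-\mu_++\lambda w*(u_1u_2)$) and an above-cutoff block handled by Cauchy--Schwarz together with $P_{>M_\Lambda}\le (\mu_{2M_\Lambda+2}-\mu_+)^{-1}P_{>M_\Lambda}(h_{\mathrm{MF}}-\mu_+)P_{>M_\Lambda}$ and the assumed bound on $\langle \mathcal{N}_\perp\,\mathrm{d}\Gamma(h_{\mathrm{MF}}-\mu_+)\rangle_\Phi$. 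The only caveat is that your claimed $T^{1/2-\varepsilon}$ rate for the wrong-well matrix elements of the higher modes $u_{r,\alpha},u_{\ell,\alpha}$ and for the gaps $\mu_{2\alpha+2}-\mu_{2\alpha+1}$ is not supplied by the cited one-body estimates, which for $\alpha\ge 1$ give only qualitative convergence; this does not affect the conclusion since the stated error is anyway $C_\Lambda\, o_N(1)$.
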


The results of Propositions~\ref{prop:bogoliubov} and~\ref{prop:reduction} will enable us to show  in the next sections
that the expectation value of $H_N - H_{2\mathrm{-mode}}$ in the  ground state $\psi_{\rm gs}$ of the $N$-body Hamiltonian $H_N$ is equal  to
$\langle \mathbb{H} + \mu_+ \mathcal{N}_\perp \rangle_{\mathcal{U}_N^\ast \psi_{\rm gs}}$ up to error terms $o_N(1)$ and, furthermore,
that the Bogoliubov Hamiltonian in the last expression
can be decomposed as a sum of a ``right'' and ``left'' Bogoliubov Hamiltonians up to small errors.
Indeed, let us anticipate the following a priori estimates to be proven in Section~\ref{sect:apriori}:
$$
\langle \mathcal{N}_\perp^2 \rangle_{\psi_{\rm gs}} \leq C \quad , \quad \langle \mathrm{d}\Gamma(h_\mathrm{MF}-\mu_+)\mathcal{N}_\perp \rangle_{\psi_{\rm gs}} \leq C
\quad , \quad
\langle \mathcal{N}_{-}  \rangle_{\psi_{\rm gs}} \leq C_\varepsilon \min \{ N, T^{-1-\varepsilon} \}
$$
where the constants $C$ and $C_\varepsilon$ are independent of $N$.
In particular, taking $\Phi = \mathcal{U}_N \psi_{\rm gs}$, the second term in the right hand side of \eqref{eq:derivation_bogoliubov} is of order
$T^{1/2-\varepsilon}$.

%
%
%
%

To prove Proposition \ref{prop:bogoliubov} we will, in the next three subsections, group the terms in $H_N-H_{2\mathrm{-mode}}$ depending on the number of creation and annihilation operators $a^\sharp_m$ with $m\ge3$ they contain.The proof of Proposition \ref{prop:reduction} is provided in Subsection~\ref{subsect:proof_reduction}.

We first collect a few properties that we will use throughout the section.

\begin{lemma}[\textbf{General estimates}]\label{lemma:properties_operators}\mbox{}\\ 
	\begin{itemize}
		\item[$(i)$] For any functions $f,g,h\in L^2(\mathbb{R}^d)$ we have
		\begin{equation} \label{eq:sum_one_index}
		\sum_{m\ge3}\big|\langle f\otimes g,w\, h\otimes u_m\rangle \big|^2 \le  \left\langle g, \left| w*( \overline{f} h)\right|^2 g\right\rangle \le C \|f\|_2^2\,\|g\|_2^2\,\|h\|_2^2
		\end{equation}
		\item[$(ii)$] For any two function $f,g\in L^2(\mathbb{R}^d)$ we have
		\begin{equation}\label{eq:sum_two_indexes}
		\sum_{m,n\ge3}\big|\langle f\otimes g,w\, u_m\otimes u_n\rangle \big|^2 \le \langle f\otimes g, w^2 f\otimes g\rangle \le C \|f\|_2^2\,\|g\|_2^2
		\end{equation}
		\item[$(iii)$] We have the following bound
		\begin{equation} \label{eq:norm_w_12}
		\big\|w*(u_1u_2)\big\|_{L^\infty}=\sup_{x\in\mathbb{R}^d}| w*(u_1u_2)(x)|\le C_\varepsilon T^{1-\varepsilon}.
		\end{equation}
		\item[$(iv)$] The operators $K_{11}$ and $K_{22}$ are positive and trace-class. Moreover
		\begin{equation} \label{eq:norm_K_12}
		\big\|K_{12}\big\|_\mathrm{op} \le C_\varepsilon T^{1/2-\varepsilon}.
		\end{equation}
	\end{itemize}
\end{lemma}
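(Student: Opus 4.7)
The plan is to treat the four items separately, as each rests on a different elementary tool: $(i)$ and $(ii)$ are Bessel-type estimates, $(iii)$ imports tunneling bounds from~\cite{OlgRou-20}, and $(iv)$ combines a Plancherel-based positivity argument with a Hilbert--Schmidt estimate on $K_{12}$.

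For $(i)$, I would first rewrite, using Fubini and the fact that $w$ is real and radial,
\begin{equation*}
\langle f\otimes g,\, w\, h\otimes u_m\rangle \;=\; \int u_m(y)\,\overline{g(y)}\,\bigl(w*(\overline{f}h)\bigr)(y)\,dy,
\end{equation*}
so that, by Bessel's inequality applied to the orthonormal system $\{u_m\}_{m\ge 3}\subset L^2(\mathbb{R}^d)$, the left-hand side of \eqref{eq:sum_one_index} is at most $\bigl\|\bar g\cdot w*(\bar f h)\bigr\|_2^2 = \langle g,|w*(\bar f h)|^2 g\rangle$. The pointwise bound $|w*(\bar f h)(y)|\le \|w\|_\infty \|\bar f h\|_1\le \|w\|_\infty\|f\|_2\|h\|_2$ (Young + Cauchy--Schwarz) together with Assumption~\ref{assum:w} closes $(i)$. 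For $(ii)$ I would apply Bessel's inequality to the orthonormal family $\{u_m\otimes u_n\}_{m,n\ge 3}\subset L^2(\mathbb{R}^{2d})$, acting on the function $(x,y)\mapsto w(x-y)f(x)g(y)$; the right inequality is again $\|w\|_\infty<\infty$.

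For $(iii)$, the identity $u_1u_2=\tfrac12(u_+^2-u_-^2)=\tfrac12(u_+-u_-)(u_++u_-)$ is the starting point. The point is that one of the two factors is always small: using $u_+-u_-=\sqrt{2}\,u_2$ is exponentially localized in the left well and $u_++u_-=\sqrt{2}\,u_1$ in the right well, and combining this with the Agmon/WKB decay estimates for $u_\pm$ at both wells and in the barrier region supplied by Theorem~\ref{thm:onebody} and \cite{OlgRou-20}, one gets $|u_1u_2|(y)\le C_\eps T^{1-\eps}$ pointwise. Convolving with the bounded, compactly supported $w$ preserves this bound, yielding \eqref{eq:norm_w_12}. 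This is the only step where I expect a real difficulty, since it hinges on the pointwise control of the tunneling imbalance $|u_+^2-u_-^2|$ rather than on the weaker $L^\infty$ bound $\bigl||u_+|-|u_-|\bigr|\lesssim T^{1/2}$ mentioned in the introduction.

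Finally, for $(iv)$, setting $v:=u_i u$ (real, since $u_i$ is real) in the definition gives
\begin{equation*}
\langle u,K_{ii}u\rangle=\tfrac12\langle v,w*v\rangle=\tfrac12\int|\widehat v(k)|^2\,\widehat w(k)\,dk\ge 0
\end{equation*}
by Plancherel and $\widehat w\ge 0$ from Assumption~\ref{assum:w}, which settles positivity. The operator $K_{ii}$ has the continuous, fast-decaying kernel $K_{ii}(x,y)=\tfrac12 u_i(x)w(x-y)u_i(y)$; being positive with $\int K_{ii}(x,x)\,dx=\tfrac{w(0)}{2}\|u_i\|_2^2=\tfrac{w(0)}{2}<\infty$, it is trace class (e.g.\ by Brislawn's theorem). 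For $K_{12}$, I would use the crude estimate $\|K_{12}\|_\mathrm{op}\le\|K_{12}\|_\mathrm{HS}$ together with
\begin{equation*}
\|K_{12}\|_\mathrm{HS}^2=\int\!\!\int u_2(x)^2\,w(x-y)^2\,u_1(y)^2\,dx\,dy\le \|w\|_\infty\, w_{1212},
\end{equation*}
and then invoke \eqref{eq:w_1212} of Lemma~\ref{lemma:w_coefficients}, which gives $w_{1212}\le C_\eps T^{1-\eps}$ and therefore $\|K_{12}\|_\mathrm{op}\le C_\eps T^{1/2-\eps}$ after relabeling $\eps$.
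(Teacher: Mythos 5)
Items $(i)$, $(ii)$ and $(iv)$ of your proposal are correct and essentially coincide with the paper's argument: your use of Bessel's inequality for the orthonormal systems $\{u_m\}_{m\ge3}$ and $\{u_m\otimes u_n\}_{m,n\ge3}$ is the same as the paper's operator bounds $\sum_{m\ge3}\ket{u_m}\bra{u_m}\le\mathbbm{1}$ and $\sum_{m,n\ge3}\ket{u_m\otimes u_n}\bra{u_m\otimes u_n}\le\mathbbm{1}$, your Plancherel computation is the standard justification of positivity for a potential of positive type, and your Hilbert--Schmidt bound on $K_{12}$ is a harmless variant of the paper's direct Cauchy--Schwarz estimate on the bilinear form (both reduce to $\|w\|_\infty^{1/2}\,w_{1212}^{1/2}$ and then to \eqref{eq:w_1212}).

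Item $(iii)$, however, contains a genuine gap, and you have correctly sensed where it is. You propose to establish the \emph{pointwise} bound $|u_1(y)u_2(y)|\le C_\eps T^{1-\eps}$ and then convolve with $w$. This bound does not follow from the estimates the paper makes available: \eqref{eq:Linfty_convergence} only gives $\bigl\||u_+|-|u_-|\bigr\|_{L^\infty}\le C_\eps T^{1/2-\eps}$, hence at best $\|u_+^2-u_-^2\|_{L^\infty}\lesssim T^{1/2-\eps}$, which loses a factor $T^{1/2}$; upgrading it to $T^{1-\eps}$ would require genuine pointwise Agmon \emph{upper} bounds in the barrier region, which Theorem~\ref{thm:onebody} does not state (only the pointwise \emph{lower} bound \eqref{eq:pointwise_lower_bound} is quoted from \cite{OlgRou-20}). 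The fix is to put the norms on the other factors: since $2u_1u_2=u_+^2-u_-^2$ and $w\ge0$ is bounded, Young's inequality in the form $\|w*\phi\|_{L^\infty}\le\|w\|_{L^\infty}\|\phi\|_{L^1}$ gives
\begin{equation*}
\sup_{x}\Bigl|\int w(x-y)\,u_1(y)u_2(y)\,dy\Bigr|\le \tfrac12\,\|w\|_{L^\infty}\,\bigl\||u_+|^2-|u_-|^2\bigr\|_{L^1}\le C_\eps T^{1-\eps}
\end{equation*}
directly from the $L^1$ estimate \eqref{eq:L1_convergence}. No pointwise control of the tunneling imbalance is needed.
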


\begin{proof}
	Let us start by proving \eqref{eq:sum_one_index}. We have
	\begin{equation*}
	\sum_{m\ge3}\big|\langle f\otimes g,w\, h\otimes u_m\rangle \big|^2 = \sum_{m \ge3} \left\langle g, w*\left(\overline{f}h\right) \ket{u_m} \bra{u_m} w*\left(\overline{h}f\right) g\right\rangle.
	\end{equation*}
	The first inequality in \eqref{eq:sum_one_index} then follows thanks to 
	the operator bound
	\begin{equation*}
	\sum_{m\ge3} \ket{u_m}\bra{u_m} \le \mathbbm{1}.
	\end{equation*}
	To pass to the second inequality of \eqref{eq:sum_one_index} one uses Young's inequality, recalling that $w\in L^\infty$. A similar argument proves \eqref{eq:sum_two_indexes} as well, using instead the operator bound
	\begin{equation*}
	\sum_{m,n\ge3} \ket{u_m\otimes u_n}\bra{u_m\otimes u_n}\le \mathbbm{1}.
	\end{equation*}
	To prove \eqref{eq:norm_w_12} we write, recalling that $2 u_1 u_2 = u_+^2 - u_{-}^2$ and $w \geq 0$,
	\begin{equation*}
	\begin{split}
	\sup_{x\in\mathbb{R}^d}\bigg| \int_{\mathbb{R}^d}w(x-y)u_1(y)u_2(y)dy \bigg|\le\;&\frac{1}{2}\sup_{x\in\mathbb{R}^d} \int_{\mathbb{R}^d}w(x-y)\big| |u_+(y)|^2-|u_-(y)|^2 \big|dy\\
	\le\;& C \big\||u_+|^2-|u_-|^2\big\|_{L^1}\le C_\varepsilon T^{1-\varepsilon},
	\end{split}
	\end{equation*}
	where the second inequality follows from Young's inequality, while the third one follows from \eqref{eq:L1_convergence}.
	
	The operators $K_{11}$ and $K_{22}$ are trace-class since they are integral operators with kernels
        $K_{ii}(x,y)= \frac{1}{2} u_i (x) w(x-y) u_i(y)$ and their trace is equal to 
	\begin{equation*}
	\int_{\R^d} K_{ii} (x,x) d x = \frac{1}{2} \int_{\R^d} w(0)\,\left|u_{i}(x)\right|^2dx = \frac{1}{2} w(0) < \infty,\qquad \text{for }i,j\in\{1,2\}.
	\end{equation*}
	They are positive because of our assumption  that $w$ is of positive type, see \eqref{eq:asum w}.
	To prove \eqref{eq:norm_K_12} we  use the Cauchy-Schwarz inequality to obtain
	\begin{equation*}
	\begin{split}
	  \big\|K_{12}\big\|_\mathrm{op}=\;&\sup_{u,v\in L^2(\mathbb{R}^{d}),\,\|u\|=\|v\| =1} \big| \langle v ,K_{12} u \rangle \big|
          \le \sup_{\|u\|=\|v\| =1} \iint_{\mathbb{R}^{2d}} |v(x)| u_1(y )w(x-y) u_2(x) | u(y ) | dxdy
          \\
         \le\; & \sup_{\|u\|=\|v\| =1} \Big(\iint_{\mathbb{R}^{2d}} |v(x)|^2 w(x-y) |u(y)|^2 dxdy \Big)^{1/2}  w_{1212}^{1/2}
	\end{split}\end{equation*}
	and the result then follows from  $w\in L^\infty$ and \eqref{eq:w_1212}.
\end{proof}

\subsection{Linear terms}

The part of the Hamiltonian containing only one $a^\sharp_m$ is, recalling the identities
$w_{mnpq}=w_{pnmq}=w_{mqpn}=w_{nmqp}$,
\begin{align}
A_1=\;&\sum_{m\ge3}  \big(-\Delta+V_\mathrm{DW}\big)_{+m} a^\dagger _+a_m+\mathrm{h.c.}  \label{eq:linear_h_+}\\
&+\sum_{m\ge3} \big(-\Delta+V_\mathrm{DW}\big)_{- m} a^\dagger _-a_m+\mathrm{h.c.} \label{eq:linear_h_-}\\
&+\frac{\lambda}{N-1}\sum_{m\ge3} w_{+++m}a^\dagger _+a^\dagger _+a_+a_m+\mathrm{h.c.}\label{eq:linear_+++m}\\
&+\frac{\lambda}{N-1}\sum_{m\ge3} w_{++-m}a^\dagger _+a^\dagger _+a_-a_m+\mathrm{h.c.}\label{eq:linear_++-m}\\
&+\frac{\lambda}{N-1}\sum_{m\ge3} w_{+-+m}a^\dagger _+a^\dagger _-a_+a_m+\mathrm{h.c.} \label{eq:linear_+-+m}\\
&+\frac{\lambda}{N-1}\sum_{m\ge3} w_{+-m+}a^\dagger _+a^\dagger _-a_ma_++\mathrm{h.c.}\label{eq:linear_+-m+}\\
&+\frac{\lambda}{N-1}\sum_{m\ge3} w_{+--m}a^\dagger _+a^\dagger _-a_-a_m+\mathrm{h.c.} \label{eq:linear_+--m}\\
&+\frac{\lambda}{N-1}\sum_{m\ge3} w_{+-m-}a^\dagger _+a^\dagger _-a_ma_-+\mathrm{h.c.}\label{eq:linear_+-m-}\\
&+\frac{\lambda}{N-1}\sum_{m\ge3} w_{--+m}a^\dagger _-a^\dagger _-a_+a_m+\mathrm{h.c.} \label{eq:linear_--+m}\\
&+\frac{\lambda}{N-1}\sum_{m\ge3} w_{---m}a^\dagger _-a^\dagger _-a_-a_m+\mathrm{h.c.}. \label{eq:linear_---m}
\end{align}

The main result of this Subsection is the following Proposition.

\begin{proposition}[\textbf{Linear terms}]\mbox{} \label{prop:linear}\\
	Let $\Phi\in \ell^2(\mathfrak{F}_\perp)$ be such that $\Phi=\mathcal{U}_N\psi$ for some $\psi\in \mathfrak{H}^N$. We have:
	\begin{itemize}
	\item \textbf{Elimination of sub-leading terms.}
	\begin{equation} \label{eq:linear_intermediate}
	\begin{split}
	&\Big|\langle A_1\psi\rangle_\psi -\frac{\lambda}{N-1}\Big\langle \Big(\sum_{m\ge3}(w_{++-m}a^\dagger _++w_{+--m}a^\dagger _-)\left(\mathcal{N}_1-\mathcal{N}_2\right)a_m+\mathrm{h.c.}\Big)\Big\rangle_\psi \Big|\\
	&\qquad\qquad\qquad\le\; \frac{C}{\sqrt{N}}\langle\mathcal{N}_\perp^2+1 \rangle_\psi + C_\varepsilon\frac{T^{1-\varepsilon}}{N^{1/4}}\langle\mathcal{N}_-\rangle^{3/4}_\psi\langle\mathcal{N}_\perp^2\rangle^{1/4}_\psi.
	\end{split}
	\end{equation}
	\item \textbf{Conjugation with $\mathcal{U}_N$.}  
	\begin{equation} \label{eq:linear}
	\begin{split}
	\Big|\langle \mathcal{U}_N A_1 \mathcal{U}_N^*\rangle_\Phi&-\frac{\lambda}{\sqrt{2(N-1)}}\big\langle \sum_{m\ge3} w_{+1-m}\,\Theta a_m\mathfrak{D}+\mathrm{h.c.}\big\rangle_\Phi\\
	&-\frac{\lambda}{\sqrt{2(N-1)}}\big\langle \sum_{m\ge3} w_{+2-m} \,\Theta^{-1} a_m\mathfrak{D}+\mathrm{h.c.}\big\rangle_\Phi\Big|\\
	\le\;&\frac{C}{N^{1/4}}\Big( \langle \mathcal{N}_\perp^2+1\rangle_{\Phi}+\big\langle\frac{\mathfrak{D}^2}{N}\big\rangle_\Phi \Big)+ C_\varepsilon\frac{T^{1-\varepsilon}}{N^{1/4}}\langle\mathcal{N}_-\rangle^{3/4}_{\mathcal{U}_N^*\Phi}\langle\mathcal{N}_\perp^2\rangle^{1/4}_\Phi.
	\end{split}
	\end{equation}
\end{itemize}
\end{proposition}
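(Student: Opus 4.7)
The plan is to prove the two estimates in order, with \eqref{eq:linear} following from \eqref{eq:linear_intermediate} through the conjugation formulas of Lemma~\ref{lemma:conjugation}.

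For \eqref{eq:linear_intermediate}, the central tool is the Hartree eigenvalue equation $h_{\rm MF}u_\pm=\mu_\pm u_\pm$ combined with orthogonality $u_m\perp u_\pm$ for $m\ge 3$, yielding the identities $(-\Delta+V_{\rm DW})_{+m}=-\lambda w_{+++m}$ and $(-\Delta+V_{\rm DW})_{-m}=-\lambda w_{+-+m}$. I insert these into \eqref{eq:linear_h_+}--\eqref{eq:linear_h_-}, reduce the cubic monomials in \eqref{eq:linear_+++m}--\eqref{eq:linear_---m} via the CCR (for instance $(a_+^\dagger)^2a_+=(\mathcal{N}_+-1)a_+^\dagger$, $a_+^\dagger a_-^\dagger a_+=a_-^\dagger\mathcal{N}_+$, and their $-$-analogs) and use $\mathcal{N}_+=N-\mathcal{N}_--\mathcal{N}_\perp$ throughout. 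After this substitution the $O(1)$-coefficient contributions carrying $w_{+++m}$, $w_{+-+m}$, $w_{---m}$ cancel almost exactly, thanks to the auxiliary identity $w_{---m}-w_{+-+m}=-2\langle u_-u_m,w*(u_+u_-)\rangle=O(T^{1-\varepsilon})$ obtained from $u_-^2=u_+^2-2u_+u_-$ and Lemma~\ref{lemma:properties_operators}(iii). What remains from the large-coefficient piece is a single residue of schematic form $-\tfrac{\lambda}{N-1}\sum_m\mathcal{N}_\perp(w_{+++m}a_+^\dagger+w_{+-+m}a_-^\dagger)a_m+\mathrm{h.c.}$, bounded by $C\langle\mathcal{N}_\perp^2+1\rangle/\sqrt N$ through Cauchy--Schwarz and the summability bound $\sum_m|w_{+++m}|^2+\sum_m|w_{+-+m}|^2\le C$ supplied by Lemma~\ref{lemma:properties_operators}(i); this produces the first remainder in \eqref{eq:linear_intermediate}.

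The unmatched pieces carry only the ``cross'' coefficients $w_{++-m}$, $w_{+--m}$, $w_{+-m+}=w_{++-m}$, $w_{+-m-}=w_{+--m}$ (the last two equalities following from the $w(x-y)=w(y-x)$ symmetry) and $w_{--+m}=w_{+--m}$, all bounded by $C_\varepsilon T^{1-\varepsilon}$ with $\sum_m|w_{++-m}|^2+\sum_m|w_{+--m}|^2\le C_\varepsilon T^{2(1-\varepsilon)}$. Expanding $\mathcal{N}_1-\mathcal{N}_2=a_+^\dagger a_-+a_-^\dagger a_+$ and simplifying via the CCR, I verify the algebraic identity
\begin{equation*}
\bigl(w_{++-m}a_+^\dagger+w_{+--m}a_-^\dagger\bigr)(\mathcal{N}_1-\mathcal{N}_2)a_m=\bigl[w_{++-m}\bigl((a_+^\dagger)^2a_-+a_-^\dagger\mathcal{N}_+\bigr)+w_{+--m}\bigl((a_-^\dagger)^2a_++a_+^\dagger\mathcal{N}_-\bigr)\bigr]a_m,
\end{equation*}
whose right-hand side exactly reproduces \eqref{eq:linear_++-m}, \eqref{eq:linear_--+m}, and the $\mathcal{N}_\pm$-reductions of \eqref{eq:linear_+-+m}+\eqref{eq:linear_+-m+} and \eqref{eq:linear_+--m}+\eqref{eq:linear_+-m-}, modulo the pieces already absorbed in the Hartree cancellation above. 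The leftover mismatches are all of the form $\sum_m\beta_m a_\pm^\dagger a_\mp a_m$ with $\sum_m|\beta_m|^2\le C_\varepsilon T^{2(1-\varepsilon)}$; Cauchy--Schwarz applied through $\|a_\mp a(\beta)\psi\|^2\le\|\beta\|^2\langle\mathcal{N}_\mp\mathcal{N}_\perp\rangle$ and $\|a_\pm\psi\|\le\sqrt N$, followed by an interpolation against $\mathcal{N}_-\le N$, delivers the second remainder $C_\varepsilon T^{1-\varepsilon}N^{-1/4}\langle\mathcal{N}_-\rangle^{3/4}\langle\mathcal{N}_\perp^2\rangle^{1/4}$.

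For \eqref{eq:linear}, I expand $a_\pm^\dagger=(a_1^\dagger\pm a_2^\dagger)/\sqrt 2$ and substitute $u_\pm=(u_1\pm u_2)/\sqrt 2$ in the integral defining $w_{abcd}$ to obtain the basis-change formulae $w_{++-m}\pm w_{+--m}=\sqrt 2\,w_{+j-m}$ with $j\in\{1,2\}$. The surviving term of \eqref{eq:linear_intermediate} thereby becomes $\tfrac{\lambda}{N-1}\sum_m(w_{+1-m}a_1^\dagger+w_{+2-m}a_2^\dagger)(\mathcal{N}_1-\mathcal{N}_2)a_m+\mathrm{h.c.}$, and Lemma~\ref{lemma:conjugation} converts it via $\mathcal{U}_N a_1^\dagger a_m\mathcal{U}_N^\ast=\Theta\sqrt{(N-\mathcal{N}_\perp+\mathfrak{D}+1)/2}\,a_m$, its $a_2^\dagger$ analog, and $\mathcal{U}_N(\mathcal{N}_1-\mathcal{N}_2)\mathcal{U}_N^\ast=\mathfrak{D}$. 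Replacing the square roots by $\sqrt{(N-1)/2}\approx\sqrt{N/2}$ introduces a relative error of size $(\mathcal{N}_\perp+|\mathfrak{D}|)/N$ which, combined with the prefactor $T^{1-\varepsilon}/\sqrt N$, yields the $N^{-1/4}(\langle\mathcal{N}_\perp^2+1\rangle+\langle\mathfrak{D}^2/N\rangle)$ piece of the error in \eqref{eq:linear}. The main obstacle is the bookkeeping in Part~1: tracking all ten terms and checking that after Hartree cancellation and CCR reductions the $O(1)$-coefficient residue collapses to a pure $\mathcal{N}_\perp$-times-linear expression, so that the only genuinely surviving linear contribution carries the small cross-coefficients times the variance operator $\mathcal{N}_1-\mathcal{N}_2$. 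This particular form is not negligible under the sole assumption of Hartree optimality and will later be eliminated only by completing a square (Section~\ref{sect:minimization}), reflecting the number squeezing of the true ground state rather than the variational optimality of $u_\pm$ alone.
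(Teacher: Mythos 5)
Your proposal follows essentially the same route as the paper's proof: the Hartree relation $(h_\mathrm{MF})_{\pm m}=0$ for $m\ge3$ cancels the large coefficients once the kinetic linear terms are grouped with \eqref{eq:linear_+++m}, \eqref{eq:linear_+-m-}, \eqref{eq:linear_+-+m}, \eqref{eq:linear_---m} via $\mathcal{N}_++\mathcal{N}_-=N-\mathcal{N}_\perp$, the four cross terms recombine exactly into the displayed $(\mathcal{N}_1-\mathcal{N}_2)$ expression through the CCR and the symmetries of $w_{mnpq}$, and the conjugation error in \eqref{eq:linear} comes from replacing the square roots of Lemma~\ref{lemma:conjugation} by $\sqrt{(N-1)/2}$. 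Two harmless slips do not affect your conclusions: the auxiliary identity should rest on $u_+^2-u_-^2=2u_1u_2$ (so the residual coefficient is $\langle u_-, w*(|u_-|^2-|u_+|^2)u_m\rangle$, small by \eqref{eq:L1_convergence}, and the surviving residue is of the form $a_\pm^\dagger\mathcal{N}_-a_m$ rather than $a_\pm^\dagger a_\mp a_m$), and the square-root replacement error carries no $T^{1-\varepsilon}$ prefactor since the coefficients $w_{+j-m}$ are only $\ell^2$-summable, not small --- consistent with the $CN^{-1/4}$ bound you in fact state.
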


Some linear terms still appear explicitly in \eqref{eq:linear}, of the form
\begin{equation*}
\frac{1}{N}a^\dagger_\pm (\mathcal{N}_1-\mathcal{N}_2)a_m\qquad m\ge3.
\end{equation*}
According to the standard prescriptions of Bogoliubov theory ($a^\sharp_\pm\simeq \sqrt{N}$ and $a_m^\sharp\simeq 1$ for $m\ge3$), and using the a priori estimate~\eqref{eq:apriori_variance}, for the variance, this term would not result to be negligible. We will prove that it actually is at a later stage of the proof.

\begin{proof} 
  Let us start with \eqref{eq:linear_intermediate}. The terms \eqref{eq:linear_h_+}, \eqref{eq:linear_+++m}, and \eqref{eq:linear_+-m-} will be considered together (and analogous arguments will hold for \eqref{eq:linear_h_-}+\eqref{eq:linear_+-+m}+\eqref{eq:linear_---m}). Their sum gives
	\begin{equation} \label{eq:splitting_L1_L2}
	\begin{split}
	\eqref{eq:linear_h_+}+\eqref{eq:linear_+++m}&+\eqref{eq:linear_+-m-}\\
	=\;&\sum_{m\ge3} \Big[\big(-\Delta+V_\mathrm{DW}\big)_{+m}a^\dagger _+a_m+\frac{\lambda}{N-1}w_{+++m}a^\dagger _+\big(\mathcal{N}_++\mathcal{N}_-\big)a_m\Big]+\mathrm{h.c.}\\
	&+\frac{\lambda}{N-1}\sum_{m\ge3}\Big[(w_{+-m-}-w_{+++m}\big)a^\dagger _+\mathcal{N}_-a_m\Big]+\mathrm{h.c.}\\
	=:\;&L_1+L_2.
	\end{split}
	\end{equation}
	In order to estimate $L_1$ we write, using $\mathcal{N}_++\mathcal{N}_-=N-\mathcal{N}_\perp$ and $w_{+++m}=(w \ast u_+^2)_{+m}$,
	\begin{equation*}
	\begin{split}
	  L_1
	=\;&\sum_{m\ge3}\bigg[ \big(h_\mathrm{MF}\big)_{+m} a_{+}^\dagger a_m -\frac{\lambda}{N-1}w_{+++m}a^\dagger _+(\mathcal{N}_\perp-1) a_m\bigg] +\mathrm{h.c.}
	\end{split}
	\end{equation*}
        But $(h_{\mathrm{MF}})_{+m} =\mu_+\langle u_+,u_m\rangle  =0$ if $m\ge 3$ and thus 
 \begin{equation*}
 \langle L_1 \rangle_\psi  = - \frac{\lambda}{N-1}\sum_{m\ge3}w_{+++m} \langle \psi , a^\dagger _+(\mathcal{N}_\perp-1) a_m \psi \rangle + \mathrm{h.c.}
  \end{equation*}
Using  the Cauchy-Schwarz inequality twice, inserting~\eqref{eq:sum_one_index}, recalling that $\mathcal{N}_+\le N$ and  $2\mathcal{N}_\perp \le \mathcal{N}_\perp^2 + 1$, we have  
\begin{equation}\label{eq:bound L1}
  \begin{split}
    |\langle L_1 \rangle_\psi |
    \le\;& \frac{C}{N}\bigg[\sum_{m\ge3}|w_{+++m}|^2\bigg]^{1/2}
    \bigg[\sum_{m\ge3}\Big(\| \mathcal{N}_\perp^{1/2} a_+ \psi \|^2 \| \mathcal{N}_\perp^{1/2} a_m \psi \|^2 +
    \| a_{+} \psi \|^2 \| a_m \psi \|^2 \Big) \bigg]^{1/2}\\
    \le\;& \frac{C}{N} \bigg[\sum_{m\ge3}\Big(\langle\mathcal{N}_+\mathcal{N}_\perp\rangle_\psi\,\langle \mathcal{N}_\perp\mathcal{N}_m\rangle_\psi
    + \langle \mathcal{N}_+ \rangle_\psi \langle \mathcal{N}_m \rangle_\psi\Big) \bigg]^{1/2}\\
	\le\;&\frac{C}{\sqrt{N}}\langle\mathcal{N}_\perp^2+1\rangle_\psi.
	\end{split}
	\end{equation}	        
The term $L_2$ in~\eqref{eq:splitting_L1_L2} can be rewritten as 
\begin{equation*}
L_2 = \frac{\lambda}{N-1}\sum_{m\ge3}  
\left\langle u_+, w* \left(|u_-|^2 - |u_+|^2 \right)  u_m\right\rangle a^\dagger_+ \mathcal{N}_-a_m
+{\mathrm{h.c.}}.
\end{equation*}
Hence
	\begin{equation*}
	\begin{split}
	  |\langle L_2 \rangle_\psi |\le\;& \frac{C}{N}\bigg[ \sum_{m\ge3}\big|\langle u_+,w*(|u_-|^2-|u_+|^2)u_m\rangle\big|^2\bigg]^{1/2}
          \bigg[ \sum_{m\ge3}\|\mathcal{N}_-^{1/2}a_+\psi\|^2\,\|\mathcal{N}_-^{1/2}a_m\psi\|^2\bigg]^{1/2}\\
	\le\;&\frac{C}{N}  \langle u_+,\big(w*( |u_+|^2-|u_-|^2)\big)^2u_+\rangle^{1/2}\, \langle\mathcal{N}_+\mathcal{N}_-\rangle^{1/2}_\psi\,\langle\mathcal{N}_\perp\mathcal{N}_-\rangle^{1/2}_\psi\\
	\le\;&C_\varepsilon\frac{T^{1-\varepsilon}}{N^{1/2}}\langle \mathcal{N}_-\rangle_\psi^{1/2}\, \langle\mathcal{N}_\perp^2\rangle^{1/4}_\psi\,\langle \mathcal{N}_-^2\rangle^{1/4}_\psi\\
	\le\;&C_\varepsilon\frac{T^{1-\varepsilon}}{N^{1/4}}\langle\mathcal{N}_-\rangle^{3/4}_\psi\,\langle\mathcal{N}_\perp^2\rangle^{1/4}_\psi.
	\end{split}
	\end{equation*}
	In the first step we used the Cauchy-Schwarz inequality for the $m$-sum and for the scalar product. In the second step we used~\eqref{eq:sum_one_index}. In the third one we used Young's inequality, $w\in L^\infty$ and the $L^2$-bound~\eqref{eq:L1_convergence}, as well as
        $\mathcal{N}_+\le N$ and the Cauchy Schwarz  inequality $\langle\mathcal{N}_\perp\mathcal{N}_-\rangle_\psi^2 \leq \langle \mathcal{N}_\perp^2\rangle_\psi \langle \mathcal{N}_{-}^2 \rangle_\psi$.  In the last step we used $\mathcal{N}_-^2\le N\mathcal{N}_-$.

	Having estimated both $L_1$ and $L_2$, we deduce
	\begin{equation}
	\begin{split}
	|\langle\psi,\big(\eqref{eq:linear_h_+}+\eqref{eq:linear_+++m}+\eqref{eq:linear_+-m-}\big)\psi\rangle|\le\;& \frac{C}{\sqrt{N}}\langle\mathcal{N}_\perp^2+1\rangle_\psi\\
	&+ C_\varepsilon\frac{T^{1-\varepsilon}}{N^{1/4}}\langle\mathcal{N}_-\rangle^{3/4}_\psi\,\langle \mathcal{N}_\perp^2 \rangle_\psi^{1/4}.
	\end{split} \label{eq:linear_term_partial_1}
	\end{equation}
	Analogous arguments lead to a similar bound for  $|\langle\psi,\big(\eqref{eq:linear_h_-}+\eqref{eq:linear_+-+m}+\eqref{eq:linear_---m}\big)\psi\rangle|$.
	
	The remaining terms in $A_1$
        yield the linear terms in the left hand side of \eqref{eq:linear_intermediate}. In fact,
        noticing that $w_{++-m}=w_{-++m}=w_{+-m+}$, and using the identity
	\begin{equation*}
	a^\dagger _+a_-+a^\dagger _-a_+=\mathcal{N}_1-\mathcal{N}_2,
	\end{equation*}
	we find
	\begin{equation} 
	  \eqref{eq:linear_++-m} + \eqref{eq:linear_+-m+}+\eqref{eq:linear_+--m}+ \eqref{eq:linear_--+m}
           =\frac{\lambda}{N-1}\sum_{m\ge3} \Big( w_{++-m}a^\dagger _+ +  w_{+--m}a^\dagger _- \Big)
          \big(\mathcal{N}_1-\mathcal{N}_2\big)a_m+\mathrm{h.c.}.\label{eq:linear_term_partial_3}
	\end{equation}
	The estimate \eqref{eq:linear_intermediate} is then deduced by merging \eqref{eq:linear_term_partial_1}
        and \eqref{eq:linear_term_partial_3}.

	We now turn to~\eqref{eq:linear}. Using the definition of $u_1$ and $u_2$ in terms of $u_+$ and $u_-$ (see \eqref{eq:u_1_u_2}) we can replace $a^\sharp_+$ and $a^\sharp_-$ with linear combinations of $a^\sharp_1$ and $a^\sharp_2$. The action of $\mathcal{U}_N$ on $a^\dagger _ma_n$ is then obtained using \eqref{eq:conjugation}. For example, recalling that $[\mathcal{N}_1,a_m]=[\mathcal{N}_2,a_m]=0$ for $m\ge3$, and recalling the definition of $\mathfrak{D}$ from \eqref{eq:def_D},
	\begin{equation*}
	\begin{split}
	\mathcal{U}_N a^\dagger_1 (\mathcal{N}_1-\mathcal{N}_2) a_m \mathcal{U}_N^* =\;& \mathcal{U}_N a^\dagger_1 a_m \mathcal{U}_N^*\, \mathcal{U}_N (\mathcal{N}_1-\mathcal{N}_2) \mathcal{U}_N^*\\
	=\;&\Theta \sqrt{ \frac{N-\mathcal{N}_\perp+\mathfrak{D}+1}{2}}a_m \mathfrak{D}.
	\end{split}
	\end{equation*}
	The action of $\mathcal{U}_N$ on the term of \eqref{eq:linear} containing $a^\dagger_2$ is computed analogously, and the same holds for the adjoint operators. Thus, acting with $\mathcal{U}_N$ on the linear terms in the right hand side of \eqref{eq:linear_intermediate} and
recalling the definition of $u_1$ and $u_2$ to re-express the matrix elements of $w$ gives    
\begin{equation} \label{eq:U_N_linear}
	\begin{split} \frac{\lambda}{N-1}&\mathcal{U}_N\sum_{m\ge3}(w_{++-m}a^\dagger _++w_{+--m}a^\dagger _-)\left(\mathcal{N}_1-\mathcal{N}_2\right)a_m\mathcal{U}_N^* +\mathrm{h.c.}\\
	  =\;&\frac{\lambda}{\sqrt{2}(N-1)}\sum_{m\ge3}\Big[ w_{+1-m} \Theta\sqrt{N-\mathcal{N}_\perp+\mathfrak{D}+1} +\mathrm{h.c.}\\
           &  +  w_{+2-m} \Theta^{-1}\sqrt{N-\mathcal{N}_\perp-\mathfrak{D}+1}\Big) \Big]\mathfrak{D}a_m+\mathrm{h.c.}.
	\end{split}
	\end{equation}
	The linear terms in~\eqref{eq:linear} are obtained by replacing all square roots in the above right hand side by $\sqrt{N-1}$. We now bound the remainders this operation produces. Consider for example the second line of~\eqref{eq:U_N_linear}, and denote
        \begin{equation*}
          R_1 := \frac{\lambda}{\sqrt{2} (N-1)} \sum_{m\ge3}w_{+1 -m}\Big\langle \Theta\bigg(\sqrt{N-\mathcal{N}_\perp+\mathfrak{D}+1}-\sqrt{N-1}\bigg)\mathfrak{D}a_m\Big\rangle_\Phi + {\mathrm h.c.}
        \end{equation*}
        Proceeding as when estimating $\langle L_1 \rangle_\psi$ and $\langle L_2 \rangle_\psi$ above, recalling that $[\mathfrak{D}, a_m]=0$, one obtains 
	\begin{equation*}
	\begin{split}
        | R_1|	\le\;&	\frac{C}{\sqrt{N}}\Big(\sum_{m\ge3}|w_{+1-m}|^2\Big)^{1/2}\langle\mathcal{N}_\perp\mathfrak{D}^2\rangle^{1/2}_\Phi\\
	&\times\bigg\langle\Theta\Bigg(\sqrt{1-\frac{\mathcal{N}_\perp}{N-1}+\frac{\mathfrak{D}}{N-1}+\frac{2}{N-1}}-1\Bigg)^2\Theta^{-1}\bigg\rangle^{1/2}_\Phi.
	\end{split}
	\end{equation*}
	We now use the inequality
	\begin{equation} \label{eq:square_root_inqualities}
	\Bigg(\sqrt{1+\sum_{j=1}^KX_j}-1\Bigg)^2\le \Big(\frac{1}{2}\sum_{j=1}^K X_j\Big)^{2}\le C_K \sum_{j=1}^KX_j^2,
	\end{equation}
	for a collection $X_1,\dots,X_K$ of $K$ mutually commuting self-adjoint operators. Inserting~\eqref{eq:sum_one_index} and using the Cauchy-Schwarz inequality to get
        $\langle\mathcal{N}_\perp\mathfrak{D}^2\rangle^{2}_\Phi \leq \langle\mathcal{N}_\perp^2 \mathfrak{D}^2 \rangle_\Phi \langle \mathfrak{D}^2\rangle_\Phi$
         we find
	\begin{equation*}
	\begin{split}
	  | R_1| \le\;&\frac{C}{N^{3/4}}\langle\mathcal{N}_\perp^2\mathfrak{D}^2\rangle_\Phi^{1/4}\;\big\langle\frac{\mathfrak{D}^2}{N}\big\rangle_\Phi^{1/4}\;
          \bigg(\frac{1}{N}\langle\mathcal{N}_\perp^2+1\rangle_\Phi+\big\langle\frac{\mathfrak{D}^2}{N}\big\rangle_{\Theta^{-1} \Phi} \bigg)^{1/2}.
	\end{split}
	\end{equation*}
	Since $\Phi=\mathcal{U}_N\psi$, we know that
	\begin{equation*}
	\langle \mathcal{N}_\perp^2 \mathfrak{D}^2\rangle_{\Phi}=\sum_{s,d}s^2d^2\|\Phi_{s,d}\|^2\le N^2\langle\mathcal{N}_\perp^2\rangle_{\Phi}\,.
	\end{equation*}
	Moreover, the commutation relation $[\mathfrak{D}, \Theta]=\Theta$ implies
        \begin{equation*}
          \langle \mathfrak{D}^2 \rangle_{\Theta^{-1} \Phi} = \langle ( \Theta \mathfrak{D} \Theta^{-1} )^2 \rangle_\Phi = \langle ( \mathfrak{D}-1)^2 \rangle_\Phi
          \leq 2 \langle \mathfrak{D}^2 + 1 \rangle_\Phi
        \end{equation*}
        and we deduce
	\begin{equation*}
	\begin{split}
	  | R_1 | \le\;&\frac{C}{N^{1/4}}\langle\mathcal{N}_\perp^2\rangle_\Phi^{1/4}\;\big\langle\frac{\mathfrak{D}^2}{N}\big\rangle_\Phi^{1/4}\;
          \bigg(\frac{1}{N}\langle\mathcal{N}_\perp^2+1\rangle_\Phi+\langle\frac{\mathfrak{D}^2}{N}\rangle_\Phi\bigg)^{1/2}\\
	\le\;& \frac{C}{N^{1/4}}\Big( \langle \mathcal{N}_\perp^2+1\rangle_{\Phi}+\langle\frac{\mathfrak{D}^2}{N}\rangle_\Phi \Big).
	\end{split}
	\end{equation*}
         The remainder for the term  in the third line of \eqref{eq:U_N_linear} can be treated in the same way, completing the proof of~\eqref{eq:linear}.
\end{proof}


\bigskip

\subsection{Cubic and quartic terms}
The part of $H_N$ containing three $a^\sharp_m$ with $m\ge3$ is
\begin{align*}
A_3:=\;\frac{\lambda}{N-1}\sum_{m,n,p\ge3}\Big[ w_{+mnp}a^\dagger _+a^\dagger _ma_na_p+w_{-mnp}a^\dagger _-a^\dagger _ma_na_p\Big]+\mathrm{h.c.},
\end{align*}
while the one containing four is
\begin{equation*}
A_4:=\frac{\lambda}{2(N-1)}\sum_{m,n,p,q\ge3}w_{mnpq}a^\dagger _ma^\dagger _na_pa_q.
\end{equation*}

\begin{proposition}[\textbf{Cubic and quartic terms}]\mbox{} \label{prop:cubic_quartic}\\
	For any $\Phi\in\ell^2(\mathfrak{F}_\perp)$ we have
	\begin{equation} \label{eq:cubic}
	\left|\langle \mathcal{U}_NA_3\mathcal{U}_N^* \rangle_\Phi\right|\le\frac{C}{\sqrt{N}}\langle\mathcal{N}_\perp^2+1\rangle_\Phi.
	\end{equation}
	and
	\begin{equation} \label{eq:quartic}
	\left|\langle \mathcal{U}_N A_4\mathcal{U}_N^*\rangle_\Phi\right|\le\frac{C}{N}\langle \mathcal{N}_\perp^2\rangle_\Phi.
	\end{equation}
\end{proposition}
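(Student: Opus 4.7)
The plan is to recast both $A_3$ and $A_4$ as spatial integrals involving the pointwise field operators on the excited subspace,
\begin{equation*}
F(x) := \sum_{m\ge 3} u_m(x)\, a_m\,, \qquad F^\dagger(x) := \sum_{m\ge 3} u_m(x)\, a^\dagger_m\,,
\end{equation*}
and then apply Cauchy--Schwarz in $L^2(dx\,dy)$ rather than at the level of mode-indexed sums. Using the completeness relation $\sum_{m\ge 3} u_m(x) u_m(y) = P^\perp(x,y)$, one verifies $\int F^\dagger(x) F(x)\, dx = \mathcal{N}_\perp$ and $\iint F^\dagger(y) F^\dagger(x) F(x) F(y)\, dx\,dy = \mathcal{N}_\perp(\mathcal{N}_\perp-1)$. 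Since $\mathcal{U}_N$ is a partial isometry with $\mathcal{U}_N^* \mathcal{U}_N = \mathbbm{1}_{\mathfrak{H}^N}$, for any $\Phi\in\ell^2(\mathfrak{F}_\perp)$ we set $\psi := \mathcal{U}_N^* \Phi\in\mathfrak{H}^N$, so that $\langle \mathcal{U}_N A \mathcal{U}_N^*\rangle_\Phi = \langle\psi, A\psi\rangle$ for $A=A_3,A_4$. Because $\mathcal{U}_N$ commutes with $\mathcal{N}_\perp$ on $\mathrm{Ran}\,\mathcal{U}_N$ and $\mathcal{U}_N\mathcal{U}_N^*$ is an orthogonal projection, $\langle\mathcal{N}_\perp^k\rangle_\psi\le\langle\mathcal{N}_\perp^k\rangle_\Phi$ for $k=1,2$, so both inequalities reduce to estimates on $\psi\in \mathfrak{H}^N$.

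Next I would verify the two representations
\begin{equation*}
A_4 = \frac{\lambda}{2(N-1)} \iint w(x-y)\, F^\dagger(x) F^\dagger(y) F(x) F(y)\, dx\, dy
\end{equation*}
and
\begin{equation*}
A_3 = \frac{\lambda}{N-1}\iint u_+(x) w(x-y)\, a^\dagger_+ F^\dagger(y) F(x) F(y)\, dx\, dy + \mathrm{h.c.} + (u_+\leftrightarrow u_-).
\end{equation*}
For \eqref{eq:quartic}, since annihilators commute, the integrand in the expectation value equals the nonnegative quantity $\|F(x)F(y)\psi\|^2$, and $|w(x-y)|\le \|w\|_\infty$ immediately gives
\begin{equation*}
|\langle\psi, A_4\psi\rangle| \le \frac{C\lambda}{N}\iint\|F(x)F(y)\psi\|^2\,dx\,dy = \frac{C\lambda}{N}\langle \mathcal{N}_\perp(\mathcal{N}_\perp-1)\rangle_\psi \le \frac{C}{N}\langle\mathcal{N}_\perp^2\rangle_\psi\,,
\end{equation*}
as desired.

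For \eqref{eq:cubic}, I would apply Cauchy--Schwarz in $L^2(dx\,dy;\mathfrak{H})$ to
\begin{equation*}
\iint u_+(x) w(x-y)\,\langle F(y) a_+\psi,\, F(x)F(y)\psi\rangle\, dx\,dy,
\end{equation*}
splitting the weight as $|u_+(x) w(x-y)|^{1/2}\cdot|u_+(x) w(x-y)|^{1/2}$. Integrating $x$ out first via $\int u_+(x) |w(x-y)|\,dx = (u_+\ast |w|)(y)$ and Young's inequality $\|u_+\ast |w|\|_\infty \le \|u_+\|_2\|w\|_2\le C$ bounds the first factor by $\int (u_+\ast|w|)(y)\|F(y)a_+\psi\|^2\,dy \le C\langle a_+\psi,\mathcal{N}_\perp a_+\psi\rangle = C\langle \mathcal{N}_+\mathcal{N}_\perp\rangle_\psi \le CN\langle\mathcal{N}_\perp\rangle_\psi$, where I used $[\mathcal{N}_\perp,a_+]=0$ and $\mathcal{N}_+\le N$. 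The second factor is controlled by $\|u_+\|_\infty\|w\|_\infty\langle \mathcal{N}_\perp(\mathcal{N}_\perp-1)\rangle_\psi\le C\langle\mathcal{N}_\perp^2\rangle_\psi$, using that the Hartree minimizer $u_+$ lies in $L^\infty$. Combining,
\begin{equation*}
|\langle\psi, (A_3)_+\psi\rangle|\le \frac{C\lambda}{N}\sqrt{N\langle\mathcal{N}_\perp\rangle_\psi}\sqrt{\langle\mathcal{N}_\perp^2\rangle_\psi}\le \frac{C}{\sqrt{N}}\langle\mathcal{N}_\perp^2\rangle_\psi^{3/4}\le\frac{C}{\sqrt{N}}\langle\mathcal{N}_\perp^2+1\rangle_\psi\,,
\end{equation*}
using $\langle\mathcal{N}_\perp\rangle\le\langle\mathcal{N}_\perp^2\rangle^{1/2}$ and $x^{3/4}\le 1+x$. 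The $u_-$-contribution is bounded identically since $u_-\in L^\infty$ with $\|u_-\|_2=1$.

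The main obstacle, avoided by the field-operator reformulation, is that the naive $\ell^2$ bound $\sum_{m,n,p}|w_{+mnp}|^2<\infty$ fails; indeed the diagonal of the resolution of the identity $\sum_{m\ge 3}|u_m(y)|^2$ is pointwise divergent, and no amount of Cauchy--Schwarz in the mode indices can recover the required $N^{-1/2}$ gain. The spatial reformulation replaces the problematic double sum over $(m,p)$ by the convolution $u_+\ast |w|\in L^\infty$ (controlled via Young's inequality) and the uniform bound on $u_+$, while the creation operator $a^\dagger_+$ contributes the expected $\sqrt{N}$ factor through $\langle\mathcal{N}_+\mathcal{N}_\perp\rangle_\psi\le N\langle\mathcal{N}_\perp\rangle_\psi$; the remaining steps are routine applications of the canonical commutation relations.
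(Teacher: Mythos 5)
Your argument is correct, and the quartic bound is essentially the paper's: writing $\mathcal{U}_N A_4\mathcal{U}_N^\ast=\frac{\lambda}{2(N-1)}\mathrm{d}\Gamma_\perp(w)$ and bounding $w$ by $\|w\|_{L^\infty}$ is exactly what your field-operator identity $\iint \|F(x)F(y)\psi\|^2\,dx\,dy=\langle \mathcal{N}_\perp(\mathcal{N}_\perp-1)\rangle_\psi$ encodes. For the cubic terms the route is genuinely different. The paper never passes to position-space Cauchy--Schwarz; instead it takes the second quantization of the nonnegative operator $P^\perp\otimes(P^\perp-\varepsilon P_+)\,w\,P^\perp\otimes(P^\perp-\varepsilon P_+)+(\,\cdot\leftrightarrow\cdot\,)\ge0$ (here $w\ge0$ is used), which yields the operator inequality $\pm\,\mathcal{U}_NA_3\mathcal{U}_N^\ast\le \varepsilon C\mathcal{N}_\perp+\varepsilon^{-1}C\,\mathcal{U}_NA_4\mathcal{U}_N^\ast$, and then optimizes $\varepsilon=N^{-1/2}$. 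Your weighted Cauchy--Schwarz in $L^2(dx\,dy;\mathfrak{H})$ is the same AM--GM mechanism implemented at the level of expectation values; it buys you independence from the sign assumption on $w$ (you only use $|w|$), and your diagnosis that the naive triple-index $\ell^2$ bound $\sum_{m,n,p}|w_{+mnp}|^2$ diverges is accurate and is precisely why some such device is needed. What the paper's version buys is a two-sided \emph{operator} inequality $-\frac{C}{\sqrt N}(\mathcal{N}_\perp+\mathcal{N}_\perp^2)\le\mathcal{U}_NA_3\mathcal{U}_N^\ast\le\frac{C}{\sqrt N}(\mathcal{N}_\perp+\mathcal{N}_\perp^2)$ rather than just a bound on $|\langle\cdot\rangle_\Phi|$, and it avoids invoking $u_+\in L^\infty$: your second Cauchy--Schwarz factor uses $\|u_+\|_{L^\infty}$, which is true here (by the pointwise bounds of Theorem \ref{thm:onebody}) but is an extra input; splitting the weight as $u_+(x)^2|w(x-y)|$ versus $|w(x-y)|$ would remove it. These are cosmetic differences; both proofs land on the same $\frac{C}{\sqrt N}\langle\mathcal{N}_\perp^2+1\rangle_\Phi$.
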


\begin{proof}
	To prove \eqref{eq:quartic} notice that with the notation \eqref{eq:second_quant_perp_2} we have
	\begin{equation*}
	\mathcal{U}_N A_4\mathcal{U}_N^*=\frac{\lambda}{2(N-1)} \mathrm{d}\Gamma_\perp (w),
	\end{equation*}
	where $w$ is the operator of multiplication by $w(x-y)$ on $ L ^2 (\R^{d})^{\otimes 2}$.
	Since $w\in L^\infty$ we have
	\begin{equation*}
	  \mathcal{U}_NA_4\mathcal{U}_N^*\le \frac{C}{N}\mathrm{d}\Gamma_\perp( \mathbbm{1}\otimes \mathbbm{1})=\frac{C}{N} \mathcal{N}_\perp(\mathcal{N}_\perp-1)
          \le \frac{C}{N}\mathcal{N}_\perp^2
	\end{equation*}
	because second quantization preserves operator inequalities. Since $A_4\ge0$, \eqref{eq:quartic} follows.
	
	Let us now prove \eqref{eq:cubic}. Taking the second quantization of the operator inequality (recall that $w \geq 0$)
	\begin{equation*}
	P^\perp\otimes (P^\perp-\varepsilon P_+) w P^\perp \otimes (P^\perp-\varepsilon P_+)+(P^\perp-\varepsilon P_+)\otimes P^\perp w(P^\perp-\varepsilon P_+) \otimes P^\perp\ge0,
	\end{equation*}
	for some $\varepsilon >0$, we deduce
	\begin{equation*}
	\begin{split}
	\sum_{m,n,p\ge3} w_{+mnp}a^\dagger _+a^\dagger _ma_na_p+\mathrm{h.c.}\le\;&\varepsilon\,\mathrm{d}\Gamma_\perp\big(w*|u_+|^2\big)\mathcal{N}_++\frac{1}{\varepsilon}\sum_{m,n,p,q\ge3} w_{mnpq}a^\dagger _ma^\dagger _na_pa_q\\
	\le\;& \varepsilon C\mathcal{N}_\perp\mathcal{N}_++\frac{1}{\varepsilon}\sum_{m,n,p,q\ge3} w_{mnpq}a^\dagger _ma^\dagger _na_pa_q.
	\end{split}
	\end{equation*}
	In the last step we used the inequality $\mathrm{d}\Gamma_\perp\big(w*|u_+|^2\big)\le C \mathcal{N}_\perp$, which holds by boundedness of $w*|u_+|^2$. We can use the same arguments for the part of $A_3$ that contains $w_{-mnp}$. Adding the two results and multiplying by $\lambda/(N-1)$ we thus obtain
	\begin{equation*}
	A_3\le\frac{ \varepsilon C \lambda}{N-1} \mathcal{N}_\perp (\mathcal{N}_++\mathcal{N}_-)+\frac{4}{\varepsilon}A_4
	\end{equation*}
	Using the fact that $\mathcal{N}_+ + \mathcal{N}_-\le N$ on $\mathfrak{H}^N$, and then conjugating by $\mathcal{U}_N$, this  implies
	\begin{equation*}
	\mathcal{U}_NA_3\mathcal{U}_N^* \le{\varepsilon}C\mathcal{N}_\perp+\varepsilon^{-1} C \mathcal{U}_NA_4\mathcal{U}_N^*
	\end{equation*}
	and plugging \eqref{eq:quartic} in the last term gives
	\begin{equation*}
	\mathcal{U}_NA_3\mathcal{U}_N^*\le\varepsilon C\mathcal{N}_\perp+\varepsilon^{-1}\frac{C}{N}\mathcal{N}_\perp^2.
	\end{equation*}
	We optimize this bound by choosing $\varepsilon=N^{-1/2}$. Repeating the same proof with $\varepsilon$ replaced by $-\varepsilon$ and with reversed inequalities, this yields
	\begin{equation*}
	  -\frac{C}{\sqrt{N}} \big( \mathcal{N}_\perp+\mathcal{N}_\perp^2 \big) \le
          \mathcal{U}_N A_3\mathcal{U}_N^*\le\frac{C}{\sqrt{N}} \big( \mathcal{N}_\perp+\mathcal{N}_\perp^2 \big) .
	\end{equation*}
	Using also $2\mathcal{N}_\perp \le \mathcal{N}_\perp^2+1$, this concludes the proof.
\end{proof}

\subsection{Quadratic terms} \label{sec_quadratic_terms_Bog}

The part $A_2$ of $H_N$ that contains exactly two $a^\sharp_m$ with $m\ge3$ is composed of
24 terms which can be combined together by using the equalities $w_{mnpq} =w_{pnmq}=w_{mqpn}=w_{nmqp}$ and the identities
\begin{align*}
&  \sum_{m.n \ge 3} ( w_{imin}+w_{imni} ) a_m^\dagger a_n   =   \mathrm{d}\Gamma_\perp\big(w*|u_i |^2+2 K_{ii}\big)\quad, \;i =1,2
    \\
&   \sum_{m.n \ge 3} ( w_{1m2n}+w_{1mn2} ) a_m^\dagger a_n   =    \mathrm{d}\Gamma_\perp\big(w*(u_1 u_2)+ K_{12}\big)
\end{align*}
to obtain
\begin{eqnarray*}
  A_2 & : = & \sum_{m,n\ge 3}\big(-\Delta+V_\mathrm{DW}\big)_{mn} a^\dagger _m a_n\\
 & &+\frac{\lambda}{2(N-1)} \sum_{m,n\ge3}  \Big( w_{11mn} a_1^\dagger a_1^\dagger  + 2 w_{12mn} a_1^\dagger a_2^\dagger + w_{22mn} a_2^\dagger a_2^\dagger \Big) a_m a_n +\mathrm{h.c.}\\
 &  & +\frac{\lambda}{N-1} \Big( a_1^\dagger a_1 \mathrm{d}\Gamma_\perp\big(w*|u_1 |^2+2 K_{11}\big)
       + a_2^\dagger a_2 \mathrm{d}\Gamma_\perp\big(w*|u_2 |^2+2 K_{22}\big) \\
 &   & +\frac{\lambda}{N-1} \Big( a_1^\dagger a_2 \mathrm{d}\Gamma_\perp\big(w*(u_1 u_2)+ K_{12}\big) + \mathrm{h.c.} \Big)\,.
\end{eqnarray*}  
The action of $\mathcal{U}_N$ on quadratic terms of the type $a^\dagger a$ was given in Lemma \ref{lemma:conjugation}. To deduce the action of $\mathcal{U}_N$ on terms of the type $a^\dagger a^\dagger aa$ as the ones in $A_2$, we can always reduce ourselves to terms of type $a^\dagger a$ by commuting operators, as in
\begin{equation*}
\mathcal{U}_Na^\dagger_1 a^\dagger_2 a_m a_n \mathcal{U}_N^*=\mathcal{U}_Na^\dagger_1 a_m \mathcal{U}_N^* \mathcal{U}_N a^\dagger_2 a_n \mathcal{U}_N^* \qquad\text{for }m,n\ge3.
\end{equation*}
This is allowed because for $m,n\ge3$ the operators $a^\sharp_m a^\sharp_n$ commute with $a^\sharp_1$ and $a^\sharp_2$. The same argument holds for terms of the type
\begin{equation*}
\mathcal{U}_Na^\dagger_1 a^\dagger_m a_2 a_n \mathcal{U}_N^*=\mathcal{U}_Na^\dagger_1 a_2 \mathcal{U}_N^* \mathcal{U}_N a^\dagger_m a_n \mathcal{U}_N^*.
\end{equation*}
Arguing in this way to commute operators, one easily deduces the expression
\begin{align}
&\mathcal{U}_NA_2\mathcal{U}_N^*:=\;\sum_{m,n\ge3}\big(-\Delta+V_\mathrm{DW}\big)_{mn}a^\dagger _ma_n\\
&+\frac{\lambda}{2(N-1)}\bigg[ \sum_{m,n\ge3} w_{11mn}\Theta^2\sqrt{\frac{N-\mathcal{N}_\perp+\mathfrak{D}+2}{2}}\sqrt{\frac{N-\mathcal{N}_\perp+\mathfrak{D}+1}{2}}a_ma_n+\mathrm{h.c.} \label{term_in_w_11mn}\\
&\qquad\qquad +2 \sum_{m,n\ge3} w_{12mn}\sqrt{\frac{N-\mathcal{N}_\perp+\mathfrak{D}}{2}}\sqrt{\frac{N-\mathcal{N}_\perp-\mathfrak{D}+1}{2}}a_ma_n+\mathrm{h.c.} +\mathrm{h.c.}  \label{term_in_w_12mn} \\
&\qquad\qquad +\sum_{m,n\ge3} w_{22mn} \Theta^{-2}\sqrt{\frac{N-\mathcal{N}_\perp-\mathfrak{D}+2}{2}}\sqrt{\frac{N-\mathcal{N}_\perp-\mathfrak{D}+1}{2}}a_ma_n+\mathrm{h.c.}  \label{term_in_w_22mn} \\[2mm]
  &\qquad\qquad +    (N-\mathcal{N}_\perp+\mathfrak{D}) \,  \mathrm{d}\Gamma_\perp\big(w*|u_1 |^2+2 K_{11}\big) \label{term_in_w_1m1n}\\[3mm]
&\qquad\qquad  + ( N-\mathcal{N}_\perp-\mathfrak{D} ) \,   \mathrm{d}\Gamma_\perp\big(w*|u_2 |^2+2 K_{22}\big) \label{term_in_w_2m2n} \\
& \qquad \qquad  +2 \Theta^2\sqrt{\frac{N-\mathcal{N}_\perp+\mathfrak{D}+2}{2}}\sqrt{\frac{N-\mathcal{N}_\perp-\mathfrak{D}}{2}}
   \, \mathrm{d}\Gamma_\perp\big(w*(u_1 u_2)+ K_{12}\big) + \mathrm{h.c.}  \label{term_in_w_1m2n} \bigg] 
\end{align}
If we could replace all square roots by $\sqrt{(N-1)/2}$ and $(N-\mathcal{N}_\perp\pm\mathfrak{D})$ by $N-1$, then the expression on the right hand side would coincide with
\begin{eqnarray} \label{Bogoliubov_Hamiltonian}
  \nonumber
  \mathbb{H}+\mu_+\mathcal{N}_\perp
  &  :=  &
  \mathrm{d}\Gamma_\perp\big( -\Delta+V_\mathrm{DW} + \frac{\lambda}{2} w*|u_1 |^2  + \frac{\lambda}{2} w*|u_2 |^2 + \lambda K_{11} + \lambda K_{22} \big)
  \\
  &  &
  + \frac{\lambda}{2} \sum_{m,n\ge 3}  \big( K_{12}  + w * (u_1 u_2 ) \big)_{mn}  \Theta^2 a_m^\dagger a_n   + \mathrm{h.c.}
  \\ \nonumber
  &  & + \frac{\lambda}{2} \sum_{m,n\ge 3} \Big(  (K_{11})_{mn} \Theta^2 + (K_{22})_{mn} \Theta^{-2} + (K_{12}^\ast )_{mn} \Big) a_m a_n  + \mathrm{h.c.}\,,
\end{eqnarray}
see~\eqref{eq:Bog_Hamiltonian}.
The $\mu_+\mathcal{N}_\perp$ term is there to compensate a term which we included in the definition of $\mathbb{H}$ but that does not come from $\mathcal{U}_NA_2\mathcal{U}_N^*$. We will prove the following result, showing that such a replacement can be done at the expense of negligible remainders.

\begin{proposition}[\textbf{Quadratic terms}]\mbox{} \label{prop:quadratic}\\
	Let $\Phi\in \ell^2(\mathfrak{F}_\perp)$ be such that $\Phi=\mathcal{U}_N\psi$ for some $\psi\in \mathfrak{H}^N$. Then
	\begin{equation} \label{eq:quadratic}
	\begin{split}
	\big|&\langle \mathcal{U}_NA_2\mathcal{U}_N^*\rangle_\Phi- \langle\mathbb{H}\rangle_\Phi-\mu_+\langle\mathcal{N}_\perp\rangle_\Phi\big|\le \frac{C}{\sqrt{N}}\Big\langle \frac{\mathcal{N}_\perp^2+\mathfrak{D}^2+1}{N}\Big\rangle_\Phi,
	\end{split}
	\end{equation}
	where $\mathbb{H}$ was defined in \eqref{eq:Bog_Hamiltonian}.
\end{proposition}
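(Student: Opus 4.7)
The plan is to take the explicit expression for $\mathcal{U}_N A_2 \mathcal{U}_N^*$ displayed in~\eqref{term_in_w_11mn}--\eqref{term_in_w_1m2n} as the starting point and to compare it with $\mathbb{H}+\mu_+\mathcal{N}_\perp$ term by term. The leading contribution is obtained through two formal replacements: in the three number-preserving lines~\eqref{term_in_w_1m1n}--\eqref{term_in_w_1m2n}, replace $(N-\mathcal{N}_\perp\pm\mathfrak{D})$ by $N-1$; in the three pairing lines~\eqref{term_in_w_11mn}--\eqref{term_in_w_22mn}, replace each double square root factor by $(N-1)/2$. A direct inspection, using the identifications $(K_{ii})_{mn}=\tfrac12 w_{iimn}$ and the symmetries of $w_{mnpq}$, shows that the result is exactly $\mathbb{H}+\mu_+\mathcal{N}_\perp$: the diagonal operator $-\Delta+V_\mathrm{DW}+\tfrac{\lambda}{2} w*(|u_1|^2+|u_2|^2)+\lambda(K_{11}+K_{22})$ assembled here differs from the corresponding coefficient in~\eqref{eq:Bog_Hamiltonian} precisely by the $\mu_+\mathcal{N}_\perp$ shift.

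For the number-preserving lines the replacement error equals $\tfrac{\lambda}{2(N-1)}(-\mathcal{N}_\perp\pm\mathfrak{D}+1)\,\mathrm{d}\Gamma_\perp(w*|u_i|^2+2K_{ii})$, together with its cross-term analogue. Since $w*|u_i|^2\in L^\infty$ and $K_{ii}$ is positive trace class by Lemma~\ref{lemma:properties_operators}, we have the operator bound $\mathrm{d}\Gamma_\perp(w*|u_i|^2+2K_{ii})\le C\mathcal{N}_\perp$. Cauchy--Schwarz on the expectation then yields a contribution bounded by $\tfrac{C}{N}\sqrt{\langle\mathcal{N}_\perp^2+\mathfrak{D}^2+1\rangle_\Phi}\,\sqrt{\langle\mathcal{N}_\perp^2\rangle_\Phi}$, which fits the target in~\eqref{eq:quadratic} after a weighted Young inequality.

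For the pairing lines I will use the commutative functional-calculus identity
\begin{equation*}
\sqrt{R}\,\sqrt{S}-\tfrac{N-1}{2}=\frac{RS-\bigl(\tfrac{N-1}{2}\bigr)^2}{\sqrt{R}\,\sqrt{S}+\tfrac{N-1}{2}},
\end{equation*}
valid because $R$ and $S$ are commuting functions of $-\mathcal{N}_\perp\pm\mathfrak{D}$, with the denominator bounded below by $(N-1)/2$. Writing $X:=-\mathcal{N}_\perp+\mathfrak{D}$, the numerator in~\eqref{term_in_w_11mn} evaluates to $\tfrac14\bigl((N-1)(2X+5)+(X+2)(X+3)\bigr)$, so the remainder factor has the schematic form $X/N+O(X^2/N^2)$. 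Inserting this in front of $\Theta^2 a_m a_n$, commuting $\Theta^{\pm 2}$ past $\mathfrak{D}$ via $\Theta^2\mathfrak{D}=(\mathfrak{D}-2)\Theta^2$, and applying Cauchy--Schwarz produces contributions controlled by $\tfrac{C}{N}\|Y\Phi\|\,\|\sum_{m,n\ge 3} w_{11mn} a_m a_n\Psi\|$ with $Y\in\{\mathcal{N}_\perp,\mathfrak{D}\}$ and $\Psi=\Theta^{\pm 2}\Phi$. The two-body Bogoliubov-type bound $\|\sum_{m,n\ge 3} w_{11mn} a_m a_n \Psi\|^2\le C\langle\mathcal{N}_\perp(\mathcal{N}_\perp-1)\rangle_\Psi$, which follows from Lemma~\ref{lemma:properties_operators}$(ii)$ applied to the symmetric two-body vector $\phi(x,y)=(P^\perp\otimes P^\perp)(u_1(x)u_1(y)w(x-y))$, then finishes the estimate. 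The lines~\eqref{term_in_w_12mn} and~\eqref{term_in_w_22mn} are treated identically using $\sum|w_{12mn}|^2\le C$ and $\sum|w_{22mn}|^2\le C$.

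The main obstacle is matching precisely the mixed scaling $\tfrac{1}{\sqrt N}\langle(\mathcal{N}_\perp^2+\mathfrak{D}^2+1)/N\rangle_\Phi$ of~\eqref{eq:quadratic}, as opposed to the weaker $\tfrac{1}{N}\langle\mathcal{N}_\perp^2+\mathfrak{D}^2+1\rangle_\Phi$ given by a naive Cauchy--Schwarz. This improvement is achieved through a weighted Young inequality $\sqrt{AB}\le\tfrac{A}{2\sqrt N}+\tfrac{\sqrt N}{2}B$ whenever the ``large'' factor $\mathfrak{D}$ (expected to be of size $O(\sqrt N)$ under the a priori bounds of Section~\ref{sect:apriori}) is paired against a ``small'' factor like $\mathcal{N}_\perp$ or the two-body annihilator $\sum w_{iimn}a_m a_n$. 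Keeping track of the $\Theta^{\pm 2}$ shifts throughout is an additional bookkeeping task, but produces only corrections already absorbed in the target remainder.
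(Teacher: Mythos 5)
Your overall route is the same as the paper's: you start from the conjugated expression \eqref{term_in_w_11mn}--\eqref{term_in_w_1m2n}, identify $\mathbb{H}+\mu_+\mathcal{N}_\perp$ as the result of the two formal replacements, and control the replacement errors by Cauchy--Schwarz over $(m,n)$ together with Lemma~\ref{lemma:properties_operators} and the commutation $\Theta^{2}\mathfrak{D}\Theta^{-2}=\mathfrak{D}-2$. Your treatment of the number-preserving lines is literally the paper's \eqref{eq:quadratic_2}. For the pairing lines your resolvent-type identity for $\sqrt{R}\sqrt{S}-\tfrac{N-1}{2}$ is a perfectly acceptable substitute for the paper's Taylor-type inequality \eqref{eq:square_root_inqualities}; both produce, after the $\tfrac{\lambda}{2(N-1)}$ prefactor is taken into account, the same intermediate bound
\begin{equation*}
\frac{C}{N}\,\big\langle \mathcal{N}_\perp^2\big\rangle_\Phi^{1/2}\,\big\langle \mathcal{N}_\perp^2+\mathfrak{D}^2+1\big\rangle_\Phi^{1/2}\,.
\end{equation*}

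The gap is in your final rebalancing step, which is also the only place you depart from the paper. The weighted Young inequality $\sqrt{AB}\le \tfrac{A}{2\sqrt N}+\tfrac{\sqrt N}{2}B$ with $A=\langle\mathfrak{D}^2\rangle_\Phi$ and $B=\langle\mathcal{N}_\perp^2\rangle_\Phi$ turns $\tfrac{C}{N}\sqrt{AB}$ into $\tfrac{C}{N^{3/2}}\langle\mathfrak{D}^2\rangle_\Phi+\tfrac{C}{\sqrt N}\langle\mathcal{N}_\perp^2\rangle_\Phi$, and the second term is a factor $N$ larger than the $\tfrac{C}{\sqrt N}\langle\mathcal{N}_\perp^2/N\rangle_\Phi$ allowed by \eqref{eq:quadratic}; no choice of weights can repair this, because $\sqrt{AB}\le \tfrac{1}{\sqrt N}(A+B)$ fails when $A$ and $B$ are comparable. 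In fact the inequality \eqref{eq:quadratic} as printed cannot be correct for arbitrary $\Phi=\mathcal{U}_N\psi$: take $\psi$ with $d=0$ and exactly $s$ particles in a fixed excited mode. Then the pairing and cross terms vanish, and the term \eqref{term_in_w_1m1n} alone differs from its counterpart in $\mathbb{H}$ by $\tfrac{\lambda}{2(N-1)}(1-\mathcal{N}_\perp+\mathfrak{D})\,\mathrm{d}\Gamma_\perp\big(w*|u_1|^2+2K_{11}\big)$, whose expectation is of order $\lambda s^2/N$, i.e.\ $O(1)$ for $s\sim\sqrt N$, while the right-hand side of \eqref{eq:quadratic} is $O(N^{-1/2})$. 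What your argument (and the paper's own sub-estimates \eqref{eq:quadratic_1}--\eqref{eq:quadratic_3}, despite the stronger packaging of the final statement) honestly delivers is the mixed bound displayed above, which after Young's inequality reads $\tfrac{C}{\sqrt N}\big(\langle\mathcal{N}_\perp^2+1\rangle_\Phi+\langle\mathfrak{D}^2/N\rangle_\Phi\big)$. That is precisely the form of the error appearing in \eqref{eq:derivation_bogoliubov} and is all that is used downstream, so you should prove and record that version instead of chasing the extra $N^{-1/2}$ in front of $\langle\mathcal{N}_\perp^2/N\rangle_\Phi$.
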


\begin{proof}
	The result is proven if we show the following three general estimates:
	\begin{itemize}
	\item {\bf Controlling terms \eqref{term_in_w_11mn}-\eqref{term_in_w_22mn}:}
          For every $i,k \in\{1,2\}$, $c_1,c_2\in\mathbb{Z}$, $j\in\{-2,0,2\}$,  and $\epsilon_1,\epsilon_2 \in \{-1,1\}$,
		\begin{equation} \label{eq:quadratic_1}
		\begin{split}
		\bigg|&\frac{\lambda}{2(N-1)}\bigg\langle\sum_{m,n\ge3}w_{ik mn}\Theta^j\\
		&\times\Bigg(\sqrt{\frac{N-\mathcal{N_\perp}+\epsilon_1 \mathfrak{D}+c_1}{2}}\sqrt{\frac{N-\mathcal{N_\perp}+ \epsilon_2 \mathfrak{D}+c_2}{2}}-\frac{N-1}{2}\Bigg) a_ma_n\bigg\rangle_\Phi+\mathrm{h.c.}\bigg| \\
		&\le\;\frac{C}{N}\langle \mathcal{N}_\perp^2\rangle^{1/2}_\Phi\,\Big\langle \frac{\mathcal{N}_\perp^4}{N^3}+\frac{\mathfrak{D}^4}{N^3}+\frac{\mathcal{N}_\perp^2}{N}+\frac{\mathfrak{D}^2}{N}+\frac{1}{N}\Big\rangle^{1/2}_\Phi.
		\end{split}
		\end{equation}
	      \item  {\bf Controlling terms \eqref{term_in_w_1m1n}-\eqref{term_in_w_2m2n}:} For every $i\in\{1,2\}$,
		\begin{equation}\label{eq:quadratic_2}
		\begin{split}
		\bigg|\frac{\lambda}{N-1}\Big\langle \big( (N-\mathcal{N}_\perp\pm\mathfrak{D} ) &- (N-1 ) \big)\mathrm{d}\Gamma_\perp\big(w*|u_i |^2+ 2 K_{ii}\big)\Big\rangle_\Phi\bigg|\\
		&\le\;\frac{C}{N}\big\langle \mathcal{N}_\perp^2+\mathfrak{D}^2+1\big\rangle_\Phi^{1/2}\langle\mathcal{N}_\perp^2\rangle_\Phi^{1/2}.
		\end{split}
		\end{equation}
	      \item {\bf Controlling the last term \eqref{term_in_w_1m2n}:} Finally,
		\begin{equation}\label{eq:quadratic_3}
		\begin{split}
		\bigg|\frac{\lambda}{N-1}\Big\langle\Theta^2\Bigg(&\sqrt{\frac{N-\mathcal{N_\perp}+\mathfrak{D}+2}{2}}\sqrt{\frac{N-\mathcal{N_\perp}-\mathfrak{D}}{2}}-\frac{N-1}{2}\Bigg)\\
		&\qquad\qquad\qquad\times\mathrm{d}\Gamma_\perp\Big(w*(u_1u_2)+K_{12}\Big)\Big\rangle_\Phi+\mathrm{h.c.}\bigg| \\
		&\le\;\frac{C}{N}\langle \mathcal{N}_\perp^2\rangle_\Phi^{1/2}\,\Big\langle \frac{\mathcal{N}_\perp^4}{N^3}+\frac{\mathfrak{D}^4}{N^3}+\frac{\mathcal{N}_\perp^2}{N}+\frac{\mathfrak{D}^2}{N}+\frac{1}{N}\Big\rangle_\Phi^{1/2}.
		\end{split}
		\end{equation}
	\end{itemize}
	
Let us prove \eqref{eq:quadratic_1}. We have
	\begin{equation*}
	\begin{split}
	\Bigg|\frac{\lambda}{2(N-1)}\bigg\langle&\sum_{m,n\ge3}w_{i k  mn}\Theta^j\\
	&\times\bigg(\sqrt{\frac{N-\mathcal{N_\perp}+ \epsilon_1 \mathfrak{D}+c_1}{2}}\sqrt{\frac{N-\mathcal{N_\perp}+\epsilon_2 \mathfrak{D}+c_2}{2}}-\frac{N-1}{2}\bigg) a_ma_n\bigg\rangle_\Phi+\mathrm{h.c.}\Bigg| \\
	&\le\frac{\lambda N}{2(N-1)}\Big(\sum_{m,n\ge3}|w_{i k  mn}|^2\Big)^{1/2}\,\Big(\sum_{m,n\ge3} \|a_ma_n\Phi\|^2\Big)^{1/2}\\
	&\quad\times\Big\langle\Theta^{j}\bigg(\sqrt{1-\frac{\mathcal{N}_\perp}{N}+\epsilon_1 \frac{\mathfrak{D}}{N}+\frac{c_1}{N}}\sqrt{1-\frac{\mathcal{N}_\perp}{N}+ \epsilon_2 \frac{\mathfrak{D}}{N}+\frac{c_2}{N}}-1+\frac{1}{N}\bigg)^2\Theta^{-j}\Big\rangle_\Phi^{1/2}\\
	&\le C\big\langle\mathcal{N}_\perp(\mathcal{N}_\perp-1)\big\rangle_\Phi^{1/2}\Big\langle \Theta^j\Big( \frac{\mathcal{N}_\perp^4}{N^4}+\frac{\mathcal{N}_\perp^2}{N^2}+\frac{1}{N^2}+\frac{\mathfrak{D}^2}{N^2}+\frac{\mathfrak{D}^4}{N^4} \Big)\Theta^{-j}\Big\rangle_\Phi
	\end{split}
	\end{equation*}
	where in the first step we used the Cauchy-Schwarz inequality for the sum over $m,n$ and for the $\ell^2(\mathfrak{F}_\perp)$ scalar product, and in the second step we used \eqref{eq:sum_two_indexes}, the inequality \eqref{eq:square_root_inqualities}, the commutation of $\mathcal{N}_\perp$ and
        $\mathfrak{D}$, and the bound $\mathcal{N}_\perp^2 \mathfrak{D}^2 \leq N^2 \mathfrak{D}^2$.
        The proof of \eqref{eq:quadratic_1} is complete if we show how to get rid of $\Theta$. For the terms containing $\mathcal{N}_\perp^n$ we simply use the fact that $[\Theta,\mathcal{N}_\perp]=0$ and that $\Theta$ is unitary. For the $\mathfrak{D}$-terms we use the identity
	\begin{equation*}
	\Theta \mathfrak{D}\Theta^{-1}=\mathfrak{D}-1,
	\end{equation*}
	which implies
	$\Theta \mathfrak{D}^n \Theta^{-1}=\left(\mathfrak{D}-1\right)^n$
	for each $n\in\mathbb{N}$, and therefore
	\begin{equation*}
	\begin{split}
	\Theta^2 \mathfrak{D}^{2} \Theta^{-2}=\;& \left(\mathfrak{D}-2\right)^{2} \le C \mathfrak{D}^2+C\\
	\Theta^2\mathfrak{D}^4\Theta^{-2}\le \;& \left(\mathfrak{D} - 2 \right)^4\le C (\mathfrak{D}^4+\mathfrak{D}^2 + 1)\,.
	\end{split}
	\end{equation*}
	This completes the proof of \eqref{eq:quadratic_1}.

	  Let us now prove \eqref{eq:quadratic_2}. We have
	\begin{equation*}
	\begin{split}
	\bigg|\frac{\lambda}{N-1}\Big\langle &\big( (N-\mathcal{N}_\perp\pm\mathfrak{D} ) - (N-1 ) \big)\mathrm{d}\Gamma_\perp\Big(w*|u_i |^2+K_{i i }\Big)\Big\rangle_\Phi\bigg|\\
	&=\bigg|\frac{\lambda}{N-1}\Big\langle \big( -\mathcal{N}_\perp\pm\mathfrak{D}+1 \big) \mathrm{d}\Gamma_\perp\Big(w*|u_i |^2+K_{i i }\Big)\Big\rangle_\Phi\bigg|
        \\
	&\le\frac{C}{N}\big\langle \mathcal{N}_\perp^2+\mathfrak{D}^2+1\big\rangle_\Phi^{1/2}\langle\mathcal{N}_\perp^2\rangle_\Phi^{1/2} ,
	\end{split}
	\end{equation*}
	where we used the Cauchy-Schwarz inequality for the $\ell^2(\mathfrak{F}_\perp)$ scalar product, the boundedness of $w*|u_i |^2$ and $K_{i i }$, and the fact that $\left|\mathrm{d}\Gamma_\perp (K) \right| \leq \norm{K} \cN_\perp$ for a bounded one-body operator $K$.

	Finally, one may prove  \eqref{eq:quadratic_3} in a similar way, using  the boundedness of $w*(u_1u_2)$ and $K_{12}$, Inequality~\eqref{eq:square_root_inqualities}, and  commuting $\Theta$ with $\mathcal{N}_\perp$ and $\mathfrak{D}$ as done above for \eqref{eq:quadratic_1}.
\end{proof}

Proposition~\ref{prop:bogoliubov} now follows by merging~\eqref{eq:linear}, \eqref{eq:cubic}, \eqref{eq:quartic}, and \eqref{eq:quadratic}, with a rearrangement of the remainder terms.

\subsection{Reduction to left and right modes: proof of  Proposition \ref{prop:reduction}} \label{subsect:proof_reduction}

\begin{proof}[Proof of Proposition \ref{prop:reduction}]
We have  the decomposition
	\begin{equation} \label{eq:bog_decomp_errors}
	\mathbb{H}-\mathbb{H}_\mathrm{right}^{(M_\Lambda)}-\mathbb{H}_\mathrm{left}^{(M_\Lambda)}-\mathrm{d}\Gamma_\perp\big(P_{> M_\Lambda} \big(h_{\mathrm{MF}}-\mu_+\big)P_{> M_\Lambda}\big)=\mathbb{H}_{12}+\mathbb{K}_{> M_\Lambda}+\sum_{j=1}^3 \Xi_j
	\end{equation}
	where

        
	\begin{align}  \nonumber \label{def_H_12}
        \mathbb{H}_{12}:=\;& \frac{\lambda}{2} \sum_{m,n\ge3}\big(w*(u_1u_2)\big)_{mn}\big( -2 +\Theta^2+\Theta^{-2}\big) a^\dagger _ma_n\\
	&+\frac{\lambda}{2}\sum_{m,n\ge3} \big(K_{12}\big)_{mn}\Theta^2a^\dagger _ma_n+\frac{\lambda}{2}\sum_{m,n\ge3} \big(K_{12}^\ast \big)_{mn}\Theta^{-2}a^\dagger _ma_n\\ \nonumber
	&+\frac{\lambda}{2}\sum_{m,n\ge3} \big(K_{12}\big)_{mn} a^\dagger _ma^\dagger _n+\frac{\lambda}{2} \sum_{m,n\ge3}\big(K_{12}^{*}\big)_{mn} a_ma_n\\ \nonumber \label{def-K_12}
	\mathbb{K}_{> M_\Lambda}:=\;&\lambda \sum_{m,n > 2M_\Lambda+2} \big( K_{11}+K_{22} \big)_{mn} a^\dagger_m a_n
                            + \lambda\sum_{\substack{3\le m\le 2M_\Lambda+2\\n > 2M_\Lambda+2}}\big( K_{11}+K_{22} \big)_{mn} \left(a^\dagger_m a_n+\mathrm{h.c.}\right) \\
	&+\frac{\lambda}{2}\sum_{m,n > 2M_\Lambda+2} \left[ \Big( \big(K_{11}\big)_{mn}\Theta^{-2}+\big( K_{22} \big)_{mn}\Theta^2 \Big)a^\dagger_ma^\dagger_n+\mathrm{h.c.} \right]\\ \nonumber 
	&+{\lambda}\sum_{\substack{3\le m \le 2 M_\Lambda+2 \\n > 2M_\Lambda+2}} \left[ \Big( \big(K_{11}\big)_{mn}\Theta^{-2}+\big( K_{22} \big)_{mn}\Theta^2 \Big)a^\dagger_ma^\dagger_n+\mathrm{h.c.} \right]\\ 
	\Xi_1:=\;& \sum_{1 \le \alpha,\beta \le M_\Lambda} \left\langle u_{r,\alpha},\big(h_{\mathrm{MF}}-\mu_+\big) u_{\ell,\beta}\right \rangle a^\dagger_{r,\alpha}a_{\ell,\beta}+\mathrm{h.c.}\\ \label{def_Xi2}
	\Xi_2:=\;& \lambda \sum_{1 \le \alpha,\beta\le M_\Lambda} \left \langle u_{r,\alpha},\big( K_{11}+K_{22}\big) u_{\ell,\beta}\right\rangle a^\dagger_{r,\alpha} a_{\ell,\beta}+\mathrm{h.c.}\\ \nonumber
	&+\lambda \sum_{1 \le \alpha,\beta\le M_\Lambda}\Big[ \left\langle u_{r,\alpha},K_{11} u_{\ell,\beta}\right\rangle \Theta^{-2}+ \left\langle u_{r,\alpha},K_{22}u_{\ell,\beta}\right\rangle \Theta^{2} \Big] a^\dagger_{r,\alpha}a^\dagger_{\ell,\beta}+\mathrm{h.c.}\\ \label{def-Xi3}
	\Xi_3:=\;& \lambda \sum_{1 \le \alpha,\beta\le M_\Lambda} \Big[ \left\langle u_{r,\alpha},K_{22}u_{r,\beta}\right\rangle a^\dagger_{r,\alpha}a_{r,\beta}+ \left\langle u_{\ell,\alpha},K_{11}u_{\ell,\beta}\right\rangle a^\dagger_{\ell,\alpha}a_{\ell,\beta} \Big]\\ \nonumber
	&+\frac{\lambda}{2}\sum_{1 \le \alpha,\beta\le M_\Lambda} \Big[ \left\langle u_{r,\alpha},K_{22}u_{r,\beta}\right\rangle \Theta^2a^\dagger _{r,\alpha}a^\dagger_{r,\beta}
          + \left\langle u_{\ell,\alpha},K_{11}u_{\ell,\beta}\right\rangle \Theta^{-2}a^\dagger _{\ell,\alpha}a^\dagger_{\ell,\beta} +\mathrm{h.c.} \Big]\,.
	\end{align}
	Let us briefly explain the rationale behind the above decomposition.
  First, in view of the definitions of $h_{\rm MF}$ and of the right and left modes $u_{r,\alpha}$ and $u_{\ell,\alpha}$,
  see \eqref{def-H_MF} and \eqref{eq:basis_left_right}, one has
  \begin{equation} \label{decomposition_right_left_mode_h_MF}
  \begin{split}
  &    {\mathrm d} \Gamma_\perp \big( - \Delta + V_{\rm DW} + \frac{\lambda}{2} w \ast |u_1|^2 + \frac{\lambda}{2} w \ast |u_2|^2 - \mu_+ \big) \\
  &  \qquad = \sum_{1 \le \alpha,\beta \le M_\Lambda} \Big[ \langle u_{r,\alpha},  ( h_{\rm MF} -\mu_+) u_{r,\beta} \rangle a_{r,\alpha}^\dagger a_{r,\beta} 
  + \langle u_{\ell ,m}, ( h_{\rm MF} -\mu_+ ) u_{\ell ,n} \rangle a_{\ell ,m}^\dagger a_{\ell ,n} \Big] \\
  & \qquad \qquad + \Xi_1+  {\mathrm d} \Gamma_\perp \big( P_{> M_\Lambda} ( h_{\rm MF} - \mu_+ ) P_{> M_\Lambda} \big) - \lambda  {\mathrm d} \Gamma_\perp \big( w \ast (u_1 u_2)\big)\,,
  \end{split}
    \end{equation}
  where the sum in the first line contains the terms involving $h_{\rm MF} - \mu_+$ in 	$\mathbb{H}_\mathrm{right}^{(M_\Lambda )}$ and $\mathbb{H}_\mathrm{left}^{(M_\Lambda )}$, see
  \eqref{eq:H_right_cutoff} and \eqref{eq:H_left_cutoff}.
  One can proceed similarly for the terms involving $K_{11}$ and $K_{22}$ in the Bogoliubov Hamiltonian \eqref{Bogoliubov_Hamiltonian}.
  Now, we gather in $\mathbb{H}_{12}$  all those terms that involve the operators $w*(u_1u_2)$ and $K_{12}$ (including the last term in~\eqref{decomposition_right_left_mode_h_MF})
  For $\mathbb{H}_{12}$ we will prove a cutoff-independent quantitative bound. We then gathered in
  $\mathrm{d}\Gamma_\perp\big(P_{> M_\Lambda} \big(h_{\mathrm{MF}}-\mu_+\big)P_{> M_\Lambda}\big)$ and $\mathbb{K}_{> M_\Lambda}$ those terms of $\mathbb{H}-\mathbb{H}_{12}$
  for which one or two indices $m$ and $n$ are larger than the cutoff $M_\Lambda$. We will show that the contribution of $\mathbb{K}_{> M_\Lambda}$ is negligible,
  while $\mathrm{d}\Gamma_\perp\big(P_{> M_\Lambda} \big(h_{\mathrm{MF}}-\mu_+\big)P_{> M_\Lambda}\big)$, being non-negative, can be dropped for a lower bound.
  For the part of $\mathbb{H}-\mathbb{H}_{12}$ in which sums run over modes below the energy cutoff $M_\Lambda$,
  we want to control those terms that contain matrix elements that couple `right' modes with `left' modes. They are of different types, and we collected
  them in $\Xi_1$, $\Xi_2$, and $\Xi_3$. The remaining terms precisely give $\mathbb{H}_\mathrm{right}^{(M_\Lambda)}+\mathbb{H}_\mathrm{left}^{(M_\Lambda)}$. 
  We will show that (expectations of) all terms in the right hand side of \eqref{eq:bog_decomp_errors} are controllable in the limit $N\to\infty$ followed by
  $M\to\infty$.

	We first prove that
	\begin{equation} \label{eq:elimination_H_12}
	\big|\big\langle \mathbb{H}_{12}\big\rangle_{\Phi}\big| \le C_\varepsilon T^{1/2-\varepsilon}\langle \mathcal{N}_\perp^2+1\rangle_\Phi.
	\end{equation}
	For the first two lines of $\mathbb{H}_{12}$ we write
	\begin{equation*}
	\begin{split}
	  I_1 =:  \bigg|\Big\langle\frac{\lambda}{2} & \sum_{m,n\ge3} \Big[ \big(w*(u_1u_2)\big)_{mn}\big( -2 +\Theta^2+\Theta^{-2}\big) 
          + \big(K_{12}\big)_{mn}\Theta^2 +  \big(K_{12}^\ast\big)_{mn}\Theta^{-2} \Big]  a^\dagger _ma_n \big] \Big\rangle_\Phi \bigg| \\
	    =\;& \frac{\lambda}{2}\,\bigg|\Big\langle  \mathrm{d}\Gamma_\perp\big( w*(u_1u_2) \big)(-2+\Theta^2+\Theta^{-2})
            +\Big(\mathrm{d}\Gamma_\perp \big(K_{12}\big) \Theta^{2}+\mathrm{h.c.}\Big)\Big\rangle_\Phi\bigg|\\
	    \le\;&\frac{\lambda}{2} \,\big\|(-2 +\Theta^2+\Theta^{-2}) \Phi\big\| \,\big\| \mathrm{d}\Gamma_\perp\big( w*(u_1u_2) \big)\Phi\big\|
            +\lambda\, \big\|\Theta^2\Phi\big\|\,\big\| \mathrm{d}\Gamma_\perp\big( K_{12}^\ast \big)\Phi\big\|.
	\end{split}
	\end{equation*}
	Recalling that the norms of $w*(u_1u_2)$ and $K_{12}$ were estimated in \eqref{eq:norm_w_12} and \eqref{eq:norm_K_12}, 
        arguing as in Subsection~\ref{sec_quadratic_terms_Bog} we find
        \begin{equation*}
          I_1 \le C_\varepsilon T^{1/2-\varepsilon} \langle \mathcal{N}_\perp^2 \rangle_\Phi.
        \end{equation*}
	For the other terms of $\mathbb{H}_{12}$ we write
	\begin{equation*}
	\begin{split}
	I_2 =: \bigg|\Big\langle \frac{\lambda}{2} \sum_{m,n\ge3} \big( K_{12}\big)_{mn} a^\dagger _ma^\dagger _n+\mathrm{h.c.} \Big\rangle_\Phi\bigg| \le\;& \lambda \|\Phi\|\,\bigg\| \sum_{m,n\ge3}\big(K_{12}\big)_{mn} a_ma_n\Phi\bigg\|.
	\end{split}
	\end{equation*}
	Since we assumed that all elements of the basis $\{u_m\}_m$ are real-valued functions, we have
	\begin{equation*}
	\langle u_m,K_{12}u_n\rangle \equiv \langle u_m\otimes u_1,w\, u_2\otimes u_n\rangle=\langle u_m\otimes u_n, w\, u_2\otimes u_1\rangle
	\end{equation*}
	and this gives
	\begin{equation*}
	\begin{split}
	  \bigg\| \sum_{m,n\ge3}\langle &u_m,K_{12} u_n\rangle a_ma_n\Phi\bigg\|^2 = \sum_{m,n,p,q\ge3} \langle u_m,K_{12} u_n\rangle \,\langle u_q,K_{12}^* u_p\rangle
          \big\langle a^\dagger _pa^\dagger _q a_ma_n \big\rangle_\Phi \\
	=\;&\sum_{m,n,p,q\ge3} \langle u_m\otimes u_n, w\,  u_2\otimes u_1\rangle\,\langle u_2\otimes u_1, w\, u_p\otimes u_q\rangle \big\langle a^\dagger _pa^\dagger _q a_ma_n\big\rangle_{ \Phi}\\\
	=\;& \big\langle \mathrm{d}\Gamma_\perp \big( w |u_2\otimes u_1\rangle \langle u_2\otimes u_1 |w  \big) \big\rangle_\Phi\,.
	\end{split}
	\end{equation*}
	However,
	\begin{equation*}
	\begin{split}
	  \big\|w |u_1\otimes u_2\rangle \langle u_1\otimes u_2|w\big\|_\mathrm{op}^2 =\;&\sup_{u\in L^2(\mathbb{R}^{2d}),\,\|u\|=1}
          | \langle u,w \, u_1\otimes u_2\rangle |^2 \langle u_1\otimes u_2, \, w^2 \, u_1\otimes u_2\rangle  \\
	\le\;&\Big(\int \big(w(x-y)\big)^2 |u_1(x)|^2 |u_2(y)|^2 dxdy\Big)^2 \le C_\varepsilon T^{2-\varepsilon},
	\end{split}
	\end{equation*}
	where the last step is due to \eqref{eq:w_1212}. Since the second quantization preserves operator inequalities, we conclude
	\begin{equation*}
	\bigg\| \sum_{m,n\ge3}\langle u_m,K_{12} u_n\rangle a_ma_n\Phi\bigg\|^2 \le C_\varepsilon T^{1-\varepsilon}\langle\mathcal{N}_\perp^2\rangle_\Phi,
	\end{equation*}
	from which
	\begin{equation*}
          I_2 \le C_\varepsilon T^{1/2-\varepsilon}\langle \mathcal{N}_\perp^2 \rangle_\Phi.
	\end{equation*}
	This completes the proof of \eqref{eq:elimination_H_12}, since the expectation in the right hand side is uniformly bounded by our assumption \eqref{eq:assum_reduction_quadratic}.
%

	We now explain how to bound $\mathbb{K}_{> M_\Lambda}$, focusing, as an example, on the term
	\begin{equation*}
	\mathbb{K}_{> M_\Lambda}^{(1)}:=\sum_{\substack{3\le m \le 2 M_\Lambda + 2 \\n >2M_\Lambda+2}} \left[  \big(K_{11}\big)_{mn}\Theta^{-2}a^\dagger_ma^\dagger_n+\mathrm{h.c.} \right].
	\end{equation*}
	We have
	\begin{equation*}
	\begin{split}
	  \left| \left\langle \mathbb{K}_{> M_\Lambda}^{(1)} \right\rangle_{ \Phi} \right|\le\;&
          2 \left(\sum_{m, n \ge 1 }\left| \left\langle u_{m},K_{11} u_{n} \right\rangle  \right|^2\right)^{1/2}
          \left( \sum_{m\ge3,\;n > 2M_\Lambda+2} \left\|a_{n}a_{m}\Phi\right\|^2\right)^{1/2} \left\|\Theta^{-2} \Phi\right\| \\
	  \le \;&
         2 \tr ( K_{11}^2 )^{1/2} \left\| \Phi \right\|
          \left\langle \mathcal{N}_\perp \sum_{n > 2M_\Lambda+2} a^\dagger_{n} a_{n}  \right\rangle_{\Phi}   ^{1/2}.
	\end{split}
	\end{equation*}
	The first bound follows from the Cauchy-Schwarz inequality both for the sum over  $m,n$ and for the $\ell^2(\mathfrak{F}_\perp)$-scalar product. The second one
        follows from the fact that $K_{11}$ and thus $K_{11}^2$ are trace-class, as proven in Lemma \ref{lemma:properties_operators},
        and by commuting $a^\dagger_{n} a_n$ with $a_{m}$ and ignoring a negative term coming from the commutator.
        For the last square root we write
	\begin{equation*}
	  \left\langle \mathcal{N}_\perp \sum_{n >  2M_\Lambda+2} a^\dagger_{n} a_{n}  \right\rangle_{\Phi}
	\le\; \frac{1}{\mu_{2M_\Lambda+2}-\mu_+}
	\left\langle \mathcal{N}_\perp \sum_{n >  2M_\Lambda+2} \left(\mu_{n}-\mu_+\right) a^\dagger_{n} a_{n}  \right\rangle_{\Phi}.
	\end{equation*}
	We now notice that the sum in the right hand side satisfies
	\begin{equation*}
	\sum_{n >  2 M_\Lambda+ 2} \left(\mu_n-\mu_+\right) a^\dagger_{n} a_{n}  \le \mathrm{d}\Gamma_\perp\left(h_\mathrm{MF}-\mu_+\right),
	\end{equation*}

	and since all the operators commute with $\mathcal{N}_\perp$ we can plug this into the expectation value above. We thus find
	\begin{equation*}
	\left| \left\langle \mathbb{K}_{> M_\Lambda}^{(1)} \right\rangle_{ \Phi} \right|\le C \left( \frac{1}{\mu_{2M_\Lambda+2}-\mu_+} \left\langle \mathcal{N}_\perp \mathrm{d}\Gamma_\perp\left(h_{\mathrm{MF}}-\mu_+\right) \right\rangle_{\Phi} \right)^{1/2}.
	\end{equation*}
	Since, by the assumptions \eqref{eq:assum_reduction_quadratic} on $\Phi$, the expectation value is bounded uniformly in $N$, we deduce
	\begin{equation*}
	\left| \left\langle \mathbb{K}_{> M_\Lambda}^{(1)} \right\rangle_{ \Phi} \right|\le\frac{C}{\left(\mu_{2M_\Lambda+2}-\mu_+\right)^{1/2}}.
	\end{equation*}
	All the terms in the second and third lines of \eqref{def-K_12} can be estimated in this way. For the terms in the first line the argument is slightly simpler since, arguing as above,
	\begin{equation*}
	\begin{split}
	\left|\sum_{\substack{3\le m\le 2M_\Lambda+2\\n > 2M_\Lambda+2}}\left( K_{11}+ K_{22} \right)_{mn} \left\langle a^\dagger_m a_n+\mathrm{h.c.}\right\rangle_{ \Phi}\right| \le\;& C\left(\sum_{3\le m} \left\langle a^\dagger_m a_m \right\rangle_\Phi \sum_{n > 2 M_\Lambda+2} \left\langle a^\dagger_n a_n\right\rangle_\Phi\right)^{1/2}\\
	\le\;& C \left\langle \mathcal{N}_\perp\right\rangle_{ \Phi}^{1/2} \frac{\left\langle \mathrm{d}\Gamma_\perp\left( h_\mathrm{MF}-\mu_+\right)\right\rangle_{ \Phi}}{\left(\mu_{2M_\Lambda+2}-\mu_+\right)^{1/2}}
	\end{split}
	\end{equation*}
	This proves
	\begin{equation} \label{eq:estimate_K_greater}
	\left|\big \langle \mathbb{K}_{> M_\Lambda}\big\rangle_{\Phi} \right|\le\frac{C}{\left(\mu_{2M_\Lambda+2}-\mu_+\right)^{1/2}}.
	\end{equation}
	We next turn to estimating the $\Xi$ terms in~\eqref{eq:bog_decomp_errors}. Since all sums are finite, it is enough to show that the $L^2(\mathbb{R}^d)$-expectation values multiplying
        $a_{r,\alpha}^{\sharp_r} a_{\ell,\beta}^{\sharp_\ell}$ in the sums converge to zero as $N\to\infty$ (notice that our assumption \eqref{eq:assum_reduction_quadratic} on $\Phi$ ensures that all expectation values in $\ell^2(\mathfrak{F}_\perp)$ are well-defined). For $\Xi_1$ we notice that
	\begin{equation} \label{eq:equality_crossed_terms}
	\left\langle u_{r,\alpha},\big(h_{\mathrm{MF}}-\mu_+\big) u_{\ell,\beta}\right \rangle=\frac{1}{2} \left(\mu_{2\alpha+1}-\mu_{2\alpha+2}\right)\delta_{\alpha,\beta},
	\end{equation}
	and therefore, by \eqref{eq:higher_gaps}, for any $\alpha ,\beta \in \{ 1,\ldots, M_\Lambda\}$, 
	\begin{equation*}
	\lim_{N\to\infty} \left\langle u_{r,\alpha},\big(h_{\mathrm{MF}}-\mu_+\big) u_{\ell,\beta}\right \rangle=0.
	\end{equation*}
	The fact that $\langle \Xi_2\rangle_\Phi$ and $\langle \Xi_3 \rangle_\Phi$ converge to zero as $N \to \infty$ is a consequence of the localization
        of $u_{r,\alpha}$ and $u_{\ell,\beta}$ in the right and left wells, respectively. More precisely,
        for $\Xi_2$ we notice that, by definition of $K_{11}$,
	\begin{equation}\label{bound_matrix_element_K11_right_left}
	\begin{split}
	\big|\left\langle u_{r,\alpha},K_{11}u_{\ell,\beta}\right\rangle\big| & =\frac{1}{2} \big|\left\langle u_{r,\alpha}\otimes u_1,w\, u_1\otimes u_{\ell,\beta}\right\rangle\big|\le\; C \left\langle |u_{\ell,\beta}|,|u_1|\right\rangle\\
	& \le\;  C\left( \int_{x_1\ge0} |u_{\ell,\beta}(x)|^2dx \right)^{1/2} +  C\left( \int_{x_1\le0} |u_1(x)|^2dx \right)^{1/2},
	\end{split}
	\end{equation}
	and both terms in the right hand side converge to zero as $N\to\infty$ by \eqref{eq:higher_modes} and \eqref{eq:small_u_1_u_2}. The expectations of $K_{22}$ in $\Xi_2$ coincide with those of $K_{11}$ by reflection symmetry, so the same argument applies. For $\Xi_3$ we argue similarly by noticing that
	\begin{equation*}
	\begin{split}
	\big|\left\langle u_{r,\alpha},K_{22}u_{r,\beta}\right\rangle\big| \le\;& C \big\langle |u_{r,\alpha}|,|u_2|\big\rangle \big\langle |u_{r,\beta}|,|u_2|\big\rangle \\
	\big\langle |u_{r,\alpha}|,|u_2|\big\rangle \le\;& C \left( \int_{x_1\le0} |u_{r,\beta}(x)|^2dx \right)^{1/2}+  C \left( \int_{x_1\ge0} |u_2(x)|^2dx \right)^{1/2}\,
	\end{split}
	\end{equation*}
	and the right hand side of the second bound converges to zero as $N\to\infty$, once again by \eqref{eq:higher_modes} and \eqref{eq:small_u_1_u_2}. These arguments prove that, for $i=1,2,3$,
	\begin{equation}
	\left| \big\langle \Xi_i\big\rangle_{ \Phi} \right| \le C_{M_\Lambda} o_N(1) \qquad\text{as }N\to\infty
	\end{equation}
	for some constant $C_{\Lambda}$ that does not depend on $N$. Comparing this, \eqref{eq:elimination_H_12}, and \eqref{eq:estimate_K_greater}, with \eqref{eq:bog_decomp_errors}, proves \eqref{eq:reduction_l_r}.

\end{proof}

\subsection{Reduction to right and left modes: linear terms}
We now prove that the main contribution to the linear terms surviving in the left hand side of \eqref{eq:linear} actually comes from terms that couple $u_1$ with the modes $u_{r,\alpha}$ and $u_2$ with the modes $u_{\ell,\alpha}$. As previously we also show show that we can neglect the contribution of modes beyond the energy cutoff $M_\Lambda$. First, we remark that using the definition of $b^\sharp$'s and $c^\sharp$'s from \eqref{eq:b'sc's} we can rewrite the linear terms of Proposition \ref{prop:bogoliubov} as
\begin{equation*}
\begin{split}
\frac{\lambda}{\sqrt{2(N-1)}}& \sum_{m\ge3} w_{+1-m}\left(b_{m}\mathfrak{D}+\mathrm{h.c.}\right)+\frac{\lambda}{\sqrt{2(N-1)}} \sum_{m\ge3} w_{+2-m} \left(c_{m}\mathfrak{D}+\mathrm{h.c.}\right)
\end{split}
\end{equation*}

\begin{proposition}[\textbf{Reduction of linear terms to right and left modes}]\mbox{} \label{prop:reduction_linear} \\
	Assume $\Phi\in\ell^2(\mathfrak{F}_\perp)$ satisfies
	\begin{equation} \label{eq:assum_linear_reduction}
	\left\langle \mathcal{N}_\perp+\frac{\mathfrak{D}^2}{N}+\mathrm{d}\Gamma_\perp\left(h_\mathrm{MF}-\mu_+\right)\right\rangle_\Phi \le C \qquad\text{uniformly in }N.
	\end{equation}
	For every energy cutoff $\Lambda$ large, let $M_\Lambda$ be the largest integer such that $\mu_{2M_\Lambda+2}\le \Lambda$, where $\{\mu_{m}\}_{m}$ are the eigenvalues of $h_\mathrm{MF}$. We have
	\begin{itemize}
		\item \textbf{Large cutoff limit.} 
		\begin{equation} \label{eq:large_cutoff_linear}
		\begin{split}
		  \left|\frac{\lambda}{\sqrt{2(N-1)}} \sum_{m >  2M_\Lambda+2}\Big(  w_{+ 1 -m}\,\big\langle b_{m}\mathfrak{D}\big\rangle_\Phi
                  + w_{+2-m}\,\big\langle c_{m}\mathfrak{D} \big\rangle_\Phi +\mathrm{h.c.} \Big) \right|\le\;&
                  \frac{C}{\left(\mu_{2M_\Lambda+2}-\mu_+\right)^{1/2}}
		\end{split}
		\end{equation}
		\item \textbf{Reduction to right and left modes.} 
		\begin{equation}\label{eq:linear_reduction}
		\begin{split}
		\frac{\lambda}{\sqrt{2(N-1)}}\bigg|  &\sum_{3\le m \le 2M_\Lambda+2} w_{+1-m}\,\big\langle b_m\mathfrak{D}+\mathrm{h.c.}\big\rangle_\Phi \\
		&\qquad- \sum_{1\le \alpha \le M_\Lambda} \big\langle u_1, w*(u_+u_-) u_{r,\alpha}\big\rangle\,\big\langle b_{r,\alpha}\mathfrak{D}+\mathrm{h.c.}\big\rangle_\Phi\bigg| \le C_{M_\Lambda}o_N(1)\\\frac{\lambda}{\sqrt{2(N-1)}}\bigg|  &\sum_{3\le m \le 2M_\Lambda+2} w_{+2-m}\,\big\langle c_m\mathfrak{D}+\mathrm{h.c.}\big\rangle_\Phi \\
		&\qquad- \sum_{1\le \alpha \le M_\Lambda} \big\langle u_2, w*(u_+u_-) u_{\ell,\alpha}\big\rangle\,\big\langle c_{\ell,\alpha}\mathfrak{D}+\mathrm{h.c.}\big\rangle_\Phi\bigg| \le C_{M_\Lambda}o_N(1)
		\end{split}
		\end{equation}
	\end{itemize}

\begin{proof}
	Let us discuss how to prove \eqref{eq:large_cutoff_linear}, by focusing on the first limit (the second one is treated similarly). We have
	\begin{equation} \label{eq:reduction_linear_1}
	\begin{split}
	\left|\frac{\lambda}{\sqrt{2(N-1)}} \sum_{m >  2M_\Lambda+2} w_{+1-m}\,\big\langle b_{m}\mathfrak{D}+\mathrm{h.c.}\big\rangle_\Phi\right|\le\;& C\left(\sum_{m > 2M_\Lambda+2} |w_{+1-m}|^2\right)^{1/2} \frac{\|\mathfrak{D}\Phi\|}{\sqrt{N}}\\
	&\times \left( \sum_{m > 2M_\Lambda+2} \left\langle a_{m}^\dagger a_{m}\right\rangle_{ \Phi} \right)^{1/2},
	\end{split}
	\end{equation}
	where we have used the Cauchy-Schwarz inequality both for the sum and for the $\ell^2(\mathfrak{F}_\perp)$ scalar product and the identities
        $b_m \mathfrak{D} = (\mathfrak{D}-1)b_m$ and
        $b^\dagger_{m}b_{m}=a^\dagger_{m} a_{m}$. The first sum in the right hand side is bounded by a fixed constant thanks to \eqref{eq:sum_one_index}. We now multiply and divide by $\mu_{2M_\Lambda+2}-\mu_+$ to get, arguing as in the previous subsection,
	\begin{equation*}
	\begin{split}
	  \sum_{m > 2M_\Lambda+2} a_{m}^\dagger a_{m}
	\le\;&\frac{1}{\mu_{2M_\Lambda+2}-\mu_+} \mathrm{d}\Gamma_\perp\left(h_{\mathrm{MF}}-\mu_+\right).
	\end{split}
	\end{equation*}
	Plugging this inside \eqref{eq:reduction_linear_1}, and using the assumption \eqref{eq:assum_linear_reduction}, we get
	\begin{equation*}
	\begin{split}
	\left|\frac{\lambda}{\sqrt{2(N-1)}} \sum_{m> 2M_\Lambda+2} w_{+1-m}\,\big\langle b_{m}\mathfrak{D}+\mathrm{h.c.}\big\rangle_\Phi\right|\le \frac{C}{\left(\mu_{2M_\Lambda+2}-\mu_+\right)^{1/2}},
	\end{split}
	\end{equation*}
	which is the desired bound.
	
	Let us now prove \eqref{eq:linear_reduction}, again by focusing on the first bound only. By a change of basis we have
	\begin{equation} \label{eq:linear_splitting_l_r}
	\begin{split}
	\sum_{3\le m \le 2M_\Lambda+2} w_{+1-m}&\frac{\left\langle b_m \mathfrak{D}+\mathrm{h.c.}\right\rangle_{ \Phi}}{\sqrt{2(N-1)}}\\
	=\;&\sum_{1\le \alpha\le M_\Lambda}\big\langle u_1,w*(u_+u_-)u_{r,\alpha}\big\rangle \frac{\left\langle b_{r,\alpha} \mathfrak{D}+\mathrm{h.c.} \right\rangle_{ \Phi}}{\sqrt{2(N-1)}}\\
	&+\sum_{1\le \alpha\le M_\Lambda}\big\langle u_1,w*(u_+u_-)u_{\ell,\alpha}\big\rangle \frac{\left\langle b_{\ell,\alpha} \mathfrak{D}+\mathrm{h.c.}\right\rangle_{ \Phi}}{\sqrt{2(N-1)}}.
	\end{split}
	\end{equation}
	The second sum in the right hand converges to zero in the limit $N\to\infty$ because each summand does, and the sum is finite. Indeed,
        for instance       
	\begin{equation*}
	\begin{split}
	  \big|\big \langle u_1, w*(u_+u_-) u_{\ell,\alpha}\big\rangle\big| \le\;& C \big\langle |u_1|, |u_{\ell,\alpha}|\rangle
	\end{split}
	\end{equation*}
	and the right hand side tends to zero as $N \to \infty$ by \eqref{bound_matrix_element_K11_right_left}.
        The expectations on the state $\Phi$ in the sum are well defined thanks to the assumption \eqref{eq:assum_linear_reduction}. We thus have
	\begin{equation*}
	\left| \sum_{1\le \alpha\le M_\Lambda}\big\langle u_1,w*(u_+u_-)u_{\ell,\alpha}\big\rangle \frac{\left\langle b_{\ell,\alpha} \mathfrak{D}+\mathrm{h.c.}\right\rangle_{ \Phi}}{\sqrt{2(N-1)}} \right| \le C_{M_\Lambda} o_N(1),
	\end{equation*}
	which proves \eqref{eq:linear_reduction}.
\end{proof}
\end{proposition}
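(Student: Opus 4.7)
My plan is to treat the two inequalities separately, each by a Cauchy--Schwarz argument that exploits the identity $b_m=\Theta a_m$ (so that $b_m\mathfrak{D}=(\mathfrak{D}-1)b_m$ and $b_m^\dagger b_m=a_m^\dagger a_m$). For the high-energy tail \eqref{eq:large_cutoff_linear}, I would first estimate
\begin{equation*}
\Big|\frac{\lambda}{\sqrt{2(N-1)}}\sum_{m>2M_\Lambda+2} w_{+1-m}\langle b_m\mathfrak{D}\rangle_\Phi\Big|\le\frac{C}{\sqrt N}\Big(\sum_{m>2M_\Lambda+2}|w_{+1-m}|^2\Big)^{\!1/2}\|\mathfrak{D}\Phi\|\Big(\sum_{m>2M_\Lambda+2}\|a_m\Phi\|^2\Big)^{\!1/2}
\end{equation*}
by Cauchy--Schwarz on the $m$-sum and on the scalar product. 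The middle factor is controlled by Lemma~\ref{lemma:properties_operators}$(i)$ applied with $f=u_+$, $g=u_1$, $h=u_-$ (since $w_{+1-m}=\langle u_1,(w*(u_+u_-))u_m\rangle$). For the rightmost factor I would insert the spectral estimate
\begin{equation*}
\sum_{m>2M_\Lambda+2}a_m^\dagger a_m\le\frac{\mathrm{d}\Gamma_\perp(h_\mathrm{MF}-\mu_+)}{\mu_{2M_\Lambda+2}-\mu_+},
\end{equation*}
exactly as in the treatment of $\mathbb{K}_{>M_\Lambda}$ in Proposition~\ref{prop:reduction}. The a~priori hypothesis \eqref{eq:assum_linear_reduction} on $\mathrm{d}\Gamma_\perp(h_\mathrm{MF}-\mu_+)$ and on $\mathfrak{D}^2/N$ then yields the claimed bound; the $c_m$-half is identical after swapping $u_1$ for $u_2$.

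For the reduction \eqref{eq:linear_reduction} I would exploit the change of basis \eqref{eq:basis_left_right}, expressing each $u_m$ and each $a_m$ with $m\in\{2\alpha+1,2\alpha+2\}$ in terms of $(u_{r,\alpha},u_{\ell,\alpha})$ and $(a_{r,\alpha},a_{\ell,\alpha})$. A direct computation shows that summing on the pair $\{2\alpha+1,2\alpha+2\}$ gives the clean identity
\begin{equation*}
\sum_{m\in\{2\alpha+1,2\alpha+2\}}w_{+1-m}\,a_m=\langle u_1,w*(u_+u_-)u_{r,\alpha}\rangle\,a_{r,\alpha}+\langle u_1,w*(u_+u_-)u_{\ell,\alpha}\rangle\,a_{\ell,\alpha},
\end{equation*}
the ``mixed'' $(r,\ell)$ cross terms cancelling by symmetry. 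Multiplying by $\Theta$ and summing over $1\le\alpha\le M_\Lambda$, the diagonal piece reproduces precisely the right-mode sum on the right-hand side of \eqref{eq:linear_reduction}, leaving only the finite-$\alpha$ error
\begin{equation*}
\frac{\lambda}{\sqrt{2(N-1)}}\sum_{\alpha\le M_\Lambda}\langle u_1,w*(u_+u_-)u_{\ell,\alpha}\rangle\,\langle b_{\ell,\alpha}\mathfrak{D}+\mathrm{h.c.}\rangle_\Phi
\end{equation*}
to be shown negligible. By \eqref{eq:assum_linear_reduction}, for each fixed $\alpha$ the factor $\langle b_{\ell,\alpha}\mathfrak{D}\rangle_\Phi/\sqrt N$ is $O(1)$ via Cauchy--Schwarz, using $\|\mathfrak{D}\Phi\|/\sqrt N\le C$ and $\|a_{\ell,\alpha}\Phi\|\le\|\mathcal{N}_\perp^{1/2}\Phi\|\le C$, so it suffices to prove that $\langle u_1,w*(u_+u_-)u_{\ell,\alpha}\rangle\to 0$ as $N\to\infty$ for each $\alpha$. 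I would do this by bounding it, via $w\in L^\infty$, by $C\langle|u_1|,|u_{\ell,\alpha}|\rangle$ and then applying the half-space Cauchy--Schwarz splitting as in \eqref{bound_matrix_element_K11_right_left}, reducing the question to the one-body localization estimates \eqref{eq:higher_modes} and \eqref{eq:small_u_1_u_2} borrowed from \cite{OlgRou-20}. The second inequality in \eqref{eq:linear_reduction} is symmetric after $u_1\leftrightarrow u_2$ and $(b,r)\leftrightarrow(c,\ell)$.

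The only genuinely substantive input is thus a one-body question: the qualitative decay of the cross-well overlaps as $L\to\infty$ that makes the ``wrong-well'' matrix elements vanish. Once this localization input from \cite{OlgRou-20} is granted, everything else is routine Cauchy--Schwarz bookkeeping combined with the standing a~priori control of $\mathcal{N}_\perp$, $\mathfrak{D}^2/N$ and $\mathrm{d}\Gamma_\perp(h_\mathrm{MF}-\mu_+)$ provided by \eqref{eq:assum_linear_reduction}.
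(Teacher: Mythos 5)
Your proposal follows essentially the same route as the paper: Cauchy--Schwarz on the $m$-sum and on the scalar product combined with the spectral bound $\sum_{m>2M_\Lambda+2}a_m^\dagger a_m\le(\mu_{2M_\Lambda+2}-\mu_+)^{-1}\mathrm{d}\Gamma_\perp(h_\mathrm{MF}-\mu_+)$ for the tail, and the change of basis to $(u_{r,\alpha},u_{\ell,\alpha})$ followed by the half-space localization estimates to kill the wrong-well matrix elements $\langle u_1,w*(u_+u_-)u_{\ell,\alpha}\rangle$ for the reduction. All the ingredients you invoke (Lemma~\ref{lemma:properties_operators}$(i)$, \eqref{bound_matrix_element_K11_right_left}, \eqref{eq:higher_modes}, \eqref{eq:small_u_1_u_2}) are exactly those used in the paper, and your bookkeeping is correct.
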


\section{A priori estimates on the ground state of $H_N$} \label{sect:apriori}
%

Based on the previous results we can now deduce non-trivial information on the ground state $\psi_\mathrm{gs}$ of $H_N$, in particular that
$\langle (\mathcal{N}_1-\mathcal{N}_2 )^2 \rangle_{\psi_{\mathrm{gs}}} \le C N$ and $\langle \mathcal{N}_\perp^2 \rangle_{\psi_{\mathrm{gs}}}\le  C$ with $C$ a constant independent of $N$.

\begin{proposition}[\textbf{Number and energy of excitations}]\label{lemma:bec_energy}
	\begin{align}
	\langle\mathcal{N}_\perp\rangle_{\psi_{\mathrm{gs}}}\le\;& C  \label{eq:BEC}\\
	\langle \mathrm{d}\Gamma_\perp(h_\mathrm{MF}-\mu_+)\rangle_{\psi_\mathrm{gs}}\le \;& C \label{eq:energy_excitations}\\
	\langle \mathcal{N}_-\rangle_{\psi_{\mathrm{gs}}}\le\;& C_\varepsilon \min\big\{N, T^{-1-\varepsilon}\big\}. \label{eq:N_-_excitations}
	\end{align}
\end{proposition}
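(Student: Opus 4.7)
The plan is to combine a simple variational upper bound on $E(N)$ with a lower bound on $H_N$ obtained from Onsager's lemma (which is sharp to leading order because $w$ is of positive type, Assumption~\ref{assum:w}), and then read off the three claimed estimates from the spectral structure of $h_{\mathrm{MF}}$.

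First, I test with the product state $u_+^{\otimes N}$ to get
\begin{equation*}
E(N) \le \langle H_N\rangle_{u_+^{\otimes N}} = N\cEH[u_+] = N\mu_+ - \frac{\lambda N}{2}\, w_{++++},
\end{equation*}
where $w_{++++} := \langle u_+\otimes u_+, w\, u_+\otimes u_+\rangle$. Then, applying Onsager's lemma with density $|u_+|^2$ (see e.g.\ \cite[Section~2.1]{Rougerie-EMS}) yields
\begin{equation*}
\frac{\lambda}{N-1}\sum_{i<j} w(x_i-x_j) \ge \lambda\sum_j (w*|u_+|^2)(x_j) - \frac{\lambda N}{2}\, w_{++++} - C,
\end{equation*}
the $-C$ absorbing the self-interaction contribution $(\lambda N/(2(N-1)))w(0)$ and the $O(1/N)$ correction from $N^2/(N-1)=N+1+O(1/N)$. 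Together with the definition of $h_{\mathrm{MF}}$ this gives the operator inequality $H_N \ge \dGamma(h_{\mathrm{MF}}) - (\lambda N/2)\, w_{++++} - C$. Taking expectation in $\psi_{\mathrm{gs}}$, the $(\lambda N/2)w_{++++}$ terms cancel and I obtain the key one-body bound
\begin{equation*}
\langle \dGamma(h_{\mathrm{MF}} - \mu_+)\rangle_{\psi_{\mathrm{gs}}} \le C,
\end{equation*}
uniformly in $N$ and $T$.

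From this input, the three claimed estimates follow from the spectral decomposition
\begin{equation*}
\dGamma(h_{\mathrm{MF}} - \mu_+) = (\mu_- - \mu_+)\, \cN_- + \dGamma_\perp(h_{\mathrm{MF}} - \mu_+),
\end{equation*}
combined with the two one-body spectral gap estimates recorded in Theorem~\ref{thm:onebody}: the first excited gap $\mu_--\mu_+ \ge c_\varepsilon T^{1+\varepsilon}$, and the uniform higher-mode gap $\mu_m - \mu_+ \ge c$ for $m\ge 3$ with $c$ independent of $N$ and $T$. Then \eqref{eq:energy_excitations} is immediate since each summand is non-negative; \eqref{eq:BEC} follows from $\dGamma_\perp(h_{\mathrm{MF}}-\mu_+) \ge c\,\cN_\perp$; and \eqref{eq:N_-_excitations} follows from the pointwise bound $(\mu_--\mu_+)\cN_- \le \dGamma(h_{\mathrm{MF}}-\mu_+)$ combined with the small-gap estimate (giving $\langle\cN_-\rangle \le C_\varepsilon T^{-1-\varepsilon}$) and the trivial bound $\cN_-\le N$.

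The argument is essentially routine once the positive-type structure of $w$ is exploited via Onsager. The only mildly delicate point is to ensure that the $O(1)$ error in the energy comparison does not spoil the $\cN_-$ bound in the regime $T \to 0$: this is why it is important to keep the $(\mu_--\mu_+)\cN_-$ and $\dGamma_\perp(h_{\mathrm{MF}}-\mu_+)$ contributions separated throughout, so that the small factor $\mu_--\mu_+\sim T$ multiplies only $\cN_-$ and not $\cN_\perp$.
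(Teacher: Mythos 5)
Your proposal is correct and follows essentially the same route as the paper: the variational upper bound with $u_+^{\otimes N}$, the Onsager lower bound $H_N \ge \mathrm{d}\Gamma(h_{\mathrm{MF}}) - \tfrac{\lambda N}{2}w_{++++} - C$, and the spectral splitting $\mathrm{d}\Gamma(h_{\mathrm{MF}}-\mu_+) = (\mu_--\mu_+)\mathcal{N}_- + \mathrm{d}\Gamma_\perp(h_{\mathrm{MF}}-\mu_+)$ combined with the two gap estimates of Theorem~\ref{thm:onebody} are exactly the ingredients used there. No gaps.
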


\begin{proposition}[\textbf{Second moment of excitations}] \label{lemma:second_moment_N}
	\begin{align} \label{eq:second_moment_intermediate}
	\big\langle \mathcal{N}_\perp^2\big\rangle_{\psi_{\mathrm{gs}}}\le\;& \frac{C}{{N}}\big\langle\big(\mathcal{N}_1-\mathcal{N}_2\big)^2\big\rangle_{ \psi_{\mathrm{gs}}}+C\\ \label{eq:apriori_N_h}
	\big\langle  \mathcal{N}_\perp\mathrm{d}\Gamma_\perp\big( h_{\mathrm{MF}}-\mu_+ \big)\big\rangle_{\psi_{\mathrm{gs}}} \le\;& \frac{C}{{N}}\big\langle\big(\mathcal{N}_1-\mathcal{N}_2\big)^2\big\rangle_{ \psi_{\mathrm{gs}}}+C.
	\end{align}
\end{proposition}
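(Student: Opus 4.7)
The proof follows the commutator strategy used in \cite{Seiringer-11,GreSei-13} to bound the second moment of the excitation number in single-well Bose gases, modified to account for the tunneling-mediated linear terms in the Bogoliubov expansion of $H_N-H_{2\mathrm{-mode}}$ that are specific to the double-well problem. These linear terms --- those surviving in Proposition~\ref{prop:linear} and proportional to $N^{-1}(\mathcal{N}_1-\mathcal{N}_2)$ --- will produce, after a Cauchy--Schwarz estimate, exactly the variance term $N^{-1}\langle (\mathcal{N}_1-\mathcal{N}_2)^2\rangle$ appearing on the right-hand sides of \eqref{eq:second_moment_intermediate}--\eqref{eq:apriori_N_h}.

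First I would establish an operator lower bound sharper than the one used for Proposition~\ref{lemma:bec_energy}. Combining Proposition~\ref{prop:bogoliubov}, the two-mode lower bound of Proposition~\ref{lemma:2mode_lower}, the reduction to left/right modes of Proposition~\ref{prop:reduction}, and the standard positivity $\mathbb{H}_{\mathrm{right}}+\mathbb{H}_{\mathrm{left}}\ge c\,\mathrm{d}\Gamma_\perp(h_{\mathrm{MF}}-\mu_+)-C$ obtained from the explicit Bogoliubov diagonalization of Section~\ref{sec:def Bog}, one derives on $\mathrm{Ran}\,\mathcal{U}_N$ an inequality of the form
\begin{equation*}
\mathcal{U}_N(H_N-E(N))\mathcal{U}_N^{\ast}\ge c\bigl(\mathrm{d}\Gamma_\perp(h_{\mathrm{MF}}-\mu_+)+\mathcal{N}_\perp\bigr)-\mathcal{L}-C,
\end{equation*}
where $\mathcal{L}$ denotes the surviving linear term from \eqref{eq:linear_intermediate}. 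Since $\mathcal{N}_\perp^{1/2}$ commutes with both $\mathrm{d}\Gamma_\perp(h_{\mathrm{MF}}-\mu_+)$ and $\mathcal{N}_\perp$, I sandwich this inequality by $\mathcal{N}_\perp^{1/2}$ on both sides and take the expectation on $\Phi_{\mathrm{gs}}=\mathcal{U}_N\psi_{\mathrm{gs}}$; using $H_N\psi_{\mathrm{gs}}=E(N)\psi_{\mathrm{gs}}$, the left-hand side rewrites as $\tfrac12\bigl\langle[\mathcal{N}_\perp^{1/2},[H_N,\mathcal{N}_\perp^{1/2}]]\bigr\rangle_{\psi_{\mathrm{gs}}}$, yielding
\begin{equation*}
c\bigl\langle \mathcal{N}_\perp\mathrm{d}\Gamma_\perp(h_{\mathrm{MF}}-\mu_+)+\mathcal{N}_\perp^2\bigr\rangle_{\psi_{\mathrm{gs}}} \le \tfrac12\bigl\langle[\mathcal{N}_\perp^{1/2},[H_N,\mathcal{N}_\perp^{1/2}]]\bigr\rangle_{\psi_{\mathrm{gs}}}+\bigl|\langle \mathcal{N}_\perp^{1/2}\mathcal{L}\mathcal{N}_\perp^{1/2}\rangle_{\psi_{\mathrm{gs}}}\bigr|+C\bigl(\langle \mathcal{N}_\perp\rangle_{\psi_{\mathrm{gs}}}+1\bigr).
\end{equation*}
The double commutator is then computed term by term along the decomposition $H_N-H_{2\mathrm{-mode}}=A_1+A_2+A_3+A_4$ of Section~\ref{sect:proof_bogoliubov}: any monomial with net excitation change $k$ acts on a $\mathcal{N}_\perp=n$-eigenstate with prefactor $(\sqrt{n+k}-\sqrt{n})^2\le Ck^2/(n+1)$, and combined with the a~priori bounds of Proposition~\ref{lemma:bec_energy} this gives a uniform $O(1)$ bound on its expectation.

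The main obstacle is then bounding the linear contribution $\langle \mathcal{N}_\perp^{1/2}\mathcal{L}\mathcal{N}_\perp^{1/2}\rangle_{\psi_{\mathrm{gs}}}$. Adapting the Cauchy--Schwarz argument used for the term $L_2$ in the proof of Proposition~\ref{prop:linear}, and using the decay $|w_{\pm\pm-m}|\le C_\varepsilon T^{1-\varepsilon}$ from Lemma~\ref{lemma:w_coefficients} together with $\mathcal{N}_\pm\le N$, one establishes for any $\eta>0$
\begin{equation*}
\bigl|\langle \mathcal{N}_\perp^{1/2}\mathcal{L}\mathcal{N}_\perp^{1/2}\rangle_{\psi_{\mathrm{gs}}}\bigr| \le \eta\,\bigl\langle \mathcal{N}_\perp^2+\mathcal{N}_\perp\mathrm{d}\Gamma_\perp(h_{\mathrm{MF}}-\mu_+)\bigr\rangle_{\psi_{\mathrm{gs}}} + \frac{C_\eta}{N}\bigl\langle(\mathcal{N}_1-\mathcal{N}_2)^2\bigr\rangle_{\psi_{\mathrm{gs}}}+C_\eta.
\end{equation*}
The crucial point is that the $N^{-1}$ weight on the variance arises precisely because $\mathcal{L}$ carries the prefactor $(N-1)^{-1}$ while each $a_\pm^{\sharp}$ produces a factor $\sqrt{N}$ after Cauchy--Schwarz; any coarser splitting would lose this scaling. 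Choosing $\eta$ small enough to absorb the first term into the positive left-hand side of the previous display, and invoking Proposition~\ref{lemma:bec_energy} for the residual $\langle\mathcal{N}_\perp\rangle$ term, yields both bounds \eqref{eq:second_moment_intermediate} and \eqref{eq:apriori_N_h} simultaneously. The hard part will be making this Cauchy--Schwarz estimate quantitative enough to preserve the $N^{-1}$ factor, since a weaker bound would be insufficient to close the bootstrap in the proof of Theorem~\ref{thm:main}.
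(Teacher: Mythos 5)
Your overall strategy --- sandwiching $H_N-E(N)$ by $\mathcal{N}_\perp^{1/2}$, rewriting the resulting expectation as a double commutator, and extracting the variance from the tunneling-mediated linear terms by a Cauchy--Schwarz estimate that preserves the $N^{-1}$ weight --- is a legitimate alternative in principle, and you correctly identify where the term $N^{-1}\langle(\mathcal{N}_1-\mathcal{N}_2)^2\rangle$ must come from. However, the step where you derive the operator lower bound $\mathcal{U}_N(H_N-E(N))\mathcal{U}_N^\ast\ge c\,(\mathrm{d}\Gamma_\perp(h_{\mathrm{MF}}-\mu_+)+\mathcal{N}_\perp)-\mathcal{L}-C$ from Propositions~\ref{prop:bogoliubov} and~\ref{prop:reduction} is circular and fails as written. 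Proposition~\ref{prop:reduction} explicitly \emph{assumes} $\langle \mathcal{N}_\perp^2+\mathrm{d}\Gamma(h_\mathrm{MF}-\mu_+)\mathcal{N}_\perp\rangle_\Phi\le C$, i.e.\ essentially the bounds \eqref{eq:second_moment_intermediate}--\eqref{eq:apriori_N_h} you are trying to prove (combined with Proposition~\ref{lemma:apriori_variance}). Moreover, the remainders in Proposition~\ref{prop:bogoliubov} are expectation bounds of the form $CN^{-1/4}\langle\mathcal{N}_\perp^2+1\rangle_\Phi$, not operator bounds with a constant: after conjugation by $\mathcal{N}_\perp^{1/2}$ they become third-moment errors of order $N^{-1/4}\langle\mathcal{N}_\perp^3\rangle$, which at this stage can only be bounded by $N^{3/4}\langle\mathcal{N}_\perp^2\rangle$ and therefore cannot be absorbed into the positive term $c\langle\mathcal{N}_\perp^2\rangle$. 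Your claim that the double commutator of $H_N$ is uniformly $O(1)$ is also inconsistent with your own mechanism: the linear terms $A_1$ sit inside $H_N$, so their double commutator with $\mathcal{N}_\perp^{1/2}$ produces the same variance-dependent contribution that you attribute to $\mathcal{L}$.

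The paper's proof avoids the Bogoliubov decomposition altogether and is more elementary. It uses the one-sided identity $\langle\mathcal{N}_\perp(H_N-E(N))\rangle_{\psi_{\mathrm{gs}}}=0$ (no symmetrization, no double commutator) to write $\langle\mathcal{N}_\perp\,\mathrm{d}\Gamma(h_{\mathrm{MF}}-\mu_+)\rangle_{\psi_{\mathrm{gs}}}=\langle\mathcal{N}_\perp S\rangle_{\psi_{\mathrm{gs}}}=N\langle P_1^\perp S\rangle_{\psi_{\mathrm{gs}}}$ in first quantization, where $S$ collects the interaction, the mean-field potential and the constant $E(N)-N\mu_+$. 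The part of $S$ not involving $x_1$ is bounded above by a constant via Onsager's lemma; the part involving $x_1$ is decomposed with $P_2,P_2^\perp$, and the variance emerges from the off-diagonal piece $\ket{u_+}\bra{u_-}_2\,w(x_1-x_2)$ resummed over particles into $\mathcal{N}_1-\mathcal{N}_2$ --- the first-quantized avatar of the linear terms you single out. If you wish to keep your commutator scheme, the repair is to take the operator lower bound from Onsager's lemma alone, namely $H_N-E(N)\ge\mathrm{d}\Gamma_\perp(h_{\mathrm{MF}}-\mu_+)-C\ge c\,\mathcal{N}_\perp-C$ (already contained in \eqref{eq:lower_bound_onsager} combined with \eqref{eq:simple_upper_bound}), with no $\mathcal{L}$ subtracted and no Bogoliubov input, and then let the variance arise solely from the double commutator of the interaction.
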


\begin{proposition}[\textbf{Variance in the two-mode subspace}]\label{lemma:apriori_variance}
	\begin{equation}
	\big\langle \big(\mathcal{N}_1-\mathcal{N}_2\big)^2\big\rangle_{\psi_{\mathrm{gs}}}\le\; CN . \label{eq:apriori_variance}
	\end{equation}
\end{proposition}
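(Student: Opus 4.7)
The plan is to sandwich the ground state energy $E(N)$ between a two--mode trial upper bound and a lower bound carrying a strictly positive prefactor of $\langle(\mathcal{N}_1-\mathcal{N}_2)^2\rangle_{\psi_{\mathrm{gs}}}$, and then use the a priori estimates of Propositions~\ref{lemma:bec_energy} and~\ref{lemma:second_moment_N} to absorb all error terms.

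For the upper bound, the Gaussian trial state $\psi_{\mathrm{gauss}}$ of~\eqref{eq:gaussian_trial} lives in $\bigotimes^N_{\rm sym}(PL^2(\R^d))$, so $a_m\psi_{\mathrm{gauss}}=0$ for all $m\ge 3$, whence $\langle H_N\rangle_{\psi_{\mathrm{gauss}}}=\langle H_{2\mathrm{-mode}}\rangle_{\psi_{\mathrm{gauss}}}$. Proposition~\ref{lemma:2mode_upper} therefore gives
\begin{equation*}
E(N)\le E_0+E_N^w+N\tfrac{\mu_+-\mu_-}{2}+C.
\end{equation*}
For the lower bound, write $H_N=H_{2\mathrm{-mode}}+(H_N-H_{2\mathrm{-mode}})$. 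Proposition~\ref{lemma:2mode_lower} together with $\langle\mathcal{N}_\perp\rangle_{\psi_{\mathrm{gs}}}\le C$ (Proposition~\ref{lemma:bec_energy}) yields
\begin{equation*}
\langle H_{2\mathrm{-mode}}\rangle_{\psi_{\mathrm{gs}}}\ge E_0+E_N^w+N\tfrac{\mu_+-\mu_-}{2}+\tfrac{\lambda U}{N-1}\langle(\mathcal{N}_1-\mathcal{N}_2)^2\rangle_{\psi_{\mathrm{gs}}}-C,
\end{equation*}
with $U\ge c>0$ uniformly in $N$ by Lemma~\ref{lemma:w_coefficients}.

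The complementary piece $\langle H_N-H_{2\mathrm{-mode}}\rangle_{\psi_{\mathrm{gs}}}$ I shall control by conjugating with $\mathcal{U}_N$, setting $\Phi:=\mathcal{U}_N\psi_{\mathrm{gs}}$, and invoking Proposition~\ref{prop:bogoliubov}. The Bogoliubov piece obeys $\langle\mathbb{H}\rangle_\Phi\ge E^{\mathrm{Bog}}\ge -C$ since the full quadratic Hamiltonian is bounded below uniformly (Subsection~\ref{sec:def Bog}), and $\mu_+\langle\mathcal{N}_\perp\rangle_\Phi\le C$. The remainder in~\eqref{eq:derivation_bogoliubov} is bounded using Proposition~\ref{lemma:second_moment_N} to substitute $\langle\mathcal{N}_\perp^2\rangle\le\tfrac{C}{N}\langle(\mathcal{N}_1-\mathcal{N}_2)^2\rangle+C$ and the Agmon-type estimate $\langle\mathcal{N}_-\rangle\le C_\varepsilon T^{-1-\varepsilon}$ from~\eqref{eq:N_-_excitations} for the $T^{1-\varepsilon}$ term, yielding a contribution of size $\tfrac{C}{N^{5/4}}\langle(\mathcal{N}_1-\mathcal{N}_2)^2\rangle+o(1)$. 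The two explicit linear terms are treated by Cauchy--Schwarz on $\ell^2(\mathfrak{F}_\perp)$: using $[\mathfrak{D},\Theta]=\Theta$ to commute $\mathfrak{D}$ leftwards and Lemma~\ref{lemma:properties_operators}(i) to bound $\sum_m|w_{+1-m}|^2\le C$, one obtains
\begin{equation*}
\left|\tfrac{\lambda}{\sqrt{2(N-1)}}\big\langle\textstyle\sum_m w_{+1-m}\Theta a_m\mathfrak{D}+\mathrm{h.c.}\big\rangle_\Phi\right|\le \tfrac{C}{\sqrt N}\bigl(\|\mathfrak{D}\Phi\|+1\bigr)\sqrt{\langle\mathcal{N}_\perp\rangle_\Phi}\le \eps\,\tfrac{\langle(\mathcal{N}_1-\mathcal{N}_2)^2\rangle}{N}+C_\eps,
\end{equation*}
and similarly for the $u_2$-term.

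Combining everything and comparing with the upper bound gives
\begin{equation*}
\Bigl(\tfrac{\lambda U}{N-1}-\tfrac{\eps}{N}-o(N^{-1})\Bigr)\langle(\mathcal{N}_1-\mathcal{N}_2)^2\rangle_{\psi_{\mathrm{gs}}}\le C,
\end{equation*}
which delivers the claimed $\le CN$ bound once $\eps$ is fixed strictly below $\lambda U$. The main obstacle is the linear term: it is precisely of the scale at which $\|\mathfrak{D}\Phi\|^2/N$ must be absorbed into the coercive Bose--Hubbard term $\tfrac{\lambda U}{N-1}\mathfrak{D}^2$, and the argument only closes because the prefactor $U$ is uniformly positive, which in turn relies on the $T\to 0$ decay of $w_{1212}$ from Lemma~\ref{lemma:w_coefficients}. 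Note that the bootstrap is consistent: Proposition~\ref{lemma:second_moment_N}'s bound on $\langle\mathcal{N}_\perp^2\rangle$ is linear in the unknown variance, which is exactly what is needed for the error contribution $N^{-1/4}\langle\mathcal{N}_\perp^2\rangle$ to produce only a negligible $N^{-5/4}$-weighted correction to the coercive coefficient.
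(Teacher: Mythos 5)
Your argument is correct and follows essentially the same route as the paper's proof: sandwich $E(N)$ between the Gaussian two-mode upper bound of Proposition~\ref{lemma:2mode_upper} and a lower bound obtained from Proposition~\ref{prop:bogoliubov} applied to $\mathcal{U}_N\psi_{\mathrm{gs}}$, in which the coercive term $\frac{\lambda U}{N-1}\left(\mathcal{N}_1-\mathcal{N}_2\right)^2$ from Proposition~\ref{lemma:2mode_lower} survives while the linear terms and remainders are absorbed (with small coefficients in front of $N^{-1}\langle\mathfrak{D}^2\rangle$) using Propositions~\ref{lemma:bec_energy} and~\ref{lemma:second_moment_N}. The only cosmetic slip is the intermediate inequality $\langle\mathbb{H}\rangle_\Phi\ge E^{\mathrm{Bog}}$, which is not established at this stage of the paper (that identification is the content of Sections~\ref{sect:minimization} and~\ref{sect:proofs}); what is actually needed, true, and also invoked by you is simply that $\mathbb{H}$ is bounded below uniformly in $N$.
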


Inserting~\eqref{eq:apriori_variance} in~\eqref{eq:second_moment_intermediate} and ~\eqref{eq:apriori_N_h} yields
\begin{equation}\label{eq:second_moment_excitations}
\begin{split}
\big\langle \mathcal{N}_\perp^2\big\rangle_{\psi_{\mathrm{gs}}}\le\;&  C\\
\big\langle  \mathcal{N}_\perp\mathrm{d}\Gamma_\perp\big( h_{\mathrm{MF}}-\mu_+ \big)\big\rangle_{\psi_{\mathrm{gs}}} \le\;& C
\end{split}
\end{equation}
As a consequence of \eqref{eq:N_-_excitations}, \eqref{eq:apriori_variance}, and \eqref{eq:second_moment_excitations},
if one applies  Proposition \ref{prop:bogoliubov} to the vector $\Phi=\mathcal{U}_N\psi_{\mathrm{gs}}$, the error terms in the right hand side
of~\eqref{eq:derivation_bogoliubov}  are small, being bounded by
\begin{equation} \label{eq;bound_error_term_prop_5.1}
  \frac{C}{N^{1/4}} + C_\varepsilon \frac{T^{-2 \varepsilon}}{N^{1/2}}
\end{equation}

The rest of this section is devoted to the proofs of Propositions~\ref{lemma:bec_energy}-\ref{lemma:apriori_variance}. The general strategy for the first two results is similar to the single-well case (that is, the case of fixed $L$) and some arguments are accordingly borrowed from~\cite{GreSei-13}. The two-mode nature of our low energy space however calls for additional ingredients, in particular as regards the proof of Proposition~\ref{lemma:second_moment_N}. Proposition~\ref{lemma:apriori_variance} uses as input our results of Sections~\ref{sect:proof_2mode} and~\ref{sect:proof_bogoliubov}.

We will use several times Onsager's inequality (see e.g.~\cite[Lemma~2.6]{Rougerie-EMS}.)
\begin{equation} \label{eq:onsager}
\begin{split}
\frac{1}{N}\sum_{i\ne j} w(x_i-x_j) \ge\;& -N \iint w(x-y)|u_+(x)|^2 |u_+(y)|^2dxdy\\
&+2\sum_{i=1}^N\int w(x_i-y)|u_+(y)|^2dy - w(0).
\end{split}
\end{equation}

\begin{proof}[Proof of Proposition~\ref{lemma:bec_energy}]
  Using \eqref{eq:onsager} and then the definition of $\mu_+$ from \eqref{eq:mu_+} we get (since the interaction term in the $N$-body Hamiltonian~\eqref{eq:hamil depart} is non-negative, we may replace the prefactor $\lambda/(N-1)$ by $\lambda/N$)
  \begin{equation} \label{eq:lower_bound_onsager}
  \begin{split}
		\langle H_N\rangle_{\psi_\mathrm{gs}}\ge\;&  \big\langle \mathrm{d}\Gamma(h_\mathrm{MF})\big\rangle_{ \psi_{\mathrm{gs}}}-\frac{\lambda N}{2}\iint w(x-y)|u_+(x)|^2 |u_+(y)|^2dxdy-C\\
		\ge\;& \big\langle \mathrm{d}\Gamma(h_\mathrm{MF}-\mu_+)\big\rangle_{ \psi_{\mathrm{gs}}}+N \cEH[u_+]-C\\
		>\;& \big\langle \mathrm{d}\Gamma_\perp(h_\mathrm{MF}-\mu_+)\big\rangle_{ \psi_{\mathrm{gs}}}+N \cEH[u_+]-C.
		\end{split}
  \end{equation}
The last step is due to the identity
\begin{equation} \label{eq:spectal_decomp_h_MF}
\mathrm{d}\Gamma (h_\mathrm{MF}-\mu_+) = ( \mu_{-} - \mu_{+} ) {\mathcal{N}}_{-} + \mathrm{d}\Gamma_\perp(h_\mathrm{MF}-\mu_+)
\end{equation}
and  to the fact that $\mu_->\mu_+$.
On the other hand, the factorized trial function $u_+^{\otimes N}$ yields the energy upper bound
\begin{equation} \label{eq:simple_upper_bound}
		\langle H_N\rangle_{\psi_\mathrm{gs}}\le N\cEH[u_+],
\end{equation}
and putting together \eqref{eq:lower_bound_onsager} and~\eqref{eq:simple_upper_bound} we find
\begin{equation} 
		\big\langle \mathrm{d}\Gamma_\perp(h_\mathrm{MF}-\mu_+)\big\rangle_{ \psi_{\mathrm{gs}}} \le C,
\end{equation}
which is precisely \eqref{eq:energy_excitations}. Recalling the spectral decomposition \eqref{eq:basis_h_MF}, and the fact that  $\mu_{m}-\mu_+\ge C$ for $m\ge3$ (by Theorem \ref{thm:onebody}), we deduce
\begin{equation*}
		\big\langle \mathrm{d}\Gamma_\perp(h_\mathrm{MF}-\mu_+)\big\rangle_{ \psi_{\mathrm{gs}}} \ge C\langle \mathcal{N}_\perp\rangle_{\psi_\mathrm{gs}},
\end{equation*}
which, together with \eqref{eq:energy_excitations}, proves \eqref{eq:BEC}.
		
To prove \eqref{eq:N_-_excitations} we use~\eqref{eq:spectal_decomp_h_MF} again
and notice that, by the spectral properties of $h_\mathrm{MF}$ from Theorem \ref{thm:onebody},
\begin{equation*}
		\big\langle \mathrm{d}\Gamma(h_\mathrm{MF}-\mu_+)\big\rangle_{ \psi_{\mathrm{gs}}} \ge (\mu_--\mu_+)\langle \mathcal{N}_-\rangle_{\psi_\mathrm{gs}}\ge c_\varepsilon T^{1+\varepsilon}\langle \mathcal{N}_-\rangle_{\psi_\mathrm{gs}}.
\end{equation*}
This, compared with \eqref{eq:lower_bound_onsager} and \eqref{eq:simple_upper_bound}, yields \eqref{eq:N_-_excitations} after recalling that $\langle \mathcal{N}_-\rangle_{\psi_\mathrm{gs}}\le N$ also trivially holds.
\end{proof}

\vspace{3mm}

\begin{proof}[Proof of Proposition \ref{lemma:second_moment_N}]
We claim that 
\begin{equation} \label{eq:apriori_epsilon}
  \big\langle  \mathcal{N}_\perp \mathrm{d}\Gamma (h_{\mathrm{MF}}-\mu_+)
  \big\rangle_{\psi_{\mathrm{gs}}} \le \delta \langle\mathcal{N}_\perp^2\rangle_{ \psi_{\mathrm{gs}}}
  + \frac{C}{{N}}\big\langle\big(\mathcal{N}_1-\mathcal{N}_2\big)^2\big\rangle_{ \psi_{\mathrm{gs}}}+C_\delta
\end{equation}
for $\delta>0$ arbitary  and for some constants $C, C_\delta >0$. This implies the bound \eqref{eq:second_moment_intermediate}
because
\begin{equation*}
\mathrm{d}\Gamma ( h_{\mathrm{MF}}-\mu_+ ) \ge c \mathcal{N}_\perp
\end{equation*} 
on $L^2(\mathbb{R}^{dN})$ with $c>0$, and because $h_\mathrm{MF}$ commutes with $\mathcal{N}_\perp$.
%
%


To prove~\eqref{eq:apriori_epsilon} we define the operators
\begin{equation*}
S:=\lambda\sum_{j=1}^Nw*|u_+|^2(x_j)-\frac{\lambda}{N-1} \sum_{i<j} w(x_i-x_j)+E(N) -N\mu_+
\end{equation*}
and 
\begin{equation*}
\begin{split}
P_{j}=\ket{u_+}\bra{u_+}_j+\ket{u_-}\bra{u_-}_j\qquad , \qquad 
P_{j}^\perp=\mathbbm{1}-P_{j}
\end{split}
\end{equation*}
with $j=1,\dots,N$. The latter project a single particle in (or out) the two-modes subspace.
We also denote by $h_{\mathrm{MF},j}$ the operator that acts as $h_\mathrm{MF}$ on the $j$-th variable and as the identity on all the others. We then have 
\begin{equation} \label{eq:square_BEC_aim}
  \begin{split}
\big\langle  \mathcal{N}_\perp\mathrm{d}\Gamma_\perp\big( h_{\mathrm{MF}}-\mu_+ \big)\big\rangle_{\psi_{\mathrm{gs}}}=    
\big\langle  \mathcal{N}_\perp\sum_{j=1}^N\big( h_{\mathrm{MF},j}-\mu_+ \big)\big\rangle_{\psi_{\mathrm{gs}}}=\langle \mathcal{N}_\perp S\rangle_{ \psi_{\mathrm{gs}}}=N\langle P_{1}^\perp S\rangle_{ \psi_{\mathrm{gs}}}
\end{split}
\end{equation}
where we have used $H_N \psi_{\mathrm{gs}}   = E(N) \psi_{\mathrm{gs}} $ in the second equality
and  the fact that $\psi_{\mathrm{gs}}$ is symmetric under permutations of variables in the last one.
We split the operator $S$ into the part which commutes with $P_{1}^\perp$ and the part which does not, according to
\begin{equation*}
S=S_a+S_b
\end{equation*}
where
\begin{equation*}
  S_a:=\lambda \sum_{j=2}^N w*|u_+|^2(x_j)-\frac{\lambda}{N-1}\sum_{2\le i <j\le N} w(x_i-x_j) 
  + E_N -N\mu_+
\end{equation*}
and
\begin{equation*}
S_b:=\lambda w*|u_+|^2(x_1)-\frac{\lambda}{N-1} \sum_{j=2}^Nw(x_1-x_j).
\end{equation*}
We will estimate separately the contributions of the terms containing $S_a$ and $S_b$ inside \eqref{eq:square_BEC_aim}. For the contribution of the term containing $S_a$ we use \eqref{eq:onsager} for $N-1$ variables, that is,
\begin{equation*}
\frac{\lambda}{N-1} \sum_{2\le i <j\le N} w(x_i-x_j)\ge -\lambda \frac{N-1}{2} w_{++++}+\lambda \sum_{j=2}^N w*|u_+|^2(x_j)-C.
\end{equation*}
We also take advantage of the upper bound
\begin{equation*}
\langle H_N\rangle_{ \psi_{\mathrm{gs}}} \le N \mu_+-\lambda\frac{N}{2}w_{++++},
\end{equation*}
which follows immediately from \eqref{eq:simple_upper_bound} if we recall the expression \eqref{eq:mu_+} of $\mu_+$. The two last formulae yield
\begin{equation*}
S_a\le C.
\end{equation*}
Since $S_a$ commutes with $P_{1}^\perp$ we have, using also \eqref{eq:BEC},
\begin{equation} \label{eq:apriori_epsilon_past_a}
N\langle P_{1}^\perp S_a\rangle_{ \psi_{\mathrm{gs}}} \le C \langle \mathcal{N}_\perp\rangle_{\psi_{\mathrm{gs}}}\le C.
\end{equation}

To estimate the contribution of $S_b$, we decompose
\begin{equation}\label{eq:three terms}
\begin{split}
N\langle P_{1}^\perp S_b\rangle_{ \psi_{\mathrm{gs}}}=\;& \lambda N\Big\langle P_{1}^\perp\Big[ w*|u_+|^2(x_1)-w(x_1-x_2) \Big]\Big\rangle_{\psi_{\mathrm{gs}}}\\
=\;& \lambda N\Big\langle P_{1}^\perp P_{2}^\perp\Big[ w*|u_+|^2(x_1)-w(x_1-x_2) \Big]\Big\rangle_{ \psi_{\mathrm{gs}}}\\
&+\lambda N\Big\langle  P_{1}^\perp P_{2}\Big[ w*|u_+|^2(x_1)-w(x_1-x_2) \Big]P_{2}^\perp\Big\rangle_{ \psi_{\mathrm{gs}}}\\
&+\lambda N\Big\langle  P_{1}^\perp P_{2}\Big[ w*|u_+|^2(x_1)-w(x_1-x_2) \Big]P_{2}\Big\rangle_{ \psi_{\mathrm{gs}}}\\
=:\;&\mathrm{Term}_1+\mathrm{Term}_2+\mathrm{Term}_3.
\end{split}
\end{equation}
We estimate the last three terms separately. For the first one we use the
Cauchy-Schwarz inequality and the fact that $w$ and $w*|u_+|^2$ are bounded to get
\begin{equation*}
\begin{split}
  \big| \mathrm{Term}_1 \big| \le\;& CN \big\langle P_{1}^\perp P_{2}^\perp\big\rangle^{1/2}_{\psi_{\mathrm{gs}}}
  = CN \Big\langle P_{1}^\perp\frac{1}{N-1}\sum_{j=2}^N P_{j}^\perp\Big\rangle^{1/2}_{\psi_{\mathrm{gs}}}\\
\le \;& C\langle \mathcal{N}_\perp^2\rangle_{ \psi_{\mathrm{gs}}}^{1/2} \le\;  \delta\langle \mathcal{N}_\perp^2\rangle_{ \psi_{\mathrm{gs}}}+C_\delta
\end{split}
\end{equation*}
with $\delta >0$ arbitary, where the last bound follows from $\sqrt{x} \leq \delta x + 1/(4 \delta)$ for any $x>0$. For the second term in~\eqref{eq:three terms} we argue similarly to get
\begin{equation*}
\big| \mathrm{Term}_2 \big| \le CN \langle P_{1}^\perp\rangle_{ \psi_{\mathrm{gs}}}^{1/2}\langle P_{2}^\perp\rangle_{ \psi_{\mathrm{gs}}}^{1/2} = C\langle\mathcal{N}_\perp\rangle_{ \psi_{\mathrm{gs}}}\le C\,,
\end{equation*}
where the last bound follows from~\eqref{eq:BEC}.

The third term in~\eqref{eq:three terms} is more delicate, since it contains only one $P_{j}^\perp$. We write 
\begin{equation} \label{eq:term_2_partial}
\begin{split}
\mathrm{Term}_3=\;&\lambda N\Big\langle P_{1}^\perp\; \ket{u_-}\bra{u_-}_2\; w*\big(|u_+|^2-|u_-|^2\big)(x_1) \Big\rangle_{ \psi_{\mathrm{gs}}}\\
&-\lambda N\Big\langle P_{1}^\perp \Big( \ket{u_+}\bra{u_-}_2+\ket{u_-}\bra{u_+}_2 \Big) w*(u_+u_-)(x_1) \Big\rangle_{\psi_{\mathrm{gs}}}\\
=:\;&\mathrm{Term}_{3,1}+\mathrm{Term}_{3,2}\,,
\end{split}
\end{equation}
where we have used several times the operator identity
\begin{equation*}
\ket{u}\bra{u}_2 \;w(x_1-x_2)\;\ket {v}\bra{v}_2= \ket{u}\bra{v}_2\;w*(\overline{u} v)(x_1).
\end{equation*}
Use the Cauchy-Schwarz and Young inequalities, then the $L^1$-estimate \eqref{eq:L1_convergence}, and then the a priori estimate \eqref{eq:N_-_excitations}, we find
\begin{equation*}
\begin{split}
\left|\mathrm{Term}_{3,1}\right|\le\;& C N \big\langle P_{1}^\perp\big\rangle_{ \psi_{\mathrm{gs}}}^{1/2} \,\big\langle\, \ket {u_-}\bra{u_-}_2\,\big\rangle_{ \psi_{\mathrm{gs}}}^{1/2} \big\||u_+|^2-|u_-|^2\big\|_{L^1}\\
\le\;&C_\varepsilon T^{1-\varepsilon/2} \langle \mathcal{N}_-\rangle_{\psi_{\mathrm{gs}}}^{1/2}\langle \mathcal{N}_\perp\rangle_{\psi_{\mathrm{gs}}}^{1/2}\\
\le\;& C_\varepsilon T^{1-\varepsilon/2} \min\Big\{N,\frac{1}{ T^{1+\varepsilon}}\Big\}^{1/2}\langle \mathcal{N}_\perp\rangle_{\psi_{\mathrm{gs}}}^{1/2}\\
\le\;&C_\varepsilon T^{1/2-\varepsilon}\,\langle \mathcal{N}_\perp\rangle_{\psi_{\mathrm{gs}}}^{1/2} \le C_\eps T^{1/2-\varepsilon}.
\end{split}
\end{equation*}
Recalling that
\begin{equation*}
\sum_{j=1}^N \Big(\ket{u_+}\bra{u_-}_j + \ket{u_-}\bra{u_+}_j\Big)=a^\dagger_+a_-+a^\dagger_-a_+ = \mathcal{N}_1-\mathcal{N}_2
\end{equation*}
one may write
\begin{equation*}
\begin{split}
-\mathrm{Term}_{3,2}=\;& \frac{N}{N-1} \Big\langle P_{1}^\perp w*(u_+u_-)(x_1) \big( \mathcal{N}_1-\mathcal{N}_2 \big)  \Big\rangle_{\psi_{\mathrm{gs}}}\\
&- \frac{N}{N-1} \Big\langle P_{1}^\perp w*(u_+u_-)(x_1)  \big( \ket{u_+}\bra{u_-}_1+\ket{u_-}\bra{u_+}_1\big)    \Big\rangle_{\psi_{\mathrm{gs}}}.
\end{split}
\end{equation*}
The second summand is clearly bounded by a constant and thus we include it into the error. For the first one we write, using the Cauchy-Schwarz inequality and the boundedness of $w*(u_+u_-)$,
\begin{equation*}
  \frac{N}{N-1}\Big| \Big\langle P_{1}^\perp 
  w*(u_+u_-)(x_1)  \big( \mathcal{N}_1-\mathcal{N}_2 \big)   \Big\rangle_{\psi_{\mathrm{gs}}}\Big|
\le C  \big\langle P_{1}^\perp \big\rangle_{\psi_{\mathrm{gs}}}^{1/2}\, \Big\langle \big( \mathcal{N}_1-\mathcal{N}_2 \big)^2\Big\rangle_{ \psi_{\mathrm{gs}}}^{1/2}.
\end{equation*}
We finally get
\begin{equation*}
\begin{split}
  \left|\mathrm{Term}_{3,2}\right|\le\;&
  C \langle  \mathcal{N}_\perp\rangle_{ \psi_{\mathrm{gs}}}^{1/2}\,
  \bigg( \frac{\langle (\mathcal{N}_1-\mathcal{N}_2 )^2 \rangle_{ \psi_{\mathrm{gs}}}}{N} \bigg)^{1/2}
\leq  C N^{-1/2} \left\langle\left(\mathcal{N}_1-\mathcal{N}_2\right)^2\right\rangle_{ \psi_{\mathrm{gs}}}^{1/2}\,,
\end{split}
\end{equation*}
where we have used \eqref{eq:BEC} in the last bound.
All in all we proved
\begin{equation*}
\big| \mathrm{Term}_3 \big| \le  C N^{-1/2} \left\langle\left(\mathcal{N}_1-\mathcal{N}_2\right)^2\right\rangle_{ \psi_{\mathrm{gs}}}^{1/2} +C,
\end{equation*}
and therefore
\begin{equation} \label{eq:apriori_epsilon_past_b}
	N\langle P_{1}^\perp S_b\rangle_{ \psi_{\mathrm{gs}}} \le \delta \langle\mathcal{N}_\perp^2\rangle_{ \psi_{\mathrm{gs}}}+ \frac{C}{{N}}\big\langle \big(\mathcal{N}_1-\mathcal{N}_2\big)^2\big\rangle_{ \psi_{\mathrm{gs}}}+C_\delta.
\end{equation}
The annouced bound \eqref{eq:apriori_epsilon} then follows from \eqref{eq:apriori_epsilon_past_a} and \eqref{eq:apriori_epsilon_past_b}. 
We deduce \eqref{eq:second_moment_intermediate} by choosing $\delta$ small enough.
Plugging \eqref{eq:second_moment_intermediate} inside \eqref{eq:apriori_epsilon} yields \eqref{eq:apriori_N_h} as well.
\end{proof}
%
%

\begin{proof}[Proof of Proposition \ref{lemma:apriori_variance}]
We combine Proposition \ref{prop:bogoliubov} with a computation similar to Proposition~\ref{lemma:2mode_upper} to obtain an energy upper bound. For a corresponding lower bound we use Propositions~\ref{lemma:2mode_lower} and~\ref{lemma:2mode_upper} to control the two-mode energy, and argue that the excitation energy must be uniformly bounded with respect to $N$. 
%

Recall the trial state $\psi_{\mathrm{gauss}}$ from~\eqref{eq:gaussian_trial}. We apply~\eqref{eq:derivation_bogoliubov} with $\Phi=\mathcal{U}_N\psi_{\mathrm{gauss}}$. Since $\psi_{\mathrm{gauss}}$ has no excitation in the subspace $P_\pm^\perp \gH ^N $
($a_m\psi_{\mathrm{gauss}}=0$ for any $m\ge 3$), we get
\begin{equation*}
\qquad \mathcal{N}_\perp \mathcal{U}_N\psi_{\mathrm{gauss}}=\mathbb{H}\,\mathcal{U}_N\psi_{\mathrm{gauss}}=0.
\end{equation*}
The expectation  of the linear terms in $a_m$ in the left hand side of \eqref{eq:derivation_bogoliubov}  also vanish for $\psi=\psi_{\mathrm{gauss}}$.
Furthermore, we will use
\begin{equation*}
\frac{1}{N}\big\langle \mathfrak{D}^2 \big\rangle_{\mathcal{U}_N \psi_{\mathrm{gauss}}}
=  \frac{1}{N}\Big\langle{\big(\mathcal{N}_1-\mathcal{N}_2\big)^2}\Big\rangle_{ \psi_{\mathrm{gauss}}} \le \; \frac{1}{N} \sigma_N^2 = \sqrt{\mu_--\mu_+} \le C_\varepsilon T^{1/2-\varepsilon},
\end{equation*}
where the first bound was proven in \eqref{eq:upper_bound_variance}.
By the variational principle for the ground state problem of $H_N$ we find
\begin{equation} \label{eq:rough_upper_bound}
\begin{split}
E(N) \le \langle H_N\rangle_{\psi_{\mathrm{gauss}}}\le\;& \langle H_{2\mathrm{-mode}}\rangle_{\psi_{\mathrm{gauss}}} +\frac{C}{N^{1/4}}\\
\le\;&E_0+E^w_N+N\frac{\mu_+-\mu_-}{2}+C_\varepsilon T^{1/2-\varepsilon}+\frac{C}{N^{1/4}}\\
\le\;& E_0+E^w_N+N\frac{\mu_+-\mu_-}{2}+C
\end{split}
\end{equation}
applying successively \eqref{eq:derivation_bogoliubov} and \eqref{eq:upper_bound_2mode}. 

For a lower bound we apply \eqref{eq:derivation_bogoliubov} with $\Phi=\Phi_{\mathrm{gs}} =: \mathcal{U}_N\psi_{\mathrm{gs}}$,
obtaining
\begin{equation*}
  \big| E(N) - \langle H_{2\mathrm{-mode}}+ \mu_+\mathcal{N}_\perp \rangle_{\psi_{\mathrm{gs}}}
  - \langle \mathbb{H} \rangle_{\Phi_{\mathrm{gs}}} - \langle \text{linear terms}\rangle_{\Phi_{\mathrm{gs}}} \big|
  \leq \text{error terms.}
\end{equation*}
In this inequality, 
\begin{itemize}
\item[(i)] The error terms are bounded by
  using \eqref{eq:N_-_excitations}, the identity
$\langle \mathfrak{D}^2 \rangle_{\Phi_{\mathrm{gs}}} = \langle ( \mathcal{N}_1 - \mathcal{N}_2 )^2 \rangle_{\psi_{\mathrm{gs}}}$, and  the inequality $\langle \mathcal{N}_{-} \rangle_{\psi_{\mathrm{gs}}} \le N$, yielding
  \begin{equation*}
    \text{error terms}
    \le
    \Big( \frac{C}{N^{1/4}} 
    + C_\varepsilon T^{1 - \varepsilon}  \Big)
  \bigg( \frac{\langle ( \mathcal{N}_1 - \mathcal{N}_2 )^2 \rangle_{\psi_{\mathrm{gs}}}}{N} + 1 \bigg)\,.
  \end{equation*}

\item[(ii)] The expectation of $ H_{2\mathrm{-mode}} + \mu_{+} \mathcal{N}_\perp$ is bounded from below by using the lower bound of Proposition~\eqref{lemma:2mode_lower},
\begin{align*}
  \langle H_{2\mathrm{-mode}}+ \mu_+\mathcal{N}_\perp \rangle_{\psi_{\mathrm{gs}}}
  \ge & \; E_0+E^w_N+N\frac{\mu_+-\mu_-}{2} +\frac{\lambda U}{N-1} \big\langle (\mathcal{N}_1-\mathcal{N}_2)^2 \big\rangle_{\psi_{\mathrm{gs}}}
  \\
& \;   -C_\varepsilon T^{1-\varepsilon} \big\langle \mathcal{N}_\perp \big\rangle_{\psi_{\mathrm{gs}}}\,.
\end{align*}
Thanks to~\eqref{eq:BEC},  the term in the second line can be replaced by $-C$. 

\item[(iii)] The expectation of $\mathbb{H}$ is bounded from below
  using the fact that $\mathbb{H}$ is bounded below independently of $N$ (this can easily seen as in \cite[Equation (A.6)]{LewNamSerSol-13}, keeping in mind that $h_\mathrm{MF}-\mu_+$ has a finite gap on the excited subspace).

\item[(iv)] The expectation of linear terms can be bounded by using the Cauchy-Schwarz inequality as follows
\begin{equation*}
  \begin{split}
   \Bigg|  \frac{\lambda}{\sqrt{2(N-1)}}&\sum_{m\ge3}\Big[ w_{+1-m}\big\langle \Theta a_m\mathfrak{D}+\mathrm{h.c.}\big\rangle_{\Phi_{\mathrm{gs}}}+w_{+2-m}\big\langle \Theta^{-1}a_m\mathfrak{D}+\mathrm{h.c.}\big\rangle_{\Phi_{\mathrm{gs}}} \Big] \Bigg| \\
    \le\;& \frac{2 \lambda}{\sqrt{2(N-1)}} \Big(\sum_{m\ge3}|w_{+1-m}|^2\Big)^{1/2}\Big(\sum_{m\ge3}\big\|a_m \Phi_{\mathrm{gs}}\big\|^2\Big)^{1/2}
    \big\| \mathfrak{D} \Theta^{-1} \Phi_{\mathrm{gs}} \big\|
    \\
    & + \frac{2\lambda}{\sqrt{2(N-1)}} \Big(\sum_{m\ge3}|w_{+2-m}|^2\Big)^{1/2}\Big(\sum_{m\ge3}\big\|a_m \Phi_{\mathrm{gs}}\big\|^2\Big)^{1/2}
    \big\| \mathfrak{D} \Theta \Phi_{\mathrm{gs}} \big\|.
\end{split}
\end{equation*}
The sums of $|w_{+i-m}|^2$ are bounded by constants thanks to \eqref{eq:sum_one_index}. The other sums equal $\langle \mathcal{N}_\perp\rangle_{ \psi_{\mathrm{gs}}}$, for which we use \eqref{eq:BEC}. Finally, thanks to the commutation relation~\eqref{eq:commutation_relations} one has
\begin{equation*}
  \| \mathfrak{D} \Theta^{\pm 1} \Phi_{\mathrm{gs}} \|^2 = \langle ( \mathcal{N}_1 - \mathcal{N}_2 \pm 1)^2\rangle_{\psi_{\mathrm{gs}}}
  \le 2 \langle ( \mathcal{N}_1 - \mathcal{N}_2 )^2\rangle_{\psi_{\mathrm{gs}}} + 2 
\end{equation*}
 and thus
\begin{equation*}
\big| \langle \text{ Linear terms } \rangle_{\Phi_{\mathrm{gs}}} \big|
\le \frac{\delta}{N}\big\langle\big( \mathcal{N}_1-\mathcal{N}_2 \big)^2\big\rangle_{ \psi_{\mathrm{gs}}} + C_\delta
\end{equation*}
for any $\delta>0$ arbitrarily small.

\end{itemize}

Overall we find
\begin{equation*}
\begin{split}
  E_N \ge\;&
  E_0+E^w_N+N\frac{\mu_+-\mu_-}{2}+\frac{c}{N-1}\big\langle\big( \mathcal{N}_1-\mathcal{N}_2 \big)^2\big\rangle_{ \psi_{\mathrm{gs}}} -C,
\end{split}
\end{equation*}
for a suitable small enough positive constant $c$. Notice that we used the fact that the constant $U$ in \eqref{eq:2mode_constants} satisfies $U\ge C>0$ independently of $N$ thanks to the estimates of Lemma \ref{lemma:w_coefficients}, Comparing this with \eqref{eq:rough_upper_bound} gives the desired ~\eqref{eq:apriori_variance}.
\end{proof}

\section{Shifted Hamiltonians and lower bound} \label{sect:minimization}

\noindent\textbf{Shifted CCR.} Let us introduce the notation
\begin{equation}\label{eq:H_shift}
\begin{split} 
\mathbb{H}_\mathrm{right,shift}^{(M)}:=\;&\mathbb{H}_\mathrm{right}^{(M)}+\frac{\lambda}{\sqrt{2(N-1)}}\sum_{1\le \alpha \le M} \big\langle u_1, w*(u_+u_-) u_{r,\alpha}\big\rangle\,\big( b_{r,\alpha}\mathfrak{D}+\mathrm{h.c.}\big)  \\
\mathbb{H}_\mathrm{left,shift}^{(M)}:=\;&\mathbb{H}_\mathrm{left}^{(M)}+\frac{\lambda}{\sqrt{2(N-1)}}\sum_{1\le \alpha\le M} \big\langle u_2, w*(u_+u_-) u_{\ell ,m}\big\rangle\,\big( c_{\ell,\alpha}\mathfrak{D}+\mathrm{h.c.}\big).
\end{split}
\end{equation}
The linear terms are those appearing in \eqref{eq:derivation_bogoliubov} up to a change of basis from $\{ u_m \}_{m \geq 3}$ to the right and left mode  basis $\{ u_{r ,\alpha}, u_{\ell , \alpha} \}_{\alpha \geq 1}$),  where we
have ignored the modes beyond the cutoff $M$ and small error terms, as justified in Proposition~\ref{prop:reduction_linear}.

The estimates of Propositions~\ref{prop:bogoliubov}, \ref{prop:reduction}, and \ref{prop:reduction_linear} have for consequence the lower bound
\begin{equation} \label{eq:lower_bound_with_linear}
\begin{split}
\mathcal{U}_N(H_N-H_{2\mathrm{-mode}})\mathcal{U}_N^*\ge\;& \mathbb{H}_\mathrm{right,shift}^{(M_\Lambda)}+\mathbb{H}_\mathrm{left,shift}^{(M_\Lambda)}+\mu_+\mathcal{N}_\perp-\mathrm{remainders}
\end{split}
\end{equation}
We will show in this section how to deal with the linear terms in $\mathbb{H}_\mathrm{right,shift}^{(M_\Lambda)}$ and $\mathbb{H}_\mathrm{left,shift}^{(M_\Lambda)}$. The idea is to define new shifted creation and annihilation operators $\widetilde{b}^\sharp_{r,\alpha}$ and  $\widetilde{c}^\sharp_{\ell,\alpha}$ in such a way that
$ \mathbb{H}_\mathrm{right,shift}^{(M_\Lambda)}$ and $ \mathbb{H}_\mathrm{left,shift}^{(M_\Lambda)}$ are quadratic in terms of, respectively,
$\widetilde{b}^\sharp_{r,\alpha}$ and $\widetilde{c}^\sharp_{\ell,\alpha}$,
up to a constant term. We will do this for each fixed $M$, not necessarily the $M_\Lambda$ from Proposition \ref{prop:reduction}.

From now on we will use the notation $\{r,\alpha\}$ or $\{\ell,\alpha\}$ to indicate that the mode $u_{r,\alpha}$ or $u_{\ell,\alpha}$ intervene in an expectation value. For example, for any operator $A$ on $L^2 ( {\mathbb{R}}^d)$, 
\begin{equation*}
A_{\{r,\alpha\}\{\ell,\beta\}}=\left\langle u_{r,\alpha},Au_{\ell,\beta}\right\rangle\,.
\end{equation*}
Similarly,
\begin{equation*}
w_{m\{r,\alpha\}p\{r,\beta\}}=\left\langle u_m\otimes u_{r,\alpha }, w \,u_p \otimes u_{r,\beta}\right\rangle,
\end{equation*}
and so on.

\begin{defi}[\textbf{Shifted creators and annihilators}]\mbox{}\\
	For any $\alpha \ge 1$ we define
	\begin{equation} \label{eq:tilded_operators}
	  \begin{split}
            \widetilde b_{r,\alpha}:=\;&  b _{r,\alpha}+ x_{\alpha} \mathfrak{D} \\
           \widetilde{c}^\dagger_{\ell,\alpha}:=\;& c_{\ell,\alpha}^\dagger + y_{\alpha} \mathfrak{D}
	\end{split}
\end{equation}
where $x_\alpha, y_\alpha$, $\alpha=1,\ldots , M$, are real numbers whose values will be given below.
\end{defi}

A simple calculation using the commutation relations \eqref{eq:commutation_relations}, \eqref{eq-CCR_b_c} yields

\begin{lemma}[\textbf{Commutations relations for shifted operators}]\label{lemma:approximate_CCR}\mbox{}\\
	One has
	\begin{equation} \label{eq:commutations_b_tilde}
	\begin{split}
	  [\widetilde b_{r,\alpha}, \widetilde b^\dagger _{r,\beta}]  =\;&\delta_{\alpha \beta}- x_\beta b_{r,\alpha} - x_\alpha b_{r,\beta}^\dagger \\
	  [\widetilde b_{r,\alpha}, \widetilde b_{r,\beta}] =\; & - x_\beta b_{r,\alpha} + x_\alpha b_{r,\beta}
	\end{split}
	\end{equation}
	Similar commutation relations, with straightforward adaptations, hold for the $\widetilde{c}_{\ell,\alpha}^\sharp$. 
\end{lemma}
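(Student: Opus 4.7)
The proof will be a direct computation: expand each commutator by bilinearity and feed in the canonical commutation relations~\eqref{eq-CCR_b_c} together with the commutator identities~\eqref{eq:commutation_relations}. The single preliminary step is to determine how $\mathfrak{D}$ acts on the translated operators $b^\sharp_{r,\alpha}$.

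First I would compute $[\mathfrak{D}, b_{r,\alpha}]$. Writing $b_{r,\alpha} = \Theta\, a_{r,\alpha}$, the Leibniz rule together with $[\mathfrak{D}, a_{r,\alpha}] = 0$ (inherited from $[\mathfrak{D}, a_m] = 0$ for $m\ge 3$ and the fact that $a_{r,\alpha}$ is a linear combination of the $a_m$'s) gives
\begin{equation*}
[\mathfrak{D}, b_{r,\alpha}] = [\mathfrak{D}, \Theta]\, a_{r,\alpha} = \Theta\, a_{r,\alpha} = b_{r,\alpha}.
\end{equation*}
Since $b_{r,\beta}^\dagger = a_{r,\beta}^\dagger\, \Theta^{-1}$, the analogous computation uses $[\mathfrak{D}, \Theta^{-1}] = -\Theta^{-1}$, which follows by commuting $\mathfrak{D}\Theta = \Theta(\mathfrak{D}+1)$ with $\Theta^{-1}$ on both sides, and yields $[\mathfrak{D}, b_{r,\beta}^\dagger] = -b_{r,\beta}^\dagger$.

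With these two identities, the stated relations follow by expanding
\begin{equation*}
[\widetilde b_{r,\alpha}, \widetilde b^\dagger_{r,\beta}] = [b_{r,\alpha}, b^\dagger_{r,\beta}] + x_\beta [b_{r,\alpha}, \mathfrak{D}] + x_\alpha [\mathfrak{D}, b^\dagger_{r,\beta}] + x_\alpha x_\beta [\mathfrak{D}, \mathfrak{D}]
\end{equation*}
and plugging in $[b_{r,\alpha}, b^\dagger_{r,\beta}] = \delta_{\alpha\beta}$, $[\mathfrak{D}, \mathfrak{D}] = 0$. The computation of $[\widetilde b_{r,\alpha}, \widetilde b_{r,\beta}]$ is identical, now with $[b_{r,\alpha}, b_{r,\beta}] = 0$. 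For the left-mode operators the only change is that $c_{\ell,\alpha} = \Theta^{-1} a_{\ell,\alpha}$ and $c^\dagger_{\ell,\alpha} = a^\dagger_{\ell,\alpha}\Theta$, so $\Theta$ and $\Theta^{-1}$ are swapped relative to $b^\sharp_{r,\alpha}$, giving $[\mathfrak{D}, c_{\ell,\alpha}] = -c_{\ell,\alpha}$ and $[\mathfrak{D}, c^\dagger_{\ell,\beta}] = +c^\dagger_{\ell,\beta}$, which produces the announced sign changes in the corresponding commutation relations.

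There is no genuine obstacle here: once the action of $\mathfrak{D}$ on $b_{r,\alpha}$ and $b^\dagger_{r,\beta}$ is identified, the rest is bookkeeping. The only point requiring care is to keep track of the shift-by-one convention $[\mathfrak{D}, \Theta] = \Theta$ and the correct placement of $\Theta$ versus $\Theta^{-1}$ in the respective definitions of $b^\sharp$ and $c^\sharp$ in~\eqref{eq:b'sc's}.
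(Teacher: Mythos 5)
Your computation is correct and is exactly the "simple calculation using the commutation relations \eqref{eq:commutation_relations} and \eqref{eq-CCR_b_c}" that the paper invokes without writing out: the key identities $[\mathfrak{D},b_{r,\alpha}]=b_{r,\alpha}$ and $[\mathfrak{D},b^\dagger_{r,\beta}]=-b^\dagger_{r,\beta}$, followed by bilinear expansion, are the intended argument. The sign bookkeeping for the $c^\sharp_{\ell,\alpha}$ operators is also handled correctly.
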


We define the following quadratic Hamiltonians,
obtained from \eqref{eq:H_right} and \eqref{eq:H_left} by replacing the creation and annihilation operators $b^\sharp$ and $c^\sharp$
by the shifted creators and annihilators \eqref{eq:tilded_operators}, 

\begin{align} \label{eq:H_1_tilde}
  \nonumber
  \widetilde{\mathbb{H}_\mathrm{right}^{(M)}}:=\;&
  \frac{1}{2} \sum_{1\le \alpha,\beta \le  M}\Big({h_\mathrm{MF}-\mu_+}+{\lambda}K_{11}\Big)_{\{r,\alpha\}\{r,\beta\}}  \big( \widetilde{b}^\dagger _{r,\alpha} \widetilde{b}_{r,\beta} +
  \widetilde b_{r,\alpha}\widetilde b_{r,\beta}^\dagger \big) \\
  &+\frac{\lambda}{2}\sum_{1\le \alpha,\beta \le  M}\big( K_{11} \big)_{\{r,\alpha\}\{r,\beta\}}\big(\widetilde b^\dagger _{r,\alpha}\widetilde b^\dagger _{r,\beta}+\widetilde b_{r,\alpha}\widetilde b_{r,\beta}\big)\\ \label{eq:H_2_tilde}
  \nonumber
  \widetilde{\mathbb{H}_\mathrm{left}^{(M)}}:=\;& \frac{1}{2} \sum_{1\le \alpha,\beta \le  M}\Big({h_\mathrm{MF}-\mu_+}+{\lambda}K_{22}\Big)_{\{\ell,\alpha\}\{\ell,\beta\}}
  \big( \widetilde{c}^\dagger _{\ell,\alpha}\widetilde c_{\ell,\beta} + \widetilde{c}_{\ell,\alpha}\widetilde{c}_{\ell,\beta}^\dagger \big) \\
  &+\frac{\lambda}{2}\sum_{1\le \alpha,n\le  M}\big( K_{22} \big)_{\{\ell,\alpha\}\{\ell,\beta\}}
  \big(\widetilde{c}^\dagger _{\ell,\alpha}\widetilde{c}^\dagger _{\ell,\beta}+\widetilde{c}_{\ell,\alpha}\widetilde{c}_{\ell,\beta}\big),
\end{align}
where we have ignored the modes beyond the cutoff $M$ and symmetrized the terms involving one creator and one annihilator.

Let us introduce the orthogonal projections
\begin{align}
P_{r,\le M}:=\;&P_r P_{\le M}=P_{\le M}P_r= \sum_{1 \le \alpha\le M}\ket{u_{r,\alpha}}\bra{u_{r,\alpha}} \\
P_{\ell,\le M}:=\;&P_\ell P_{\le M}=P_{\le M}P_\ell= \sum_{1 \le \alpha\le M}\ket{u_{\ell,\alpha}}\bra{u_{\ell,\alpha}}.
\end{align}
We will show the following result.

\begin{proposition}[\textbf{Shifted Hamiltonians}]\label{prop:shift}\mbox{}\\
	For any $\Phi\in\ell^2(\mathfrak{F}_\perp)$ we have
	\begin{equation} \label{eq:shift_1}
	\begin{split}
	  \bigg|\big\langle & \mathbb{H}_\mathrm{right,shift}^{(M)}\big\rangle_{\Phi}-\left\langle \widetilde{\mathbb{H}_\mathrm{right}^{(M)}}\right\rangle_{ \Phi}
           + \frac{1}{2} \tr \big( P_{r,\le M} (h_\mathrm{MF} - \mu_+ + \lambda K_{11} ) \big)\\
 	   &+\frac{\lambda^2}{2(N-1)} \Big\langle u_1,\, K_{11} W_{r,\le M}  K_{11} \, u_1\Big\rangle \left\langle\mathfrak{D}^2\right\rangle_{\Phi}\bigg|
	\; \le\;\frac{C}{\sqrt{N}}\langle \mathcal{N}_\perp\rangle_{\Phi}+\frac{C_\varepsilon T^{1/2-\varepsilon}}{N}\left\langle \mathfrak{D}^2\right\rangle_{ \Phi}
	\end{split}
	\end{equation}
        where $W_{r,\le M}$ is defined by
        \begin{equation} \label{eq-def_W_r}
          W_{r,\le M} :=  P_{r,\le M}\ \left(P_{r,\le M} \big({h_\mathrm{MF}-\mu_+}+2\lambda K_{11}\big)P_{r,\le M}\right)^{-1} P_{r,\le M}\,
        \end{equation}
        and we picked 
        \begin{equation}\label{eq:choice x}
	  x_{\alpha}= \frac{\lambda}{\sqrt{2(N-1)}} \left\langle u_{r,\alpha},
          W_{r,\le M} \, w*(u_+u_-) \,u_1\right\rangle.
	\end{equation}
         A similar bound holds for $\mathbb{H}_\mathrm{left,shift}^{(M)}$ upon replacing $K_{11}$ by $K_{22}$.
\end{proposition}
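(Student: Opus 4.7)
The plan is to substitute the definition $\widetilde b_{r,\alpha} = b_{r,\alpha} + x_\alpha \mathfrak{D}$ (and the analogous formula for $\widetilde b^\dagger_{r,\alpha}$) into the quadratic form $\widetilde{\mathbb{H}_\mathrm{right}^{(M)}}$, expand, and collect terms according to their degree in $\mathfrak{D}$. Denoting by $A_{\alpha\beta} := \langle u_{r,\alpha}, (h_\mathrm{MF}-\mu_++2\lambda K_{11}) u_{r,\beta}\rangle$ the symmetric matrix representing $P_{r,\le M}(h_\mathrm{MF}-\mu_++2\lambda K_{11})P_{r,\le M}$ in the basis of right modes, the expansion yields
\begin{equation*}
\widetilde{\mathbb{H}_\mathrm{right}^{(M)}} = \mathbb{H}_\mathrm{right}^{(M)} + \tfrac{1}{2}\mathrm{Tr}\big(P_{r,\le M}(h_\mathrm{MF} - \mu_+ + \lambda K_{11})\big) + \mathrm{Lin} + \langle x, A x\rangle \mathfrak{D}^2,
\end{equation*}
where $\mathrm{Lin} = \tfrac{1}{2}\sum_\alpha (Ax)_\alpha \big[ (b^\dagger_{r,\alpha}+b_{r,\alpha})\mathfrak{D} + \mathfrak{D}(b_{r,\alpha}+b^\dagger_{r,\alpha})\big]$; the constant $\tfrac{1}{2}\mathrm{Tr}(\cdot)$ arises from the symmetrization of the $M$-terms and the remaining pure-quadratic contributions in $b^\sharp$ reassemble into $\mathbb{H}_\mathrm{right}^{(M)}$.

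The choice \eqref{eq:choice x} for $x_\alpha$ is precisely $x = A^{-1}c$, with $c_\alpha := \frac{\lambda}{\sqrt{2(N-1)}}\langle u_1, w*(u_+u_-) u_{r,\alpha}\rangle$ the coefficients appearing in $\mathbb{H}_\mathrm{right,shift}^{(M)}-\mathbb{H}_\mathrm{right}^{(M)}$. Using the commutators $[\mathfrak{D},b_{r,\alpha}]=b_{r,\alpha}$ and $[\mathfrak{D},b^\dagger_{r,\alpha}]=-b^\dagger_{r,\alpha}$ to reshuffle operators, one obtains
\begin{equation*}
\mathrm{Lin} - \sum_\alpha c_\alpha \big(b_{r,\alpha}\mathfrak{D} + \mathfrak{D} b^\dagger_{r,\alpha}\big) = -\tfrac{1}{2}\sum_\alpha c_\alpha \big(b_{r,\alpha}+b^\dagger_{r,\alpha}\big) =: -R,
\end{equation*}
so that $\widetilde{\mathbb{H}_\mathrm{right}^{(M)}}$ differs from $\mathbb{H}_\mathrm{right,shift}^{(M)} + \tfrac{1}{2}\mathrm{Tr}(\cdot) + \langle x,Ax\rangle\mathfrak{D}^2$ only by a pure linear-in-$b$ remainder $-R$.

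For the coefficient of $\mathfrak{D}^2$ I would next write
\begin{equation*}
\langle x, Ax\rangle = c^\top A^{-1}c = \frac{\lambda^2}{2(N-1)}\big\langle (w*(u_+u_-))u_1,\, W_{r,\le M}\, (w*(u_+u_-))u_1\big\rangle,
\end{equation*}
and replace $(w*(u_+u_-))u_1$ by $K_{11}u_1$. Since $u_+u_- = \tfrac{1}{2}(u_1^2-u_2^2)$ and $2K_{11}u_1 = (w*u_1^2)u_1$, the difference equals $\tfrac{1}{2}(w*u_2^2)u_1$, whose squared $L^2$-norm is bounded by $\|w\|_\infty\, w_{1212} \leq C_\varepsilon T^{1-\varepsilon}$ by Lemma~\ref{lemma:w_coefficients}. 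Combined with the uniform bound $\|W_{r,\le M}\|_{\rm op} \leq (\mu_3-\mu_+)^{-1}\leq C$ (coming from the uniform spectral gap of $h_\mathrm{MF}$ above $u_+,u_-$, cf.\ Theorem~\ref{thm:onebody}), this produces an error of order $T^{1/2-\varepsilon}/N$ in the $\mathfrak{D}^2$-coefficient, i.e.\ the stated $\tfrac{C_\varepsilon T^{1/2-\varepsilon}}{N}\langle \mathfrak{D}^2\rangle_\Phi$ contribution.

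It remains to control the expectation of the pure linear remainder $R$. By Cauchy--Schwarz and Bessel's inequality, $\sum_\alpha |c_\alpha|^2 \leq \frac{\lambda^2}{2(N-1)}\|(w*(u_+u_-))u_1\|_{L^2}^2 \leq C_\varepsilon T^{2-\varepsilon}/N$ via \eqref{eq:norm_w_12}, while $|\langle \Phi, b^\sharp_{r,\alpha}\Phi\rangle| \leq \sqrt{\langle \mathcal{N}_{r,\alpha}\rangle_\Phi}$. Cauchy--Schwarz over $\alpha$ then gives
\begin{equation*}
|\langle R\rangle_\Phi| \leq C_\varepsilon \frac{T^{1-\varepsilon/2}}{\sqrt{N}}\sqrt{\langle \mathcal{N}_\perp\rangle_\Phi} \leq \frac{C}{\sqrt{N}}\langle \mathcal{N}_\perp\rangle_\Phi
\end{equation*}
after an elementary AM--GM step (using $T\le 1$). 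The main obstacle is the careful bookkeeping: one must track all commutator corrections arising from the $M$-, $K_{11}$-, and cross terms in the expansion, and verify that with this specific choice of $x_\alpha$ the entire family of linear-times-$\mathfrak{D}$ contributions collapses to $\sum_\alpha c_\alpha(b_{r,\alpha}\mathfrak{D}+\mathfrak{D} b^\dagger_{r,\alpha})$ plus the controllable remainder $R$. The left-well analogue follows by exchanging the roles of right/left modes and replacing $K_{11}$ by $K_{22}$, using the reflection symmetry $u_1\leftrightarrow u_2$ of the problem.
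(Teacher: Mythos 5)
Your proposal follows essentially the same route as the paper: substitute $\widetilde b_{r,\alpha}=b_{r,\alpha}+x_\alpha\mathfrak{D}$, collect terms by degree in $\mathfrak{D}$, choose $x=A^{-1}c$ to cancel the mixed linear-times-$\mathfrak{D}$ terms, identify the trace constant from symmetrization and the $\langle x,Ax\rangle\mathfrak{D}^2$ contribution, control the leftover pure-linear term by Cauchy--Schwarz, and finally trade $w*(u_+u_-)$ for $K_{11}$ at a cost $C_\varepsilon T^{1/2-\varepsilon}/N$ in the $\mathfrak{D}^2$-coefficient. The bookkeeping of commutator corrections and the identification of the residual $\pm\frac12\sum_\alpha c_\alpha(b_{r,\alpha}+b^\dagger_{r,\alpha})$ agree with the paper's expansion.

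One step is wrong, though it does not sink the argument: in bounding $\sum_\alpha|c_\alpha|^2$ you invoke \eqref{eq:norm_w_12} to claim $\|(w*(u_+u_-))u_1\|_{L^2}^2\le C_\varepsilon T^{2-\varepsilon}$. Estimate \eqref{eq:norm_w_12} concerns $w*(u_1u_2)$, and $u_1u_2=\tfrac12(u_+^2-u_-^2)$ is indeed small; but $u_+u_-=\tfrac12(u_1^2-u_2^2)$ is \emph{not} small, so $w*(u_+u_-)$ is merely $O(1)$ in $L^\infty$ and your claimed extra factor $T^{2-\varepsilon}$ is spurious. The correct bound is simply $\sum_\alpha|c_\alpha|^2\le C/N$ (Bessel with the bounded function $(w*(u_+u_-))u_1$, as in \eqref{eq:sum_one_index}), which still yields $|\langle R\rangle_\Phi|\le CN^{-1/2}\langle\mathcal{N}_\perp\rangle_\Phi^{1/2}$ and hence the stated error term; note that your correct use of the smallness of $w*(u_1u_2)$ (equivalently of $w_{1212}$) in the \emph{replacement} step $w*(u_+u_-)\rightsquigarrow K_{11}$ is exactly where the difference $\tfrac12(w*u_2^2)u_1$ appears, and there the $T$-gain is legitimate.
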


Thus the quadratic Hamiltonian $\mathbb{H}_\mathrm{right}^{(M)}$ together with the linear terms coincides, up to remainders, with $\widetilde{\mathbb{H}_\mathrm{right}^{(M)}}$ minus a constant term given by the trace in \eqref{eq:shift_1}
and  minus a term proportional to $\lambda^2 \mathfrak{D}^2$. 
The latter term will be absorbed using the variance term from $H_{2\mathrm{-mode}}$ which is proportional to $\lambda$, and $\widetilde{\mathbb{H}_\mathrm{right}^{(M)}}$ minus the constant term  will give the correct Bogoliubov energy in the lower bound.
Note that the trace in the constant term is finite because
we are restricting ourself to modes $\alpha \leq M$.

	\begin{proof}
          Using the commutation relations \eqref{eq:commutations_b_tilde} and
           $[ \widetilde{b}_{r,\alpha}, \mathfrak{D} ] = [ \Theta , \mathfrak{D} ] a_{r,\alpha} = - b_\alpha$, one finds that
          $\mathbb{H}_{\mathrm{right,shift}}^{(M)}$ is given in terms of the shifted creators and annihilators $\widetilde b^\sharp$
          by
	\begin{equation} \label{eq:expanded_shift}
	\begin{split}
&	  \mathbb{H}_{\mathrm{right,shift}}^{(M)}=
          \frac{1}{2} \sum_{1\le \alpha,\beta \le  M}\big({h_\mathrm{MF}-\mu_+}+{\lambda}K_{11}\big)_{\{r,\alpha\}\{r,\beta\}}
          \big( \widetilde{b}^\dagger _{r,\alpha}\widetilde{b}_{r,\beta} + \widetilde{b}_{r,\alpha}\widetilde{b}_{r,\beta}^\dagger \big)
          \\
	  &+\frac{\lambda}{2}\sum_{1\le \alpha,\beta \le  M}\big( K_{11}\big)_{\{r,\alpha\}\{r,\beta\}} \big(\widetilde b^\dagger _{r,\alpha}\widetilde{b}^\dagger _{r,\beta}
          +\widetilde{b}_{r,\alpha}\widetilde b_{r,\beta}\big)
          -  \frac{1}{2} \tr \big( P_{r,\le M} ( h_\mathrm{MF} - \mu_+ + \lambda K_{11} ) \big)
          \\
	&-\sum_{1\le \alpha \le M}\Big(\sum_{1\le \beta \le M}\big({h_\mathrm{MF}-\mu_+}+2\lambda K_{11}\big)_{\{r,\alpha\}\{r,\beta\}}x_{\beta}
          -\frac{\lambda}{\sqrt{2(N-1)}} w_{+1-\{r,\alpha\}}\Big) \Big( \widetilde{b}_{r,\alpha}^\dagger \mathfrak{D} + \mathfrak{D}  \widetilde{b}_{r,\alpha} \Big) \\        
        &+\sum_{1\le \alpha  \le M}\bigg( \sum_{1\le \beta \le M}\big({h_\mathrm{MF}-\mu_+}+2\lambda K_{11}\big)_{\{r,\alpha\}\{r,\beta\}} x_{\beta}
         -\frac{2 \lambda}{\sqrt{2(N-1)}} w_{+1-\{r,\alpha\}}\bigg) x_\alpha \mathfrak{D}^2 \\
	 &+\frac{1}{2} \sum_{1\le \alpha \le  M}\bigg(  \sum_{1\le \beta  M}\big({h_\mathrm{MF}-\mu_+}+2 \lambda K_{11}\big)_{\{r,\alpha\}\{r,\beta\}} x_\beta
          -\frac{2 \lambda}{\sqrt{2(N-1)}} w_{+1-\{r,\alpha\}}\bigg) \big( b_\alpha + b_\alpha^\dagger \big).
	\end{split}
	\end{equation}
	The first and second lines in the right hand side precisely coincide with $\widetilde{\mathbb{H}_\mathrm{right}^{(M)}}$ defined in \eqref{eq:H_1_tilde}
        minus the constant term $-\tr \big( P_{r,\le M} (h_\mathrm{MF} - \mu_+ + \lambda K_{11} )\big)/2$.
        The condition for the vanishing of the linear terms in the third line is
	\begin{equation} \label{eq:shift_choice}
	\sum_{1\le \beta\le M}\big({h_\mathrm{MF}-\mu_+}+2\lambda K_{11}\big)_{\{r,\alpha\}\{r,\beta\}} x_{\beta}=\frac{\lambda}{\sqrt{2(N-1)}} w_{+1-\{r,\alpha \}},
	\end{equation}
	which leads to~\eqref{eq:choice x}, using the projection $P_{r,\le M}$ defined in \eqref{eq:P_Lambda} and \eqref{eq-def_W_r}. With this choice, the expectation in $\Phi$ of the last line in \eqref{eq:expanded_shift} becomes
        \begin{equation*}
           R_\Phi = -\frac{\lambda}{\sqrt{2(N-1)}}  \sum_{1 \le \alpha \le M} w_{+1-\{r,\alpha\}} \big\langle  b_{r,\alpha} + b_{r,\alpha^\dagger} \big\rangle_\Phi.
	\end{equation*}
             This can be bounded with the help of the Cauchy-Schwarz inequality and the boundness of $w \ast ( u_+ u_{-} )$ as in
             the proofs of Sec.~\ref{sect:proof_bogoliubov}, that is,
       \begin{equation*}     
         | R_\Phi | \;\le\; 
         \frac{C \lambda}{\sqrt{N}} \bigg\{ \sum_{\alpha \geq 1} |  w_{+1-\{r,\alpha\}} |^2 \bigg\}^{\frac{1}{2}}
            \bigg\{  \sum_{\alpha \geq 1} \| b_{r,\alpha} \Phi \|^2 \bigg\}^{\frac{1}{2}}
           \; \leq \; \frac{C}{\sqrt{N}} \langle\mathcal{N}_\perp\rangle_{\Phi}^{1/2},
       \end{equation*}
       with $C$ independent of $N$ and $M$.
       Plugging \eqref{eq:shift_choice} inside \eqref{eq:expanded_shift} we only have to compute the contribution of the term proportional to $\mathfrak{D}^2$
       in the fifth line, which is given by
       	\begin{equation} \label{eq:negative_variance_partial}
	\begin{split}
	  -\frac{\lambda}{\sqrt{2(N-1)}}  & \sum_{1 \le \alpha \le M} w_{+1-\{r,\alpha\}} x_\alpha \;\mathfrak{D}^2 \\
	  =\;&-\frac{\lambda^2}{2(N-1)} \Big\langle u_1,\, w*\big(u_+u_-\big)
          W_{r, \le M} w*\big(u_+u_-\big)u_1\Big\rangle \mathfrak{D}^2.
	\end{split}
	\end{equation}
        To bring this contribution to the form appearing in \eqref{eq:shift_1} we have to show that one can replace the multiplication operator $w*(u_+u_-)$ by the integral operator $K_{11}$ up to a small error. To this end we notice that, using \eqref{eq:intro modes},
        for any $f\in L^2(\mathbb{R}^d)$,
        \begin{equation*}
          \begin{split}
	  \left| \left\langle u_1, \left( w*(u_+u_-)-K_{11}  \right) f\right\rangle \right|^2 & =\; 
          \Big| \big\langle u_1, \Big( w*(u_+ u_-)- \frac{w*|u_1|^2}{2}\Big) f\big\rangle \Big|^2
          \\
          & = \;
          \frac{1}{4} \big| \big\langle u_1, w*|u_2|^2  f\big\rangle \big|^2
          \\
          & \le \; \|f\|_2^2\left\langle u_1, \big(w*|u_2|^2\big)^2 u_1\right\rangle
          \\
          & \le \; C\|f\|_2^2 w_{1212} 
          \end{split}
	\end{equation*}
	where we have bounded one of the $w*|u_2|^2$ in the square by a constant. Using \eqref{eq:w_1212} this implies
	\begin{equation*}
	\left|\left\langle u_1, \left( w*(u_+u_-)-K_{11}  \right) f\right\rangle\right| \le C_\varepsilon T^{1/2-\varepsilon}\|f\|_2.
	\end{equation*}
	Noting that the operators $W_{r, \le M}$ is bounded (recall that $h_\mathrm{MF}-\mu_+$ has a finite gap by \eqref{eq:second_gap} and $K_{11} \ge 0$),
        this yields
        \begin{equation*}
          \begin{split}
          \big| \langle u_1 , w*(u_+u_-) & W_{r,\le M} w*(u_+u_-) u_1 \rangle - \langle u_1 , K_{11} W_{r,\le M} K_{11} u_1 \rangle \big|
          \\
          & \le \;
          C_\varepsilon T^{1/2-\varepsilon} \big( \| W_{r,\le M}\, w \ast (u_+ u_{-} ) \,u_1 \|^2_2 + \| W_{r ,\le M}\, K_{11} \,u_1 \|_2 \big)
          \;\le \;
          C_\epsilon T^{1/2-\varepsilon}\,.
         \end{split} 
        \end{equation*}  
        This means that we can replace $w*(u_+u_-)$ by $K_{11}$ in \eqref{eq:negative_variance_partial}, thus obtaining 
        the term proportional to $\mathfrak{D}^2$ in \eqref{eq:shift_1},
        at the expense of a remainder term of the form
	\begin{equation*}
	\frac{C_\varepsilon T^{1/2-\varepsilon}}{N-1}\mathfrak{D}^2.
	\end{equation*}
        This completes the proof.
\end{proof}

\bigskip

\noindent\textbf{Lower bound on the shifted Hamitonian.} We now discuss how to minimize $\widetilde{\mathbb{H}_\mathrm{right}^{(M)}}+\widetilde{\mathbb{H}_\mathrm{left}^{(M)}}$.

\begin{proposition}[\textbf{Lower bound for the full shifted Hamiltonian}]\label{prop:lower_bound_shifted}\mbox{}\\
	Let $E^\mathrm{Bog}$ be defined in \eqref{eq:E_bog}. Then
	\begin{equation}\label{eq:minimization_full}
	\begin{split}
	  \widetilde{\mathbb{H}_\mathrm{right}^{(M)}}+\widetilde{\mathbb{H}_\mathrm{left}^{(M)}}\ge & E^\mathrm{Bog}  +
          \frac{1}{2} \tr \big[ P_{r,\le M} ( h_\mathrm{MF} - \mu_+ + \lambda K_{11} ) \big] +
          \frac{1}{2} \tr \big[ P_{\ell,\le M} ( h_\mathrm{MF} - \mu_+ + \lambda K_{22} ) \big]
          \\
          & -\frac{C_M}{\sqrt{N}}\big(\mathcal{N}_\perp+1\big).
	\end{split}
	\end{equation}
\end{proposition}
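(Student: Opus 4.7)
The strategy is to diagonalize each of $\widetilde{\mathbb{H}_{\rm right}^{(M)}}$ and $\widetilde{\mathbb{H}_{\rm left}^{(M)}}$ via a Bogoliubov transformation, carefully tracking the corrections coming from the fact that the shifted operators obey only the approximate CCR of Lemma~\ref{lemma:approximate_CCR}. I focus on the right-mode Hamiltonian; the left is handled identically. The two can be treated separately because any cross-commutator $[\widetilde{b}_{r,\alpha}^\sharp, \widetilde{c}_{\ell,\beta}^\sharp]$ reduces to commutators of $\mathfrak{D}$ with right/left $b^\sharp, c^\sharp$, producing only linear corrections in $b^\sharp, c^\sharp$ of the same type as those handled below.

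Standard Bogoliubov theory (cf.\ \cite[Sections 4--5]{GreSei-13}) provides a real symplectic pair $(U, V)$ of $M \times M$ matrices, associated with $A := P_{r,\le M}(h_{\rm MF}-\mu_+ + \lambda K_{11}) P_{r,\le M}$ and $B := \lambda P_{r, \le M} K_{11} P_{r, \le M}$, such that the substitution $z_\alpha := \sum_\beta (U_{\alpha\beta} y_\beta + V_{\alpha\beta} y_\beta^\dagger)$ yields
\begin{equation*}
\tfrac{1}{2}\sum_{\alpha\beta} A_{\alpha\beta}(y_\alpha^\dagger y_\beta + y_\alpha y_\beta^\dagger) + \tfrac{1}{2}\sum_{\alpha\beta} B_{\alpha\beta}(y_\alpha^\dagger y_\beta^\dagger + y_\alpha y_\beta) = \tfrac{1}{2}\sum_\alpha \epsilon_\alpha(z_\alpha^\dagger z_\alpha + z_\alpha z_\alpha^\dagger),
\end{equation*}
with $\epsilon_\alpha > 0$ the symplectic eigenvalues. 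This identity is a purely algebraic rearrangement in the free $*$-algebra of $y, y^\dagger$ and makes no use of commutation relations. Substituting $y_\alpha = \widetilde{b}_{r,\alpha}$ gives $\widetilde{\mathbb{H}_{\rm right}^{(M)}} = \tfrac{1}{2}\sum_\alpha \epsilon_\alpha(\widetilde{d}_\alpha^\dagger \widetilde{d}_\alpha + \widetilde{d}_\alpha \widetilde{d}_\alpha^\dagger)$, where $\widetilde{d}_\alpha = d_\alpha + \chi_\alpha \mathfrak{D}$, the $d_\alpha$ obey the exact CCR $[d_\alpha, d_\beta^\dagger] = \delta_{\alpha\beta}$ (the transformation being symplectic in the CCR-preserving sense), and $|\chi_\alpha| \le C_M N^{-1/2}$ by \eqref{eq:choice x} and Lemma~\ref{lemma:properties_operators}.

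I then split $\tfrac{1}{2}(\widetilde{d}^\dagger \widetilde{d} + \widetilde{d}\widetilde{d}^\dagger) = \widetilde{d}^\dagger \widetilde{d} + \tfrac{1}{2}[\widetilde{d}, \widetilde{d}^\dagger]$ and drop the non-negative piece $\sum_\alpha \epsilon_\alpha \widetilde{d}_\alpha^\dagger \widetilde{d}_\alpha \ge 0$; this incidentally absorbs all potentially dangerous $\chi_\alpha^2 \mathfrak{D}^2$ contributions, which would otherwise not be controllable by $\mathcal{N}_\perp + 1$ alone. Using $[b_\gamma, \mathfrak{D}] = -b_\gamma$ (from Definition~\ref{def:b_c} and $[\mathfrak{D},\Theta] = \Theta$), a direct computation gives $[\widetilde{d}_\alpha, \widetilde{d}_\alpha^\dagger] = 1 + \chi_\alpha\cdot(\text{linear combination of } b_{r,\gamma}^\sharp)$, so that summation against $\epsilon_\alpha$ produces $\sum_\alpha \epsilon_\alpha$ plus a linear form $\sum_\gamma c_\gamma (b_{r,\gamma} + b_{r,\gamma}^\dagger)$ with $\|c\|_{\ell^2} \le C_M N^{-1/2}$. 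The elementary operator bound $\pm(a_c + a_c^\dagger) \le \|c\|^{-1} a_c a_c^\dagger + \|c\| \le C\|c\|(\mathcal{N}_\perp + 1)$ (the latter via $a_c a_c^\dagger \le \|c\|^2(\mathcal{N}_\perp + 1)$) controls this correction by $C_M N^{-1/2}(\mathcal{N}_\perp + 1)$.

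Finally, the leading constant is identified as $\tfrac{1}{2}\sum_\alpha \epsilon_\alpha = E^{\rm Bog, M}_{\rm right} + \tfrac{1}{2}\tr(P_{r,\le M}(h_{\rm MF}-\mu_+ + \lambda K_{11}))$, where $E^{\rm Bog, M}_{\rm right}$ is the $M$-cutoff version of the right summand in \eqref{eq:E_bog}; this follows from the relation between the symmetric and normal-ordered forms of the Bogoliubov Hamiltonian (using CCR on the $d_\alpha$'s). One has $E^{\rm Bog, M}_{\rm right} \ge E^{\rm Bog}_{\rm right}$ because the operator $A_{\rm full}+B_{\rm full}-\sqrt{(A_{\rm full}-B_{\rm full})^2 + 2(A_{\rm full}-B_{\rm full})^{1/2}B_{\rm full}(A_{\rm full}-B_{\rm full})^{1/2}}$ under the trace in \eqref{eq:E_bog} is non-negative (the scalar inequality $a + b \ge \sqrt{a^2 + 2ab}$ on positive reals, extended via simultaneous diagonalization). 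Adding the analogous bound for the left-mode Hamiltonian gives \eqref{eq:minimization_full}. The main obstacle I expect is the algebraic justification that the Bogoliubov diagonalization identity holds without invoking CCR in the substitution step, together with verifying that the resulting commutator corrections are linear in $b^\sharp$ with no $\mathfrak{D}$-multipliers after summation, so that the remainder is controlled by $\mathcal{N}_\perp + 1$ with the advertised $N^{-1/2}$ prefactor and no residual $\mathfrak{D}^2$ term.
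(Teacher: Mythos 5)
Your diagonalization argument is essentially the paper's: the identity $S^t\begin{pmatrix} D+V & V\\ V& D+V\end{pmatrix}S=\Lambda\oplus\Lambda$ is indeed a pure matrix statement, so the substitution $\widetilde{\mathbf b}\mapsto\mathbf d$ requires no commutation relations; the corrections to the CCR of the transformed operators are linear in $b^\sharp_{r,\gamma}$ with no residual $\mathfrak{D}$-multipliers (since $[\widetilde b_{r,\alpha},\widetilde b^\dagger_{r,\beta}]=\delta_{\alpha\beta}-x_\beta b_{r,\alpha}-x_\alpha b^\dagger_{r,\beta}$ contains no $\mathfrak{D}$); dropping the non-negative normal-ordered part absorbs the $\chi_\alpha^2\mathfrak{D}^2$ contributions; and the linear remainder is controlled by $C_M N^{-1/2}(\mathcal{N}_\perp+1)$ exactly as in the paper's Lemma on separate lower bounds. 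All of that is correct and matches the paper's route step by step.

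The genuine gap is your justification of $E^{\mathrm{Bog},M}_{\mathrm{right}}\ge E^{\mathrm{Bog}}_{\mathrm{right}}$. First, $D_r$ and $K_{11}$ do not commute, so there is no ``simultaneous diagonalization'' and the scalar inequality $a+b\ge\sqrt{a^2+2ab}$ does not transfer to an operator inequality for $D_r+\lambda P_rK_{11}P_r-\sqrt{D_r^2+2\lambda D_r^{1/2}K_{11}D_r^{1/2}}$ by that argument (the relevant difference of squares is $[D_r^{1/2},[D_r^{1/2},\lambda K_{11}]]+\lambda^2K_{11}^2$, whose first term has no sign). Second, and more importantly, even if the operator under the trace in \eqref{eq:E_bog} were non-negative, that would not give the monotonicity you want: $E^{\mathrm{Bog}}_{r,\le M}$ is built from the \emph{truncated} operators $D_{r,\le M}$, $P_{r,\le M}K_{11}P_{r,\le M}$, and the square root of a compression is not the compression of the square root, so $\tr\big[(D+V-E)_{\le M}\big]$ is not a partial trace of $(D+V-E)_{\mathrm{full}}$ and positivity of the latter says nothing about the comparison. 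The correct (and short) argument is variational, as in the paper: the ground state of the cutoff quadratic Hamiltonian $\mathbb{H}^{(M),\Theta=\mathbbm{1}}_{\mathrm{right}}$ lives in the Fock space over the first $M$ right modes, hence is an admissible trial state for the full Hamiltonian $\mathbb{H}^{\Theta=\mathbbm{1}}_{\mathrm{right}}$ on which all terms with indices beyond $M$ vanish, so its energy there equals $E^{\mathrm{Bog}}_{r,\le M}$ and the variational principle gives $E^{\mathrm{Bog}}_{r}\le E^{\mathrm{Bog}}_{r,\le M}$. With that replacement your proof is complete.
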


The lower bound \eqref{eq:minimization_full} is one of the main points in which our proofs significantly deviate from the standard techniques of derivation of Bogoliubov theory. Indeed, the Hamiltonian $\widetilde{\mathbb{H}_\mathrm{right}}$ (with or without cutoff) is defined in terms of operators which \emph{do not} satisfy an exact CCR (see Lemma \ref{lemma:approximate_CCR} above). For this reason, the techniques that are normally used to diagonalize quadratic Hamiltonians (see e.g. \cite[Appendix A]{LewNamSerSol-13}) are not directly applicable here, and we thus need slightly different methods in order to recover the correct energy $E^\mathrm{Bog}$ in \eqref{eq:minimization_full}. We will adopt a method already used in \cite{GreSei-13}, whose main point is to perform a suitable linear symplectic transformation mixing creators and annihilators 
(Bogoliubov transformation). After such a transformation the original Hamiltonian is brought into a diagonal part in the  new creation and annihilation opertors
$d^\sharp_{r,\alpha}$ and a part containing 
commutators of these operators.
If the $\widetilde{b}^\sharp_{r,\alpha}$'s were satisfying the CCR, then the same would be true for the $d^\sharp_{r,\alpha}$'s
and after the transformation the Hamiltonian would have the form
$  \sum_{\alpha} e_\alpha d_{r,\alpha}^\dagger d_{r,\alpha} + E^\mathrm{Bog}$.  
In our case, however, this is not true, and the commutators will be corrected by terms that need to be controlled. Since we work here with a finite number of modes (due to the energy cutoff), we can simplify the analysis by considering
the symmetrized versions of the quadratic Hamiltonians defined in \eqref{eq:H_1_tilde}-\eqref{eq:H_2_tilde}, instead of
the Hamiltonians obtained from \eqref{eq:H_right} and \eqref{eq:H_left} by replacing the creators and annihilators $b^\sharp_{r,\alpha}$ and $c^\sharp_{\ell,\alpha}$
by $\widetilde{b}^\sharp_{r,\alpha}$ and $\widetilde{c}^\sharp_{\ell,\alpha}$.

The proof of Proposition \ref{prop:lower_bound_shifted} will occupy the rest of the present section. Define the operators
\begin{equation} \label{eq:def_D_r_l}
  D_r:=\;P_{r}\left({h_{\mathrm{MF}}-\mu_+}\right)P_{r} \quad , \quad
  D_{r,\le M}:=\;P_{r,\le M}\left({h_{\mathrm{MF}}-\mu_+}\right)P_{r,\le M}\,.
\end{equation}
The operators  $D_\ell$ and $D_{\ell,\le M}$ are defined similarly.

  Recall from \eqref{eq:E_bog} that $E^\mathrm{Bog} = E^\mathrm{Bog}_{r} + E^\mathrm{Bog}_\ell$ with
\begin{equation*}
  E^\mathrm{Bog}_{r}:=-\frac{1}{2}\mathrm{Tr}_{\perp,r}\left[ D_r+\lambda P_rK_{11}P_r- \sqrt{D_r^2+2 \lambda D_r^{1/2} P_r K_{11}P_r D_r^{1/2} } \right]\,.
\end{equation*}
	The quantity $E^\mathrm{Bog}_{r}$ is the ground state energy
	\begin{equation}\label{eq:Bog GSE}
	 E^\mathrm{Bog}_{r} = \inf\mathrm{spec} (\mathbb{H}_{\mathrm{right}}^{\Theta=\mathbbm{1}})
	\end{equation}
	of the quadratic  Hamiltonian
	\begin{equation} \label{eq-standard_Bogoliubov_Hamiltonian_A}
	\begin{split}
	  \mathbb{H}_{\mathrm{right}}^{\Theta=\mathbbm{1}}:=\sum_{\alpha,\beta \ge1}
          & \left\langle u_{r,\alpha}, \left( D_r+\lambda P_rK_{11}P_r\right) u_{r,\beta}\right\rangle A^\dagger_\alpha A_\beta \\
	  &+ \frac{\lambda}{2}\sum_{\alpha,\beta\ge1} \left\langle u_{r,\alpha}, P_r K_{11}P_r  \,u_{r,\beta} \right\rangle
          \big( A^\dagger_\alpha A^\dagger_\beta+\mathrm{h.c.}\big) \, ,
	\end{split}
	\end{equation}	
	where $A^\sharp_\alpha$ are canonical creation and annihilation operators  on a Fock space  $\mathfrak{F}_{\perp,r}$
        whose base space  is the span of the right modes $u_{r,\alpha}$,
        $\alpha \geq 1$, that is, the  $A^\sharp_\alpha$'s are operators on $\mathfrak{F}_{\perp,r}$ satisfying the CCR
        (the notation $\Theta=\mathbbm{1}$ is there to recall that this Hamiltonian can be formally obtained from $\mathbb{H}_{\mathrm{right}}$
        by setting $\Theta$ equal to the identity inside the $b^\sharp$'s). Equation~\eqref{eq:Bog GSE} can be deduced by replicating
        the arguments of \cite[Section 4-5]{GreSei-13} or \cite[Appendix A]{LewNamSerSol-13}.
	The fact that the operator
	\begin{equation*}
	D_r+\lambda P_r K_{11} P_r -\sqrt{D_r^2+2 \lambda D_r^{1/2} P_r K_{11}P_r D_r^{1/2}}
	\end{equation*}
	is trace-class on the space $P_r L^2(\mathbb{R}^d)$ is part of the proof, cf~\cite[Equation~(53) and below]{GreSei-13}. The adaptation to our case is immediate because the method does not depend on the details of $D_r$. 

	It follows from the variational principle that $  E^\mathrm{Bog}_{r}$ is bounded from above by the ground state energy  $E^\mathrm{Bog}_{r,\le M}$ of a quadratic Hamiltonian
        obtained from \eqref{eq-standard_Bogoliubov_Hamiltonian_A} by ignoring the modes $u_{r,\alpha}$, $\alpha > M$, i.e.
	\begin{equation} \label{eq-standard_Bogoliubov_Hamiltonian_A_with_cutoff}
	\begin{split}
	  \mathbb{H}_{\mathrm{right}}^{(M),\Theta=\mathbbm{1}}:=\sum_{1\le \alpha,\beta \le M}&
          \left\langle u_{r,\alpha}, \left( D_{r,\le M}+\lambda P_{r,\le M} K_{11}P_{r, \le M}\right) u_{r,\beta }\right\rangle A^\dagger_\alpha A_\beta \\
	&+ \frac{\lambda}{2}\sum_{1\le \alpha,\beta\le M} \left\langle u_{r,\alpha},P_{r,\le M}  K_{11} P_{r,\le M}\,u_{r,\beta} \right\rangle \big( A^\dagger_\alpha A^\dagger_\beta +\mathrm{h.c.}\big) .
	\end{split}
	\end{equation}
        The aforementioned arguments adapted to the finite dimensional setting ensure that  
        \begin{equation*}
	  E^\mathrm{Bog}_{r,\le M}:=-\frac{1}{2}\mathrm{Tr}_{\perp,r}\left[D_{r,\le M}+\lambda P_{r,\le M} K_{11}P_{r,\le M}
            -\sqrt{D_{r,\le M}^2+2\lambda D_{r,\le M}^{1/2}P_{r,\le M} K_{11}P_{r,\le M}D_{r,\le M}^{1/2}}\;\right].
        \end{equation*}
        Notice that $E^\mathrm{Bog}_{r}$ is formally obtained from $E^\mathrm{Bog}_{r,\le M}$ by replacing $P_{r,\le M}$ by $P_r$ (i.e., $M=\infty$).
        The ground state energies  $E^\mathrm{Bog}_{\ell}$ and  $E^\mathrm{Bog}_{\ell,\le M}$ of the left Bogoliubov Hamiltonians without and with energy cutoff
        are given by a similar expressions as in \eqref{eq-standard_Bogoliubov_Hamiltonian_A}
        and \eqref{eq-standard_Bogoliubov_Hamiltonian_A_with_cutoff}, with $r$ replaced by $\ell$ and $K_{11}$ replaced by $K_{22}$.

\begin{lemma}[\textbf{Bogoliubov energies with and without cutoff}]\mbox{}\\ \label{lemma:upper_bounds_on_Bogoliubov_right-and_left}
  One has 
  \begin{equation}
    E^\mathrm{Bog}_{r} \le E^\mathrm{Bog}_{r,\le M} \quad , \quad E^\mathrm{Bog}_{\ell} \le E^\mathrm{Bog}_{\ell \le M}\,.
  \end{equation}  
\end{lemma}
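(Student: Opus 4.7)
The proof is essentially an application of the min-max principle to the ordinary bosonic quadratic Hamiltonians $\mathbb{H}_{\mathrm{right}}^{\Theta=\mathbbm{1}}$ and $\mathbb{H}_{\mathrm{right}}^{(M),\Theta=\mathbbm{1}}$ introduced just before the statement. The plan is to exhibit $\mathbb{H}_{\mathrm{right}}^{(M),\Theta=\mathbbm{1}}$ as the compression of $\mathbb{H}_{\mathrm{right}}^{\Theta=\mathbbm{1}}$ to a distinguished closed subspace of $\mathfrak{F}_{\perp,r}$, and then to conclude by the variational characterization of $E^\mathrm{Bog}_r$ already recalled in \eqref{eq:Bog GSE}. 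The argument for the left case will be identical after exchanging $(r,K_{11})\leftrightarrow(\ell,K_{22})$.

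The key subspace is
\[
\mathfrak{F}_{\perp,r}^{\le M}:=\bigoplus_{s\ge 0}\Big(\mathrm{span}\{u_{r,\alpha}:\,1\le\alpha\le M\}\Big)^{\otimes_{\mathrm{sym}}s}\subset \mathfrak{F}_{\perp,r},
\]
i.e., the Fock space over the low-energy right sector. For any $\Psi\in \mathfrak{F}_{\perp,r}^{\le M}$ one has $A_\alpha \Psi=0$ and $A_\alpha^\dagger \Psi \perp \mathfrak{F}_{\perp,r}^{\le M}$ for $\alpha>M$, so that all summands in $\mathbb{H}_{\mathrm{right}}^{\Theta=\mathbbm{1}}$ containing at least one index $>M$ have vanishing expectation in $\Psi$. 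It then suffices to check that the coefficients appearing in $\mathbb{H}_{\mathrm{right}}^{\Theta=\mathbbm{1}}$ and $\mathbb{H}_{\mathrm{right}}^{(M),\Theta=\mathbbm{1}}$ agree for indices $\alpha,\beta\le M$. This follows from $P_r u_{r,\alpha}=P_{r,\le M} u_{r,\alpha}=u_{r,\alpha}$ whenever $1\le \alpha\le M$, which yields both
\[
\langle u_{r,\alpha},D_r u_{r,\beta}\rangle=\langle u_{r,\alpha},(h_\mathrm{MF}-\mu_+)u_{r,\beta}\rangle=\langle u_{r,\alpha},D_{r,\le M} u_{r,\beta}\rangle
\]
and
\[
\langle u_{r,\alpha},P_r K_{11}P_r u_{r,\beta}\rangle=\langle u_{r,\alpha},K_{11} u_{r,\beta}\rangle=\langle u_{r,\alpha},P_{r,\le M} K_{11}P_{r,\le M} u_{r,\beta}\rangle.
\]

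Consequently, for every normalized $\Psi\in\mathfrak{F}_{\perp,r}^{\le M}$,
\[
\langle \Psi,\mathbb{H}_{\mathrm{right}}^{\Theta=\mathbbm{1}}\Psi\rangle=\langle \Psi,\mathbb{H}_{\mathrm{right}}^{(M),\Theta=\mathbbm{1}}\Psi\rangle,
\]
so by the variational principle
\[
E^{\mathrm{Bog}}_r=\inf\sigma\bigl(\mathbb{H}_{\mathrm{right}}^{\Theta=\mathbbm{1}}\bigr)\le \inf_{\substack{\Psi\in\mathfrak{F}_{\perp,r}^{\le M}\\ \|\Psi\|=1}}\langle \Psi,\mathbb{H}_{\mathrm{right}}^{(M),\Theta=\mathbbm{1}}\Psi\rangle=\inf\sigma\bigl(\mathbb{H}_{\mathrm{right}}^{(M),\Theta=\mathbbm{1}}\bigr)=E^{\mathrm{Bog}}_{r,\le M},
\]
where in the last equality we used the characterization of $E^{\mathrm{Bog}}_{r,\le M}$ recalled just above the lemma (the diagonalization of a finite-dimensional standard Bogoliubov Hamiltonian via the Williamson normal form, cf.~\cite[Sect.~4--5]{GreSei-13}). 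The analogous argument on $\mathfrak{F}_{\perp,\ell}^{\le M}$ gives $E^{\mathrm{Bog}}_\ell\le E^{\mathrm{Bog}}_{\ell,\le M}$. No step in the above is delicate; the only non-trivial point is the preliminary identification of $\mathbb{H}_{\mathrm{right}}^{(M),\Theta=\mathbbm{1}}$ as the compression of $\mathbb{H}_{\mathrm{right}}^{\Theta=\mathbbm{1}}$, which relies entirely on the orthogonality properties of the basis $\{u_{r,\alpha}\}_{\alpha\ge 1}$.
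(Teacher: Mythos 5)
Your proof is correct and follows essentially the same route as the paper: both arguments rest on the observation that every summand of $\mathbb{H}_{\mathrm{right}}^{\Theta=\mathbbm{1}}$ involving an index $>M$ has vanishing expectation on states supported in the low-mode Fock space, so that the cutoff Hamiltonian is the compression of the full one, and then conclude by the variational principle. The paper merely phrases this by testing $\mathbb{H}_{\mathrm{right}}^{\Theta=\mathbbm{1}}$ against the specific ground state of $\mathbb{H}_{\mathrm{right}}^{(M),\Theta=\mathbbm{1}}$, whereas you test against arbitrary normalized states of $\mathfrak{F}_{\perp,r}^{\le M}$; this is the same idea.
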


\begin{proof}
		As we already mentioned, $E^\mathrm{Bog}_{r}$  and $E^\mathrm{Bog}_{r,\le M}$ are the ground state energies of the quadratic Hamiltonians \eqref{eq-standard_Bogoliubov_Hamiltonian_A} and \eqref{eq-standard_Bogoliubov_Hamiltonian_A_with_cutoff}. They are reached (see previous references again) by unique (up to a phase) ground states. Let $\Phi^{(M),\Theta=\mathbb{1}}$ be the ground state of $\mathbb{H}_{\mathrm{right}}^{(M),\Theta=\mathbbm{1}}$. We have that
	\begin{equation*}
	\left \langle\mathbb{H}_{\mathrm{right}}^{\Theta=\mathbbm{1}}\right\rangle_{\Phi^{(M),\Theta=\mathbbm{1}}}=E_{r,\le M}^\mathrm{Bog}
	\end{equation*}
	because all terms with $\alpha,\beta\ge M$ vanish, $\Phi^{(M),\Theta=\mathbbm{1}}$ having no components in the sectors of the Fock space corresponding to those modes. The claimed result thus immediately follows from the variational principle.        
\end{proof}

We now prove that $\widetilde{\mathbb{H}_\mathrm{right}^{(M)}}$ can be bounded from below by $  E^\mathrm{Bog}_{r,\le M}$, up to 
\begin{itemize}
 \item a correcting term originating from the symmetrization in the creators and annihilators in the definitions~\eqref{eq:H_1_tilde} and~\eqref{eq:H_2_tilde}.
 \item a controllable error due to operators entering $\widetilde{\mathbb{H}_\mathrm{right}^{(M)}}$ do not exactly satisfy the CCR.
\end{itemize}

\begin{lemma}[\textbf{Lower bounds for the shifted Hamiltonians}] \label{prop:separate_lower_bounds}\mbox{} \\
	We have
	\begin{align} \label{eq:minimization_1}
	  \widetilde{\mathbb{H}_\mathrm{right}^{(M)}} \ge\;& \frac{1}{2} \tr [  D_{r,\le M} + \lambda P_{r,\le M} K_{11} ] 
             +	E^\mathrm{Bog}_{r,\le M}
          -\frac{C_M}{\sqrt{N}}\big(\mathcal{N}_\perp+1\big)
          \\ \label{eq:minimization_2}
	  \widetilde{\mathbb{H}_\mathrm{left}^{(M)}}  \ge\;& \frac{1}{2} \tr [  D_{\ell ,\le M} + \lambda P_{\ell ,\le M} K_{22} ]
           + E^\mathrm{Bog}_{\ell,\le M}
          -\frac{C_M}{\sqrt{N}}\big(\mathcal{N}_\perp+1\big).
	\end{align}
\end{lemma}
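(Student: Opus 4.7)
My plan is to mimic the canonical Bogoliubov diagonalization of $\mathbb{H}_{\mathrm{right}}^{(M),\Theta=\mathbbm{1}}$, exploiting that the algebraic identities producing the diagonal form use only bilinearity and the adjoint structure $(\widetilde b_{r,\alpha})^{*}=\widetilde b^\dagger_{r,\alpha}$, not the canonical commutation relations. I will focus on the bound \eqref{eq:minimization_1}; \eqref{eq:minimization_2} follows by the obvious substitutions $r\leftrightarrow\ell$, $K_{11}\leftrightarrow K_{22}$, $b^\sharp\leftrightarrow c^\sharp$.

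First I will import from canonical Bogoliubov theory (as recalled in \cite[Sections~4--5]{GreSei-13}) positive frequencies $\{e_\alpha\}_{1\le\alpha\le M}$ and real $M\times M$ matrices $U,V$ satisfying the symplectic relations $UU^T-VV^T=\mathbbm{1}$ and $UV^T-VU^T=0$, such that for canonical operators $A^\sharp_\alpha$ the combinations $D_\alpha:=\sum_\beta(U_{\alpha\beta}A_\beta+V_{\alpha\beta}A^\dagger_\beta)$ diagonalize the symmetric quadratic form underlying $\mathbb{H}_{\mathrm{right}}^{(M),\Theta=\mathbbm{1}}$ into $\tfrac{1}{2}\sum_\alpha e_\alpha(D^\dagger_\alpha D_\alpha+D_\alpha D^\dagger_\alpha)$. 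Since the $D_\alpha$'s satisfy canonical CCR, normal-ordering and comparison with the known ground state energy $E^\mathrm{Bog}_{r,\le M}$ in the canonical case yields the scalar identity
\begin{equation*}
\tfrac{1}{2}\sum_\alpha e_\alpha \;=\; \tfrac{1}{2}\tr\bigl[D_{r,\le M}+\lambda P_{r,\le M}K_{11}P_{r,\le M}\bigr] + E^\mathrm{Bog}_{r,\le M}.
\end{equation*}

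Next, I will set $d_\alpha:=\sum_\beta(U_{\alpha\beta}\widetilde b_{r,\beta}+V_{\alpha\beta}\widetilde b^\dagger_{r,\beta})$ using the \emph{same} matrices $U,V$. Because the rewriting of a symmetric quadratic form under a linear change of variables is purely algebraic, substituting $A^\sharp\to\widetilde b^\sharp$ in the canonical diagonalization identity gives
\begin{equation*}
\widetilde{\mathbb{H}_\mathrm{right}^{(M)}} \;=\; \tfrac{1}{2}\sum_\alpha e_\alpha\bigl(d^\dagger_\alpha d_\alpha+d_\alpha d^\dagger_\alpha\bigr) \;=\; \sum_\alpha e_\alpha d^\dagger_\alpha d_\alpha + \tfrac{1}{2}\sum_\alpha e_\alpha[d_\alpha,d^\dagger_\alpha].
\end{equation*}
The first sum is non-negative as an operator, since $d^\dagger_\alpha d_\alpha=(d_\alpha)^{*}d_\alpha\ge0$ regardless of commutation relations, and can be discarded for a lower bound. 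For the commutator sum, I will expand $[d_\alpha,d^\dagger_\alpha]$ bilinearly and substitute the expressions from Lemma~\ref{lemma:approximate_CCR}: the $\delta$-pieces assemble via the symplectic identity $(UU^T-VV^T)_{\alpha\alpha}=1$ into exactly $\tfrac{1}{2}\sum_\alpha e_\alpha$, while the remainder is a linear combination with coefficients of size $O(N^{-1/2})$ of the original operators $b^\sharp_{r,\beta}$. After the $\alpha$-sum this remainder takes the form $A+A^{*}$ with $A=\sum_{\beta\le M}c_\beta b_{r,\beta}$ and $\|c\|^{2}_{\ell^{2}}\le C_M N^{-1}$. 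Applying the elementary inequality $\pm(A+A^{*})\le\eta A^{*}A+\eta^{-1}$ together with $A^{*}A\le\|c\|^{2}_{\ell^{2}}\,\mathcal N_\perp$ and optimizing $\eta\sim\sqrt{N}/C_M$ produces the operator bound
\begin{equation*}
\Bigl|\tfrac{1}{2}\sum_\alpha e_\alpha[d_\alpha,d^\dagger_\alpha] - \tfrac{1}{2}\sum_\alpha e_\alpha\Bigr| \;\le\; \frac{C_M}{\sqrt{N}}\bigl(\mathcal N_\perp+1\bigr),
\end{equation*}
which, combined with the scalar identity above, delivers \eqref{eq:minimization_1}.

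The main subtlety will be rigorously justifying the first equality in the displayed line of the preceding paragraph, i.e.\ that the symmetric diagonal Bogoliubov form transfers verbatim from canonical $A^\sharp$ to the non-canonical $\widetilde b^\sharp$. This reduces to a coefficient-by-coefficient verification on the symmetrized form, using only the symplectic relations on $U,V$ but not the CCR of the underlying operators; keeping the symmetric combination $d^\dagger_\alpha d_\alpha+d_\alpha d^\dagger_\alpha$ throughout the diagonalization step sidesteps any reliance on canonical CCR, so that Lemma~\ref{lemma:approximate_CCR} is invoked only at the final normal-ordering step where all the error terms are produced.
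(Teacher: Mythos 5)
Your proposal is correct and follows essentially the same route as the paper's proof: both keep the symmetrized quadratic form so that the symplectic (Bogoliubov) change of variables is a purely algebraic congruence, discard the non-negative diagonal part $\sum_\alpha e_\alpha d^\dagger_\alpha d_\alpha$, identify $\tfrac12\sum_\alpha e_\alpha=\tfrac12\tr(E)$ with the trace term plus $E^\mathrm{Bog}_{r,\le M}$, and control the deviation of the commutators $[d_\alpha,d^\dagger_\alpha]$ from their CCR value as a term linear in $b^\sharp_{r,\beta}$ with coefficients of size $O_M(N^{-1/2})$ coming from Lemma~\ref{lemma:approximate_CCR}. The only cosmetic difference is that you bound this remainder by the operator inequality $\pm(A+A^*)\le\eta A^*A+\eta^{-1}$ while the paper estimates its expectation by Cauchy--Schwarz; both yield the stated $C_M N^{-1/2}(\mathcal{N}_\perp+1)$ error.
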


The bound of Proposition~\ref{prop:lower_bound_shifted} immediately follows from \eqref{eq:minimization_1}, \eqref{eq:minimization_2},  Lemma~\ref{lemma:upper_bounds_on_Bogoliubov_right-and_left}, and
$E^\mathrm{Bog}= E^\mathrm{Bog}_{r} + E^\mathrm{Bog}_{\ell}$. There thus only remains to provide the 

\begin{proof}[Proof of Lemma~\ref{prop:separate_lower_bounds}]
We discuss \eqref{eq:minimization_1} only, since \eqref{eq:minimization_2} can be obtained by completely analogous arguments. Let us define the $M \times M$ real symmetric matrices
\begin{equation} \label{eq:D_V_E}
\begin{split}
D:=\;& \big( \langle u_{r,\alpha} ,\, D_{r,\le M} \, u_{r,\beta} \rangle \big)_{\alpha,\beta=1}^M\\
V:=\;&\lambda \big( \langle u_{r,\alpha} ,\, P_{r,\le M}K_{11}P_{r,\le M} \,u_{r,\beta} \rangle \big)_{\alpha,\beta=1}^M\\
E:=\;&\sqrt{ D^2+2 D^{1/2}VD^{1/2}}.
\end{split}
\end{equation}
The notation is chosen to allow direct comparison with the arguments in \cite[sections 4-5]{GreSei-13}. In terms of these matrices, the Hamiltonian $\widetilde{\mathbb{H}_\mathrm{right}^{(M)}}$ reads
\begin{equation} \label{eq:H_tilde_matrix_form}
  \widetilde{\mathbb{H}_\mathrm{right}^{(M)}} =  \frac{1}{2} \begin{pmatrix} (\widetilde{\mathbf{b}}^\dagger)^t \, ,\, \widetilde{\mathbf{b}}^t \end{pmatrix}
  \begin{pmatrix} D + V & V \\ V & D+V \end{pmatrix} \begin{pmatrix} \widetilde{\mathbf{b}} \\ \widetilde{\mathbf{b}}^\dagger \end{pmatrix}
\end{equation}
where we have used the matrix notation $\widetilde{\mathbf{b}} = ( \widetilde{b}_{r,\alpha} )_{\alpha=1}^M$
and $\widetilde{\mathbf{b}}^\dagger = ( \widetilde{b}_{r,\alpha}^\dagger )_{\alpha=1}^M$  for the creation and annihilation operators
and $t$ denote the transpose.

Let us introduce new creators and annihilators $d_{r,\alpha}^\sharp$
  obtained by means of  the Bogoliubov transformation
\begin{equation} \label{eq:d's_inverse}
\begin{pmatrix}
  \mathbf{d}\\
  \mathbf{d}^\dagger
\end{pmatrix}
=\dfrac{1}{2}
\begin{pmatrix}
  A_{0}^{-1}+B_{0}^{-1} & A_{0}^{-1}-B_{0}^{-1} \\  A_{0}^{-1}-B_{0}^{-1} &  A_{0}^{-1}+ B_{0}^{-1} \end{pmatrix}
\begin{pmatrix}
  \widetilde{\mathbf{b}}\\
  \widetilde{\mathbf{b}}^\dagger
\end{pmatrix}
\end{equation}
where $A_0$ and $B_0$ are the real $M \times M$ matrices defined by
\begin{equation*}
A_0:=D^{1/2} E^{-1/2} U_0 ,\qquad B_0:=(A_0^{-1})^t =D^{-1/2}E^{1/2} U_0
\end{equation*}
with $U_0$ the orthogonal $M \times M$ matrix diagonalizing $E$, 
\begin{equation*}
  U_0^t E U_0 = \Lambda = \mathrm{diag} (e_\alpha )\,.
\end{equation*}
The inverse transformation is
\begin{equation} \label{eq:d's}
\begin{pmatrix}
  \widetilde{\mathbf{b}}\\
  \widetilde{\mathbf{b}}^\dagger
\end{pmatrix}
= S \begin{pmatrix}
  {\mathbf{d}}\\
  {\mathbf{d}}^\dagger
\end{pmatrix}
: =\dfrac{1}{2}
\begin{pmatrix}
  A_{0} +B_{0} & A_{0} -B_{0} \\  A_{0}-B_{0} &  A_{0} + B_{0} \end{pmatrix}
\begin{pmatrix}
  {\mathbf{d}}\\
  {\mathbf{d}}^\dagger
\end{pmatrix}.
\end{equation}
The matrix $S$ is symplectic and diagonalizes the $2 M \times 2M$  symmetric matrix in \eqref{eq:H_tilde_matrix_form}, 
\begin{equation*}
S^t   \begin{pmatrix} D + V & V \\ V & D+V \end{pmatrix} S = \begin{pmatrix} \Lambda & 0 \\ 0 & \Lambda \end{pmatrix}\, ,
\end{equation*}
(this can be checked by an explicit calculation, noting that
$A_0^t (D+2V)A_0= B^t_0 DB_0 = \Lambda$).
Thus 
\begin{equation*} 
  \widetilde{\mathbb{H}_\mathrm{right}^{(M)}} =  \frac{1}{2} \begin{pmatrix} ( {\mathbf{d}}^\dagger)^t \, ,\, {\mathbf{d}}^t \end{pmatrix}
  \begin{pmatrix} \Lambda & 0 \\ 0 & \Lambda \end{pmatrix} \begin{pmatrix} {\mathbf{d}} \\ {\mathbf{d}}^\dagger \end{pmatrix}
  = \sum_{\alpha=1}^M e_\alpha d_{r,\alpha}^\dagger d_{r,\alpha} + \frac{1}{2}  \sum_{\alpha=1}^M e_\alpha [ d_{r,\alpha} ,  d_{r,\alpha}^\dagger ] \,.
\end{equation*}
If the  operators $\widetilde{b}_{r,\alpha}^\sharp$ would satisfy the CCR, the same would be true for the
$d_{r,\alpha}^\sharp$'s and the last sum would be equal to
\begin{equation*}
  \tr (E) = \tr  \sqrt{D_{r,\le M}^2+2\lambda D_{r,\le M}^{1/2}P_{r,\le M} K_{11}P_{r,\le M}D_{r,\le M}^{1/2}}\;,
\end{equation*}
 which is precisely the sum of the two first terms in the right hand side of \eqref{eq:minimization_1}.

 In our case, the sum involving the commutators can be obtained from the following identity: if $R$ is a real $M \times M$ symmetric matrix, then
\begin{equation} \label{eq:commutation_relation_d}
  \big[ \mathbf{d}^t , \, R\,\mathbf{d}^\dagger \big] : = \sum_{1 \le \alpha, \beta \le M} R_{\alpha \beta} [ d_{r,\alpha}, \,d_{r,\beta}^\dagger ]
  = \tr ( R) -  \mathbf{x}^t B_0 R A_0^t ( \mathbf{b} + \mathbf{b}^\dagger )\,,
 \end{equation}
where $\mathbf{x} = ( x_\alpha )_{\alpha=1}^M$ is given by \eqref{eq:shift_choice}. The identity \eqref{eq:commutation_relation_d}
follows by noting that the commutation relations of the $\widetilde{b}_{r,\alpha}^\sharp$'s
 given in Lemma~\ref{lemma:approximate_CCR} can be rewritten as
 \begin{equation} \label{eq:commutation_relations_b}
   \big[ \widetilde{\mathbf{b}}^t , \,Q \, \widetilde{\mathbf{b}} \big] = \mathbf{x}^t ( Q - Q^t ) \mathbf{b} \; , \quad 
   \big[ \widetilde{\mathbf{b}}^t , \,Q \, \widetilde{\mathbf{b}}^\dagger  \big] = \tr (Q) - \mathbf{x}^t ( Q^t\, \mathbf{b} +  Q \, \mathbf{b}^\dagger )
 \end{equation}
 for any $M \times M$ matrix $Q$.
 One deduces  from \eqref{eq:d's_inverse} and from $A_0^{-1}= B_0^t$, $B_0^{-1} = A_0^t$ that
 \begin{align*}
   \big[ \mathbf{d}^t , \, R\,\mathbf{d}^\dagger \big] \;= & - \frac{1}{4}  \big[ \widetilde{\mathbf{b}}^t , \,(A_0+B_0)  R (A_0-B_0)^t \,\widetilde{\mathbf{b}} \big]
   + {\mathrm{h.c.}} \\
   & \; + \frac{1}{4}  \big[ \widetilde{\mathbf{b}}^t , \,(A_0+B_0)  R (A_0+B_0)^t \,\widetilde{\mathbf{b}}^\dagger \big]
   -  \frac{1}{4}  \big[ \widetilde{\mathbf{b}}^t , \,(A_0-B_0)  R (A_0-B_0)^t \,\widetilde{\mathbf{b}}^\dagger \big] \,,
 \end{align*}
 from which \eqref{eq:commutation_relation_d} is obtained by relying on \eqref{eq:commutation_relations_b}. 

 Applying~\eqref{eq:commutation_relation_d} with $R = \Lambda$ yields
 \begin{equation} \label{eq:proof_lemma7.6}
  \widetilde{\mathbb{H}_\mathrm{right}^{(M)}} =  
  \sum_{\alpha=1}^M e_\alpha d_{r,\alpha}^\dagger d_{r,\alpha} + \frac{1}{2}  \tr (E) - \frac{\lambda}{2\sqrt{2(N-1)}} \mathbf{w}_{+1-}^t
  D^{1/2}E^{-1} D^{1/2} ( \mathbf{b} + \mathbf{b}^\dagger )\,,
\end{equation}
where $\mathbf{w}_{+1- }$ stands for the vector $( {w}_{+1-\{ r,\alpha\} } )_{\alpha=1}^M$. To deduce the above equation we used
\begin{equation*}
	\begin{split}
		(D+2V)^{-1} B_0 \Lambda A_0^t = D^{1/2}E^{-1} D^{1/2},
	\end{split}
\end{equation*}
which follows thanks to the identities $B_0 \Lambda A_0^t= D^{-1/2} E  D^{1/2}$ and $D^{-1/2}E^2D^{-1/2}=(D+2V)$.
The expectation of the last term in \eqref{eq:proof_lemma7.6} on the vector $\Phi\in\ell^2(\mathfrak{F}_\perp)$ can be bounded using the Cauchy-Schwarz inequality, the boundedness of
$w \ast ( u_+ u_-)$, and the fact that
$E^{-1}\le D^{-1}$ by operator monotonicity of the inverse and square root
(recall that $E^2 = D^{1/2} (D+2V) D^{1/2} \geq D^2$ since $V \geq 0$), to write
\begin{align*}
 \bigg|  \frac{1}{2\sqrt{2(N-1)}} &\mathbf{w}_{+1-}^t
 D^{1/2}E^{-1}D^{1/2} \langle \mathbf{b} + \mathbf{b}^\dagger \rangle_\Phi \bigg|
 \\
 & \; \le
 \frac{C}{\sqrt{N}} \bigg\{ \sum_{\alpha\ge 1} \big| w_{+1-\{r,\alpha\} } \big|^2 \bigg\}^{1/2}
 \bigg\{ \sum_{\alpha\ge 1} \Big\| \sum_{\beta \ge 1} ( D^{1/2}E^{-1}D^{1/2})_{\alpha \beta} b_{r,\beta} \Phi \Big\|^2 \bigg\}^{1/2}
 \\
 & \; \le \;  \frac{C}{\sqrt{N}} \big\langle \mathcal{N}_\perp \big\rangle_\Phi^{1/2}\,.
\end{align*}
The lower bound in the lemma then follows from the fact that the first term in \eqref{eq:proof_lemma7.6} is non-negative
(since $E \ge 0$ and thus $e_\alpha\ge 0$ for all $\alpha$).
\end{proof}

\section{Proof of the main results} \label{sect:proofs}

Recall that Proposition~\ref{pro:ener comp} follows from the considerations of Section~\ref{sect:proof_2mode}.

\subsection{Energy upper bound} \label{subsect:upper}

We obtain an upper bound on the ground state energy $E(N)$ corresponding to~\eqref{eq:main_result_energy} by constructing a trial state $\psi_{\mathrm{trial}}$ as follows.
Recall that by the decomposition \eqref{eq:wavefunction_expansion}, any wave-function $\psi$ is uniquely identified by the components $\Phi_{s,d}$ of $\mathcal{U}_N\psi$. The $d$-dependence of the components of $\mathcal{U}_N\psi_{\mathrm{trial}}$ will be encoded in the gaussian coefficients
$c_d= e^{-d^2/4\sigma_N^2}/Z_N$
that we already used in Section \ref{sect:proof_2mode}.  The $s$-dependence, in turn, will be chosen so that the expectation of $\mathbb{H}$ on $\mathcal{U}_N\psi_{\mathrm{trial}}$ will coincide (up to remainders) with $E^\mathrm{Bog}$ defined in \eqref{eq:E_bog}. To evaluate this part of the energy, we need a well-known lemma. Its claims follow e.g. from arguments\footnote{In particular, notice that the transformation in~\cite[Equation (26)]{GreSei-13} is implemented in Fock space by $e^{X_a}$, where $X_a$ is defined before~\cite[Lemma 3]{GreSei-13}.} in~\cite{GreSei-13}.

\begin{lemma}[\textbf{Minimization of quadratic Hamiltonians}]\mbox{} \label{lemma:general_quadratic_hamiltonian}\\ 
  Let $V$ be a locally bounded external potential such that $\lim_{|x|\to\infty}V(x)=+\infty$, and define $h:=-\Delta+V$. Let $k$ be the integral operator on $L^2(\mathbb{R}^d)$ whose kernel is $u(x)w(x-y)u(y)$, for a real-valued $u\in{L}^2(\mathbb{R}^d)$ and $w$ as in Assumption \ref{assum:w}.
  Given   an orthonormal basis $\{u_n  \}$ of  $L^2(\mathbb{R}^d)$ such that all $u_n$ are real-valued,
  denote by $h_{mn} = \langle u_m , h \,u_n \rangle$ and
  $k_{mn} = \langle u_m  , k \,u_n \rangle$  the matrix elements of $h$ and $k$ in this basis.  
Consider the quadratic Hamiltonian
\begin{equation*}
\mathbb{H}_\mathrm{quad}=\sum_{m,n}\big(h+k\big)_{mn} A^\dagger_{m}A_{n}+\frac{1}{2}\sum_{m,n} k_{mn} \big( A^\dagger_{m}A^\dagger_{n}+A_{m}A_{n}\big),
\end{equation*}
where $A_m^\dagger$ and $A_n$ are creation and annihilation operators 
on the Fock space $\mathcal{G}$ with base $L^2(\mathbb{R}^d)$ satisfying the Canonical Commutation Relations.
Then the unique (up to a phase) ground state of $\mathbb{H}_\mathrm{quad}$ is
\begin{equation*}
\mathbb{U} \Omega_\mathcal{G},
\end{equation*}
where $\Omega_\mathcal{G}$ is the vacuum vector of $\mathcal{G}$ and $\mathbb{U}$ a Bogoliubov transformation, acting on creation/annihilation operators as
\begin{equation}\label{eq:Bog tran}
\mathbb{U}^\ast A^\dagger_m \mathbb{U}= \sum_{n}\left( c_{mn}A^\dagger_n+s_{mn} A_n\right)
\end{equation}
for suitable coefficients $c_{mn}$ and $s_{mn}$.
%
%
%
Moreover, the ground state energy of $\mathbb{H}_\mathrm{quad}$ is
\begin{equation} \label{eq:general_bog_energy}
\inf\sigma(\mathbb{H}_\mathrm{quad})= -\frac{1}{2}\mathrm{Tr}\big(h+k-\sqrt{h^2+2h^{1/2}kh^{1/2}}\big).
\end{equation}
\end{lemma}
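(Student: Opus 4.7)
The plan is to carry out the standard Bogoliubov diagonalization, which is a simplified version of the analysis already performed in the proof of Lemma~\ref{prop:separate_lower_bounds}: because the $A_m,A_n^\dagger$ satisfy the exact CCR, no shift terms $x_\alpha$ appear and the commutator corrections $[d_{r,\alpha},d_{r,\beta}^\dagger] - \delta_{\alpha\beta}$ of Eq.~\eqref{eq:commutation_relation_d} vanish identically. The starting point is to rewrite
\begin{equation*}
\mathbb{H}_{\mathrm{quad}} = \frac{1}{2}\begin{pmatrix} (\mathbf{A}^\dagger)^t & \mathbf{A}^t \end{pmatrix} \begin{pmatrix} h+k & k \\ k & h+k \end{pmatrix} \begin{pmatrix} \mathbf{A} \\ \mathbf{A}^\dagger \end{pmatrix} - \frac{1}{2}\mathrm{Tr}(h+k),
\end{equation*}
where we have identified $h$ and $k$ with their matrix representations in the real-valued basis $(u_n)$, so they are real symmetric matrices (and $k\ge 0$ since $w$ is of positive type). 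The problem is then to find a symplectic transformation block-diagonalizing the $2\times 2$ matrix above.

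Up to a harmless shift of $h$ by a positive constant we may assume $h>0$. Define $E := \sqrt{h^2 + 2h^{1/2}k h^{1/2}}$, let $U_0$ be the orthogonal matrix diagonalizing $E$ as $U_0^t E U_0 = \Lambda = \mathrm{diag}(e_n)$ with $e_n \ge 0$, and set $A_0 := h^{1/2}E^{-1/2}U_0$, $B_0 := h^{-1/2}E^{1/2}U_0 = (A_0^{-1})^t$. The matrix
\begin{equation*}
S := \frac{1}{2}\begin{pmatrix} A_0+B_0 & A_0-B_0 \\ A_0-B_0 & A_0+B_0 \end{pmatrix}
\end{equation*}
is symplectic and, exactly as in Section~\ref{sect:minimization}, one verifies by direct computation that
$S^t \begin{pmatrix} h+k & k \\ k & h+k \end{pmatrix} S = \begin{pmatrix} \Lambda & 0 \\ 0 & \Lambda \end{pmatrix}$.
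Introducing new operators $(\mathbf{B},\mathbf{B}^\dagger)^t := S^{-1}(\mathbf{A},\mathbf{A}^\dagger)^t$, the entries of $S^{-1}$ read off the coefficients $c_{mn},s_{mn}$ in formula \eqref{eq:Bog tran}, and the symplecticity of $S$ combined with the CCR for the $A_n$'s yields the exact CCR for the $B_n$'s. The Hamiltonian is then diagonal:
\begin{equation*}
\mathbb{H}_{\mathrm{quad}} = \sum_n e_n B_n^\dagger B_n + \tfrac{1}{2}\mathrm{Tr}\bigl(E - (h+k)\bigr).
\end{equation*}
Since $e_n \ge 0$, the ground state energy is the $c$-number on the right, matching \eqref{eq:general_bog_energy}, and the (necessarily unique) ground state is the vector annihilated by every $B_n$; by Shale's theorem the map $A_n\mapsto B_n$ is implemented by a unitary $\mathbb{U}$ on $\mathcal{G}$, and this vector is precisely $\mathbb{U}\Omega_\mathcal{G}$.

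The main obstacle is the infinite-dimensional nature of the problem, on three related fronts: (i) $E = \sqrt{h^2 + 2h^{1/2}kh^{1/2}}$ must be defined via functional calculus and $S$ must map the relevant domains correctly; (ii) Shale's condition requires the ``anomalous'' block $\tfrac{1}{2}(A_0-B_0) = \tfrac{1}{2}h^{-1/2}(E-h)h^{-1/2}U_0$ to be Hilbert--Schmidt, in order for $\mathbb{U}$ to exist as a unitary; (iii) the trace $\mathrm{Tr}(E-(h+k))$ must be finite so the energy formula makes sense. All three reduce to $k$ being trace-class --- which holds here since $k$ is an integral operator with continuous kernel $u(x)w(x-y)u(y)$ and $\mathrm{Tr}(k) = w(0)\|u\|_2^2 < \infty$ --- combined with the fact that $h$ has compact resolvent with eigenvalues diverging at infinity. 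The cleanest way to avoid unbounded-operator pitfalls is to diagonalize on the $M$-dimensional truncation $\mathrm{span}(u_1,\dots,u_M)$, where the entire argument above is finite-dimensional linear algebra, and then take $M\to\infty$; the limit is controlled by the trace-class structure of $k$, following the computations in \cite[Sections~4--5]{GreSei-13} (see in particular~\cite[Eq.~(53)\,ff.]{GreSei-13}), to which one can alternatively appeal directly for both the Hilbert--Schmidt bound and the trace-norm convergence.
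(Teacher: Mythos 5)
The paper does not actually prove this lemma itself --- it defers entirely to the arguments of \cite{GreSei-13} (see the footnote attached to the lemma and the list of references that follows it) --- and your sketch is precisely that standard symplectic-diagonalization argument, which moreover mirrors the finite-dimensional computation the paper carries out explicitly in the proof of Lemma~\ref{prop:separate_lower_bounds} (same matrices $A_0=h^{1/2}E^{-1/2}U_0$, $B_0=h^{-1/2}E^{1/2}U_0$, same verification $A_0^t(h+2k)A_0=B_0^t h B_0=\Lambda$), so it is correct and takes essentially the same route. The only quibbles are cosmetic: the anomalous block is $\tfrac12(A_0-B_0)=\tfrac12\,h^{-1/2}(h-E)E^{-1/2}U_0$ rather than $\tfrac12\,h^{-1/2}(E-h)h^{-1/2}U_0$ (which does not affect the Hilbert--Schmidt conclusion, since it is still controlled by the trace-class operator $E-h$), and the "harmless shift" making $h>0$ in fact changes $\mathbb{H}_{\mathrm{quad}}$ by a multiple of the number operator --- but positivity of $h$ is already implicit in the statement, since otherwise $h^{1/2}$ in \eqref{eq:general_bog_energy} is undefined.
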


We refer to~\cite{LewNamSerSol-13,GreSei-13,NamNapSol-16,BacBru-16,BruDer-07,Derezinski-17} for more details. It folllows from~\eqref{eq:Bog tran} that we have 
\begin{equation}\label{eq:in pairs}
\left\langle  \mathbb{U} \,\Omega_\mathcal{G}| A^\dagger_m  \mathbb{U} \,\Omega_\mathcal{G} \right\rangle = 0,   
 \end{equation}
i.e. particles appear only in pairs in the Bogoliubov ground state.
Moreover, by using the fact that $\mathbb{U} \,\Omega_\mathcal{G}$ is a quasi-free state, one can show that all moments
of the  number operator $ \mathcal{N}_\perp = \sum_n A_n^\dagger A_n$
in this state are finite, i.e., $\langle \mathcal{N}_\perp^k \rangle_{\mathbb{U} \,\Omega_\mathcal{G}}< \infty$ for all positive integer $k$.
%
%

Recall the Bogoliubov Hamiltonian $\mathbb{H}_\mathrm{right}$ for right modes, defined in \eqref{eq:H_right}. Let us consider its version in which the $d$-translation operator $\Theta$ is formally set to the identity. This amounts to replacing the $b^\sharp$'s with the $a^\sharp$'s, i.e.,
\begin{equation*}
\begin{split}
\mathbb{H}_\mathrm{right}^{\Theta=\mathbbm{1}}:=\;&\sum_{\alpha,\beta\ge1}\left\langle u_{r,\alpha},\Big(h_\mathrm{MF}-\mu_++{\lambda}K_{11}\Big) u_{r,\beta}\right\rangle  a^\dagger _{r,\alpha} a_{r,\beta}\\
&+\frac{\lambda}{2}\sum_{\alpha,\beta\ge1}\left\langle u_{r,\alpha}, K_{11} u_{r,\beta}\right\rangle \left( a^\dagger _{r,\alpha} a^\dagger _{r,\beta}+ a_{r,\alpha} a_{r,\beta}\right).
\end{split}
\end{equation*}
This operator acts on the right Fock space 
\begin{equation}\label{eq:Fock right}
\gF_\perp ^r = \gF \left( P_{\perp,r} L^2 (\R^d) \right), \quad P_{\perp,r}:= \sum_{\alpha \geq 1} |u_{r,\alpha} \rangle \langle u_{r,\alpha} | .
\end{equation}
Similarly, we consider the  Bogoliubov Hamiltonian $\mathbb{H}_\mathrm{left}^{\Theta=\mathbbm{1}}$
for the left modes and the left Fock space $\gF_\perp ^\ell$, defined by the same formulas with $r$ replaced by $\ell$ and $K_{11}$ by $K_{22}$.
We extend both operators to the full excited Fock space $\gF_\perp$ by using the unitary equivalence
\[
\gF_\perp = \gF \left( \left(P_{\perp,r} L^2 (\R^d) \right) \oplus \left(P_{\perp,\ell} L^2 (\R^d)\right)\right) \simeq \gF_\perp ^r \oplus \gF_\perp ^\ell 
\]
and having $\mathbb{H}_\mathrm{right}^{\Theta=\mathbbm{1}}$ acting as the identity on the left Fock space (respectively $\mathbb{H}_\mathrm{left}^{\Theta=\mathbbm{1}}$ acting as the identity on the right Fock space). Applying Lemma \ref{lemma:general_quadratic_hamiltonian}, there exist unitary Bogoliubov transformations $\mathbb{U}_\mathrm{right}$ and
$\mathbb{U}_\mathrm{left}$
such that
\begin{equation*}
\begin{split}
\mathbb{H}_\mathrm{right}^{\Theta=\mathbbm{1}} \mathbb{U}_\mathrm{right}\Omega=\;&E^\mathrm{Bog}_{\mathrm{right}} \mathbb{U}_\mathrm{right}\Omega,\\
\mathbb{H}_\mathrm{left}^{\Theta=\mathbbm{1}} \mathbb{U}_\mathrm{left}\Omega=\;&E^\mathrm{Bog}_{\mathrm{left}} \mathbb{U}_\mathrm{left}\Omega
\end{split}
\end{equation*}
with $\Omega$ the vacuum vector of $\gF_\perp$ and
\begin{equation*}
\begin{split}
E^\mathrm{Bog}_{\mathrm{right}}=\;&-\frac{1}{2}\mathrm{Tr}_{\perp,r} \bigg[ D_r+\lambda P_rK_{11}P_r-\sqrt{\left(D_r\right)^2+2\lambda D_r^{1/2}P_r K_{11}P_rD_r^{1/2}}\;\bigg]\\
E^\mathrm{Bog}_{\mathrm{left}}=\;&-\frac{1}{2}\mathrm{Tr}_{\perp,\ell} \bigg[ D_\ell+\lambda P_\ell K_{22}P_\ell-\sqrt{\left(D_\ell\right)^2+2\lambda D_\ell^{1/2}P_\ell K_{22}P_\ell D_\ell^{1/2}}\;\bigg]\;,
\end{split}
\end{equation*}
where $D_r, D_\ell$ are defined in \eqref{eq:def_D_r_l}.

The latter quantities are those given by adapting~\eqref{eq:general_bog_energy} to our case. Their sum coincides with $E^\mathrm{Bog}$ defined in~\eqref{eq:E_bog}. 
By construction, $\mathbb{H}_\mathrm{right}^{\Theta=\mathbbm{1}}$ commutes with $\mathbb{U}_\mathrm{left}$, because the latter is defined in terms of left modes only. Similarly, $\mathbb{H}_\mathrm{left}^{\Theta=\mathbbm{1}}$ commutes with $\mathbb{U}_\mathrm{right}$. Thus
\begin{equation} \label{eq:action_theta_identity}
\begin{split}
\left(\mathbb{H}_\mathrm{right}^{\Theta=\mathbbm{1}}+\mathbb{H}_\mathrm{left}^{\Theta=\mathbbm{1}}\right)\mathbb{U}_\mathrm{left}\mathbb{U}_\mathrm{right} \Omega &= \left(E^\mathrm{Bog}_{\mathrm{right}}+E^\mathrm{Bog}_{\mathrm{left}}\right)\mathbb{U}_\mathrm{left}\mathbb{U}_\mathrm{right} \Omega\\
&=E^\mathrm{Bog}\mathbb{U}_\mathrm{left}\mathbb{U}_\mathrm{right} \Omega.
\end{split}
\end{equation}
We denote by $( \mathbb{U}_\mathrm{left}\mathbb{U}_\mathrm{right} \Omega)_s$ the component of $\mathbb{U}_\mathrm{left}\mathbb{U}_\mathrm{right} \Omega$ in
    the $s$-particle sector of $\gF_\perp$.
    
We are now ready to define our trial state. To control some terms arising from Bogoliubov excitations, our choice of variance differs slightly from that of Section~\ref{sect:proof_2mode}.
  
\begin{defi}[\textbf{Trial state with fluctuations}]\mbox{} \\
	We define
	\begin{equation}\label{eq:trial_state}
	\psi_\mathrm{trial}:=\sum_{s=0}^{N}\sum_{|d| \le \sigma_N^2} c_{d,s} u_1^{\otimes (N-s+d)/2}\otimes_\mathrm{sym} u_2^{\otimes (N-s-d)/2}\otimes_\mathrm{sym}\Phi_{\mathrm{trial},s}\; ,
	\end{equation}
	where the coeficients $c_{d,s} $ are defined by
        \begin{equation} \label{eq:c_d_again}
          c_{d,s}=
          \begin{cases}
            \frac{1}{Z_N}e^{-d^2/4\sigma_N^2} & \text{if $N-s+d$ is even and $|d| \le \sigma_N^2$} \\
            0                           & \text{otherwise,}
          \end{cases}  
        \end{equation}
        $Z_N$ being a normalization factor such that $\sum_{|d| \le \sigma_N^2} c_{d,s}^2 = 1$ for all $s$ and 
\begin{equation}\label{eq:modif sigma}
   \sigma_N ^2 = \begin{cases}
               \sqrt{\mu_- - \mu_+} N \mbox{ if } \delta < 1 \mbox{ in the assumption }T\sim N^{-\delta} \\
               N^{1/2} \mbox{ otherwise} 
              \end{cases}.
  \end{equation}
Moreover, let
	\begin{equation} \label{eq:phi_trial}
	\Phi_{\mathrm{trial},s}:= \frac{\big(\mathbb{U}_\mathrm{left} \mathbb{U}_\mathrm{right} \Omega\big)_{s}}{\sqrt{\sum_{s=0}^N\left\|\big(\mathbb{U}_\mathrm{left} \mathbb{U}_\mathrm{right} \Omega\big)_{s}\right\|^2}}.
	\end{equation}
\end{defi}

The excitation content of $\psi_{\mathrm{trial}}$ is
\begin{equation*}
\left(\mathcal{U}_N \psi_{\mathrm{trial}}\right)_{s,d}= c_{d,s} \Phi_{\mathrm{trial},s}
\end{equation*}
for $0\le s\le N$ and $|d|\le \sigma_N^2$, and zero otherwise. Note that the function of the $s$ variables $\Phi_{\mathrm{trial},s}$ does not depend on $d$,
and that $c_{d,s} = c_{d,s'}$ for all $d$ if $s$ and $s'$ have the same parity.
Note also that $\psi_\mathrm{trial}$ is normalized to one.
In the rest of this subsection we prove

\begin{proposition}[\textbf{Energy upper bound}]\mbox{}\label{prop:upper}\\
	Pick a sequence $T(N) \sim N^{-\delta}$ with $0 < \delta$. Then, along this sequence,
	\begin{equation} \label{eq:upper_bound}
	\limsup_{N\to \infty}\left( \langle H_N\rangle_{ \psi_{\mathrm{trial}}} - E_{2\mathrm{-mode}} - E^\mathrm{Bog} \right) \le 0.
	\end{equation}
\end{proposition}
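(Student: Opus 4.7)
The plan is to split $\langle H_N\rangle_{\psi_{\mathrm{trial}}} = \langle H_{2\mathrm{-mode}}\rangle_{\psi_{\mathrm{trial}}} + \langle H_N - H_{2\mathrm{-mode}}\rangle_{\psi_{\mathrm{trial}}}$ and estimate each piece separately, taking advantage of the fact that $\psi_{\mathrm{trial}}$ factorizes the $d$-dependence (Gaussian profile) from the $s$-dependence (Bogoliubov quasi-free structure). For the two-mode contribution, I would use the explicit expression of Proposition~\ref{lemma:2mode_lower}. Since $H_{2\mathrm{-mode}}$ only involves $a_1,a_2,\mathcal{N}_\perp$, it leaves $\Phi_{\mathrm{trial},s}$ untouched apart from reading off $s$. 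The expectations $\langle a_1^\dagger a_2 + a_2^\dagger a_1\rangle$, $\langle(\mathcal{N}_1-\mathcal{N}_2)^2\rangle \lesssim \sigma_N^2$, and $\langle \mathcal{N}_\perp^n\rangle = O(1)$ are computed exactly as in the proof of Proposition~\ref{lemma:2mode_upper}; the extra averaging against $\|\Phi_{\mathrm{trial},s}\|^2$ only contributes powers of $\mathcal{N}_\perp$, which are uniformly bounded since $\mathbb{U}_{\mathrm{left}}\mathbb{U}_{\mathrm{right}}\Omega$ is a quasi-free state. With our choice~\eqref{eq:modif sigma} of $\sigma_N$, this yields
\begin{equation*}
\langle H_{2\mathrm{-mode}}\rangle_{\psi_{\mathrm{trial}}} \le E_0 + E^w_N + N\frac{\mu_+ - \mu_-}{2} - \mu_+ \langle \mathcal{N}_\perp\rangle_{\psi_{\mathrm{trial}}} + o(1).
\end{equation*}

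For the Bogoliubov contribution, I would apply Proposition~\ref{prop:bogoliubov} to $\Phi := \mathcal{U}_N\psi_{\mathrm{trial}}$. The linear terms in~\eqref{eq:derivation_bogoliubov} \emph{vanish identically} by parity: the Bogoliubov vacua $\mathbb{U}_{\mathrm{right}}\Omega$ and $\mathbb{U}_{\mathrm{left}}\Omega$ contain only even-particle-number sectors (particles come in pairs by \eqref{eq:in pairs}), hence $\Phi_{\mathrm{trial},s}=0$ for odd $s$, while $\Theta^{\pm 1}a_m\mathfrak{D}$ sends the support of $\Phi$ into an orthogonal subspace of $\ell^2(\gF_\perp)$. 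The remainders are $o(1)$ because $\langle \mathcal{N}_\perp^2\rangle_\Phi = O(1)$ and $\langle \mathfrak{D}^2/N\rangle_\Phi = O(\sigma_N^2/N)= o(1)$. I would then invoke Proposition~\ref{prop:reduction} with an energy cutoff $\Lambda$, and replace the shift operators $\Theta^{\pm 2}$ appearing inside $\mathbb{H}_{\mathrm{right}}^{(M_\Lambda)}$ and $\mathbb{H}_{\mathrm{left}}^{(M_\Lambda)}$ by the identity, up to an $O(\sigma_N^{-2}) = o(1)$ error — this is a direct adaptation of~\eqref{eq:expectation_tunneling}, since the Gaussian profile satisfies $c_{d,s}\,c_{d\pm 2,s'} \approx c_{d,s}^2$ uniformly for $s,s'$ of the same parity. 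After this replacement $\mathbb{H}_{\mathrm{right}}^{(M_\Lambda)}$ acts only on the $s$-structure of $\Phi_{\mathrm{trial}}$, and using that $\mathbb{U}_{\mathrm{left}}$ commutes with right-mode operators together with $\Phi_{\mathrm{trial}} \to \mathbb{U}_{\mathrm{left}}\mathbb{U}_{\mathrm{right}}\Omega$ as $N\to\infty$ (the truncation at $s\le N$ costs only exponentially small mass for a quasi-free state), I obtain
\begin{equation*}
\lim_{N\to\infty}\langle \mathbb{H}_{\mathrm{right}}^{(M_\Lambda)} + \mathbb{H}_{\mathrm{left}}^{(M_\Lambda)}\rangle_\Phi = \langle \mathbb{H}_{\mathrm{right}}^{(M_\Lambda),\Theta=\mathbbm{1}}\rangle_{\mathbb{U}_{\mathrm{right}}\Omega} + \langle \mathbb{H}_{\mathrm{left}}^{(M_\Lambda),\Theta=\mathbbm{1}}\rangle_{\mathbb{U}_{\mathrm{left}}\Omega}\;.
\end{equation*}
Letting $\Lambda\to\infty$, this converges to $E^{\mathrm{Bog}}_{\mathrm{right}}+ E^{\mathrm{Bog}}_{\mathrm{left}} = E^{\mathrm{Bog}}$ by~\eqref{eq:action_theta_identity}, while the tail $\langle \mathrm{d}\Gamma_\perp(P_{>M_\Lambda}(h_{\mathrm{MF}}-\mu_+)P_{>M_\Lambda})\rangle_\Phi \to 0$ as $N\to\infty$ and then $\Lambda \to\infty$, thanks to the trace-class control of the one-body density matrix of the quasi-free state (which makes $\mathrm{Tr}(h_{\mathrm{MF}}\gamma) <\infty$).

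Summing the two contributions, the terms $-\mu_+\langle\mathcal{N}_\perp\rangle$ from the two-mode part and $+\mu_+\langle\mathcal{N}_\perp\rangle$ from Proposition~\ref{prop:bogoliubov} cancel, and comparison with the lower bound for $E_{2\mathrm{-mode}}$ (setting $\mathcal{N}_\perp = 0$ in \eqref{eq:lower_bound_2mode}) gives $\langle H_N\rangle_{\psi_{\mathrm{trial}}} \le E_{2\mathrm{-mode}} + E^{\mathrm{Bog}} + o(1)$, which is the desired bound~\eqref{eq:upper_bound}. The limit $N\to\infty$ must be taken before $\Lambda\to\infty$ because the error from Proposition~\ref{prop:reduction} has the form $C_\Lambda o_N(1) + C(\mu_{2M_\Lambda+2}-\mu_+)^{-1/2}$. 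The main obstacle is the careful justification of the ``$\Theta^{\pm 2} \rightsquigarrow \mathbbm{1}$'' replacement on the trial state; it is precisely the combination of the parity structure of the Bogoliubov vacuum with the slow variation of the Gaussian (i.e.\ $\sigma_N\to\infty$) that makes this work, and this is why the $s$- and $d$-dependence of $\psi_{\mathrm{trial}}$ had to be chosen in a factorized manner, with Gaussian coefficients $c_{d,s}$ depending on $s$ only through the parity constraint.
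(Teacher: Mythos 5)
Your overall architecture is the same as ours: split off $H_{2\mathrm{-mode}}$, apply Proposition~\ref{prop:bogoliubov} to $\mathcal{U}_N\psi_{\mathrm{trial}}$, exploit the factorized $(s,d)$-structure so that the Gaussian computations of Section~\ref{sect:proof_2mode} and the quasi-free moments control everything, replace $\Theta^{\pm2}$ by $\mathbbm{1}$ via \eqref{eq:expectation_tunneling}, and let the $\pm\mu_+\langle\mathcal{N}_\perp\rangle$ terms cancel. Your treatment of the linear terms is a genuine (and nicer) variant: you observe that $\Phi_{\mathrm{trial},s}=0$ for odd $s$ because the Bogoliubov vacuum only populates even sectors, so $\Theta^{\pm1}a_m\mathfrak{D}$ maps the support of $\Phi$ into an orthogonal parity class and the expectation vanishes \emph{exactly}. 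We instead bound these terms by Cauchy--Schwarz as $N^{-1/2}\langle\mathcal{N}_\perp\rangle^{1/2}\langle(\mathcal{N}_1-\mathcal{N}_2)^2\rangle^{1/2}=o_N(1)$; your argument is consistent with the pairing structure \eqref{eq:in pairs} that we ourselves use to kill $\xi_1,\xi_2$, and it buys you exactness where we only get smallness.

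Two points need more care. First, and most importantly, you route the Bogoliubov part through Proposition~\ref{prop:reduction}. For an \emph{upper} bound the positive tail $\mathrm{d}\Gamma_\perp\big(P_{>M_\Lambda}(h_{\mathrm{MF}}-\mu_+)P_{>M_\Lambda}\big)$ in \eqref{eq:reduction_l_r} cannot be dropped, and your justification that it vanishes (``trace-class control of the one-body density matrix'') is asserted rather than proved: the state, the modes and $h_{\mathrm{MF}}$ all depend on $N$ through $L$, so you would need a \emph{uniform-in-$N$} bound on $\mathrm{Tr}\big(P_{>M_\Lambda}(h_{\mathrm{MF}}-\mu_+)\gamma_N\big)$ compatible with taking $N\to\infty$ before $\Lambda\to\infty$; you would also have to verify the hypothesis \eqref{eq:assum_reduction_quadratic} for the trial state (we do the analogous quartic-moment check via Wick's theorem). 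We avoid all of this by keeping the full, uncut $\mathbb{H}_{\mathrm{right}}+\mathbb{H}_{\mathrm{left}}$, for which $\mathbb{U}_{\mathrm{left}}\mathbb{U}_{\mathrm{right}}\Omega$ is an exact eigenvector by \eqref{eq:action_theta_identity}, and introduce the cutoff only for the off-diagonal remainder $\xi_3$. Second, a minor omission: the second error term in \eqref{eq:derivation_bogoliubov} carries a factor $\langle\mathcal{N}_-\rangle^{3/4}_{\psi_{\mathrm{trial}}}$, and the trivial bound $\langle\mathcal{N}_-\rangle\le N$ gives $T^{1-\varepsilon}N^{1/2}$, which diverges for $\delta<1/2$; one needs the refined estimate $\langle\mathcal{N}_-\rangle_{\psi_{\mathrm{trial}}}\lesssim 1+N/\sigma_N^2$ of \eqref{eq:bound_N_-_proof_step1}, which does follow from the same Gaussian computation you already perform for $\langle a_1^\dagger a_2+a_2^\dagger a_1\rangle$, but should be stated.
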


\begin{proof}
  By using Proposition~\ref{prop:bogoliubov} with $\Phi=  \mathcal{U}_N\psi_{\mathrm{trial}}$ to estimate $ \langle H_N\rangle_{ \psi_{\mathrm{trial}}}$,
  one obtains the upper bound
\begin{equation} \label{eq:upper_bound_from_prop_5.1}
  \begin{split}
    \langle H_N\rangle_{ \psi_{\mathrm{trial}}} \le \;\; & 
  \langle H_{2\mathrm{-mode}}\rangle_{\psi_{\mathrm{trial}}}   + \langle \mathbb{H} \rangle_{\mathcal{U}_N\psi_{\mathrm{trial}}}
  + \mu_{+} \langle {\mathcal{N}}_\perp \rangle_{\mathcal{U}_N\psi_{\mathrm{trial}}}
  \\
  &  + \langle \text{ linear terms } \rangle_{\mathcal{U}_N\psi_{\mathrm{trial}}} -  \text{ error terms.}
\end{split}  
\end{equation}
We first determine the expectations in the trial state  of the 2-mode Hamiltonian $H_{2\mathrm{-mode}}$ (Step 1), then
that of the Bogoliubov Hamiltonian $\mathbb{H}$ (Steps 2 and 3), before showing that
the expectation of the linear terms and the error terms converge to zero as $N \to \infty$.

\medskip

\noindent \textbf{Step 1: 2-mode energy of the trial state.} The
2-mode Hamiltonian~\eqref{eq:2mode_expression} does not contain
operators that change the number of excitations (i.e., the index
$s$). The only terms in $H_{2\mathrm{-mode}}$ that involve the
variable $s$ are those containing $\mathcal{N}_\perp$ or
$\mathcal{N}_\perp^2$. For example, we compute 
\begin{equation*}
\begin{split}
  \langle \mathcal{N}_\perp^2\rangle_{ \psi_{\mathrm{trial}}}=\;&
  \sum_{s=0}^{N} \sum_{|d| \le \sigma_N^2} |c_{d,s} |^2 s^2\left\| \Phi_{\mathrm{trial},s}  \right\|^2 = \sum_{s=0}^{N} s^2\left\| \Phi_{\mathrm{trial},s}  \right\|^2 
\\=&\;\frac{\left\langle \cN_\perp ^2 {\mathbb{1}}_{\cN_\perp \leq  N}
  \right\rangle_{ \bUl \bUr  \Omega }}
   {\norm{\mathbb{1}_{\cN_\perp    \leq N} \bUl \bUr
       \Omega }^2
     } \;.
\end{split}
\end{equation*}
The denominator in the last line tends to $1$ when $N\to \infty$ and it easily follows from the previous definitions that 
$$
\left\langle \cN_\perp ^2 \right\rangle_{ \bUl \bUr \Omega } = \left\langle \cN_\perp ^2 \right\rangle_{ \bUl \Omega } + \left\langle \cN_\perp ^2 \right\rangle_{ \bUr \Omega }\;.
$$
Since both moments in the right hand side are finite, it follows that
\begin{equation} \label{eq:upper_bound_N_perp}
  \langle \mathcal{N}_\perp^2\rangle_{ \psi_{\mathrm{trial}}} \le C
\end{equation}
for a constant $C>0$ independent of $N$. By the Cauchy-Schwarz inequality, this implies that $ \langle \mathcal{N}_\perp \rangle_{ \psi_{\mathrm{trial}}} \le \sqrt{C}$.
%
%

For all other terms of $H_{2\mathrm{-mode}}$ in \eqref{eq:2mode_expression}, i.e. those that only contain $a^\sharp_1$ and $a^\sharp_2$, we will use a general formula of the type
\begin{equation*}
\begin{split}
  \langle f(a^\sharp_1,a^\sharp_2)\rangle_{\psi_{\mathrm{trial}}} =\;&\sum_{s=0}^{N} \sum_{|d| \le \sigma_N^2} \sum_{|d'| \le \sigma_N^2}
  c_{d,s} c_{d',s}\left\| \Phi_{\mathrm{trial},s}  \right\|^2\\
&\times \Big\langle u_1^{\otimes (N-s+d')/2}\otimes_{\mathrm{sym}} u_2^{\otimes(N-s-d')/2},f(a^\sharp_1,a^\sharp_2)\, u_1^{\otimes (N-s+d)/2}\otimes_{\mathrm{sym}} u_2^{\otimes(N-s-d)/2}\Big\rangle.
\end{split}
\end{equation*}
To compute the expectations in the second line, we can repeat the calculations performed in the proof of the upper bound \eqref{eq:upper_bound_2mode} for the 2-mode Hamiltonian, keeping track of the fact that the total number of particles is now $N-s$, for a generic $0\le s\le N$. Let
\begin{equation*}
\psi_{\mathrm{trial},s}:=\sum_{|d| \le \sigma_N^2} c_{d,s} u_1^{\otimes (N-s+d)/2}\otimes_\mathrm{sym} u_2^{\otimes (N-s-d)/2}\otimes_\mathrm{sym}\Phi_{\mathrm{trial},s}
\end{equation*}
be the component of $\psi_{\mathrm{trial}}$ with exactly $s$ excitations. One finds
\begin{equation*}
\begin{split}
\big\langle \big(\mathcal{N}_1+\mathcal{N}_2\big)^n\big\rangle_{\psi_{\mathrm{trial},s}}=\;&(N-s)^n\left\|\Phi_{\mathrm{trial},s}\right\|^2\\
 \big\langle \mathcal{N}_-\big\rangle_{\psi_{\mathrm{trial},s}}
 =\frac{1}{2}\big\langle \mathcal{N}_1+\mathcal{N}_2-a^\dagger _1a_2-a^\dagger _2a_1\big\rangle_{\psi_{\mathrm{trial},s}}
  \le\;& C\left( 1 + \frac{N-s}{\sigma_N^2} + (N-s) e^{-\frac{N-s}{\sigma_N^2}} \right)\left\|\Phi_{\mathrm{trial},s}\right\|^2\\
  \big\langle \big( \mathcal{N}_1-\mathcal{N}_2 \big)^2\big\rangle_{\psi_{\mathrm{trial},s}}
  \le\;& C_\varepsilon (N-s) T^{1/2-\varepsilon}\left\|\Phi_{\mathrm{trial},s}\right\|^2.
\end{split}
\end{equation*}
Using $\sum_{s=0}^N\psi_{\mathrm{trial},s}=\psi_{\mathrm{trial}}$ and  splitting the sum into two parts for $0 \le s < N/2$ and
for $N/2 \le s \le N$, 
one has for example ($C$ is a generic constant which may change from line to line)
\begin{equation} \label{eq:bound_N_-_proof_step1}
  \begin{split}
    \big\langle \mathcal{N}_-\big\rangle_{\psi_{\mathrm{trial}}} & \;\le\;
    C \sum_{0 \le s < N/2} \left( 1 + \frac{N-s}{\sigma_N^2} + (N-s) e^{-\frac{N-s}{\sigma_N^2}} \right) \left\|\Phi_{\mathrm{trial},s}\right\|^2
    + C  N  \sum_{N/2 \le s \le N} \left\|\Phi_{\mathrm{trial},s}\right\|^2 
    \\
    & \; \le \; C \left( 1 + \frac{N}{\sigma_N^2} + N e^{-\frac{N}{2 \sigma_N^2}} \right) + \frac{C}{N}
    \\
    & \;\le  \; C \left( 1 + \frac{N}{\sigma_N^2} \right) \;\le \; C \left( 1 + \max \left(C_\varepsilon T^{-\frac{1}{2} - \varepsilon}, N^{1/2}\right) \right) \;,
  \end{split}
\end{equation}
 where in the second line we have used $\sum_{s=0}^N \left\|\Phi_{\mathrm{trial},s}\right\|^2 = 1$ and the bound
\begin{equation*}
  \frac{N^2}{4}  \sum_{N/2 \le s \le N} \left\|\Phi_{\mathrm{trial},s}\right\|^2   \le
  \sum_{N/2 \le s \le N} s^2 \left\|\Phi_{\mathrm{trial},s}\right\|^2 \le \big\langle {\mathcal{N}}_\perp^2 \big\rangle_{\psi_{\mathrm{trial}}}
  \le C\;,
\end{equation*}    
and in the third line we have used  \eqref{eq:first_gap}, the assumption $T \sim N^{-\delta}$, and the fact that $N e^{-\frac{N}{2 \sigma_N^2}}$
can be bounded
by a constant times
$N (\sigma_N^2/N)^{2\delta^{-1} (1 -\varepsilon)^{-1}}$.
Similarly, we find
\begin{equation} \label{eq:estimates_trial_state}
\begin{split}
  \big\langle \big(\mathcal{N}_1+\mathcal{N}_2\big)^n\big\rangle_{\psi_{\mathrm{trial}}}=\big\langle(N-\mathcal{N}_\perp)^n\big\rangle_{ \psi_{\mathrm{trial}}}
    \le\;& C N^n\\
    0 \le  \big\langle N- \mathcal{N}_\perp  -  a^\dagger _1 a_2 - a^\dagger _2a_1 \big\rangle_{\psi_{\mathrm{trial}}}
     = 2 \big\langle {\mathcal{N}}_{-} \big\rangle_{\psi_{\mathrm{trial}}}
      \le\;& C \left( 1 + \max \left(C_\varepsilon T^{-\frac{1}{2} - \varepsilon}, N^{1/2}\right) \right)\\
\big\langle \big( \mathcal{N}_1-\mathcal{N}_2 \big)^2\big\rangle_{\psi_{\mathrm{trial}}}\le\;& \max\left( C_\varepsilon N T^{\frac{1}{2}-\varepsilon}, N^{1/2}\right)\\
\big\langle\mathcal{N}_-^2 \big\rangle_{\psi_{\mathrm{trial}}} \le N \big\langle\mathcal{N}_- \big\rangle_{\psi_{\mathrm{trial}}}
 \le\;& CN \left( 1 + \max \left(C_\varepsilon T^{-\frac{1}{2} - \varepsilon}, N^{1/2}\right) \right).
\end{split}
\end{equation}
According to the identity \eqref{eq:2mode_expression} of Proposition~\ref{lemma:2mode_lower}, one has
\begin{equation*}
  \begin{split}
    \langle H_{2\mathrm{-mode}}\rangle_{\psi_{\mathrm{trial}}}  = \; &
    E_0+E^w_N+N\frac{\mu_+-\mu_-}{2}
   + \frac{\mu_{-}-\mu_{+}}{2}  \big\langle N - a_1^\dagger a_2 - a_2^\dagger a_1 \big\rangle_{\psi_{\mathrm{trial}}}
   \\
   & -\frac{\lambda N}{N-1} \big( ( w_{1112}+w_{1122} )  \langle {\mathcal{N}}_\perp \rangle_{\psi_{\mathrm{trial}}} - w_{1122} \big)
   \\
   & + \frac{\lambda}{N-1} \Big\langle \big( ( w_{1112} + w_{1122} )  {\mathcal{N}}_\perp - w_{1122} \big)
    \big( N - a_1^\dagger a_2 - a_2^\dagger a_1 \big)   \Big\rangle_{\psi_{\mathrm{trial}}} 
    \\
    & - \mu   \langle {\mathcal{N}}_\perp \rangle_{\psi_{\mathrm{trial}}} 
    + \frac{\lambda U}{N-1}  \big\langle ( {\mathcal{N}}_1 -{\mathcal{N}}_2 )^2 \big\rangle_{\psi_{\mathrm{trial}}}
    \\
    & + \frac{2 \lambda}{N-1} w_{1122} \langle {\mathcal{N}}_{-}^2 \rangle_{\psi_{\mathrm{trial}}}
    + \frac{\lambda}{4(N-1)} \big( w_{1111} - 2 w_{1122} + w_{1212} \big)  \langle {\mathcal{N}}_\perp^2 \rangle_{\psi_{\mathrm{trial}}}.
  \end{split}
\end{equation*}
Plugging \eqref{eq:upper_bound_N_perp} and \eqref{eq:estimates_trial_state} into this identity, bounding
the expectation in the third line by
$(| w_{1112}| + w_{1122} ) N \langle N- a_1^\dagger a_2 - a_2^\dagger a_1 \rangle_{\psi_{\mathrm{trial}}}$,
and recalling the estimates for the various $w$-coefficients and for $\mu - \mu_{+}$ from Lemma \ref{lemma:w_coefficients}, we deduce that
\begin{equation*}
\langle H_{2\mathrm{-mode}}\rangle_{\psi_{\mathrm{trial}}}  \le E_0+E^w_N+N\frac{\mu_+-\mu_-}{2}-\mu_+ \langle \mathcal{N}_\perp\rangle_{ \psi_{\mathrm{trial}}} + o_N (1)
\end{equation*}
in both cases of~\eqref{eq:modif sigma}. Arguing as in Section~\ref{sec:BH ener}, we conclude
\begin{equation} \label{eq:bh_energy_trial}
\langle H_{2\mathrm{-mode}}\rangle_{\psi_{\mathrm{trial}}}  \le \Etwo -\mu_+ \langle \mathcal{N}_\perp\rangle_{ \psi_{\mathrm{trial}}} +o_N (1).
\end{equation}

\medskip

\noindent \textbf{Step 2: Bogoliubov energy of the trial state.}  We want to compute $\langle\mathbb{H}\rangle_{\mathcal{U}_N\psi_{\mathrm{trial}}}$. We decompose analogously to~\eqref{eq:bog_decomp_errors}:
\begin{equation}  \label{eq:bog_decomp_nocutoffbis}
\mathbb{H} =\mathbb{H}_\mathrm{right}+\mathbb{H}_\mathrm{left}+\mathbb{H}_{12}+\sum_{j=1}^3 \xi_j
\end{equation}
with $\mathbb{H}_\mathrm{right}$, $\mathbb{H}_\mathrm{left}$ given by~\ref{eq:H_right}-\eqref{eq:H_left},
$\mathbb{H}_{12}$ given by \eqref{def_H_12}, and
\begin{equation} \label{eq:bog_decomp_nocutoff}
\begin{split}
\xi_1&=\sum_{\alpha,\beta\ge1} \left\langle u_{r,\alpha},\big(h_{\mathrm{MF}}-\mu_+\big) u_{\ell,\beta}\right \rangle a^\dagger_{r,\alpha}a_{\ell,\beta}+\mathrm{h.c.}\\
\xi_2&= \lambda \sum_{\alpha,\beta\ge1} \left \langle u_{r,\alpha},\big( K_{11}+K_{22}\big) u_{\ell,\beta}\right\rangle a^\dagger_{r,\alpha} a_{\ell,\beta}+\mathrm{h.c.}\\
&+\lambda \sum_{\alpha,\beta\ge1}\Big[ \left\langle u_{r,\alpha},K_{11} u_{\ell,\beta}\right\rangle \Theta^{-2}+ \left\langle u_{r,\alpha},K_{22}u_{\ell,\beta}\right\rangle \Theta^{2} \Big] a^\dagger_{r,\alpha}a^\dagger_{\ell,\beta}+\mathrm{h.c.}\\
\xi_3 &= \sum_{\alpha,\beta\ge1} \left\langle u_{r,\alpha},K_{22}u_{r,\beta}\right\rangle a^\dagger_{r,\alpha}a_{r,\alpha}+\sum_{\alpha,\beta\ge1} \left\langle u_{\ell,\alpha},K_{11}u_{\ell,\beta}\right\rangle a^\dagger_{\ell,\alpha}a_{\ell,\alpha}\\
&+\frac{\lambda}{2}\sum_{\alpha,\beta\ge1} 
\Big( \left\langle u_{r,\alpha},K_{22}u_{r,\beta}\right\rangle  \Theta^2a^\dagger _{r,\alpha}a^\dagger_{r,\beta}
+\left\langle u_{\ell,\alpha},K_{11}u_{\ell,\beta}\right\rangle \Theta^{-2}a^\dagger _{\ell,\alpha}a^\dagger_{\ell,\beta}+ \mathrm{h.c.} \Big)\;.
\end{split}
\end{equation}
We will show below (see Step 3) that the main part of the energy in the trial state comes from the expectation
of $\mathbb{H}_\mathrm{right}+\mathbb{H}_\mathrm{left}$.
We now prove that the latter expectation is equal to $E^{\mathrm{Bog}}$ up to errors of order
$N^{-1} T^{-1/2-\varepsilon}$.
Each term of $\mathbb{H}_\mathrm{right}+\mathbb{H}_\mathrm{left}$ contains $\Theta$ elevated to a certain power, either $-2$, $0$, or $+2$ (this power is zero for the $b^\dagger b$ and $c^\dagger c$ part). We know that the excitation content of $\psi_{\mathrm{trial}}$ is
\begin{equation*} 
\big\{\mathcal{U}_N\psi_{\mathrm{trial}}\big\}_{s,d}=c_{d,s} \Phi_{\mathrm{trial},s}\;,
\end{equation*}
thus the operator $\Theta$ acts on $\mathcal{U}_N\psi_{\mathrm{trial}}$ by simply translating the $c_{d,s}$ coefficient as
$c_{d,s} \rightarrow c_{d-1,s}$. Taking one term of $\mathbb{H}_\mathrm{right}$ as an example, we have
\begin{equation*}
\begin{split}
\sum_{\alpha,\beta \ge 1}\big( K_{11} \big)_{\alpha \beta} &\left\langle \Theta^2 a_{r,\alpha} a _{r,\beta}\right\rangle_{ \mathcal{U}_N\psi_{\mathrm{trial}}}\\
=\;&\sum_{\alpha,\beta \ge1}  \big( K_{11} \big)_{\alpha \beta} \sum_{s=0}^N  \bigg(\sum_{|d| \le \sigma_N^2} \Big\langle \Big( \mathcal{U}_N\psi_{\mathrm{trial}} \Big)_{s,d}\;,
\Big( a_{r,\alpha} a_{r,\beta}  \mathcal{U}_N\psi_{\mathrm{trial}} \Big)_{s,d-2}\Big\rangle\\
=\;&\sum_{\alpha,\beta\ge1} \big(K_{11}\big)_{\alpha \beta} \sum_{s=0}^N  \bigg(\sum_{|d| \le \sigma_N^2} c_{d,s} c_{d-2,s}\bigg)  \big\langle \Phi_{\mathrm{trial},s} ,
 a_{r,\alpha} a_{r,\beta} \Phi_{\mathrm{trial},s+2}  \big\rangle,
\end{split}
\end{equation*}
where we have used that $c_{d,s}$ only depends of the parity of $s$.
For the sum over $d$, we know that, by \eqref{eq:expectation_tunneling}, for all $\kappa\in 2 \mathbb{Z}$,
\begin{equation} \label{eq:error_tunneling_upper}
\bigg| \sum_{|d| \le \sigma_N^2} c_{d,s} c_{d\pm\kappa,s} - 1\bigg|\le \frac{C}{\sigma_N^2}\le \begin{cases}
\frac{1}{c_\varepsilon N T^{1/2+\varepsilon}} \mbox{ if } \delta < 1  \\                                                                                              
\frac{1}{N^{1/2}} \mbox{ otherwise}                                                                                                                                                                                             
\end{cases}
\end{equation}
having used the lower bound \eqref{eq:first_gap} on the gap for the second inequality and recalled the choice~\eqref{eq:modif sigma}. This proves that
\begin{multline*}
  \left|\sum_{\alpha,\beta\ge1}\big( K_{11} \big)_{\alpha \beta} \left\langle \Theta^2 a_{r,\alpha} a _{r,\beta}\right\rangle_{ \mathcal{U}_N\psi_{\mathrm{trial}}}-\sum_{\alpha,\beta\ge1}  \big( K_{11} \big)_{\alpha \beta}  \left\langle a_{r,\alpha} a _{r,\beta}\right\rangle_{ \mathcal{U}_N\psi_{\mathrm{trial}}}\right|\\
   = \left| \sum_{s=0}^N g(s) \left\langle \Phi_{\mathrm{trial},s}, \widetilde{K} \Phi_{\mathrm{trial},s+2} \right\rangle \right|
\leq o_N (1),
\end{multline*}
where 
$$ g (s) = 1 - \sum_{d} c_{d,s}c_{d-2,s} \mbox{ and } \widetilde{K} = \sum_{\alpha,\beta} \big(K_{11}\big)_{\alpha \beta} a_{r,\alpha} a_{r,\beta}.$$
We used the Cauchy-Schwarz inequality,~\eqref{eq:error_tunneling_upper} and the fact that, $K_{11}$ being trace-class, $\widetilde{K}$ is controled by $\mathcal{N}_\perp^2$, whose expectation in $ \Phi_{\mathrm{trial}}$ is uniformly bounded. All terms in $\mathbb{H}_\mathrm{right}$ and $\mathbb{H}_\mathrm{left}$ that contain $\Theta^{\pm2}$
can be treated similarly. This shows that, up to a remainder, $\mathbb{H}_\mathrm{right}+\mathbb{H}_\mathrm{left}$ acts on $\mathcal{U}_N\psi_{\mathrm{trial}}$
as if $\Theta$ were set to the identity, and therefore
\begin{equation*}
\begin{split}
\left| \left\langle \mathbb{H}_\mathrm{right}+\mathbb{H}_\mathrm{left}\right\rangle_{ \mathcal{U}_N\psi_{\mathrm{trial}}}-E^\mathrm{Bog} \right| \le \;& \big| \left\langle \mathbb{H}_\mathrm{right}^{\Theta=\mathbbm{1}}+\mathbb{H}_\mathrm{left}^{\Theta=\mathbbm{1}}\right\rangle_{ \mathcal{U}_N\psi_{\mathrm{trial}}}-E^\mathrm{Bog} \big|+ o_N (1).
\end{split}
\end{equation*}
On the other hand, recalling the definition of $\mathcal{U}_N\psi_{\mathrm{trial}}$, the normalization of $c_d$, and \eqref{eq:action_theta_identity}, we see that 
\begin{equation*}
\begin{split}
\left\langle \mathbb{H}_\mathrm{right}^{\Theta=\mathbbm{1}}+\mathbb{H}_\mathrm{left}^{\Theta=\mathbbm{1}}\right\rangle_{ \mathcal{U}_N\psi_{\mathrm{trial}}}=\;& \frac{\sum_{s=0}^N \left\langle \big(\mathbb{U}_\mathrm{left} \mathbb{U}_\mathrm{right} \Omega\big)_{s},\left( \big(\mathbb{H}_\mathrm{right}^{\Theta=\mathbbm{1}}+\mathbb{H}_\mathrm{left}^{\Theta=\mathbbm{1}}\big) \big(\mathbb{U}_\mathrm{left} \mathbb{U}_\mathrm{right} \Omega\big)\right)_{s} \right\rangle}{\sum_{s=0}^N \left\| \big(\mathbb{U}_\mathrm{left} \mathbb{U}_\mathrm{right} \Omega\big)_{s}\right\|^2}\\
=\;& E^\mathrm{Bog} + o_N (1)
\end{split}
\end{equation*}
where the error is due to sum reaching only to $N<\infty$. Hence
\begin{equation} \label{eq:trial_energy_r_l}
\begin{split}
\left| \left\langle \mathbb{H}_\mathrm{right}+\mathbb{H}_\mathrm{left}\right\rangle_{ \mathcal{U}_N\psi_{\mathrm{trial}}}-E^\mathrm{Bog} \right| \le \;&  o_N (1).
\end{split}
\end{equation}

\medskip

\noindent \textbf{Step 3: remainder terms in $\mathbb{H}$.} We now have to compute the contributions of $\mathbb{H}_{12}$ and of the $\xi_j$'s in~\eqref{eq:bog_decomp_nocutoff}. For $\mathbb{H}_{12}$ we have the a priori estimate \eqref{eq:elimination_H_12}, which implies, 
\begin{equation} \label{eq:trial_energy_12}
\left| \left\langle \mathbb{H}_{12} \right\rangle_{\mathcal{U}_N\psi_{\mathrm{trial}}} \right| \le C_\varepsilon T^{1/2-\varepsilon}.
\end{equation} 
The terms inside $\xi_1$ and $\xi_2$ each contain exactly one operator $a^\sharp_{r,\alpha}$ and one $a^\sharp_{\ell,\beta}$. Using~\eqref{eq:in pairs} and the fact that all the $a^\sharp_{r,\alpha}$'s commute with $a^\sharp_{\ell,\beta}$ and with $\mathbb{U}_\mathrm{left}$, we obtain
\begin{equation} \label{eq:expec_xi_1_2}
\left\langle \xi_1 \right\rangle_{\mathcal{U}_N\psi_{\mathrm{trial}}}=\left\langle \xi_2 \right\rangle_{\mathcal{U}_N\psi_{\mathrm{trial}}}=0.
\end{equation}
We now consider $\xi_3$, focusing on its second line. As in Proposition \ref{prop:reduction}, we introduce an energy cutoff $\Lambda$ and an integer $M_\Lambda$ which is the largest integer such that $\mu_{2M_\Lambda+2} \le \Lambda$, where $\{\mu_m\}_m$ are the eigenvalues of $h_\mathrm{MF}$. We have
\begin{equation*}
	\begin{split}
		\left| \sum_{\alpha,\beta\ge1} \left\langle u_{r,\alpha},K_{22}u_{r,\beta} \right\rangle \,\left\langle  \Theta^{-2} a_{r,\alpha}a_{r,\beta} \right\rangle_{\mathcal{U}_N\psi_{\mathrm{trial}}} \right|\le\;& \left| \sum_{1\le \alpha,\beta \le M_\Lambda} \left\langle u_{r,\alpha},K_{22}u_{r,\beta} \right\rangle \,\left\langle  \Theta^{-2} a_{r,\alpha}a_{r,\beta} \right\rangle_{\mathcal{U}_N\psi_{\mathrm{trial}}} \right|\\
		&+2\left| \sum_{\alpha \ge1,\;\beta > M_\Lambda} \left\langle u_{r,\alpha},K_{22}u_{r,\beta} \right\rangle \,\left\langle  \Theta^{-2} a_{r,\alpha}a_{r,\beta} \right\rangle_{\mathcal{U}_N\psi_{\mathrm{trial}}} \right|\\
		=:\;&\xi_3^{\le M_\Lambda}+2\xi_3^{> M_\Lambda}.
	\end{split}
\end{equation*}
For each fixed $\alpha$ and $\beta$, the matrix element $\langle u_{r,\alpha},K_{22}u_{r,\beta} \rangle$ tends to zero as $N \to \infty$ by the argument presented in the proof of Proposition~\ref{prop:reduction}, see Sec.~\ref{subsect:proof_reduction}.
Consequently,
$\xi_3^{\le M_\Lambda}$ vanishes as $N\to\infty$ for each fixed $M_\Lambda$. For $\xi_3^{> M_\Lambda}$ we argue as in the estimate of $\mathbb{K}_{> M_\Lambda}$ in the proof of Proposition \ref{prop:reduction}. By repeated use of the Cauchy Schwarz inequality, we have
\begin{equation*}
\begin{split}
  \xi_3^{> M_\Lambda}\le\;& \left(\sum_{\alpha\ge1,\;\beta > M_\Lambda}\left| \left\langle u_{r,\alpha},K_{22}u_{r,\beta} \right\rangle  \right|^2\right)^{1/2}
  \left( \sum_{\alpha\ge1,\;\beta > M_\Lambda} \left\|a_{r,\alpha}a_{r,\beta}\mathcal{U}_N\Psi_\mathrm{trial}\right\|^2\right)^{1/2}\\
  \le \;&\left(\sum_{\alpha,\beta \ge1} \left\langle u_{r,\alpha},K_{22} u_{r,\beta}\right\rangle \left\langle u_{r,\beta},K_{22} u_{r,\alpha}\right\rangle \right)^{1/2}
   \left\langle \mathcal{N}_\perp \sum_{\beta > M_\Lambda} a^\dagger_{r,\beta} a_{r,\beta}  \right\rangle_{\mathcal{U}_N\psi_{\mathrm{trial}}}   ^{1/2}.
\end{split}
\end{equation*}
The square root that contains $K_{22}$ in the right hand side is equal to $\mathrm{Tr}K_{22}$, recalling that $K_{22}$ is trace-class as proven in Lemma \ref{lemma:properties_operators}. For the other square root we notice that
\begin{equation*}
  \sum_{\beta > M_\Lambda} a^\dagger_{r,\beta} a_{r,\beta}  \le \sum_{\beta > M_\Lambda} \left(a^\dagger_{r,\beta} a_{r,\beta} +a^\dagger_{\ell,\beta}a_{\ell,\beta} \right)
  =\sum_{n > 2M_\Lambda+2} a^\dagger_{n}a_n,
\end{equation*}
having passed to the basis \eqref{eq:basis_h_MF} in the second step. Since all operators commute with $\mathcal{N}_\perp$, we deduce
using the same arguments as in the proof of Proposition~\ref{prop:reduction} that 
\begin{equation*}
\begin{split}
  \left\langle \mathcal{N}_\perp \sum_{\beta > M_\Lambda} a^\dagger_{r,\beta} a_{r,\beta}  \right\rangle_{\mathcal{U}_N\psi_{\mathrm{trial}}}
\le\;&\frac{1}{\mu_{2M_\Lambda+2}-\mu_+}
\left\langle \mathcal{N}_\perp \sum_{n > 2M_\Lambda+2} \left(\mu_n-\mu_+\right) a^\dagger_{n} a_{n}  \right\rangle_{\mathcal{U}_N\psi_{\mathrm{trial}}}.
\end{split} 
\end{equation*}
The operators $a_n^\dagger a_n$ commute with $\mathcal{N}_\perp$ and we can bound the sum in the right hand side by
$\mathrm{d}\Gamma_\perp\left(h_\mathrm{MF}-\mu_+\right)$.
Hence
\begin{equation*}
\xi_3^{> M_\Lambda} \le C \left( \frac{1}{\mu_{2M_\Lambda+2}-\mu_+} \left\langle \mathcal{N}_\perp \mathrm{d}\Gamma_\perp\left(h_{\mathrm{MF}}-\mu_+\right) \right\rangle_{\mathcal{U}_N \psi_{\mathrm{trial}}} \right)^{1/2} .
\end{equation*}
The matrix element in the right hand side is bounded by a $N$-independent constant. Indeed, $\mathcal{U}_N\psi_{\mathrm{trial}}$ being a quasi-free state, Wick's theorem gives the expectation of a quartic operator such as $\mathcal{N}_\perp \mathrm{d}\Gamma_\perp\big( h_\mathrm{MF}-\mu_+\big)$ in terms of the expectations of $\mathcal{N}_\perp$ and $\mathrm{d}\Gamma_\perp\big( h_\mathrm{MF}-\mu_+\big)$, which are uniformly bounded in $N$.
This proves 
\begin{equation}\xi_3^{> M_\Lambda} \le   \frac{C}{\big(\mu_{2M_\Lambda+2}-\mu_+\big)^{1/2}}\;. \label{eq;bound_xi_3}
\end{equation}
Plugging \eqref{eq:trial_energy_r_l}, \eqref{eq:trial_energy_12}, \eqref{eq:expec_xi_1_2} and \eqref{eq;bound_xi_3}
inside \eqref{eq:bog_decomp_nocutoffbis} gives the final bound
\begin{equation}  \label{eq:bog_energy_trial}
\left| \left\langle \mathbb{H}\right\rangle_{\mathcal{U}_N\psi_{\mathrm{trial}}} -E^\mathrm{Bog}\right| \le \frac{C}{\big(\mu_{2M_\Lambda+2}-\mu_+\big)^{1/2}} + C_{\Lambda} o_N (1).
\end{equation}

\medskip

\noindent \textbf{Step 4: error and linear terms.}
Note that, with the choice~\eqref{eq:modif sigma},
\begin{equation*}
  \Big\langle \frac{{\mathfrak{D}}^2}{N} \Big\rangle_{\mathcal{U}_N\psi_{\mathrm{trial}}}
  = \frac{1}{N} \sum_{s=0}^N \sum_{|d| \le \sigma_N^2} d^2 c_{d,s}^2  \left\|\Phi_{\mathrm{trial},s}\right\|^2
  \leq C \frac{\sigma_N^2}{N} \leq o_N (1)
\end{equation*}
where the second bound follows from Lemma~\ref{lemma:gaussian}. In view also of  \eqref{eq:upper_bound_N_perp}, the first error term in \eqref{eq:derivation_bogoliubov} when $\Phi=\mathcal{U}_N\psi_{\mathrm{trial}}$
is bounded by $C N^{-1/4}$.
The second error terms, in turn, can be bounded by a $o_N (1)$, relying on \eqref{eq:upper_bound_N_perp} and \eqref{eq:bound_N_-_proof_step1}.
Let us now show that the expectation in $\psi_{\mathrm{trial}}$ of the linear terms  in \eqref{eq:derivation_bogoliubov} are also negligible.
Using the Cauchy-Schwarz inequality we find
\begin{equation*}
\begin{split}
& \bigg|\frac{\lambda}{\sqrt{2(N-1)}}\big\langle \sum_{m\ge3} w_{+1-m}\,b_m\mathfrak{D}+\mathrm{h.c.}\big\rangle_{\mathcal{U}_N\psi_{\mathrm{trial}}}+\frac{\lambda}{\sqrt{2(N-1)}}\big\langle \sum_{m\ge3} w_{+2-m} \,c_m\mathfrak{D}+\mathrm{h.c.}\big\rangle_{\mathcal{U}_N\psi_{\mathrm{trial}}}\bigg|\\
  &\qquad \le \frac{\lambda}{\sqrt{2(N-1)}}\Big[\Big( \sum_{m\ge3} |w_{+1-m}|^2 \Big)^{1/2}+\Big( \sum_{m\ge3} |w_{+2-m}|^2\Big)^{1/2}\Big]
\langle \mathcal{N}_\perp\rangle^{1/2}_{\psi_{\mathrm{trial}}} \langle \big( \mathcal{N}_1-\mathcal{N}_2 \big)^2\rangle_{ \psi_{\mathrm{trial}}}^{1/2}\\
&\qquad \leq  o_N (1),
\end{split}
\end{equation*}
where the last inequality follows from \eqref{eq:sum_one_index} and \eqref{eq:estimates_trial_state}.
Hence we deduce from~\eqref{eq:upper_bound_from_prop_5.1} that
\begin{equation*}
  \langle H_N\rangle_{\psi_{\mathrm{trial}}}
  \le
  \langle H_{2\mathrm{-mode}}\rangle_{\psi_{\mathrm{trial}}}+ \langle \mathbb{H}\rangle_{ \mathcal{U}_N\psi_{\mathrm{trial}}}+\mu_+ \langle \mathcal{N}_\perp\rangle_{ \psi_{\mathrm{trial}}}
  +o_N (1).
\end{equation*}
Plugging \eqref{eq:bh_energy_trial} and \eqref{eq:bog_energy_trial} into this inequality gives precisely \eqref{eq:upper_bound} by passing to the limit $N\to \infty$ and then $\Lambda \to \infty$.
\end{proof}

\subsection{Energy lower bound} \label{subect:lower}

We now prove the following:

\begin{proposition}[\textbf{Energy lower bound}]\mbox{}\label{prop:lower}\\
  Assume $T\ll1$. For every large enough energy cutoff $\Lambda$, let $M_\Lambda$ be the largest integer such that $\mu_{2M_\Lambda+2}\le \Lambda$, where $\{\mu_{m}\}_{m}$ are the eigenvalues of $h_\mathrm{MF}$ in increasing order (this implies that $M_\Lambda\to\infty$ as $\Lambda\to\infty$). Then
  there exists $\lambda_0>0$ such that, for all $0\le\lambda\le\lambda_0$,
	\begin{equation}\label{eq:lower_bound}
	  \begin{split}\langle H_N\rangle_{ \psi_{\mathrm{gs}}} \ge\;&
E_0 +E_N^w + N \frac{\mu_{+}-\mu_{-}}{2}
            +E^\mathrm{Bog}+\frac{c\lambda}{N-1}\big\langle\big( \mathcal{N}_1-\mathcal{N}_2 \big)^2\big\rangle_{ \psi_{\mathrm{gs}}}\\
	&-C_{\Lambda} o_N(1)-C_\varepsilon \frac{T^{-\varepsilon}}{N^{1/2}} - C_\varepsilon T^{1/2-\varepsilon} - \frac{C}{\left( \mu_{2M_\Lambda+2}-\mu_+\right)^{1/2}}\;,
	\end{split}
	\end{equation}
	where $c$ is a positive constant.
\end{proposition}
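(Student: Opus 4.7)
The plan is to combine all the machinery built in Sections~\ref{sect:proof_2mode}--\ref{sect:minimization}, applied to $\Phi_{\mathrm{gs}} := \mathcal{U}_N \psi_{\mathrm{gs}}$, and to verify that the various error terms produced are harmless thanks to the a priori estimates of Section~\ref{sect:apriori}. The core identity driving the argument is the splitting
\begin{equation*}
\langle H_N \rangle_{\psi_{\mathrm{gs}}} = \langle H_{2\mathrm{-mode}} \rangle_{\psi_{\mathrm{gs}}} + \langle \mathcal{U}_N(H_N - H_{2\mathrm{-mode}})\mathcal{U}_N^\ast \rangle_{\Phi_{\mathrm{gs}}},
\end{equation*}
so that Proposition~\ref{lemma:2mode_lower} handles the first piece while Propositions~\ref{prop:bogoliubov}, \ref{prop:reduction}, \ref{prop:reduction_linear}, \ref{prop:shift}, \ref{prop:lower_bound_shifted} handle the second piece.

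First I would apply Proposition~\ref{prop:bogoliubov} with $\Phi = \Phi_{\mathrm{gs}}$, using the a priori bounds~\eqref{eq:BEC}, \eqref{eq:N_-_excitations}, \eqref{eq:apriori_variance}, \eqref{eq:second_moment_excitations} to control the right-hand side of~\eqref{eq:derivation_bogoliubov} by $C N^{-1/4} + C_\epsilon T^{-2\epsilon}/N^{1/2}$, see~\eqref{eq;bound_error_term_prop_5.1}. This reduces matters to $\langle \mathbb{H} + \mu_+ \mathcal{N}_\perp \rangle_{\Phi_{\mathrm{gs}}}$ plus the two families of linear terms in~\eqref{eq:derivation_bogoliubov}. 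Next, I invoke Proposition~\ref{prop:reduction} to replace $\mathbb{H}$ by $\mathbb{H}_{\mathrm{right}}^{(M_\Lambda)} + \mathbb{H}_{\mathrm{left}}^{(M_\Lambda)} + \mathrm{d}\Gamma_\perp(P_{>M_\Lambda}(h_{\mathrm{MF}}-\mu_+)P_{>M_\Lambda})$ up to $C_\Lambda o_N(1) + C(\mu_{2M_\Lambda+2}-\mu_+)^{-1/2}$, the last summand being dropped as a non-negative operator. Simultaneously, Proposition~\ref{prop:reduction_linear} reduces the linear terms to those only involving $u_{r,\alpha}$ and $u_{\ell,\alpha}$ with $\alpha \le M_\Lambda$. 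Combining these, the quadratic plus linear part becomes $\mathbb{H}_{\mathrm{right,shift}}^{(M_\Lambda)} + \mathbb{H}_{\mathrm{left,shift}}^{(M_\Lambda)}$ modulo errors of the above type.

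Now Proposition~\ref{prop:shift} converts each shifted Hamiltonian into $\widetilde{\mathbb{H}_{\mathrm{right}}^{(M_\Lambda)}} - \tfrac{1}{2}\operatorname{Tr}(P_{r,\le M_\Lambda}(h_{\mathrm{MF}}-\mu_+ + \lambda K_{11}))$ minus a term proportional to $\tfrac{\lambda^2}{N-1}\mathfrak{D}^2$ (and analogously on the left), plus small errors. Applying Proposition~\ref{prop:lower_bound_shifted} to the tilded Hamiltonians produces exactly the same trace contributions with opposite sign, which cancel, leaving $E^{\mathrm{Bog}}$ as the main surviving term. At this stage I have
\begin{equation*}
\langle \mathcal{U}_N(H_N - H_{2\mathrm{-mode}})\mathcal{U}_N^\ast \rangle_{\Phi_{\mathrm{gs}}} \ge \mu_+ \langle \mathcal{N}_\perp \rangle + E^{\mathrm{Bog}} - \frac{C \lambda^2}{N-1}\langle \mathfrak{D}^2 \rangle_{\Phi_{\mathrm{gs}}} - \text{(listed remainders)},
\end{equation*}
where $C$ is bounded uniformly in $M_\Lambda$ because $W_{r,\le M_\Lambda}$ is norm-bounded by the inverse of the gap of $D_r$.

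For the 2-mode piece, Proposition~\ref{lemma:2mode_lower} yields $\langle H_{2\mathrm{-mode}} \rangle_{\psi_{\mathrm{gs}}} \ge E_0 + E_N^w - \mu_+ \langle \mathcal{N}_\perp \rangle + N(\mu_+-\mu_-)/2 + \tfrac{\lambda U}{N-1}\langle(\mathcal{N}_1-\mathcal{N}_2)^2\rangle_{\psi_{\mathrm{gs}}} - C_\epsilon T^{1-\epsilon}\langle \mathcal{N}_\perp\rangle$, and the last summand is $O(T^{1-\epsilon})$ by~\eqref{eq:BEC}. The $\pm\mu_+ \mathcal{N}_\perp$ terms from the two pieces cancel exactly. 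Using $\langle \mathfrak{D}^2 \rangle_{\Phi_{\mathrm{gs}}} = \langle (\mathcal{N}_1-\mathcal{N}_2)^2\rangle_{\psi_{\mathrm{gs}}}$, I combine the $+\lambda U/(N-1)$ coefficient with the $-C\lambda^2/(N-1)$ coefficient in front of the variance. Since $U \ge c_0 > 0$ uniformly in $N$ by Lemma~\ref{lemma:w_coefficients}, picking $\lambda_0$ so small that $\lambda_0 U - C \lambda_0^2 \ge c \lambda_0 > 0$ (and similarly for $\lambda \le \lambda_0$) yields the desired strictly positive coefficient in front of $(\mathcal{N}_1 - \mathcal{N}_2)^2$.

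The main obstacle is not any single step but the coherent bookkeeping: one has to verify that the $\lambda^2$-scaled negative variance coming out of the completion of squares in Proposition~\ref{prop:shift} is strictly dominated by the $\lambda$-scaled positive variance delivered by the two-mode lower bound, \emph{uniformly} in the cutoff $M_\Lambda$, so that one can subsequently send $\Lambda \to \infty$ to collect the remainder $C(\mu_{2M_\Lambda+2}-\mu_+)^{-1/2} \to 0$ without losing control of the variance coefficient. This is precisely where the hypothesis $\lambda \le \lambda_0$ enters; the trace-class bound $\operatorname{Tr}_{\perp,r}(K_{11}) < \infty$ together with the uniform gap of $D_r$ ensures $C$ is indeed $M_\Lambda$-independent, closing the argument.
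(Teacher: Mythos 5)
Your proposal is correct and follows essentially the same route as the paper's own proof: apply Proposition~\ref{prop:bogoliubov} to $\mathcal{U}_N\psi_{\mathrm{gs}}$ with the a priori bounds of Section~\ref{sect:apriori}, reduce to the cutoff right/left shifted Hamiltonians via Propositions~\ref{prop:reduction} and~\ref{prop:reduction_linear}, complete the square with Proposition~\ref{prop:shift}, bound below by $E^{\mathrm{Bog}}$ with Proposition~\ref{prop:lower_bound_shifted} (the traces cancelling), and absorb the residual $-C\lambda^2 \mathfrak{D}^2/(N-1)$ into the two-mode variance term for $\lambda\le\lambda_0$ exactly as in Lemma~\ref{lemma:var_coeff}. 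The bookkeeping of error terms and the $M_\Lambda$-uniformity argument both match the paper.
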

 
We first need to prove that the (negative) coefficients multiplying the variance
$\langle \mathfrak{D}^2 \rangle_\Phi$ in \eqref{eq:shift_1}, and its analog for $\mathbb{H}_{\mathrm{left,shift}}^{(M)}$, can be absorbed by the
variance term of the 2-mode Hamiltonian. Recall that
\begin{equation*}
  W_{r,\le M_\Lambda} =   P_{r,\le M_\Lambda} \Big( P_{r,\le M_\Lambda} \big({h_\mathrm{MF}-\mu_+}+2\lambda K_{11}\big)P_{r,\le M_\Lambda} \Big)^{-1}  P_{r,\le M_\Lambda} \;,
 \end{equation*}
with a similar formula for $ W_{\ell,\le M_\Lambda}$ (replacing $K_{11}$ by $K_{22}$).

	\begin{lemma}[\textbf{Variance coefficients}]\mbox{}\label{lemma:var_coeff}\\
	Let $U$ be the coefficient from \eqref{eq:2mode_constants}. We have
		\begin{equation} \label{eq:var_coeff_i}
		  \big\langle u_1,\, K_{11} W_{r,\le M_\Lambda} K_{11}u_1\big\rangle \le  C \quad , \quad 
		  \big\langle u_2,\, K_{22} W_{\ell, \le M_\Lambda}  K_{22}u_2\big\rangle \le  C
		\end{equation}
		for some constant $C$ that does not depend on $\lambda$ and $\Lambda$. Consequently, if   $0<\lambda\le \lambda_0$ with $\lambda_0$ small enough, then
		\begin{equation} \label{eq:var_coeff_lambda}
		\lambda U- \frac{\lambda^2}{2}\big\langle u_1,\,K_{11} W_{r,\le M_\Lambda}  K_{11}u_1\big\rangle
		-\frac{\lambda^2}{2}\big\langle u_2,\, K_{22} W_{\ell,\le M_\Lambda}  K_{22}u_2\big\rangle
		\;\ge \; c\lambda
		\end{equation}
		for some $c>0$.
	\end{lemma}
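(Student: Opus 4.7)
The plan is to reduce everything to two elementary facts: (i) the operator $h_{\mathrm{MF}}-\mu_+$ restricted to $P_{r,\le M_\Lambda} L^2(\mathbb{R}^d)$ is bounded below by a positive constant independent of $N$ and $\Lambda$, and (ii) the constant $U$ defined in \eqref{eq:2mode_constants} is bounded below by a positive constant independent of $N$, which will follow directly from Lemma \ref{lemma:w_coefficients}.

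For the upper bound \eqref{eq:var_coeff_i}, I will use that $P_{r,\le M_\Lambda}$ is contained in the spectral subspace of $h_{\mathrm{MF}}$ corresponding to eigenvalues $\mu_m$ with $m \ge 3$. Since Theorem \ref{thm:onebody} ensures a finite gap $\mu_3 - \mu_+ \ge c > 0$ uniformly in $N$, and since $K_{11}\ge 0$ by Lemma \ref{lemma:properties_operators}, I get the operator inequality
\[
P_{r,\le M_\Lambda}(h_{\mathrm{MF}}-\mu_+ + 2\lambda K_{11})P_{r,\le M_\Lambda} \ge c\, P_{r,\le M_\Lambda},
\]
so that $W_{r,\le M_\Lambda}\le c^{-1}\, P_{r,\le M_\Lambda} \le c^{-1} \mathbb{1}$. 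This reduces the first bound in \eqref{eq:var_coeff_i} to $\langle u_1, K_{11}^2 u_1\rangle \le C$, which follows from boundedness of $K_{11}$ (Lemma \ref{lemma:properties_operators}). The bound on the left-mode quantity is identical with $K_{22}$ and $u_2$.

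For \eqref{eq:var_coeff_lambda}, I will combine \eqref{eq:var_coeff_i} with the lower bound on $U$. From \eqref{eq:2mode_constants} we have $U = (w_{1111}-w_{1212})/4$, and Lemma \ref{lemma:w_coefficients} gives $w_{1111}\ge c>0$ while $w_{1212} \le C_\varepsilon T^{1-\varepsilon}\to 0$. Hence $U \ge c/8$ for $T$ small enough, and in particular $\lambda U \ge c'\lambda$. Since the correction terms in \eqref{eq:var_coeff_lambda} are bounded in absolute value by $C\lambda^2$ thanks to \eqref{eq:var_coeff_i}, choosing $\lambda_0$ sufficiently small so that $C\lambda_0 \le c'/2$ yields \eqref{eq:var_coeff_lambda}.

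I do not anticipate any serious obstacle here: the only quantitative input needed is the uniform spectral gap of $h_{\mathrm{MF}}$ on the excited subspace (already recorded in the one-body results) and the pointwise estimates on the $w_{mnpq}$ from Lemma \ref{lemma:w_coefficients}. The smallness of $\lambda$ is used only to tame the $O(\lambda^2)$ remainders against the $O(\lambda)$ main term, which is the precise reason the result is stated for small $\lambda$ rather than uniformly in $\lambda$.
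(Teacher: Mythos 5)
Your proposal is correct and follows essentially the same route as the paper: both rely on the uniform spectral gap of $h_{\mathrm{MF}}-\mu_+$ on the excited subspace together with $K_{ii}\ge 0$ to get $W_{r,\le M_\Lambda}\le C\,P_{r,\le M_\Lambda}$ (the paper makes the operator-monotonicity of the inverse explicit here), and then bound the $O(\lambda^2)$ corrections against $\lambda U$ with $U\ge c>0$ from Lemma \ref{lemma:w_coefficients}. No gaps.
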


	\begin{proof}
		Using the positivity of $K_{11}$ and the finite energy gap~\eqref{eq:second_gap}, one has 
		\begin{equation*}
		  P_{r,\le M_\Lambda}\left( h_\mathrm{MF}-\mu_+ + 2 \lambda K_{11} \right)P_{r,\le M_\Lambda}
                  \ge P_{r,\le M_\Lambda}\left( h_\mathrm{MF}-\mu_+ \right)P_{r,\le M_\Lambda} >  C^{-1} P_{r,\le M_\Lambda}
		\end{equation*}

		for some $C>0$. Hence
		\begin{equation*}
		 W_{r, \le M_\Lambda} \le C P_{r,\le M_\Lambda}
		\end{equation*}
		because the inverse power is operator monotone~\cite{Bhatia} and we are restricting everything to the range of $P_{r,\le M_\Lambda}$. Since $K_{11}$ is also bounded, the first inequality in \eqref{eq:var_coeff_i} follows, and the second one is proven in the same way. The estimate \eqref{eq:var_coeff_lambda} is a consequence of \eqref{eq:var_coeff_i}. Actually, the right hand side in this estimate is bounded from below by
$\lambda (U-C \lambda)$                
                and $U - C\lambda >0$  for $\lambda$ smaller than some $\lambda_0$ that depends on $C$, because  $U >0$
                by \eqref{eq:w_1111}. 
	\end{proof}
	
	The rest of the subsection is devoted to the proof of Proposition~\ref{prop:lower}. We use the a priori estimates of Section~\ref{sect:apriori} systematically, without further mention. We also use the fact that $\mathcal{N}_\perp^4\le N^2\mathcal{N}_\perp^2$ when evaluated on $\psi_{\mathrm{gs}}$, and similarly for $\mathfrak{D}^4$.

	\begin{proof}[Proof of Proposition~\ref{prop:lower}]
	  We first use Proposition \ref{prop:bogoliubov} with $\Phi=\mathcal{U}_N\psi_{\mathrm{gs}}$. For such $\Phi$, the error terms
          are bounded as in \eqref{eq;bound_error_term_prop_5.1} and one gets
	\begin{equation} \label{eq:lower_bound_first_form}
	\begin{split}
	\langle H_N\rangle_{ \psi_{\mathrm{gs}}}\ge\;&\langle H_{2\mathrm{-mode}}\rangle_{\psi_{\mathrm{gs}}}+\mu_+\langle\mathcal{N}_\perp\rangle_{ \psi_{\mathrm{gs}}}+\left\langle \mathbb{H}\right\rangle_{ \mathcal{U}_N\psi_{\mathrm{gs}}}\\
	&+\frac{\lambda}{\sqrt{2(N-1)}}\Big\langle \sum_{m\ge3}\big( w_{+1-m}\,b_m\mathfrak{D}+ w_{+2-m}\,c_m\mathfrak{D}+ \mathrm{h.c.}\big)\Big \rangle_{ \psi_{\mathrm{gs}}}\\
	&
        -\frac{C}{N^{1/4}} - C_\epsilon \frac{T^{-\varepsilon}}{N^{1/2}}.
	\end{split}
	\end{equation}
	Next we use Proposition~\ref{prop:reduction} to separate the full excitation energy into the excitation energy of right and left modes, at the expense of the appearance of the cutoff $\Lambda$. For a lower bound, we ignore the positive $\mathrm{d}\Gamma_\perp\big(P_{\ge M_\Lambda} \big(h_{\mathrm{MF}}-\mu_+\big)P_{\ge M_\Lambda}\big)$. We also use Proposition \ref{prop:reduction_linear} to reduce the linear terms to modes below the cutoff without coupling between right and left modes. We thus obtain for any $\Lambda>0$ large enough
	\begin{equation*}
	\begin{split}
	\langle H_N\rangle_{ \psi_{\mathrm{gs}}}\ge\;&\langle H_{2\mathrm{-mode}}\rangle_{\psi_{\mathrm{gs}}}+\mu_+\langle\mathcal{N}_\perp\rangle_{ \psi_{\mathrm{gs}}}+\left\langle \mathbb{H}_\mathrm{right}^{(M_\Lambda)}+\mathbb{H}_\mathrm{left}^{(M_\Lambda)}\right\rangle_{ \mathcal{U}_N\psi_{\mathrm{gs}}}\\
	&+\frac{\lambda}{\sqrt{2(N-1)}}\Big\langle \sum_{1\le \alpha\le M_\Lambda}\Big( w_{+1-\{r,\alpha\}}\,b_{r,\alpha}\mathfrak{D}+w_{+2-\{\ell,\alpha\}}\,c_{\ell,\alpha }\mathfrak{D}+ \mathrm{h.c.} \Big) \Big \rangle_{\mathcal{U}_N \psi_{\mathrm{gs}}}\\
	&-C_{\Lambda} o_N(1)- C_\epsilon \frac{T^{-\varepsilon}}{N^{1/2}} - \frac{C}{\left( \mu_{2M_\Lambda+2}-\mu_+\right)^{1/2}}\;.
	\end{split}
	\end{equation*}
	Let us now plug into the above estimate the lower bound  on $H_{2\mathrm{-mode}}$ from Proposition~\ref{lemma:2mode_lower}, see \eqref{eq:lower_bound_2mode}. This produces, among other terms, a term $-\mu_+\langle \mathcal{N}_\perp \rangle_{ \psi_{\mathrm{gs}}}$ that cancels the one above.
        The expectation  in the ground state
        of the last term in  \eqref{eq:lower_bound_2mode} is bounded from below by $-C_\varepsilon T^{1-\varepsilon}$ due to \eqref{eq:BEC}.
        We also
        recognize that $\mathbb{H}_\mathrm{right}^{(M_\Lambda)}+\mathbb{H}_\mathrm{left}^{(M_\Lambda)}$  together with the linear terms coincide with $\mathbb{H}_{\mathrm{right,shift}}^{(M_\Lambda)}+\mathbb{H}_{\mathrm{left,shift}}^{(M_\Lambda)}$ from \eqref{eq:H_shift}. Thus
	\begin{equation*}
	\begin{split}
	  \langle H_N\rangle_{ \psi_{\mathrm{gs}}}\ge\;& E_0 +E_N^w + N \frac{\mu_{+}-\mu_{-}}{2} 
          + \frac{\lambda U}{N-1} \left\langle \big( \mathcal{N}_1-\mathcal{N}_2 \big)^2\right\rangle_{ \psi_{\mathrm{gs}}}
          +\left\langle \mathbb{H}_{\mathrm{right,shift}}^{(M_\Lambda)}+\mathbb{H}_{\mathrm{left,shift}}^{(M_\Lambda)} \right\rangle_{ \mathcal{U}_N\psi_{\mathrm{gs}}}\\
	  &-C_{\Lambda} o_N(1)- C_\epsilon \frac{T^{-\varepsilon}}{N^{1/2}} - C_\varepsilon  T^{1/2 - \varepsilon} 
          - \frac{C}{\left( \mu_{2M_\Lambda+2}-\mu_+\right)^{1/2}}\;.
	\end{split}
	\end{equation*}
We now use  Proposition \ref{prop:shift} to bound the term containing the shift Bogoliubov Hamiltonians,
which enable to absorb the linear terms at the expense of passing to $\widetilde{b}^\sharp$ and $\widetilde{c}^\sharp$ operators and of
the appearance of a negative variance term. According to the apriori bound \eqref{eq:apriori_variance} on
$\langle \mathfrak{D}^2 \rangle_{ \mathcal{U}_N\psi_{\mathrm{gs}}} =\langle ( {\mathcal{N}}_1 - {\mathcal{N}}_2 )^2 \rangle_{ \psi_{\mathrm{gs}}}$,
the error terms in  Proposition \ref{prop:shift} are bounded by $C/\sqrt{N} + C_{\varepsilon} T^{1/2-\varepsilon}$.
The new lower bound looks like
	\begin{equation*}
		\begin{split}
		 \langle  H_N  \rangle_{ \psi_{\mathrm{gs}}} \ge & \; E_0 +E_N^w + N \frac{\mu_{+}-\mu_{-}}{2}
                   +\left\langle  \widetilde{\mathbb{H}_\mathrm{right}^{(M_\Lambda)}}
                  +\widetilde{\mathbb{H}_\mathrm{left}^{(M_\Lambda)}}\right\rangle_{\mathcal{U}_N \psi_{\mathrm{gs}}}\\
		  & - \frac{1}{2} \tr \big( P_{r, \le M_\Lambda} ( h_\mathrm{MF}-\mu_+ + \lambda K_{11}) \big)
                       - \frac{1}{2} \tr \big( P_{\ell, \le M_\Lambda} ( h_\mathrm{MF}-\mu_+ + \lambda K_{22}) \big) \\
                       &   + \frac{1}{N-1}\left\langle \big( \mathcal{N}_1-\mathcal{N}_2 \big)^2\right\rangle_{ \psi_{\mathrm{gs}}}
                       \bigg[ \lambda U - \frac{\lambda^2}{2}   \big\langle u_1,\, K_{11} W_{r,\ge M_\Lambda} K_{11}u_1\big\rangle\\
                       & 
                        -\frac{\lambda^2}{2}  \big\langle u_2,\, K_{22} W_{\ell,\ge M_\Lambda}  K_{22}u_2\big\rangle \Big) \bigg] \\
  & -C_{\Lambda} o_N(1)- C_\epsilon \frac{T^{-\varepsilon}}{N^{1/2}} - C_\varepsilon  T^{1/2 - \varepsilon}
    -\frac{C}{\left( \mu_{2M_\Lambda+2}-\mu_+\right)^{1/2}}.
		\end{split}
	\end{equation*}
        By relying on Proposition~\ref{prop:lower_bound_shifted}, we bound  the difference of the expectation of
        $\widetilde{\mathbb{H}_\mathrm{right}^{(M_\Lambda)}}+\widetilde{\mathbb{H}_\mathrm{left}^{(M_\Lambda)}}$ and the
        terms in the second line by $E^\mathrm{Bog}$, up to remainders $C_\Lambda  o_N (1)$. Finally,  the terms 
        in the square brackets can be bounded from below by using  the lemma \ref{lemma:var_coeff} above,  see~\eqref{eq:var_coeff_lambda}.
        This yields the desired result~\eqref{eq:lower_bound}.
%
	\end{proof}

        \medskip
        
        \subsection{Proof of Theorem \ref{thm:main}}
        Putting together Propositions~\ref{prop:upper} and~\ref{prop:lower}, we can now conclude the proof of Theorem \ref{thm:main}. Taking the limit $N \to \infty$ in  \eqref{eq:lower_bound}
        yields
        \begin{equation*}
          \begin{split}
          & \liminf_{N \to \infty} \left( \langle H_N\rangle_{ \psi_{\mathrm{gs}}}  -  E_0 - E_N^w -  N \frac{\mu_{+}-\mu_{-}}{2} -  E^\mathrm{Bog} \right)
          \\
          & \qquad  \ge
          \limsup_{N \to \infty} \left( \frac{c \lambda }{N} \left\langle \big( \mathcal{N}_1-\mathcal{N}_2 \big)^2\right\rangle_{ \psi_{\mathrm{gs}}}
          -\frac{C}{\left( \mu_{2M_\Lambda+2}-\mu_+\right)^{1/2}} \right)\,.
          \end{split}
        \end{equation*}  
        On the other hand, combining \eqref{eq:upper_bound} and the estimate \eqref{eq:upper_boind_E_2modes} on $E_{2\mathrm{-mode}}$, which follows from Proposition~\ref{pro:ener comp}, we have
        $$
        \limsup_{N \to \infty} \left(  \langle H_N\rangle_{ \psi_{\mathrm{gs}}} - E_0 - E_N^w - N \frac{\mu_{+} - \mu_{-}}{2} -  E^\mathrm{Bog} \right) \leq 0\,.
        $$
        This gives
\begin{equation*}
  \limsup_{N\to\infty}\frac{c\lambda}{N}\big\langle\big( \mathcal{N}_1-\mathcal{N}_2 \big)^2\big\rangle_{ \psi_{\mathrm{gs}}} \le
  \limsup_{N\to\infty} \frac{C}{\left( \mu_{2M_\Lambda+2}-\mu_+\right)^{1/2}}.
\end{equation*}
As argued below Proposition \ref{prop:reduction}, the limit of the eigenvalue $\mu_{2M_\Lambda+2}$ as $N\to\infty$ is the $M_\Lambda$-th eigenvalue of a fixed one-well Hamiltonian with compact resolvent. Hence, letting $\Lambda \to \infty$, 
\begin{equation} \label{eq:variance_main_result}
\limsup_{N\to\infty}\frac{c\lambda}{N}\big\langle\big( \mathcal{N}_1-\mathcal{N}_2 \big)^2\big\rangle_{ \psi_{\mathrm{gs}}} = 0,
\end{equation}
thus proving~\eqref{eq:main_result_variance}. Inserting~\eqref{eq:variance_main_result} in
the energy upper and lower bounds~\eqref{eq:upper_bound} and ~\eqref{eq:lower_bound}, we find by using \eqref{eq:upper_boind_E_2modes} again
\begin{equation*}
  \begin{split}
  o_N(1)-\frac{C}{\left( \mu_{2M_\Lambda+2}-\mu_+\right)^{1/2}} + E_\mathrm{2-mode}+ E^\mathrm{Bog} \le &   E(N) \\ 
  \le &  E_\mathrm{2-mode} + E^\mathrm{Bog} +o_N (1)\;.
  \end{split}
\end{equation*}
Thus
we may let first $N\to\infty$ and then $\Lambda \to \infty$ to conclude the proof of~\eqref{eq:main_result_energy}.

\newpage

\appendix

\section{The one-body Hartree problem} \label{appendix:one_body}

We recall here a number of results that were proved in our companion paper \cite{OlgRou-20}, i.e. properties of the eigenvectors and eigenfunctions of the one-body Hamiltonian $h_\mathrm{MF}$.

In Section \ref{sect:intro} we defined $u_+$ and $u_-$ as the first and second eigenfunctions of $h_\mathrm{MF}$, corresponding to the eigenvalues $\mu_+$ and $\mu_-$, and the the full spectral decomposition of $h_\mathrm{MF}$ is
\begin{equation*} 
h_\mathrm{MF}=\mu_+\ket{u_+}\bra{u_+}+\mu_-\ket{u_-}\bra{u_-}+\sum_{m\ge3} \mu_{m}\ket{u_{m}}\bra{u_{m}}.
\end{equation*}
Moreover, we defined right and left modes as
\begin{equation*} 
u_{r,\alpha}:=\frac{u_{2\alpha+1}+u_{2\alpha+2}}{\sqrt{2}}\qquad\text{and}\qquad u_{\ell,\alpha}:=\frac{u_{2\alpha+1}-u_{2\alpha+2}}{\sqrt{2}},
\end{equation*}
for any $\alpha\ge1$.

We have the following result.

\begin{theorem}[\textbf{One-body Hartree problem}]\mbox{} \label{thm:onebody}
	\begin{itemize}
		\item[$(i)$]\textbf{Lower eigenvectors convergence.}
		\begin{align} \label{eq:L1_convergence}
		\left\||u_+|^2-|u_-|^2\right\|_{L^1}\le C_\varepsilon T^{1-\varepsilon}\\
		\label{eq:L2_convergence}
		\left\||u_+|-|u_-|\right\|_{L^2}\le C_\varepsilon T^{1/2-\varepsilon}\\ \label{eq:Linfty_convergence}
		\left\||u_+|-|u_-|\right\|_{L^\infty}\le C_\varepsilon T^{1/2-\varepsilon}.
		\end{align}
		\item[$(ii)$]\textbf{Bounds on the fist spectral gap.}
		\begin{equation}
		c_\varepsilon T^{1+\varepsilon}\le\mu_--\mu_+\le C_\varepsilon T^{1-\varepsilon} \label{eq:first_gap}.
		\end{equation}
		\item[$(iii)$]\textbf{Second gap.}
		\begin{equation}
		\mu_\mathrm{m}-\mu_-\ge C\qquad\forall m\ge3\label{eq:second_gap}
		\end{equation}
		independently of $L$.
		\item[$(iv)$]\textbf{Properties of $u_+$}. The function $u_+$ is smooth, strictly positive (up to a phase), and even under reflections across the $\{x_1=0\}$ hyperplane.
		\item[$(v)$]\textbf{Properties of $u_-$.} The function $u_-$ is smooth and odd under reflections across the $\{x_1=0\}$ hyperplane. Moreover, up to a phase,
		\begin{equation}
		u_1(x)>0 \quad\text{for }x_1\ge0.
		\end{equation}
		\item[$(vi)$]\textbf{Higher spectrum}. For any $\alpha\ge1$ we have
		\begin{equation} \label{eq:higher_gaps}
		\lim_{T\to0}\left( \mu_{2\alpha+2}-\mu_{2\alpha+1}\right)=0.
		\end{equation}
		and, for an appropriate phase choice of the $u_m$'s 
		\begin{equation} \label{eq:higher_modes}
		\lim_{T\to 0}\int_{x_1\le0}\left| u_{r,\alpha} \right|^2dx=\lim_{T\to0}\int_{x_1\ge0}\left| u_{\ell,\alpha} \right|^2dx=0.
		\end{equation}
	\end{itemize}
\end{theorem}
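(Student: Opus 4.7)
Plan: The overall strategy is to reduce the spectral analysis of the nonlinear mean-field operator $h_{\rm MF}$ to that of a one-well linear Schrödinger operator via a symmetry/localization argument, and then import the classical Helffer--Sjöstrand tunneling machinery. The starting point is to establish that the Hartree minimizer $u_+$ is reflection-symmetric across $\{x_1=0\}$: by uniqueness of $u_+$ (up to a phase), any minimizer $\widetilde u_+(x)$ composed with the reflection yields another minimizer, hence the two must coincide up to a phase; positivity is fixed by a convexity/rearrangement argument, giving item $(iv)$. Consequently the density $|u_+|^2$ is symmetric, so $h_{\rm MF}=-\Delta+V_{\rm DW}+\lambda\,w*|u_+|^2$ has a reflection-symmetric symbol, and a Perron--Frobenius argument forces the first excited eigenvector $u_-$ to be odd in $x_1$, yielding item $(v)$.

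From here the plan is to set up the reference one-well Hartree problem centered at $\pm x_L$, whose ground state $v_R$ (respectively $v_L$) is positive, non-degenerate with a gap to the rest of the spectrum that is $L$-independent. Using Agmon estimates for $h_{\rm MF}$ driven by the one-well potential $V(\cdot\pm x_L)$, one proves that $u_+$ and $u_-$ are concentrated on the two wells with exponential decay in the classically forbidden region governed by the Agmon distance $A$, so that the symmetric/antisymmetric combinations $(v_R\pm v_L)/\sqrt 2$ are quasimodes for $h_{\rm MF}$ at energies within $O(T^{1-\varepsilon})$. Standard min-max then gives the upper gap bound in $(ii)$, and $(iii)$ follows from the comparison with the one-well operator, whose second eigenvalue is separated from the first by an $L$-independent amount; the higher-mode pairing in $(vi)$ follows by iterating the same reasoning for the $\alpha$-th pair of eigenvalues, noting that each one-well excited state produces two quasimodes for $h_{\rm MF}$ that span the spectral subspace associated to $\mu_{2\alpha+1},\mu_{2\alpha+2}$, and that by an appropriate phase choice these are asymptotically the right/left localized modes $u_{r,\alpha}, u_{\ell,\alpha}$.

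For the convergence estimates in $(i)$, the key is that both $|u_+|$ and $|u_-|$ are exponentially close to $(v_R+v_L)/\sqrt 2$ in the bulk of each well (they share the same leading order), while being exponentially small in the barrier region. Writing $|u_+|^2-|u_-|^2=2 u_+ u_-\cdot \mathrm{sgn}(\ldots)$ in the right half-space, one bounds the $L^1$-norm by an integral that decays like $T^{1-\varepsilon}$, using the Agmon weight to absorb polynomial prefactors; the $L^2$ and $L^\infty$ versions follow either by interpolation with a uniform Sobolev bound or by applying the same Agmon comparison pointwise. The companion lower bound $\mu_--\mu_+\ge c_\varepsilon T^{1+\varepsilon}$ in $(ii)$ requires an \emph{a priori} non-degeneracy estimate: one uses a unique continuation / Harnack inequality to rule out that the splitting is smaller than the main exponential scale.

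The main obstacle will be the nonlinearity. Every tunneling estimate for the linear symmetric double well needs to be revisited to ensure the self-consistent term $\lambda\,w*|u_+|^2$ does not spoil the Agmon weights or the quasimode construction; this forces us to track $u_+$ through a bootstrap, showing first rough localization/regularity, then upgrading to exponential Agmon decay, then using that decay to show that $w*|u_+|^2$ behaves as a sum of two one-well mean-field potentials up to errors that are themselves exponentially small. The technical input \cite{OlgRou-20} carries out this bootstrap in detail; here it suffices to note that the smoothness and boundedness of $w$ make the self-consistent potential smooth and positive, which is what one needs both for the Perron--Frobenius argument and to control the remainder terms in the quasimode estimate.
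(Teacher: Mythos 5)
First, a caveat on the comparison: the paper does not actually prove Theorem \ref{thm:onebody}. Items $(i)$--$(iii)$ and $(vi)$ are imported from \cite[Theorem 2.1]{OlgRou-20}, the oddness and sign of $u_-$ from \cite[Lemma 4.2]{OlgRou-20}, and only the evenness of $u_+$ (uniqueness of the Hartree minimizer plus reflection invariance of $\cEH$) and the parity dichotomy for eigenvectors of $h_\mathrm{MF}$ are argued in the text. Your sketch reconstructs the strategy of that companion paper — symmetry of the minimizer, a bootstrap giving Agmon decay despite the self-consistent potential, one-well quasimodes, min–max for the gaps — so at the level of architecture it is the intended route. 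Three specific points, however, are wrong or incomplete as written.

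(1) The factorization $|u_+|^2-|u_-|^2=2u_+u_-\cdot\mathrm{sgn}(\ldots)$ is incorrect: one has $u_+^2-u_-^2=(u_++u_-)(u_+-u_-)=2u_1u_2$, and it is the product of the \emph{right-} and \emph{left-}localized combinations that is small in $L^1$ (both factors are of order $e^{-A}\sim T^{1/2}$ in the barrier, and each is exponentially small in the opposite well), whereas $u_+u_-$ has $L^1$ norm of order one. Note also that a pure Cauchy–Schwarz with half-space $L^2$ localization only yields $T^{1/2-\varepsilon}$ for the $L^1$ norm; the rate $T^{1-\varepsilon}$ in \eqref{eq:L1_convergence} genuinely requires pointwise Agmon upper bounds on $u_1,u_2$ and the (quasi-)additivity of the Agmon distance across the barrier. (2) ``Perron–Frobenius forces $u_-$ to be odd'' is not an argument in $d=2,3$: Perron–Frobenius gives positivity, hence evenness, of the ground state only. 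Oddness of the \emph{first excited} state requires showing that the bottom of the odd sector lies strictly below the second eigenvalue of the even sector, which is exactly the quasimode/one-well-gap comparison you only set up afterwards; the logical order must be reversed. (3) Most seriously, the lower bound $\mu_--\mu_+\ge c_\varepsilon T^{1+\varepsilon}$ in \eqref{eq:first_gap} is the delicate half of item $(ii)$, and ``unique continuation / Harnack'' by itself does not produce it. What is needed is the quantitative pointwise lower bound $u_+(x)\ge c_\varepsilon e^{-(1+\varepsilon)A_\mathrm{DW}(x)}$ (the Harnack input, quoted as \eqref{eq:pointwise_lower_bound} in Appendix B from \cite[Proposition 3.1]{OlgRou-20}) \emph{combined} with a representation of the splitting as a positive interaction/flux integral concentrated near the hyperplane $\{x_1=0\}$; your sketch names the tool but supplies no mechanism converting it into a lower bound on the gap.
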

Items $(i)$, $(ii)$, and $(iii)$ follow from \cite[Theorem 2.1]{OlgRou-20}. The fact that $u_+$ can be chosen as positive is a standard fact already recalled in Section \ref{sect:result}. Since the $h_\mathrm{MF}$ commutes with reflection across $\{x_1=0\}$ we can choose its eigenvectors to be either odd or even under such a permutation. Since $u_+$ is positive, it must be even. The fact that $u_-$ is odd and its sign follow from \cite[Lemma 4.2]{OlgRou-20}. Notice that, for $u_1$ defined in \eqref{eq:u_1_u_2}, as a consequence of $(iv)$ and $(v)$ we have
\begin{equation*}
\int_{x_1\le0}\left| u_{1}(x) \right|^2dx=\frac{1}{\sqrt{2}} \int_{x_1\le0}\left| u_+(x)+u_-(x) \right|^2dx=\frac{1}{\sqrt{2}} \int_{x_1\le0}\left| |u_+(x)|-|u_-(x)| \right|^2dx.
\end{equation*}
Hence, by \eqref{eq:L2_convergence},
\begin{equation} \label{eq:small_u_1_u_2}
\int_{x_1\le0}\left| u_{1}(x) \right|^2dx=\int_{x_1\ge0}\left| u_{2}(x) \right|^2dx \le C_\varepsilon T^{1-\varepsilon},
\end{equation}
which is the analogous of \eqref{eq:higher_modes} for the low energy modes.

\section{Estimates and identities in the two-mode space} \label{appendix:2mode}

We prove here some results that were stated in Section \ref{sect:proof_2mode}.

\begin{proof}[Proof of Lemma \ref{lemma:w_coefficients}]
	The upper bound on $w_{1111}$ follows immediately from Young's inequality (recall that $w\in L^\infty$). To prove the lower bound, we use the pointwise lower bound on $u_+$ (see \cite[Proposition 3.1]{OlgRou-20})
	\begin{equation} \label{eq:pointwise_lower_bound}
	u_+(x) \ge {c_\varepsilon} e^{-(1+\varepsilon)A_\mathrm{DW}(x)},
	\end{equation}
	where 
	\begin{equation*}
	A_\mathrm{DW}=\begin{cases}A\big(|x-x_L|\big),\qquad x_1\ge0\\
	A\big( |x+x_L| \big),\qquad x_1\le0,
	\end{cases}
	\end{equation*}
	and $A$ is the Agmon distance \eqref{eq:Agmon}. Let us notice that, using the definition \eqref{eq:u_1_u_2} of $u_1$ and $u_2$,
	\begin{equation*}
	w_{1111}\ge \iint_{\substack{x_1\ge0\\y_1\ge0}} w(x-y)|u_1(x)|^2|u_1(y)|^2dxdy \ge \frac{1}{4}\iint_{\substack{x_1\ge0\\y_1\ge0}} w(x-y)|u_+(x)|^2|u_+(y)|^2dxdy,
	\end{equation*}
	having used in the second inequality the fact that $u_+(x)>0$ and $u_-(x)\ge0$ for $x_1\ge0$, as granted by Theorem \ref{thm:onebody}. Using the lower bound \eqref{eq:pointwise_lower_bound} we deduce
	\begin{equation*}
	\begin{split}
	w_{1111} \ge\;&c_\varepsilon\iint_{\substack{x_1\ge0\\y_1\ge0}} w(x-y)e^{-2(1+\varepsilon)A (|x-x_L|)}e^{-2(1+\varepsilon)A (|y-x_L|)}dxdy\\
	=\;&c_\varepsilon\iint_{\substack{x_1\ge-L/2\\y_1\ge-L/2}} w(x-y)e^{-2(1+\varepsilon)A (|x|)}e^{-2(1+\varepsilon)A (|y|)}dxdy\\
	\ge\;&c_\varepsilon\iint_{\substack{x_1\ge0\\y_1\ge0}} w(x-y)e^{-2(1+\varepsilon)A (|x|)}e^{-2(1+\varepsilon)A (|y|)}dxdy =: c>0,
	\end{split}
	\end{equation*}
	where all the steps are justified since the functions in the integral are manifestly positive and summable.
	
	To prove \eqref{eq:w_1112} we use the definition of $u_1$ and $u_2$ in terms of $u_+$ and $u_-$ from \eqref{eq:u_1_u_2}, then Young's inequality and \eqref{eq:L1_convergence}, to get
	\begin{equation*}
	\big|w_{1112}\big|\le \frac{1}{2}\int_{\mathbb{R}^d}w*|u_1|^2\big| |u_+|^2-|u_-|^2\big| \le \big\|w*|u_1|^2\big \|_{L^\infty}\;\big\|  |u_+|^2-|u_-|^2 \big\|_{L^1}\le C_\varepsilon T^{1-\varepsilon}.
	\end{equation*}
	Similarly, for \eqref{eq:w_1122} we write
	\begin{equation*}
	w_{1122}\le \frac{1}{4}\int_{\mathbb{R}^d}w*\big| |u_+|^2-|u_-|^2 \big|\,\big| |u_+|^2-|u_-|^2 \big| \le C \big\|  |u_+|^2-|u_-|^2 \big\|_{L^1}^2\le C_\varepsilon T^{2-\varepsilon}.
	\end{equation*}
	On the other hand, the positivity of $w_{1122}$ is deduced by noticing that
	\begin{equation}
	w_{1122}=\int \widehat{w}(k)\big| \widehat{u_1u_2}(k)|^2dk \ge0,
	\end{equation}
	since $\widehat{w}(k)\ge0$ by assumption.
	
	To estimate $w_{1212}$ we use the fact that $w$ has compact support and it is bounded by a constant to write
	\begin{equation*}
	\begin{split}
	w_{1212}\le\;&C \iint_{\substack{x_1\le0\\ y_1\le C}}|u_1(x)|^2|u_2(y)|^2dxdy\\
	&+C\iint_{\substack{x_1\ge0\\ y_1\ge- C}} |u_1(x)|^2|u_2(y)|^2dxdy.
	\end{split}
	\end{equation*}
	In the first integral we recognize that $\sqrt{2}u_1(x)=u_+(x)+u_-(x)=|u_+(x)|-|u_-(x)|$ for $x_1\le0$ (recall that Theorem \ref{thm:onebody} ensures that $u_-$ is negative for negative $x_1$'s), and, using \eqref{eq:L2_convergence}
	\begin{equation*}
	\begin{split}
	C\int_{x_1\le0,\; y_1\le C}|u_1(x)|^2\,|u_2(y)|^2dxdy\le\;&C\big\||u_+|-|u_-|\big\|_{L^2}^2\,\|u_2\|_{L^2}^2 \le C_\varepsilon T^{1-\varepsilon}.
	\end{split}
	\end{equation*}
	In the second integral we can ignore the region in which $-C\le y_1\le0$, since both $u_1$ and $u_2$ are exponentially small there, because $u_+$ and $u_-$ are (see \cite[Proposition 3.1]{OlgRou-20}). For the region in which $y_1\ge0$ we argue as in the integral above by recognizing that $\sqrt{2}u_2(y)=u_+(y)-u_-(y)=|u_+(y)|-|u_-(y)|$ for $y_1\ge0$. This proves \eqref{eq:w_1212}.
	
	To prove \eqref{eq:close_chemical_potentials} we only have to notice that
	\begin{equation*}
	\mu-\mu_+=\frac{\lambda}{2(N-1)}\big( w_{1212}-w_{1112}+(1-2N)w_{1122} \big),
	\end{equation*}
	and the result follows from the estimates above.
\end{proof}

\begin{proof}[Proof of Lemma \ref{lemma:bh_identities}]
  Since ${\mathcal{N}}_{-} = (  {\mathcal{N}}_1 + {\mathcal{N}}_2   - a_1^\dagger a_2 - a_2^\dagger a_1)/2$ and
  $[\mathcal{N}_1+\mathcal{N}_2,a^\dagger_1a_2+a^\dagger_2a_1]=0$, one has
 \begin{equation*}
   4 {\mathcal{N}}_{-}^2 =  (  {\mathcal{N}}_1 + {\mathcal{N}}_2)^2 - 2  (  {\mathcal{N}}_1 + {\mathcal{N}}_2) (  a_1^\dagger a_2 + a_2^\dagger a_1 ) +
   (  a_1^\dagger a_2 + a_2^\dagger a_1 )^2
 \end{equation*}
 and thus
        \begin{equation*}
          \begin{split}
            2 ( {\mathcal{N}}_1 + {\mathcal{N}}_1 ) & ( a_1^\dagger a_2 + a_2^\dagger a_1 ) - ( {\mathcal{N}}_1 + {\mathcal{N}}_1 )^2 +  4 {\mathcal{N}}_{-}^2
            -  {\mathcal{N}}_1 - {\mathcal{N}}_2\\
            = & \;( a_1^\dagger a_2 )^2 + (a_2^\dagger a_1 )^2 + a_1^\dagger a_2 a_2^\dagger a_1 +  a_2^\dagger a_1 a_1^\dagger a_2  -  {\mathcal{N}}_1 - {\mathcal{N}}_2
            \\
            = & \; ( a_1^\dagger a_2 )^2 + (a_2^\dagger a_1 )^2 + 2  {\mathcal{N}}_1  {\mathcal{N}}_2\;.
          \end{split}
        \end{equation*}
where the last equality follows from the commutation relations of $a_1$, $a_1^\dagger$, $a_2$ and $a_2^\dagger$.         
\end{proof}

%
%
%

\end{document}